\pdfoutput=1
\documentclass[11pt]{article}

\usepackage[margin=1in]{geometry}
\usepackage{amsmath,amssymb,amsthm,mathtools,mathrsfs}
\numberwithin{equation}{section}
\usepackage{stmaryrd}
\usepackage{silence}
\usepackage{microtype}
\usepackage{tikz}
\usepackage{tikz-cd}
\usepackage{longtable}
\usepackage[hidelinks]{hyperref}
\hypersetup{
  pdftitle={Quantum Query Complexity and Span Programs from Pre-Geometry},
  pdfauthor={Justin Roy Cox, Neil Epstein, Zhirui Hu, Michael Jarret, Thomas De Mastri}
}

\newtheorem{definition}{Definition}[section]
\newtheorem{proposition}[definition]{Proposition}
\newtheorem{theorem}[definition]{Theorem}
\newtheorem{lemma}[definition]{Lemma}
\newtheorem{corollary}[definition]{Corollary}
\theoremstyle{remark}
\newtheorem{remark}[definition]{Remark}

\newcommand{\cl}{\operatorname{cl}}

\newcommand{\Part}{\operatorname{Part}}

\title{Quantum Query Complexity and Span Programs from Pre-Geometry}
\author{
Justin Roy Cox$^{2}$, Neil Epstein$^{2}$, Zhirui Hu$^{1}$,\\
Michael Jarret$^{1,2,3,4}$, and Thomas De Mastri$^{1}$\\[0.75em]
\small $^{1}$Department of Computer Science, George Mason University\\
\small $^{2}$Department of Mathematical Sciences, George Mason University\\
\small $^{3}$Quantum Science and Engineering Center, George Mason University\\
\small $^{4}$Center for Social Complexity, George Mason University
}
\date{}

\begin{document}
\maketitle

\begin{abstract}
Span programs provide a general method for designing quantum query algorithms,
and for Boolean coordinate queries their optimum cost agrees with the general
adversary bound. We develop a matroidal organization of this optimization that
separates dependence data determined by the query problem from the dependence
structure chosen by a span program. The resulting \emph{source matroid} records
query dependence, target determination, and separation data, while a
\emph{program matroid} records dependencies among the target and response-labeled
input elements of a candidate program.

When a query problem admits a source matroid, we use its structure to organize
adversary upper and lower bounds. For separating binary query sets in this
setting, the global matroid-span-program optimum equals
$\mathrm{Adv}^{\pm}_{\mathcal Q}(f)$. For a fixed regular program matroid,
projective uniqueness reduces optimization over real representations to
positive element weights. We express positive and negative costs as weighted
basis ratios and derive exact $1$-, $2$-, and $3$-sum reductions, giving a
compositional evaluation of the fixed-weight objective along a Seymour
decomposition.

For pointed $R_{10}$ closure, exact certificates give
$3.9087553<\mathrm{Adv}^{\pm}_{\mathcal Q_R}(f_R)<3.9301$. The natural
$R_{10}$ program has optimized cost $5$, while another regular program has cost
at most $\sqrt{87/5}$. Recursive self-composition yields a family on $N$ input
bits with randomized query complexity $\Omega(N^{0.7324867\ldots})$, an explicit
quantum algorithm using $O(N^{0.6500178\ldots})$ queries, and
$Q(F)=\Theta(N^\alpha)$ for
$0.6204277\ldots<\alpha<0.6229063\ldots$.
\end{abstract}

\tableofcontents

\section{Introduction}
\label{sec:introduction}

Span programs were introduced by Karchmer and Wigderson as a
linear-algebraic model of computation~\cite{KarchmerWigderson1993}. They
later became a central tool in quantum query complexity: span programs give
quantum algorithms for formula evaluation~\cite{ReichardtSpalek2008}, and
for Boolean coordinate queries their optimal quadratic cost coincides with
the general adversary bound
~\cite{Reichardt2009,Reichardt2014Equivalent}. The general adversary bound
in turn characterizes bounded-error quantum query complexity up to constant
factors
~\cite{HoyerLeeSpalek2007,Reichardt2011,
LeeMittalReichardtSpalekSzegedy2011}.

Recent work has organized adversary-based quantum algorithmic frameworks by
their simulation and composition relations. Cornelissen shows that the
$s$--$t$ connectivity framework subsumes adaptive and extended learning
graphs and weighted decision trees, and introduces graph composition, in
which span programs determine the availability of graph edges
~\cite{Cornelissen2025GraphComposition}. Because an arbitrary span program
can be placed on a single $s$--$t$ edge, graph composition is itself
query-optimal. This paper addresses a different structural question. Rather
than compare algorithmic frameworks by simulation, we separate the
dependence structure chosen by a span program from its numerical
representation and from the dependence data already present in the query
problem.

Matroids already occur naturally in span programs. Karchmer and
Wigderson's incidence construction for $s$--$t$ connectivity has the cycle
dependencies of a graph~\cite{KarchmerWigderson1993}. More generally, the
target and input vectors of any represented span program determine a
matroid. We call this the \emph{program matroid}. Its target designation and
response labels remain part of the labeled program, while a real
representation restores the numerical span program.

The query problem carries a different matroidal object. A collection of
query answers can determine the answer to another query, giving the standard
closure relation of functional dependency~\cite{Armstrong1974,Matus1991}.
When this closure satisfies matroid exchange, it defines a query matroid. A
compatible single-element extension records which fixed query sets determine
the target function, with existence controlled by Crapo's modular-cut
criterion~\cite{Crapo1965,Oxley2011}. Retaining the pairwise separation
relation of each query gives the colored targeted matroid
$\mathfrak M_f=(M_f,\Sigma_f)$, which we call the \emph{source matroid}. Its
underlying matroid $M_f$ is the targeted query matroid. The source matroid is
determined by the query problem; the program matroid is a structural choice
in the design of a span program.

For opposite-output inputs $x$ and $y$, the queries on which they agree form
a noncomputing flat of the query matroid. Grouping pairs by these agreement
flats reorganizes the standard general-adversary optimization
~\cite{HoyerLeeSpalek2007,LeeMittalReichardtSpalekSzegedy2011}. This
reindexing yields structured adversary lower certificates,
correlation-transport upper certificates, and, when one output fiber is a
singleton, an exact fractional packing and covering problem. These
constructions organize the adversary filters through the source matroid; they
do not replace the adversary method.

On the program side, we fix an abstract response-labeled program and optimize
its real representations. For a regular program matroid, projective
uniqueness of real representations~\cite{BrylawskiLucas1976,Oxley2011}
reduces this freedom to one positive weight for each non-target element. The
reconstruction and separation energies become opposite target-inclusion
ratios of weighted basis sums on restriction and contraction minors.

Regularity also brings Seymour's decomposition theorem into the
optimization~\cite{Seymour1980}. We derive exact transformations of the
weighted target-basis ratio under $1$-, $2$-, and $3$-sums. A $1$-sum factor
cancels; a target-free component of a $2$-sum is replaced by one effective
element weight; and the weighted basis data of a $3$-sum component can be
replaced by a weighted graphic configuration. On input minors the $2$-sum
effective weight specializes to the inverse reconstruction energy for a
positive child or the separation energy for a negative child, with deletion
and contraction covering the degenerate cases. The fixed-weight objective
can therefore be evaluated recursively along a Seymour decomposition.

The exceptional regular summand $R_{10}$ remains finite. An intact copy has
no triangle through which it can participate in a $3$-sum, so its nontrivial
attachment is through a $2$-sum and contributes a ratio of two explicit
weighted basis polynomials. Moreover,
$R_{10}\setminus e\cong M(K_{3,3})$ and
$R_{10}/e\cong M^*(K_{3,3})$ for every element $e$
~\cite{Oxley2011}. Once deletion or contraction changes an $R_{10}$ summand,
the resulting calculation is graphic or cographic. Thus, apart from the
finite intact-$R_{10}$ calculation, Seymour decomposition transfers the
fixed-weight problem to graphic and cographic matroids.

The source/program distinction is compatible with the established optimality
theory. For Boolean coordinate queries, minimizing over all represented span
programs gives the general adversary bound
~\cite{Reichardt2009,Reichardt2014Equivalent}. Appendix
~\ref{app:universal-normal-form} reorganizes the standard
filtered-$\gamma_2$ construction in response-labeled matroidal form. For a
finite query set, if
\[
    \kappa(\mathcal Q):=
    \max_{q\in\mathcal Q}(|\mathcal B_q|-1),
\]
then
\[
    \operatorname{Adv}^{\pm}_{\mathcal Q}(f)
    \leq
    \operatorname{MSP}(f,\mathfrak M_f)
    \leq
    \sqrt{\kappa(\mathcal Q)}\,
    \operatorname{Adv}^{\pm}_{\mathcal Q}(f).
\]
Binary queries recover the established equality. Related non-binary
span-program models appear in~\cite{BeigiTaghavi2019}.

Grover search gives a basic example. Its targeted query matroid is
$U_{n,n+1}$, while an optimal regular program is realized by the rank-one
program matroid $U_{1,n+1}$; both calculations give $\sqrt n$.

For pointed $R_{10}$ closure, exact adversary certificates give
\[
    3.908755342845<
    \operatorname{Adv}^{\pm}_{\mathcal Q_R}(f_R)<3.9301.
\]
The targeted query matroid is $U_{9,10}$ even though the function is defined
by closure in $R_{10}$. Taking the defining matroid $R_{10}$ itself as the
program matroid gives optimized cost $5$, whereas a second explicit regular
program satisfies
$\operatorname{lincost}\leq\sqrt{87/5}=4.171330722\ldots$.
Thus the source matroid, the defining matroid, and an efficient program
matroid are three distinct objects in this example.

Let $F_h$ be the depth-$h$ self-composition of this nine-query predicate on
disjoint blocks. Exact adversary composition
~\cite{HoyerLeeSpalek2007,Reichardt2009} and adversary tightness
~\cite{Reichardt2011,Reichardt2014Equivalent} give
$Q_{\mathcal Q_h}(F_h)=\Theta(a^h)$ for
$a=\operatorname{Adv}^{\pm}_{\mathcal Q_R}(f_R)$, while sensitivity gives
$R^{1/3}_{\mathcal Q_h}(F_h)\geq\frac13 5^h$. The only semidefinite
optimization is the fixed nine-query calculation at depth one; exact
adversary composition and the regular-matroid sum identities propagate the
bounds to arbitrary depth. This fixed-base composition technique is standard
~\cite{ReichardtSpalek2008,Reichardt2009}; the application-specific content
is the certified pointed-$R_{10}$ calculation and the comparison among its
program constructions.

This separation is modest compared with known total Boolean examples:
Grover search already gives a quadratic separation, and the cheat-sheet
framework combined with stronger Forrelation bounds gives separations
approaching a third power
~\cite{AaronsonBenDavidKothari2016,SherstovStorozhenkoWu2023}. The recursive
family is therefore used here as a structural test case rather than as an
extremal separation result. It shows how a constant-size adversary
calculation and explicit regular-program calculations can be propagated
through a non-graphic, non-cographic family. We are not aware of a prior
quantum-query analysis of this pointed-$R_{10}$ self-composition family.

The framework leaves the discrete choice of program matroid open. Its role is
to separate that choice from representation optimization, to organize
adversary calculations by source-matroid structure, and to make
regular-program optimization compatible with the decomposition theory of
regular matroids.

\part{Background}

\section{Matroids}
\label{sec:matroids}

Matroids were introduced by Whitney as an abstraction of linear dependence
~\cite{Whitney1935}. A matroid records which subsets of a finite ground set
are independent. From these independent sets one obtains bases, rank,
closure, flats, and circuits.

We follow Oxley for notation and terminology; for broader treatments, see
White and Oxley~\cite{White1986,Oxley2011}.

\begin{definition}
\label{def:matroid}
A \emph{matroid} is a pair $M=(E,\mathcal I)$, where $E$ is a finite set, called the \emph{ground set}, and $\mathcal I\subseteq 2^E$ is a collection of subsets of $E$, called \emph{independent sets}, satisfying the following independent-set axioms.
\begin{enumerate}
    \item[(I1)] $\varnothing\in\mathcal I$.
    \item[(I2)] If $I\in\mathcal I$ and $J\subseteq I$, then $J\in\mathcal I$.
    \item[(I3)] If $J,K\in\mathcal I$ and $|J|<|K|$, then there exists $e\in K\setminus J$ such that $J\cup\{e\}\in\mathcal I$.
\end{enumerate}
\end{definition}

The first two axioms make independence hereditary. The third is the exchange
axiom: if one independent set is larger than another, some element of the
larger set can be added to the smaller one.

A subset of $E$ that does not belong to $\mathcal I$ is called dependent. Among the independent sets, the maximal ones are the bases.

\begin{definition}
\label{def:matroid-basis}
Let $M=(E,\mathcal I)$ be a matroid. A \emph{basis} of $M$ is a maximal independent subset of $E$. More generally, if $A\subseteq E$, a basis of $A$ is a maximal independent subset of $A$.
\end{definition}

The exchange axiom implies that all bases of a fixed subset $A\subseteq E$
have the same cardinality. This common cardinality defines
the rank of $A$.

\begin{definition}
\label{def:matroid-rank}
Let $M=(E,\mathcal I)$ be a matroid. The rank function $r_M:2^E\to\mathbb Z_{\geq0}$ is defined by $r_M(A):=\max\{\,|I|:I\subseteq A,\ I\in\mathcal I\,\}$. Equivalently, $r_M(A)$ is the cardinality of any basis of $A$.
\end{definition}

Rank tells us whether adding an element increases the size of a largest
independent subset. This gives the closure operator.

\begin{definition}
\label{def:matroid-closure}
Let $M=(E,\mathcal I)$ be a matroid. For $A\subseteq E$, the \emph{closure} of $A$ is $\cl_M(A):=\{\,e\in E:r_M(A\cup\{e\})=r_M(A)\,\}$.
\end{definition}

Thus $e\in\cl_M(A)$ exactly when adding $e$ to $A$ does not increase the rank. A set equal to its closure is called a flat.

\begin{definition}
\label{def:matroid-flat}
Let $M=(E,\mathcal I)$ be a matroid. A subset $F\subseteq E$ is a \emph{flat} if $\cl_M(F)=F$.
\end{definition}

Equivalently, $F$ is a flat if every $e\in E\setminus F$ increases rank when it is added to $F$. The only flat containing a basis of $M$ is the ground set $E$ itself. A flat is \emph{proper} when it is not equal to $E$.

\begin{definition}
\label{def:matroid-circuit}
Let $M=(E,\mathcal I)$ be a matroid. A \emph{circuit} of $M$ is a minimal dependent subset of $E$. Thus $C\subseteq E$ is a circuit when $C\notin\mathcal I$, but every proper subset $C'\subsetneq C$ belongs to $\mathcal I$.
\end{definition}

\begin{proposition}
\label{prop:closure-circuit}
Let $M=(E,\mathcal I)$ be a matroid, let $A\subseteq E$, and let $e\notin A$. Then $e\in\cl_M(A)$ if and only if there is a circuit $C$ with $e\in C$ and $C\setminus\{e\}\subseteq A$.
\end{proposition}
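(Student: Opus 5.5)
We prove both directions directly from the rank definition, using only (I1)--(I3) and the facts that all bases of a set have equal size and that independent sets extend to bases. Throughout, fix a basis $I$ of $A$, so $|I|=r_M(A)$.

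($\Leftarrow$) Suppose $C$ is a circuit with $e\in C$ and $C\setminus\{e\}\subseteq A$. Let $I'$ be a basis of $A\cup\{e\}$. Since $C\setminus\{e\}$ is independent and contained in $A$, extend it to a basis $J$ of $A$; then $|J|=r_M(A)$. Extend $J$ to a basis $J'$ of $A\cup\{e\}$. If $e\in J'$, then $C=(C\setminus\{e\})\cup\{e\}\subseteq J'$, so the dependent set $C$ lies inside an independent set, contradicting (I2). Hence $J'\subseteq A$. Since $J'$ is independent and contains the basis $J$ of $A$, maximality gives $J'=J$. Therefore $r_M(A\cup\{e\})=|J'|=|J|=r_M(A)$, so $e\in\cl_M(A)$.

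($\Rightarrow$) Suppose $r_M(A\cup\{e\})=r_M(A)$. Then $I\cup\{e\}$ must be dependent, for otherwise it would be an independent subset of $A\cup\{e\}$ of size $r_M(A)+1$. Choose a minimal dependent subset $C\subseteq I\cup\{e\}$; it is a circuit. If $e\notin C$, then $C\subseteq I$, contradicting the independence of $I$ via (I2). Hence $e\in C$, and $C\setminus\{e\}\subseteq I\subseteq A$, as required.

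The only step needing care is in ($\Leftarrow$): justifying that $J'\subseteq A$ forces $J'=J$. This uses that $J$ is a \emph{maximal} independent subset of $A$. It relies on the preliminary fact that any independent subset of $A$ extends to a basis of $A$, which follows by repeatedly adding elements while independence is preserved, since $E$ is finite. Everything else is immediate from the definitions and (I2).
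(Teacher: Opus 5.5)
The paper does not prove this proposition: it is stated as a standard matroid fact, with notation following Oxley, so there is no paper argument to compare against. Your proof is correct and is the standard derivation from the rank definition of closure.

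In ($\Leftarrow$), you extend the independent set $C\setminus\{e\}$ to a basis $J$ of $A$. A basis of $A\cup\{e\}$ containing $J$ cannot contain $e$, since it would then contain $C$. So that basis equals $J$, which gives $r_M(A\cup\{e\})=r_M(A)$.

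In ($\Rightarrow$), the hypothesis $e\notin A$ is used exactly where it is needed. It makes $I\cup\{e\}$ a set of $r_M(A)+1$ elements, hence dependent. A minimal dependent subset of it is then a circuit of $M$ that must contain $e$.

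One cosmetic point: the basis $I'$ introduced at the start of ($\Leftarrow$) is never used and can be deleted.
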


The closure statement says that $e$ adds no new rank. The circuit statement exhibits a minimal reason why it adds no new rank.

\begin{definition}
\label{def:hyperplane-cocircuit}
A \emph{hyperplane} of a matroid $M$ on $E$ is a maximal proper flat. A \emph{cocircuit} is the complement $E\setminus H$ of a hyperplane $H$.
\end{definition}

Circuits and cocircuits give complementary descriptions of local matroid structure. A circuit is a minimal dependent set, while a cocircuit is a minimal set which meets every basis.

\subsection{Constructions}
\label{subsec:elementary-matroid-constructions}

We will use the following matroid constructions and certificates: the free
matroid, loops and coloops, minors, fundamental circuits, and the cocircuit
separation property.

\begin{definition}
\label{def:free-matroid}
The \emph{free matroid} on a finite set $E$ is the matroid in which every
subset of $E$ is independent. Equivalently, its unique basis is $E$ and
$\cl_M(A)=A$ for every $A\subseteq E$.
\end{definition}

\begin{definition}
\label{def:loop-coloop}
An element $e$ of a matroid is a \emph{loop} if
$e\in\cl_M(\varnothing)$, equivalently if $\{e\}$ is a circuit. It is a
\emph{coloop} if it belongs to every basis.
\end{definition}

\begin{definition}
\label{def:matroid-minors}
Let $M$ be a matroid on $E$ and let $A\subseteq E$.
The \emph{restriction} $M|A$ is the matroid on $A$ whose independent sets
are the independent subsets of $A$ in $M$. The \emph{deletion} of $A$ is
$M\setminus A:=M|(E\setminus A)$.

The \emph{contraction} $M/A$ is the matroid on $E\setminus A$ with rank
function, for $X\subseteq E\setminus A$:
\begin{equation}
    r_{M/A}(X)
    :=
    r_M(X\cup A)-r_M(A).
\end{equation}
A \emph{minor} of $M$ is any matroid obtained by a sequence of deletions and
contractions.
\end{definition}

\begin{definition}
\label{def:fundamental-circuit}
Let $B$ be a basis of $M$ and let $e\notin B$. The unique circuit contained in $B\cup\{e\}$ is the \emph{fundamental circuit of $e$ with respect to $B$}.
\end{definition}

The following standard separation fact is the cocircuit analogue of
Proposition~\ref{prop:closure-circuit}.

\begin{proposition}
\label{prop:cocircuit-separates-closure}
Let $M$ be a matroid on $E$, let $A\subseteq E$, and let
$e\notin\cl_M(A)$. Then there is a cocircuit $D$ such that
\begin{equation}
    e\in D,
    \qquad
    D\cap A=\varnothing.
\end{equation}
\end{proposition}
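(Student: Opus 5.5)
The plan is to build the separating cocircuit from a hyperplane that contains $\cl_M(A)$ but misses $e$. Once we have such a hyperplane $H$, Definition~\ref{def:hyperplane-cocircuit} gives the cocircuit $D:=E\setminus H$. Since $e\notin H$ we have $e\in D$, and since $A\subseteq H$ we have $D\cap A=\varnothing$. So everything reduces to one step: extend the flat $F:=\cl_M(A)$ to a hyperplane while avoiding $e$.

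To carry out that step, write $r=r_M(E)$ and $k=r_M(F)=r_M(A)$. The hypothesis $e\notin F$ gives $r_M(F\cup\{e\})=k+1$.

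1. Choose a basis $I$ of $F$. Then $I\cup\{e\}$ is independent, since its rank is $r_M(F\cup\{e\})=k+1$.
2. By (I3), applied repeatedly against a basis of $E$, extend $I\cup\{e\}$ to a basis $B$ of $M$.
3. Put $H:=\cl_M(B\setminus\{e\})$. Closure is idempotent, so $H$ is a flat, and $r_M(H)=r-1$.
4. Because $I\subseteq B\setminus\{e\}$, monotonicity of closure gives $F=\cl_M(I)\subseteq H$, hence $A\subseteq H$.
5. We have $e\notin H$, since otherwise $r_M(B)=r_M(B\setminus\{e\})=r-1$, a contradiction.

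The main point to check is that $H$ is a \emph{maximal} proper flat. It is proper because $e\notin H$. For maximality, let $H'$ be a flat strictly containing $H$. Then $r_M(H')\ge r$, because a flat containing an element outside it has strictly larger rank, by Definition~\ref{def:matroid-closure}. So $H'$ contains a basis. As noted after Definition~\ref{def:matroid-flat}, the only flat containing a basis is $E$, so $H'=E$.

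The routine closure facts used above are monotonicity, idempotence, and $\cl_M(I)=F$ for a basis $I$ of the flat $F$. They follow from the rank definition by submodularity, and I would cite them as standard (Oxley).

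An alternative route is dual to Proposition~\ref{prop:closure-circuit}: apply it in $M^*$, using that cocircuits of $M$ are circuits of $M^*$. The direct hyperplane construction avoids having to set up duality, so it is the route I would use.
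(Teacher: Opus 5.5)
Your proof is correct and follows the same route as the paper: extend a basis of $A$ together with $e$ to a basis $B$, take the hyperplane $\cl_M(B\setminus\{e\})$, and pass to its complement. You take a basis of $\cl_M(A)$ instead of $A$, which is an inessential difference, and your explicit maximality check fills in a step the paper only asserts.
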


\begin{proof}
Choose a basis $B_A$ of $A$. Since $e\notin\cl_M(A)$,
$B_A\cup\{e\}$ is independent, so extend it to a basis $B$ of $M$.
Then
\begin{equation}
    H:=\cl_M(B\setminus\{e\})
\end{equation}
is a hyperplane. It contains $A$ and excludes $e$. Hence
$D:=E\setminus H$ is a cocircuit containing $e$ and disjoint from $A$.
\end{proof}

\subsection{Modular cuts and single-element extensions}
\label{subsec:modular-cuts}

A single-element extension adds one new element to an existing matroid. Suppose $N$ is a matroid on $E$ and we want a matroid on $E\cup\{\tau\}$ whose restriction to $E$ is $N$. For each flat $F$ of $N$, the extension must decide whether $\tau$ lies in $\cl(F)$. These choices must be compatible with the matroid axioms; modular cuts record the compatible choices.

\begin{definition}
\label{def:modular-pair}
Let $N$ be a matroid on a ground set $E$. Two flats $F$ and $G$ form a \emph{modular pair} if $r_N(F)+r_N(G)=r_N(F\cap G)+r_N(\cl_N(F\cup G))$~\cite{Crapo1965}.
\end{definition}

\begin{definition}
\label{def:modular-cut}
Let $N$ be a matroid on a ground set $E$. A \emph{modular cut} of $N$ is a collection $\mathcal C$ of flats of $N$ satisfying the following two conditions~\cite{Crapo1965}.
\begin{enumerate}
    \item If $F\in\mathcal C$, $F\subseteq G$, and $G$ is a flat of $N$, then $G\in\mathcal C$.
    \item If $F,G\in\mathcal C$ form a modular pair, then $F\cap G\in\mathcal C$.
\end{enumerate}
\end{definition}

In the extension interpretation, the first requirement says that if $\tau$ lies in the closure of a flat $F$, then it also lies in the closure of every larger flat. The second requirement is the additional compatibility needed for these closure choices to come from a matroid extension. Crapo showed that single-element extensions are classified by modular cuts~\cite{Crapo1965}; we use Oxley for the modern formulation~\cite{Oxley2011}.

\begin{definition}
\label{def:single-element-extension}
Let $N$ be a matroid on $E$, and let $\tau\notin E$. A \emph{single-element extension} of $N$ by $\tau$ is a matroid $M$ on $E\cup\{\tau\}$ whose restriction to $E$ is $N$.
\end{definition}

\begin{proposition}
\label{prop:modular-cut-extension}
Let $N$ be a matroid on $E$. Single-element extensions of $N$ by an element $\tau$ are in bijection with modular cuts of $N$. For an extension $M$, the corresponding modular cut is $\mathcal C_M:=\{\,F:F\text{ is a flat of }N\text{ and }\tau\in\cl_M(F)\,\}$. Conversely, every modular cut of $N$ determines a unique single-element extension in this way~\cite{Crapo1965}.
\end{proposition}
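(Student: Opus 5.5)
We must prove Crapo's theorem: single-element extensions of $N$ by $\tau$ correspond bijectively to modular cuts of $N$. The plan has three parts. First, show that $\mathcal C_M$ is a modular cut for every extension $M$. Second, build an extension from an arbitrary modular cut and check that its cut is the one we started with. Third, check that the two assignments are mutually inverse.

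\textbf{Forward direction.} Let $M$ be an extension, so $r_M$ agrees with $r_N$ on subsets of $E$. Upward closure of $\mathcal C_M$ is immediate, since $F\subseteq G$ gives $\cl_M(F)\subseteq\cl_M(G)$.

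For the modular-pair condition, take $F,G\in\mathcal C_M$ with $r(F)+r(G)=r(F\cap G)+r(\cl_N(F\cup G))$. We must show $\tau\in\cl_M(F\cap G)$. Apply submodularity of $r_M$ to the sets $F\cup\tau$ and $G\cup\tau$:
\begin{equation*}
  r_M((F\cap G)\cup\tau)\le r_M(F\cup\tau)+r_M(G\cup\tau)-r_M(F\cup G\cup\tau).
\end{equation*}
Because $\tau$ lies in the closures of both $F$ and $G$, this becomes
\begin{equation*}
  r_M((F\cap G)\cup\tau)\le r(F)+r(G)-r(F\cup G)=r(F\cap G),
\end{equation*}
where the last equality is modularity, using $r(\cl_N(F\cup G))=r(F\cup G)$. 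Hence $\tau\in\cl_M(F\cap G)$.

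\textbf{Reverse direction.} Given a modular cut $\mathcal C$, define for $X\subseteq E$
\begin{equation*}
  r_M(X)=r_N(X), \qquad
  r_M(X\cup\tau)=
  \begin{cases}
    r_N(X) & \text{if } \cl_N(X)\in\mathcal C,\\
    r_N(X)+1 & \text{otherwise.}
  \end{cases}
\end{equation*}
We verify the rank axioms. Boundedness by cardinality is clear. Monotonicity follows from upward closure of $\mathcal C$: if $X\subseteq Y$ and the rank jumps for $Y$ but not for $X$, then $\cl_N(X)\in\mathcal C$ forces $\cl_N(Y)\in\mathcal C$, contradicting the jump.

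Submodularity is the main obstacle. It suffices to check the local form $r(X\cup a)+r(X\cup b)\ge r(X\cup a\cup b)+r(X)$, and the real content is the case $a,b\in E$ with $\tau$ adjoined to $X$. Suppose $\cl(X\cup a)$ and $\cl(X\cup b)$ both lie in $\mathcal C$, the closure of $X$ does not, and $r_N$ increases by one when passing from $X$ to $X\cup a$ and to $X\cup b$. If $r(X\cup a\cup b)=r(X)+2$, the flats $\cl(X\cup a)$ and $\cl(X\cup b)$ form a modular pair whose intersection is $\cl(X)$. Their rank sum is $2r(X)+2$, which equals $r(\cl X)+r(X\cup a\cup b)$. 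So the modular-cut axiom puts $\cl(X)$ in $\mathcal C$, a contradiction. In the remaining subcases either $a$ or $b$ lies in $\cl(X)$, or $r(X\cup a\cup b)=r(X)+1$; these reduce directly to monotonicity and equality of closures. The cases in which $\tau$ plays the role of $a$ or $b$ are routine checks against the definition.

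\textbf{Bijection.} Once the extension exists, the definition of $r_M$ gives $\tau\in\cl_M(F)$ for a flat $F$ exactly when $F\in\mathcal C$. So the cut recovered from the extension is $\mathcal C$. Conversely, any extension $M$ has its rank function determined by $\mathcal C_M$ through the displayed formula, because $r_M(X\cup\tau)=r_N(X)$ holds exactly when $\tau\in\cl_M(X)=\cl_M(\cl_N X)$. Hence the two constructions are mutually inverse. Since the paper cites Oxley for this result, I would present this argument as a sketch of the standard proof.
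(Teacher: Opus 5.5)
Your route is the standard rank-function proof of Crapo's theorem. The paper gives no proof of its own and only cites the result. Your forward direction, the modular-pair case of submodularity, and the final bijection argument are correct. The step that fails as written is monotonicity.

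Write $j(Z):=1$ if $\cl_N(Z)\notin\mathcal C$ and $j(Z):=0$ otherwise, so that $r_M(Z\cup\tau)=r_N(Z)+j(Z)$, and take $X\subseteq Y\subseteq E$. The case you exclude by upward closure is $j(Y)=1$, $j(X)=0$. That case could never violate $r_M(X\cup\tau)\le r_M(Y\cup\tau)$, since the right side is then $r_N(Y)+1>r_N(X)$. The only dangerous case is the opposite one, $j(X)=1$ and $j(Y)=0$, where you need $r_N(Y)\ge r_N(X)+1$; upward closure says nothing about it. What settles it is that the definition depends only on closures. If $X\subseteq Y$ and $r_N(X)=r_N(Y)$, then $\cl_N(X)=\cl_N(Y)$ and hence $j(X)=j(Y)$. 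So $j(X)\neq j(Y)$ forces $r_N(Y)>r_N(X)$.

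The fact you did prove, $j(Y)\le j(X)$ for $X\subseteq Y$, is still needed, but in the submodularity check.
\begin{itemize}
\item With $b=\tau$, the local inequality $r_M(X\cup a)+r_M(X\cup\tau)\ge r_M(X\cup a\cup\tau)+r_M(X)$ is literally $j(X)\ge j(X\cup a)$.
\item In the case $\tau\in Z$, it disposes of every configuration other than the one you analyze. It forces all $j$-values to vanish when $j(X)=0$, and it forbids $j(X\cup a\cup b)=1$ when $j(X\cup a)=0$.
\end{itemize}
So move that argument into the ``routine'' cases and add the equal-closure argument for monotonicity.

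In the modular-pair case, also say why $\cl_N(X\cup a)\cap\cl_N(X\cup b)=\cl_N(X)$. The intersection is a flat containing $\cl_N(X)$ of rank at most $r_N(X)+1$. Rank $r_N(X)+1$ would make the two flats equal, contradicting $r_N(X\cup a\cup b)=r_N(X)+2$. With these repairs the proof is complete.
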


This correspondence includes the two extreme cases. The empty modular cut adjoins $\tau$ as a coloop, while the collection of all flats adjoins $\tau$ as a loop.

Proposition~\ref{prop:target-extension-criterion} is an immediate consequence
of Crapo's single-element extension theorem~\cite{Crapo1965}. It gives the
criterion for target extensions in which positive inputs span the added element and
negative inputs do not.

\begin{proposition}
\label{prop:target-extension-criterion}
Let $N$ be a matroid on $E$, let $f:D\to\{0,1\}$, and let $A:D\to 2^E$. Put $F:=\cl_N\circ A$, and define $\mathcal F_+:=F(f^{-1}(1))$ and $\mathcal F_-:=F(f^{-1}(0))$. Let $\mathcal C_+$ be the smallest modular cut of $N$ containing $\mathcal F_+$.

There is a single-element extension $M$ of $N$ by $\tau$ such that $f(x)=1$ if and only if $\tau\in\cl_M(A(x))$ for every $x\in D$ if and only if $\mathcal C_+\cap\mathcal F_-=\varnothing$. Such extensions correspond to the modular cuts $\mathcal C$ satisfying $\mathcal C_+\subseteq\mathcal C$ and $\mathcal C\cap\mathcal F_-=\varnothing$.
\end{proposition}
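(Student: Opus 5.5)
The plan is to reduce everything to Crapo's correspondence, Proposition~\ref{prop:modular-cut-extension}. The first step is to observe that for any single-element extension $M$ of $N$ by $\tau$ and any $x\in D$, we have $\tau\in\cl_M(A(x))$ if and only if $\tau\in\cl_M(F(x))$. This holds because $A(x)\subseteq F(x)=\cl_N(A(x))\subseteq\cl_M(A(x))$, and closure in $M$ is monotone and idempotent. By the definition of $\mathcal C_M$, the latter condition says exactly that $F(x)\in\mathcal C_M$. The requirement ``$f(x)=1$ if and only if $\tau\in\cl_M(A(x))$ for all $x\in D$'' is therefore equivalent to two conditions on $\mathcal C_M$:
\begin{equation}
    \mathcal F_+\subseteq\mathcal C_M,
    \qquad
    \mathcal F_-\cap\mathcal C_M=\varnothing .
\end{equation}

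Next I would note that $\mathcal C_+$ is well defined. The intersection of any family of modular cuts is again a modular cut, since both defining conditions in Definition~\ref{def:modular-cut} are preserved under intersection. The collection of all flats is a modular cut containing $\mathcal F_+$, so the intersection of all modular cuts containing $\mathcal F_+$ exists, and this intersection is $\mathcal C_+$. Because modular cuts are closed under intersection, a modular cut $\mathcal C$ contains $\mathcal F_+$ if and only if it contains $\mathcal C_+$.

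Combining this with the bijection of Proposition~\ref{prop:modular-cut-extension}, the extensions with the required property correspond exactly to modular cuts $\mathcal C$ with $\mathcal C_+\subseteq\mathcal C$ and $\mathcal C\cap\mathcal F_-=\varnothing$. This is the final sentence of the statement. For the existence criterion, suppose such a $\mathcal C$ exists. Then $\mathcal C_+\cap\mathcal F_-\subseteq\mathcal C\cap\mathcal F_-=\varnothing$. Conversely, if $\mathcal C_+\cap\mathcal F_-=\varnothing$, then $\mathcal C=\mathcal C_+$ itself qualifies, and its extension has the required property.

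The only mildly delicate point is the first step, namely the passage from $A(x)$ to the flat $F(x)$ of $N$. That passage is needed because $\mathcal C_M$ records only flats of $N$. Once it is in place, the argument is purely formal.
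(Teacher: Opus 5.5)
Your proposal is correct and follows the paper's proof essentially step for step. Like the paper, you get the existence of $\mathcal C_+$ from intersections of modular cuts, reduce $\tau\in\cl_M(A(x))$ to $F(x)\in\mathcal C_M$, and then apply Crapo's bijection with $\mathcal C_+$ itself as the witness. The only cosmetic difference is that the paper justifies the reduction by noting that $A(x)$ and $F(x)$ have the same rank in $M$, whereas you use $\cl_N(A(x))\subseteq\cl_M(A(x))$ together with monotonicity and idempotence of $\cl_M$.
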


\begin{proof}
The modular cut $\mathcal C_+$ exists because the collection of all flats of $N$ is a modular cut, and intersections of modular cuts are modular cuts.

Let $M$ be a single-element extension of $N$ by $\tau$, with associated modular cut $\mathcal C_M$. Since $F(x)=\cl_N(A(x))$ and $M$ restricts to $N$ on $E$, the sets $A(x)$ and $F(x)$ have the same rank in $M$. Hence $\tau\in\cl_M(A(x))$ if and only if $\tau\in\cl_M(F(x))$, which by Proposition~\ref{prop:modular-cut-extension} holds exactly when $F(x)\in\mathcal C_M$.

Thus $M$ has the required closure rule exactly when $\mathcal C_M$ contains $\mathcal F_+$ and is disjoint from $\mathcal F_-$. The first condition is equivalent to $\mathcal C_+\subseteq\mathcal C_M$. Therefore such an extension exists exactly when $\mathcal C_+\cap\mathcal F_-=\varnothing$, and the final characterization follows.
\end{proof}

The closure data decide whether a target can be adjoined; representability is a further condition on the chosen extension.

\subsection{Representable matroids}
\label{subsec:representable-matroids}

A representation realizes an abstract matroid by a vector-space configuration
whose linear independence relations reproduce the matroid. Span programs use
these vectors, while the underlying matroid records only their dependence
structure.

\begin{definition}
\label{def:matroid-representability}
Let $\mathbb K$ be a field. A matroid $M=(E,\mathcal I)$ is \emph{representable over $\mathbb K$} if there are a vector space $V$ over $\mathbb K$ and a map $\phi:E\to V$ such that, for every $I\subseteq E$, the set $I$ is independent in $M$ if and only if the indexed family $(\phi(e))_{e\in I}$ is linearly independent over $\mathbb K$. The map $\phi$ is called a \emph{representation} of $M$ over $\mathbb K$.
\end{definition}

\begin{definition}
\label{def:vector-matroid}
Let $E$ be a finite set, let $V$ be a vector space over a field $\mathbb K$, and let $\phi:E\to V$. The \emph{vector matroid of $\phi$} is the matroid $M(\phi)$ on $E$ whose independent sets are exactly those subsets $I\subseteq E$ for which the indexed family $(\phi(e))_{e\in I}$ is linearly independent over $\mathbb K$.
\end{definition}

Thus $\phi$ represents a matroid $M$ if and only if $M(\phi)=M$.

A representation turns the abstract dependence relation into linear
dependence. In particular,
$e\in\cl_M(A)$ if and only if
$\phi(e)\in\operatorname{span}\{\phi(a):a\in A\}$.
Also, $r_M(A)=\dim\operatorname{span}\phi(A)$, and the circuits of $M$ are
the minimally linearly dependent indexed sets of represented vectors. The closure $\cl_M(A)$ remains a subset of the ground set
$E$; the representation supplies a linear test for membership.

Different representations of the same matroid need not be related by a change of coordinates. A representation is a linear realization of the matroid's independence structure, not data intrinsic to the abstract matroid. Rank, closure, circuits, and bases do not depend on a chosen representation.

\begin{definition}
\label{def:projective-equivalence}
Let $\phi:E\to V$ and $\psi:E\to W$ be representations of the same matroid over the same field. They are \emph{projectively equivalent} if there is a linear isomorphism $T:\operatorname{span}\phi(E)\to\operatorname{span}\psi(E)$ and, for each nonloop $e\in E$, a nonzero scalar $c_e$ such that $\psi(e)=c_eT\phi(e)$~\cite{BrylawskiLucas1976,Oxley2011}.
\end{definition}

The next proposition identifies matroid minors with the corresponding
operations on represented vector families~\cite{Oxley2011}.

\begin{proposition}
\label{prop:represented-minors}
Let $\phi:E\to V$ represent a matroid $M$.
For $A\subseteq E$, restriction and deletion are represented by retaining the corresponding vectors. Contraction $M/A$ is represented by the images of the remaining vectors in $V/\operatorname{span}\{\phi(a):a\in A\}$.
\end{proposition}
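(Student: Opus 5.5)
The plan is to verify each claim directly from Definition~\ref{def:matroid-representability}, Definition~\ref{def:matroid-minors}, and Definition~\ref{def:vector-matroid}. Throughout I use the fact, noted after Definition~\ref{def:vector-matroid}, that $r_M(X)=\dim\operatorname{span}\phi(X)$ for every $X\subseteq E$. Since a matroid is determined by its rank function, it suffices in each case to show that the proposed vector family has the correct rank function.

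Restriction is immediate. For $X\subseteq A$, $X$ is independent in $M|A$ exactly when $X$ is independent in $M$. This holds exactly when $(\phi(e))_{e\in X}$ is linearly independent. Hence $\phi|_A$ represents $M|A$. Deletion is the special case $M\setminus A=M|(E\setminus A)$, so it follows at once.

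For contraction, put $W:=\operatorname{span}\{\phi(a):a\in A\}$ and let $\pi:V\to V/W$ be the quotient map. Set $\psi:=\pi\circ\phi$ on $E\setminus A$. The key step is the dimension identity: for $X\subseteq E\setminus A$,
\[
\dim\operatorname{span}\psi(X)=\dim\pi\bigl(\operatorname{span}\phi(X)+W\bigr)=\dim\bigl(\operatorname{span}\phi(X\cup A)\bigr)-\dim W .
\]
It holds because $\pi$ restricted to $\operatorname{span}\phi(X\cup A)$ is surjective onto $\operatorname{span}\psi(X)$ with kernel exactly $W$, and $W\subseteq\operatorname{span}\phi(X\cup A)$. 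The right-hand side equals $r_M(X\cup A)-r_M(A)=r_{M/A}(X)$.

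It then remains to check independence. A set $X$ is independent in $M/A$ iff $r_{M/A}(X)=|X|$. By the identity, this holds iff $\dim\operatorname{span}\psi(X)=|X|$, which is exactly linear independence of the indexed family $(\psi(e))_{e\in X}$. Indexed families are essential here, since repeated or zero images in the quotient correspond to parallel elements or loops. Thus $M(\psi)=M/A$.

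The only step needing any care is the rank–nullity computation in the quotient, in particular confirming that the kernel is all of $W$ and not merely a subspace of it. Everything else is unwinding definitions. Iterating these two cases handles arbitrary minors.
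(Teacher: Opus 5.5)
Your proposal is correct and matches the paper's proof. Restriction and deletion are immediate, and for contraction both arguments compute $\dim\operatorname{span}\psi(X)=\dim\operatorname{span}\phi(X\cup A)-\dim W=r_M(X\cup A)-r_M(A)=r_{M/A}(X)$ in the quotient $V/W$; your extra step turning this rank identity into linear independence of the indexed family $(\psi(e))_{e\in X}$ is a detail the paper leaves implicit, and you handle it correctly.
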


\begin{proof}
The statements for restriction and deletion are immediate. Put
$U:=\operatorname{span}\{\phi(a):a\in A\}$. For
$X\subseteq E\setminus A$, the span of the images of $X$ in $V/U$ has
dimension
\begin{equation}
    \dim\bigl(U+\operatorname{span}\phi(X)\bigr)-\dim U
    =
    r_M(X\cup A)-r_M(A),
\end{equation}
which is exactly $r_{M/A}(X)$.
\end{proof}

Linear functionals produce flats in represented matroids.

\begin{proposition}
\label{prop:functional-zero-flat}
Let $\phi:E\to V$ represent $M$. For $\ell\in V^*$, the set $F_\ell:=\{e\in E:\ell(\phi(e))=0\}$ is a flat of $M$.
\end{proposition}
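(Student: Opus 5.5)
We must show $\cl_M(F_\ell)\subseteq F_\ell$, since the reverse inclusion $F_\ell\subseteq\cl_M(F_\ell)$ holds for every closure. The plan is to use the linear description of closure stated just after Definition~\ref{def:vector-matroid}: for a representation $\phi$ of $M$ and $A\subseteq E$, we have $e\in\cl_M(A)$ if and only if $\phi(e)\in\operatorname{span}\{\phi(a):a\in A\}$.

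Take $e\in\cl_M(F_\ell)$. By the criterion above, $\phi(e)$ is a finite linear combination $\sum_{a\in F_\ell}c_a\phi(a)$; when $F_\ell=\varnothing$ this is the empty sum, so $\phi(e)=0$. Applying the linear functional $\ell$ gives
\[
\ell(\phi(e))=\sum_{a\in F_\ell}c_a\,\ell(\phi(a))=0,
\]
because each $\ell(\phi(a))$ vanishes by the definition of $F_\ell$. Hence $e\in F_\ell$. The whole argument comes down to the observation that $\ker\ell$ is a linear subspace, so it contains the span of any vectors lying in it.

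The only point that needs care is the span characterization of closure, since the paper asserts it without proof. If a justification is wanted, it follows directly from the rank identity $r_M(A)=\dim\operatorname{span}\phi(A)$. Adding $e$ to $A$ leaves the rank unchanged exactly when $\dim\operatorname{span}\phi(A\cup\{e\})=\dim\operatorname{span}\phi(A)$, and this happens exactly when $\phi(e)$ already lies in $\operatorname{span}\phi(A)$.

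Loops require no special treatment. A loop $e$ has $\phi(e)=0$, so $\ell(\phi(e))=0$ and $e\in F_\ell$, which agrees with the fact that loops belong to every flat.

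There is no substantial obstacle here; the proof is a direct consequence of linearity.
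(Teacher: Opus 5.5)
Your proposal is correct and follows the paper's proof: closure in a represented matroid is detected by linear span, and $\ker\ell$ contains the span of any vectors it contains, so $\cl_M(F_\ell)\subseteq F_\ell$. Your derivation of the span criterion from $r_M(A)=\dim\operatorname{span}\phi(A)$ is a harmless addition that the paper leaves implicit.
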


\begin{proof}
If $e\in\cl_M(F_\ell)$, then $\phi(e)$ lies in the span of vectors on which
$\ell$ vanishes. Hence $\ell(\phi(e))=0$, so $e\in F_\ell$.
\end{proof}

\subsection{Regular matroids}
\label{subsec:regular-matroids}

Regular matroids admit totally unimodular real representations, and their
real representations are projectively equivalent. We use Tutte's totally
unimodular characterization. A matrix is \emph{totally unimodular} if every
square minor has determinant in $\{0,1,-1\}$.

\begin{definition}
\label{def:regular-matroid}
A matroid is \emph{regular} if it is representable over every field
~\cite{Oxley2011}.
\end{definition}

Tutte's characterization states that $M$ is regular if and only if it has a
real representation $\phi:E\to V$ for which, after choosing an ordering
$E=\{e_1,\ldots,e_n\}$ and a basis $\mathcal B$ of
$\operatorname{span}\phi(E)$, the coordinate matrix
\begin{equation}
    \begin{bmatrix}
        [\phi(e_1)]_{\mathcal B} & \cdots & [\phi(e_n)]_{\mathcal B}
    \end{bmatrix}
\end{equation}
is totally unimodular~\cite{Tutte1958,Oxley2011}.

\begin{remark}
As an example, let $E=\{e_1,e_2,e_3\}$ and
$\mathcal I=\{J\subseteq E:|J|\leq 2\}$. The matroid
$M=(E,\mathcal I)$ is regular. Define
$\phi:E\to\mathbb R^2$ by
\begin{equation}
    \phi(e_1)=
    \begin{bmatrix}1\\0\end{bmatrix},
    \qquad
    \phi(e_2)=
    \begin{bmatrix}0\\1\end{bmatrix},
    \qquad
    \phi(e_3)=
    \begin{bmatrix}1\\1\end{bmatrix}.
\end{equation}
With respect to the standard basis of $\mathbb R^2$, the corresponding
coordinate matrix is
\begin{equation}
    A=
    \begin{bmatrix}
        1 & 0 & 1\\
        0 & 1 & 1
    \end{bmatrix}.
\end{equation}
Its columns realize exactly the independent sets in $\mathcal I$, so $A$
represents $M$. The matrix $A$ is totally unimodular. This also certifies
regularity: more generally, a set of $k$ columns of a totally unimodular
matrix is linearly independent if and only if some $k\times k$ minor on
those columns is nonzero. Such a minor has determinant $1$ or $-1$, and
remains nonzero over every field. The vanishing and nonvanishing
minors, and hence the independent column sets, are consequently unchanged
when the matrix is regarded over another field. A totally unimodular
coordinate matrix thus represents the same matroid over every field.
\end{remark}

\begin{proposition}
\label{prop:regular-basic-properties}
Regular matroids are closed under deletion and contraction
~\cite{Oxley2011}. Moreover, all real representations of a regular matroid
are projectively equivalent~\cite{BrylawskiLucas1976,Oxley2011}. In
particular, every real representation is projectively equivalent to a fixed
totally unimodular representation.
\end{proposition}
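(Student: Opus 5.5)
Proposition~\ref{prop:regular-basic-properties} has three claims: closure of regular matroids under minors, projective uniqueness of real representations, and the consequence about a totally unimodular representation. I would treat them in that order, taking Tutte's totally unimodular characterization as the working definition of regularity.

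\emph{Closure under minors.} Let $M$ be regular, so it is representable over every field $\mathbb K$. By Proposition~\ref{prop:represented-minors}, a $\mathbb K$-representation of $M$ restricts to a $\mathbb K$-representation of $M\setminus A$. Passing to the quotient $V/\operatorname{span}\phi(A)$ gives a $\mathbb K$-representation of $M/A$. Since $\mathbb K$ was arbitrary, both minors are regular, and induction on the length of a sequence of deletions and contractions covers every minor.

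\emph{Projective uniqueness.} This is the substantive part, and I would follow Brylawski--Lucas. First fix a basis $B$ of $M$; loops may be discarded, since they impose no constraint in Definition~\ref{def:projective-equivalence}. Given a real representation $\phi$, apply the linear isomorphism sending $\phi(B)$ to the standard basis, so that $\phi$ becomes $[I\mid D]$. The support of each column of $D$ is then forced by the matroid: $D_{be}\neq 0$ exactly when $b$ lies in the fundamental circuit of $e$ with respect to $B$ (Definition~\ref{def:fundamental-circuit}). Form the bipartite support graph $G$ on $B\sqcup(E\setminus B)$ and choose a spanning forest $T$ of $G$. Rescaling rows (a change of basis) and columns (the scalars $c_e$) makes every entry on $T$ equal to $1$. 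One then shows that every remaining entry is determined by the matroid. For an edge $be\notin T$, take a shortest cycle of $G$ through $be$ formed with edges of $T$; this gives a chordless cycle and hence a square submatrix whose only nonzero entries lie on the cycle. Its determinant vanishes or not according to whether the corresponding column set is dependent in $M$, which is a matroid fact. For a regular matroid the entry must lie in $\{\pm1\}$, and the determinant condition fixes its sign. Inducting over the edges not in $T$, ordered by cycle length, shows that any two normalized representations coincide entry by entry.

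\emph{The main obstacle} is the induction step in this normalization. One must check that the determinant of the chordless-cycle submatrix has the form $\pm1\pm D_{be}$, and that for a real representation of a matroid with no $U_{2,4}$ minor the forced value really is $\pm1$ rather than an arbitrary real number. I would handle this by invoking excluded-minor reasoning, or by first comparing with the totally unimodular representation from Tutte's theorem, which normalizes to the same forest. In that comparison, any discrepancy in an entry yields a $2\times2$ or cycle minor exhibiting a $U_{2,4}$ minor, contradicting regularity. The final claim then follows at once: Tutte's characterization provides a totally unimodular real representation, and every real representation is projectively equivalent to it by the previous step.
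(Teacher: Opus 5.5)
Your minor-closure argument and your deduction of the final sentence are fine. For projective uniqueness the paper gives no proof of its own and cites Brylawski--Lucas and Oxley; you are reconstructing that normalization argument, but its inductive step fails as written. First, the cycle through $be$ formed with edges of $T$ is just the fundamental cycle of $be$, and it need not be chordless in $G$. Take rows $r_1,r_2,r_3$ and columns $c_1,c_2,c_3$, with $D$ supported on the path $T=r_1c_1r_2c_2r_3c_3$ together with $r_1c_2$ and $r_1c_3$. This is a binary matroid on six elements, hence regular. After normalizing $T$, the chordless $4$-cycle $r_1c_1r_2c_2$ forces $D_{r_1c_2}=1$. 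The square submatrix on the fundamental cycle of $r_1c_3$ is then all of $D$. Its determinant is $1-D_{r_1c_2}+D_{r_1c_3}=D_{r_1c_3}$, which is nonzero for every admissible value and so determines nothing. You need shortest cycles in $T$ together with the already-processed edges, choosing at each step an unprocessed edge whose such cycle is shortest. Minimality then forces the cycle to be induced in $G$, and its determinant has the form $\alpha\pm\beta D_{be}$, with $\alpha,\beta$ products of previously fixed entries. In the example, the $4$-cycle $r_1c_2r_3c_3$ does the job.

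Second, and this is the real missing idea, the dependent/independent dichotomy pins down $D_{be}$ only when the determinant vanishes. If $(B\setminus R_C)\cup K_C$ were a basis, the matroid would merely exclude the single value $\mp\alpha/\beta$. Your assertion that the entry must lie in $\{\pm1\}$ is exactly what has to be proved for an arbitrary real representation, so it cannot be invoked. What closes the step is that the chordless-cycle determinant always vanishes because $M$ is binary. The $\mathbb F_2$-representation in standard form with respect to $B$ has the same support, since the support is fixed by fundamental circuits. A $0/1$ matrix supported on a single cycle has exactly two nonzero permutation terms, so over $\mathbb F_2$ its determinant is $1+1=0$. Hence the set is dependent, the real determinant vanishes, and $D_{be}=\mp\alpha/\beta$ is forced; inductively this value is $\pm1$. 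Applying this to both your representation and the $\pm1$-normalized totally unimodular one shows they coincide. Equivalently, the totally unimodular cycle determinant lies in $\{0,\pm2\}$, so total unimodularity forces it to be $0$. Your fallback, that a discrepancy yields a $U_{2,4}$ minor, can be made rigorous. It needs a pivoting argument that shrinks a nonsingular chordless $2k$-cycle to a nonsingular $4$-cycle, and you do not supply this for $k\geq3$.
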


Higgs showed that every matroid quotient can be realized by an extension
followed by a contraction~\cite{Higgs1968}. Since every matroid $M$ on $E$
is a quotient of the free matroid on $E$, Higgs's theorem gives a matroid
$\widehat M$ and a set $R$ with
$\widehat M\setminus R$ free on $E$ and
$\widehat M/R\cong M$. Theorem~\ref{thm:regular-quotient-form} shows that
when $M$ is regular, $\widehat M$ can also be chosen regular.

\begin{theorem}
\label{thm:regular-quotient-form}
Let $M$ be a finite matroid on a ground set $E$. Then $M$ is regular if and only if there exist a finite set $R$ disjoint from $E$ and a regular matroid $\widehat M$ on $E\sqcup R$ such that
\(
    \widehat M\setminus R
\)
is the free matroid on $E$ and
\(
    M\cong \widehat M/R,
\)
where the isomorphism is the identity on $E$.
\end{theorem}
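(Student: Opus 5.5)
The reverse implication follows from Proposition~\ref{prop:regular-basic-properties}: if $\widehat M$ is regular, its contraction $\widehat M/R$ is regular, and so is $M$, which is isomorphic to it by a map that is the identity on $E$. For the forward implication we use a totally unimodular representation and build $\widehat M$ by appending an identity block together with a kernel basis, so that contracting the new elements quotients out exactly the right subspace.

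\textbf{Construction.} Assume $M$ is regular and let $r=r(M)$, $n=|E|$. By Tutte's characterization we may choose a totally unimodular $r\times n$ matrix $A$ representing $M$. If $M$ has rank less than $r$ we first delete redundant rows, which preserves total unimodularity. Put $E=\{e_1,\dots,e_n\}$, take $R=\{\rho_1,\dots,\rho_r\}$ disjoint from $E$, and consider the $n\times(n+r)$ matrix
\begin{equation}
    \widehat A=\begin{bmatrix} I_n & A^{T}\end{bmatrix},
\end{equation}
whose first $n$ columns are indexed by $E$ and last $r$ columns by $R$. Let $\widehat M$ be the vector matroid of $\widehat A$.

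\textbf{Regularity and the deletion.} Since $A$ is totally unimodular, so is $A^{T}$. Appending an identity block preserves total unimodularity: by Laplace expansion along the identity columns, every square minor of $[I\mid A^{T}]$ equals, up to sign, a minor of $A^{T}$. Hence $\widehat M$ is regular by Tutte's characterization. The columns indexed by $E$ form $I_n$, so $\widehat M\setminus R$ is the free matroid on $E$.

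\textbf{The contraction.} By Proposition~\ref{prop:represented-minors}, $\widehat M/R$ is represented by the images of the vectors $\epsilon_1,\dots,\epsilon_n$ in $\mathbb R^n/U$, where
\begin{equation}
    U=\operatorname{col}(A^{T})=\operatorname{row}(A).
\end{equation}
A set $X\subseteq E$ is dependent in this quotient exactly when some nonzero vector $c$ supported on $X$ lies in $\operatorname{row}(A)$. That is the cocircuit-space condition for $A$, not the circuit condition, so this construction yields $\widehat M/R=M^{*}$ rather than $M$.

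\textbf{The correction.} The fix is to use a kernel basis instead of the rows. Let $K$ be an $n\times(n-r)$ matrix whose columns span $\ker A$, and set $\widehat A=[I_n\mid K]$. In the quotient $\mathbb R^n/\operatorname{col}(K)$, a set $X$ is dependent exactly when some nonzero vector supported on $X$ lies in $\ker A$, which is precisely dependence of the columns of $A$ indexed by $X$. Therefore $\widehat M/R=M$ with the identity on $E$, and $|R|=n-r$.

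\textbf{Main obstacle.} The remaining step is to make $K$ totally unimodular. Choose a basis $B$ of $M$ and pivot so that $A=[I_r\mid D]$ with columns reordered. Pivoting preserves total unimodularity, and so does permuting columns. Then take
\begin{equation}
    K=\begin{bmatrix} -D\\ I_{n-r}\end{bmatrix}.
\end{equation}
Since $D$ is totally unimodular, so are $-D$ and $K$, and therefore so is $[I_n\mid K]$. This is just the standard representation of the dual matroid, since $K^{T}=[-D^{T}\mid I]$ represents $M^{*}$. The matroid $\widehat M$ is thus regular, $\widehat M\setminus R$ is free on $E$, and $\widehat M/R=M$ as required.
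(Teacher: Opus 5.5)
Your final construction matches the paper's proof: pivot to the standard form $A=[I_r\mid D]$, take $K=\begin{bmatrix}-D\\ I_{n-r}\end{bmatrix}$ and $\widehat A=[I_n\mid K]$. Your observation that dependence in $\mathbb R^n/\ker A$ is exactly column dependence in $A$ is equivalent to the paper's isomorphism $\mathbb R^n/\ker A\to\mathbb R^r$, $[v]\mapsto Av$. The opening attempt with $[I_n\mid A^{T}]$, which yields $M^*$, is a harmless detour and should be dropped from a written proof, as should the garbled phrase ``if $M$ has rank less than $r$'' (you mean: if the chosen TU matrix has more than $r$ rows).
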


\begin{proof}
If such a matroid $\widehat M$ exists, then $M$ is regular by
Proposition~\ref{prop:regular-basic-properties}.

Conversely, suppose $M$ is regular, has $n$ elements, and has rank $r$.
Choose a totally unimodular representation of $M$. After permuting columns
and pivoting on a basis, we may keep the representation totally unimodular
and write it in standard form
\begin{equation}
    A=
    \begin{pmatrix}
        I_r & D
    \end{pmatrix}.
\end{equation}
The matrix $D$ is totally unimodular. Put
\begin{equation}
    K=
    \begin{pmatrix}
        -D\\
        I_{n-r}
    \end{pmatrix}.
\end{equation}
Its columns form a basis of $\ker A$. Appending identity rows or columns
preserves total unimodularity, so both $K$ and
\begin{equation}
    \widehat A=
    \begin{pmatrix}
        I_n & K
    \end{pmatrix}
\end{equation}
are totally unimodular.

Let $\widehat M$ be the vector matroid of $\widehat A$. Index the first $n$
columns by $E$ and the remaining columns by a set $R$. Deleting $R$ leaves
$I_n$, so $\widehat M\setminus R$ is free on $E$.

By Proposition~\ref{prop:represented-minors}, contracting $R$ is represented
by quotienting $\mathbb R^n$ by the span of the columns of $K$, which is
$\ker A$. The map
\begin{equation}
    \mathbb R^n/\ker A\longrightarrow\mathbb R^r,
    \qquad
    [v]\longmapsto Av
\end{equation}
is an isomorphism and sends the image of the $e$th identity column to the
$e$th column of $A$. Thus the remaining elements represent $M$.
\end{proof}

\begin{corollary}
\label{cor:regular-extension-chain}
Every regular matroid can be obtained from a free matroid by a sequence of
single-element extensions, each of which is regular, followed by contraction
of the added elements.
\end{corollary}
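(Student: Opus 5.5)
The plan is to derive the corollary from Theorem~\ref{thm:regular-quotient-form}, using the fact that regularity is closed under deletion (Proposition~\ref{prop:regular-basic-properties}). Let $M$ be regular on $E$. The theorem gives a finite set $R=\{\rho_1,\ldots,\rho_k\}$ disjoint from $E$ and a regular matroid $\widehat M$ on $E\sqcup R$. In $\widehat M$, the deletion $\widehat M\setminus R$ is free on $E$, and $\widehat M/R\cong M$ via the identity on $E$. I would build $\widehat M$ from the free matroid one element at a time and then contract all of $R$ at the end.

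For $0\le i\le k$, put $M_i:=\widehat M\setminus\{\rho_{i+1},\ldots,\rho_k\}=\widehat M|(E\cup\{\rho_1,\ldots,\rho_i\})$. Then $M_0=\widehat M\setminus R$ is the free matroid on $E$, and $M_k=\widehat M$. Each $M_i$ is a deletion of the regular matroid $\widehat M$, so it is regular by Proposition~\ref{prop:regular-basic-properties}.

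Next I check that each step is a single-element extension in the sense of Definition~\ref{def:single-element-extension}. We have $M_i|(E\cup\{\rho_1,\ldots,\rho_{i-1}\})=\widehat M|(E\cup\{\rho_1,\ldots,\rho_{i-1}\})=M_{i-1}$, because restricting twice is the same as restricting once to the smaller set. So $M_i$ is a single-element extension of $M_{i-1}$ by $\rho_i$, and it is regular. By Crapo's correspondence (Proposition~\ref{prop:modular-cut-extension}), each step is given by a modular cut of $M_{i-1}$, though the argument does not need this.

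Finally, contracting the added elements $R$ in $M_k=\widehat M$ gives $\widehat M/R\cong M$. If the sequence should instead be read as contracting the $\rho_i$ one at a time, contraction is associative: $(\widehat M/\rho_1)/\rho_2\cdots=\widehat M/R$, which follows directly from the rank formula in Definition~\ref{def:matroid-minors}. There is no real obstacle. The only point that needs care is verifying that each intermediate matroid is regular, and this comes from closure under deletion rather than from an explicit construction. If one wants a concrete witness, the columns of $(I_n\ K)$ from the proof of the theorem, truncated to the first $n+i$ columns, are a totally unimodular representation of $M_i$.
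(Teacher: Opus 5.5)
Your proposal is correct and follows the paper's proof essentially step for step. You apply Theorem~\ref{thm:regular-quotient-form}, order $R$, take the deletions $M_i=\widehat M\setminus\{\rho_{i+1},\ldots,\rho_k\}$ (regular by closure under deletion), and contract $R$ at the end; your extra checks (that $M_i$ restricts to $M_{i-1}$, associativity of contraction, and the explicit totally unimodular column submatrix of $(I_n\ K)$) are valid and only make explicit what the paper leaves implicit.
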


\begin{proof}
Use Theorem~\ref{thm:regular-quotient-form} and order
$R=\{\rho_1,\ldots,\rho_k\}$. For $0\leq i\leq k$, let
\begin{equation}
    M_i
    :=
    \widehat M\setminus
    \{\rho_{i+1},\ldots,\rho_k\}.
\end{equation}
Then $M_0=\widehat M\setminus R$ is free, and $M_i$ is obtained from
$M_{i-1}$ by adjoining $\rho_i$. Every $M_i$ is a deletion minor of the
regular matroid $\widehat M$, hence is regular by
Proposition~\ref{prop:regular-basic-properties}. After all extensions have
been made, contracting $R$ gives $M$.
\end{proof}

\subsection{Sums and decomposition of regular matroids}
\label{subsec:regular-matroid-sums}

Regular matroids admit a recursive decomposition into graphic matroids,
cographic matroids, and one exceptional ten-element matroid. The decomposition
uses three operations that join regular matroids along no common element, one
common element, or a common triangle.

\begin{definition}[Graphic and cographic matroids]
\label{def:graphic-cographic}
Let $G$ be a finite graph. The \emph{graphic matroid} $M(G)$ has ground set
$E(G)$, with a set of edges independent exactly when it contains no graph
cycle. The \emph{cographic matroid} $M^*(G)$ has the same ground set, with
circuits given by the minimal nonempty edge cuts of $G$~\cite{Oxley2011}.
\end{definition}

\begin{definition}
\label{def:matroid-cycle}
A \emph{cycle} of a regular matroid is a disjoint union of circuits, with the
empty set allowed.
\end{definition}

A regular matroid is representable over $\mathbb F_2$, so the symmetric
difference of two cycles is again a cycle~\cite{Seymour1980,Oxley2011}. For
sets $A$ and $B$, write
$A\triangle B:=(A\setminus B)\cup(B\setminus A)$.

\begin{definition}[Regular matroid sums]
\label{def:regular-matroid-sums}
Let $M_1$ and $M_2$ be regular matroids on ground sets $E_1$ and $E_2$.
For $k\in\{1,2,3\}$, the \emph{$k$-sum} $M_1\oplus_k M_2$ is the matroid on
$E_1\triangle E_2$ whose cycles are exactly the sets
$C_1\triangle C_2\subseteq E_1\triangle E_2$, where $C_i$ is a cycle of
$M_i$, under the following hypotheses~\cite{Seymour1980,Oxley2011}:
\begin{enumerate}
    \item for a $1$-sum, $E_1\cap E_2=\varnothing$ and both ground sets are
    nonempty;
    \item for a $2$-sum, $E_1\cap E_2=\{e\}$, the common element $e$ is
    neither a loop nor a coloop of either matroid, and $|E_i|\geq3$;
    \item for a $3$-sum, $E_1\cap E_2=T$ has three elements, $T$ is a
    circuit of both matroids, $T$ contains no cocircuit of either matroid,
    and $|E_i|\geq7$.
\end{enumerate}
\end{definition}

A $1$-sum is the ordinary direct sum. In a $2$-sum, the common element $e$
is used to join the two matroids and is absent from the resulting ground
set. In a $3$-sum, the common triangle $T$ plays the same role and all three
of its elements are absent from the result. Each operation preserves
regularity~\cite{Seymour1980,Oxley2011}.

\begin{definition}[The matroid \texorpdfstring{$R_{10}$}{R10}]
\label{def:r10}
The matroid $R_{10}$ is the rank-five regular matroid on
$E_{10}:=\{0,1,\ldots,9\}$ represented by the matrix in
Figure~\ref{fig:r10-k33}~\cite{Seymour1980,Oxley2011}.
\end{definition}

\begin{figure}[t]
\centering
\begin{minipage}[c]{0.60\textwidth}
\centering
\begin{equation}
    R_{10}
    =
    M\!\left(
    \left[
    \begin{array}{ccccc|ccccc}
        1&0&0&0&0&-1& 1& 0& 0& 1\\
        0&1&0&0&0& 1&-1& 1& 0& 0\\
        0&0&1&0&0& 0& 1&-1& 1& 0\\
        0&0&0&1&0& 0& 0& 1&-1& 1\\
        0&0&0&0&1& 1& 0& 0& 1&-1
    \end{array}
    \right]
    \right).
\end{equation}
\small
The columns are indexed by $0,1,\ldots,9$ from left to right.
\end{minipage}
\hfill
\begin{minipage}[c]{0.32\textwidth}
\centering
\begin{tikzpicture}[scale=0.9]
    \coordinate (u1) at (0,2);
    \coordinate (u2) at (0,1);
    \coordinate (u3) at (0,0);
    \coordinate (v1) at (2.4,2);
    \coordinate (v2) at (2.4,1);
    \coordinate (v3) at (2.4,0);
    \foreach \u in {u1,u2,u3}
        \foreach \v in {v1,v2,v3}
            \draw (\u) -- (\v);
    \foreach \p in {u1,u2,u3,v1,v2,v3}
        \fill (\p) circle (2pt);
    \node[left]  at (u1) {$u_1$};
    \node[left]  at (u2) {$u_2$};
    \node[left]  at (u3) {$u_3$};
    \node[right] at (v1) {$v_1$};
    \node[right] at (v2) {$v_2$};
    \node[right] at (v3) {$v_3$};
    \node at (1.2,-0.55) {$K_{3,3}$};
\end{tikzpicture}

\small
Deleting element $0$ gives a copy of $M(K_{3,3})$.
\end{minipage}
\caption{A totally unimodular representation of $R_{10}$ and a graphic
representation of $R_{10}\setminus0$.}
\label{fig:r10-k33}
\end{figure}

\begin{definition}
\label{def:seymour-decomposition}
A \emph{$\{1,2,3\}$-decomposition} of a regular matroid $M$ is a rooted
binary tree whose nodes are labeled by regular matroids. The root is labeled
by $M$; every internal node is the $1$-, $2$-, or $3$-sum of its two
children; and every leaf is graphic, cographic, or isomorphic to $R_{10}$.
\end{definition}

\begin{theorem}
\label{thm:seymour-decomposition}
Every regular matroid admits a $\{1,2,3\}$-decomposition
~\cite{Seymour1980,Oxley2011}.
\end{theorem}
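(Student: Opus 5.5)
This is Seymour's decomposition theorem, and the paper cites it rather than reproving it. A self-contained proof would follow Seymour's strategy, which I organize as an induction on $|E(M)|$ driven by connectivity.

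\emph{Low connectivity.} If $M$ is not connected, write it as a direct sum of two nonempty restrictions; this is a $1$-sum of smaller regular matroids, and the induction hypothesis applies to both. If $M$ is connected but not $3$-connected, take an exact $2$-separation $(X,Y)$ with $|X|,|Y|\ge2$. Standard theory then writes $M=M_1\oplus_2 M_2$, where $M_1$ lives on $X\cup\{e\}$ and $M_2$ on $Y\cup\{e\}$. Both are minors of $M$, so both are regular by Proposition~\ref{prop:regular-basic-properties}, and both are strictly smaller, so induction applies. If $M$ is $3$-connected but has an exact $3$-separation $(X,Y)$ with $|X|,|Y|\ge4$, I would appeal to the standard construction of a $3$-sum decomposition. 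One may need to pass to a $\Delta$--$Y$ equivalent or choose the separation suitably so that a triangle $T$ can be adjoined. This expresses $M=M_1\oplus_3 M_2$ with $M_1,M_2$ regular, having $|E_i|\ge7$, and smaller than $M$.

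\emph{High connectivity.} What remains is the case of an internally $4$-connected regular matroid. Here one must show $M$ is graphic, cographic, or isomorphic to $R_{10}$. The key steps, in order, are as follows.
\begin{enumerate}
\item Use the splitter theorem to see that $R_{10}$ is a splitter for regular matroids: a $3$-connected regular matroid with an $R_{10}$ minor is $R_{10}$ itself.
\item Handle matroids with an $R_{12}$ minor. Every such $M$ has an exact $3$-separation with both sides of size at least $4$, inherited from the one in $R_{12}$. Hence an internally $4$-connected $M$ has no $R_{12}$ minor.
\item Show that a regular matroid with neither an $R_{10}$ nor an $R_{12}$ minor is graphic or cographic. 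Equivalently, if $M$ is non-graphic and non-cographic, then since $M$ is regular the forbidden-minor characterizations force both an $M^*(K_5)$ or $M^*(K_{3,3})$ minor and an $M(K_5)$ or $M(K_{3,3})$ minor. Seymour's case analysis then shows that such an $M$, if $3$-connected, has an $R_{10}$ or $R_{12}$ minor.
\end{enumerate}

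\emph{Main obstacle.} I expect the last step to be the hard one. Growing from a $K_{3,3}$- or $K_5$-type minor to $R_{10}$ or $R_{12}$ requires a delicate chain-of-extensions argument via the splitter theorem. That argument is the genuine content of Seymour's paper. In the present work I would simply cite \cite{Seymour1980,Oxley2011} for it, as the statement does.
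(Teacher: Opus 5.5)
Your outline is consistent with the paper, which gives no proof of this theorem and simply cites Seymour and Oxley. Your induction (splitting along $1$-, $2$-, and $3$-separations, then using the $R_{10}$-splitter, $R_{12}$-separation, and graphic/cographic steps in the internally $4$-connected case) accurately sketches Seymour's argument and defers the same hard steps to those references. Two small imprecisions: the first sentence of your step~3 needs the $3$-connectivity hypothesis you add afterward, since the regular matroid $M(K_5)\oplus M^*(K_5)$ has neither an $R_{10}$ nor an $R_{12}$ minor yet is neither graphic nor cographic; and no $\Delta$--$Y$ exchange is needed in the $3$-sum step, because the standard binary $3$-sum lemma applies directly to a $3$-connected binary $M$ with an exact $3$-separation $(X,Y)$, $|X|,|Y|\geq 4$, giving $M=M_1\oplus_3 M_2$ with each $M_i$ isomorphic to a minor of $M$.
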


Thus every regular matroid has a recursive description in which an internal
node combines two regular matroids across zero, one, or three shared elements,
and every leaf is graphic, cographic, or isomorphic to $R_{10}$. The shared
elements used by a $2$- or $3$-sum are absent from the ground set of the
resulting matroid.

\subsection{Colored matroids}
\label{subsec:colored-matroids}

We use the term \emph{colored matroid} in the element-labeling sense used
by Zaslavsky~\cite{Zaslavsky1992}. The colors are auxiliary labels; they do
not change the matroid.

\begin{definition}
\label{def:colored-matroid}
Let $M$ be a matroid on a finite ground set $E$, and let $\mathcal C$ be any
set. A \emph{$\mathcal C$-colored matroid} is a pair $(M,\kappa)$, where
$\kappa:E\to\mathcal C$ assigns a color to each element of the ground set.
\end{definition}

The underlying matroid remains $M$. In particular, its independent sets,
rank function, closure operator, flats, and circuits are unchanged by the
coloring. The map $\kappa$ records additional information attached to the
elements.

Here ``color'' means only a label. It may be a number, sign, set, relation,
or any other object used by the application.

\section{Queries}
\label{sec:query-models}
\label{sec:queries}

A deterministic query partitions the input domain into answer classes: two
inputs lie in the same block exactly when the query returns the same answer.
This is the standard decision-tree view
~\cite{BuhrmanDeWolf2002,KothariRacicotDeslogesSantha2015}. We use this
partition, rather than a particular encoding of the answers, as the query's
information content. Several fixed queries together induce the common
refinement of their answer partitions. Throughout, $D$ is a finite,
nonempty domain and all query gate sets are finite.

\subsection{Partitions}
\label{subsec:partitions}

We use the standard set-theoretic notion of a partition and the usual
refinement order.

\begin{definition}
\label{def:partition}
A \emph{partition} of $D$ is a collection $\pi\subseteq 2^D$ such that every
block $B\in\pi$ is nonempty, distinct blocks are disjoint, and
$\bigcup_{B\in\pi}B=D$.  We write $\Part(D)$ for the set of all partitions
of $D$.
\end{definition}

A block records a set of inputs which have not yet been distinguished.
Learning more information may split a block into smaller blocks.

\begin{definition}
\label{def:partition-refinement}
For $\pi,\sigma\in\Part(D)$, write $\pi\leq\sigma$ when every block of
$\pi$ is contained in a block of $\sigma$.  We say that $\pi$ \emph{refines}
$\sigma$.
\end{definition}

The finer partition is the smaller element of the order.  Write
$\top_D:=\{D\}$ for the indiscrete partition and
$\bot_D:=\{\{x\}:x\in D\}$ for the discrete partition.

\begin{definition}
\label{def:fiber-partition}
For $f:D\to Y$, the \emph{fiber partition} of $f$ is $\Pi_f:=\{f^{-1}(y):y\in Y,\ f^{-1}(y)\neq\varnothing\}$.
\end{definition}

Thus two inputs lie in the same block of $\Pi_f$ if and only if they have
the same target value. The partition records the fibers but not their labels.
If a computation produces a partition $\pi\leq\Pi_f$, then the block of
$\pi$ containing an input $x$ determines $f(x)$.

\begin{proposition}
\label{prop:partition-computation}
Let $f:D\to Y$ and $\pi\in\Part(D)$, and let
$q_\pi:D\to\pi$ be the map sending each $x\in D$ to its block $B_x$.
Then $\pi\leq\Pi_f$ if and only if there is a map
$\lambda_\pi:\pi\to Y$ making the diagram
\[
\begin{tikzcd}
D \arrow[rr,"f"] \arrow[dr,"q_\pi"'] && Y \\
& \pi \arrow[ur,"\lambda_\pi"'] &
\end{tikzcd}
\]
commute.
\end{proposition}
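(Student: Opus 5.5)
The proof proceeds by proving the two directions separately, with $\lambda_\pi$ constructed blockwise.

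Suppose first that $\pi\leq\Pi_f$, and let $B\in\pi$. By Definition~\ref{def:partition-refinement}, $B$ lies inside some block of $\Pi_f$. By Definition~\ref{def:fiber-partition}, that block has the form $f^{-1}(y)$ for some $y$ with nonempty fiber. This $y$ is uniquely determined by $B$. Indeed, $B$ is nonempty by Definition~\ref{def:partition}, and distinct values of $f$ have disjoint fibers, so $B$ cannot lie in two different fibers. Define $\lambda_\pi(B):=y$. For any $x\in D$ we have $x\in B_x\subseteq f^{-1}(\lambda_\pi(B_x))$, hence
\[
    \lambda_\pi(q_\pi(x))=f(x),
\]
so the diagram commutes.

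Conversely, suppose some $\lambda_\pi$ satisfies $\lambda_\pi\circ q_\pi=f$. Take a block $B\in\pi$, choose any $x_0\in B$ (again using nonemptiness), and put $y:=\lambda_\pi(B)$. Every $x\in B$ has $q_\pi(x)=B$, so $f(x)=\lambda_\pi(B)=y$. Hence $B\subseteq f^{-1}(y)$. This fiber contains $x_0$, so it is nonempty and is therefore a block of $\Pi_f$. Thus every block of $\pi$ lies in a block of $\Pi_f$, which is exactly $\pi\leq\Pi_f$.

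No step here is a real obstacle. The only point needing care is that $\lambda_\pi$ is well defined in the forward direction, and that is settled by the nonemptiness of blocks together with the disjointness of fibers.
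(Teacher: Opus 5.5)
Your proof is correct and follows essentially the same route as the paper, which likewise defines $\lambda_\pi(B):=f(x)$ for any $x\in B$ after noting that $f$ is constant on each block, and argues the converse the same way. Your explicit appeal to nonemptiness of blocks and disjointness of fibers simply spells out the well-definedness that the paper treats as immediate.
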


The map $\lambda_\pi$ is the decoder: once the information block
$q_\pi(x)$ is known, $\lambda_\pi$ returns the corresponding value $f(x)$.

\begin{proof}
If $\pi\leq\Pi_f$, then $f$ is constant on every block of $\pi$, so
$\lambda_\pi(B):=f(x)$ for any $x\in B$ is well-defined and satisfies
$f=\lambda_\pi\circ q_\pi$.  Conversely, if the diagram commutes, then
$f$ is constant on every block of $\pi$, so every block lies in one fiber
of $f$.  Hence $\pi\leq\Pi_f$.
\end{proof}

\subsection{Circuits and computation}
\label{subsec:circuits-and-computation}

We first separate circuit syntax from the information produced by an
implementation. Write $\mathcal G^*$ for the set of finite words over
$\mathcal G$, including the empty word.

\begin{definition}
\label{def:circuit-word}
Let $\mathcal G$ be a finite set of symbols, called elementary gates. A
\emph{circuit} over $\mathcal G$ is a word
$C=g_Tg_{T-1}\cdots g_1\in\mathcal G^*$.
Its length is $|C|:=T$.  The empty word is denoted $\varepsilon$.
\end{definition}

For the deterministic information model used in this section, an implementation
is specified by the partition of $D$ produced by each circuit word.

\begin{definition}
\label{def:circuit-implementation}
An \emph{implementation} of circuits over $\mathcal G$ on $D$ is a map $\llbracket-\rrbracket_D:\mathcal G^*\to\Part(D)$. We omit the subscript when the domain is clear.
\end{definition}

\begin{definition}
\label{def:circuit-computes}
A circuit $C$ \emph{computes} $f:D\to Y$ when
$\llbracket C\rrbracket\leq\Pi_f$.
\end{definition}

Definition~\ref{def:circuit-computes} is the usual deterministic
decision-tree correctness condition, written in partition notation
~\cite{BuhrmanDeWolf2002,KothariRacicotDeslogesSantha2015}.
Proposition~\ref{prop:partition-computation} gives the corresponding output
map. The circuit may distinguish inputs that have the same value of $f$.

\subsection{Query circuits}
\label{subsec:query-circuits}

In a deterministic decision tree, a query branches according to its answer
~\cite{BuhrmanDeWolf2002}. Its answer classes form a partition of
the input domain. We take that partition as the query gate data: the labels
used to encode the answers are discarded, while the pairs of inputs that the
query distinguishes are retained.

\begin{definition}
\label{def:query-gate}
A \emph{query gate} over $D$ is a formal symbol $q$ together with a
partition $\llbracket q\rrbracket\in\Part(D)$.  
\end{definition}

For a finite query set $\mathcal Q$, write
$\mathcal B_q:=\llbracket q\rrbracket$ and let $B_q(x)$ be the block of
$\mathcal B_q$ containing $x$. The global \emph{response-label set} is
$\mathsf R_{\mathcal Q}:=\{(q,B):q\in\mathcal Q,\ B\in\mathcal B_q\}$. We write $\mathsf R$ when $\mathcal Q$ is fixed.

The answers to two queries determine the common refinement of their two
answer partitions.

\begin{definition}
\label{def:common-refinement}
For $\pi,\sigma\in\Part(D)$, define
\begin{equation}
    \pi\wedge\sigma
    :=
    \{B\cap C:B\in\pi,\ C\in\sigma,\ B\cap C\neq\varnothing\}.
\end{equation}
For a query word $Q=q_T\cdots q_1\in\mathcal Q^*$, set
\begin{equation}
    \llbracket Q\rrbracket_D
    :=
    \llbracket q_T\rrbracket\wedge\cdots\wedge\llbracket q_1\rrbracket,
\end{equation}
and assign the empty word $\top_D$.
\end{definition}

The operation $\wedge$ is associative, commutative, and idempotent. Thus a
fixed query word records the information from a nonadaptive set of queries:
repetition and order do not change the induced partition.

\begin{definition}
\label{def:complete-query-set}
A finite query set $\mathcal Q$ is \emph{complete over $D$} if
\begin{equation}
    \bigwedge_{q\in\mathcal Q}\llbracket q\rrbracket=\bot_D.
\end{equation}
\end{definition}

A complete query set separates every pair of distinct inputs, so it computes
every function on $D$ by refinement.

Completeness is deliberately stronger than computation of one target.  A
complete set may reveal the whole input when the target requires much less.

\subsection{Separation}
\label{subsec:query-separation}

For a query partition, the separation relation records which input pairs lie
in different blocks.

\begin{definition}
\label{def:query-separation}
For a query $q$ over $D$, define $\operatorname{Sep}(q):=\{(x,y)\in D\times D:B_q(x)\neq B_q(y)\}$. For a function $f:D\to Y$, define $\operatorname{Sep}(f):=\{(x,y)\in D\times D:f(x)\neq f(y)\}$.
\end{definition}

Thus $\operatorname{Sep}(q)$ records the pairs distinguished by the answer to
$q$, while $\operatorname{Sep}(f)$ records the pairs that must eventually be
distinguished in order to compute $f$.

\begin{definition}
\label{def:query-set-separates-function}
For $S\subseteq\mathcal Q$, we say that $S$ \emph{separates $f$} when
\begin{equation}
    \operatorname{Sep}(f)
    \subseteq
    \bigcup_{q\in S}\operatorname{Sep}(q).
\end{equation}
\end{definition}

From $\operatorname{Sep}(q)$ we obtain the entrywise query mask used in
Section~\ref{sec:general-adversary-bound}. For coordinate queries, this is
the standard difference mask of the general adversary method
~\cite{HoyerLeeSpalek2007,LeeMittalReichardtSpalekSzegedy2011}. We use the
same construction for an arbitrary query partition.

Given a matrix $A=(A(x,y))_{x,y\in D}$ whose rows and columns are labeled by inputs $x,y\in D$, define
$\Delta_qA$ by
\begin{equation}
    (\Delta_qA)(x,y)
    :=
    \begin{cases}
        A(x,y), & (x,y)\in\operatorname{Sep}(q),\\
        0, & (x,y)\notin\operatorname{Sep}(q).
    \end{cases}
\end{equation}
Define $\Delta_fA$ in the same way using $\operatorname{Sep}(f)$.  When no
confusion can arise, we also write $\Delta_q$ and $\Delta_f$ for these
entrywise maps.

\subsection{Queries on product domains}
\label{subsec:product-query-lifts}

A query on one factor of a product domain extends to the product without
inspecting the other factor.

Let $q$ be a query over $X$, let $Z$ be finite, and let
$D\subseteq X\times Z$.  Its \emph{lift to $D$} is the query whose partition
is
\begin{equation}
    \llbracket q\rrbracket_D
    :=
    \{(B\times Z)\cap D:
        B\in\llbracket q\rrbracket,\ (B\times Z)\cap D\neq\varnothing\}.
\end{equation}
A query over $Z$ is lifted in the same way, using $X\times C$ for its
blocks.  When the domain makes the lift clear, we use the original query
symbol.

The lift does not change what the query asks.  In particular, for
$(x,z),(x',z')\in D$,
\begin{equation}
    ((x,z),(x',z'))\in\operatorname{Sep}(q)
    \quad\Longleftrightarrow\quad
    (x,x')\in\operatorname{Sep}(q)
\end{equation}
for a query lifted from $X$, and similarly for a query lifted from $Z$.

\subsection{Quantum query circuits}
\label{sec:quantum-query-circuits}
\label{subsec:quantum-query-computation}

A standard quantum query algorithm alternates input-independent reversible
transformations with calls to an input-dependent oracle
~\cite{BealsBuhrmanCleveMoscaDeWolf2001}. We use an equivalent real
phase-query form adapted to the query partitions above. The model is stated
directly in the circuit notation used throughout the paper; Appendix
~\ref{app:quantum-model-equivalence} proves exact equivalence in query count
with a standard finite response-oracle model.

\begin{definition}
\label{def:configuration-arena}
A \emph{configuration arena} is a finite nonempty set $\Omega$. Its elements
are called \emph{configurations}.
\end{definition}

\begin{definition}
\label{def:configuration-amplitude}
An \emph{amplitude assignment} on $\Omega$ is a function
$a:\Omega\to\mathbb R$ satisfying
\begin{equation}
    \sum_{c\in\Omega}a(c)^2=1.
\end{equation}
The probability reported at configuration $c$ is
$p_a(c):=a(c)^2$.
\end{definition}

The sign of $a(c)$ is invisible when a configuration is read directly, but it
matters when contributions are later recombined. Appendix
~\ref{app:square-root-carrier} records an elementary reversible-mixing
argument motivating the quadratic carrier. We otherwise take Definition
~\ref{def:configuration-amplitude} as part of the query model.

\subsubsection{Recombination}
\label{subsec:recombination}

The free gates are the input-independent linear maps preserving total squared
amplitude.

\begin{definition}
\label{def:recombination}
A \emph{recombination} on $\Omega$ is an invertible linear map
\(
    R:\mathbb R^\Omega\longrightarrow\mathbb R^\Omega
\)
which is independent of the unknown input and satisfies
\begin{equation}
    \sum_{c\in\Omega}(Ra)(c)^2
    =
    \sum_{c\in\Omega}a(c)^2
\end{equation}
for every $a\in\mathbb R^\Omega$.
\end{definition}

A recombination may move amplitude among configurations, but it does not
learn anything about the input. After ordering $\Omega$, the recombinations
are exactly the real orthogonal transformations.

\subsubsection{Phases from query partitions}
\label{subsec:phase-queries}

Phase-oracle access is a standard equivalent form of reversible black-box
access~\cite{BealsBuhrmanCleveMoscaDeWolf2001}. We use the same construction
for the blocks of an arbitrary query partition: each block receives a sign.

\begin{definition}
\label{def:phase-assignment}
Let $q\in\mathcal Q$. A \emph{phase assignment} for $q$ is a map
$\varphi:\llbracket q\rrbracket\to\{-1,+1\}$.
\end{definition}

A phase assignment does not give the query any additional information. If
two inputs lie in the same block of $\llbracket q\rrbracket$, every phase
assignment treats them in the same way.

\begin{definition}
\label{def:phase-address}
A \emph{phase address} on $\Omega$ assigns to each configuration $c\in\Omega$
a query $q_c\in\mathcal Q$ and a phase assignment
$\varphi_c:\llbracket q_c\rrbracket\to\{-1,+1\}$.
\end{definition}

The phase address is fixed before the unknown input is given. The input
determines only which block of $\llbracket q_c\rrbracket$ is selected.

\begin{definition}
\label{def:phase-query}
For an input $x\in D$, the \emph{phase query} is the linear map
$O_x:\mathbb R^\Omega\to\mathbb R^\Omega$ defined by
\begin{equation}
    (O_xa)(c):=\varphi_c(B_{q_c}(x))a(c),
\end{equation}
where $B_{q_c}(x)$ is the unique block of $\llbracket q_c\rrbracket$
containing $x$.
\end{definition}

Since every phase factor is $\pm1$, $O_x$ is an involution and preserves total
squared amplitude. It is the only operation in the circuit whose action
depends on the unknown input.

\subsubsection{Readout}
\label{subsec:quantum-readout}

\begin{definition}
\label{def:amplitude-readout}
Let $a$ be a normalized amplitude assignment on $\Omega$, and let
$\lambda:\Omega\to Y$ be a readout map. The probability of output $y\in Y$
is
\begin{equation}
    o_a(y):=\sum_{\lambda(c)=y}a(c)^2.
\end{equation}
\end{definition}

Because the fibers of $\lambda$ partition $\Omega$, $o_a$ is a probability
distribution on $Y$. Any final input-independent recombination may be
included in the circuit before this readout.

\subsubsection{Quantum query circuits}
\label{subsec:quantum-query-circuit-definition}

\begin{definition}
\label{def:quantum-query-circuit}
Fix a finite domain $D$, a finite nonempty query gate set $\mathcal Q$, and a
depth $T$. Let $\Omega$ be a configuration arena equipped with a phase
address, and let $\lambda:\Omega\to Y$ be a readout map.

A \emph{quantum query circuit of depth $T$} consists of a normalized initial
amplitude assignment $a^{\mathrm{in}}$ and input-independent recombinations
$R_0,R_1,\ldots,R_T:\mathbb R^\Omega\to\mathbb R^\Omega$.

For input $x\in D$, set $a_x^0:=R_0a^{\mathrm{in}}$. For
$0\leq t<T$, define
\begin{equation}
    a_x^{t+1}:=R_{t+1}O_xa_x^t.
\end{equation}
The circuit uses $T$ queries. Its probability of outputting $y\in Y$ is
$o_x(y):=\sum_{\lambda(c)=y}a_x^T(c)^2$.
\end{definition}

Every amplitude assignment occurring in the circuit remains normalized,
since recombinations and phase queries both preserve total squared amplitude.
We use the standard bounded-error convention, with error $1/3$ when no
superscript is shown~\cite{BealsBuhrmanCleveMoscaDeWolf2001}.

\begin{definition}
\label{def:quantum-query-complexity}
Let $0\leq\epsilon<1$. A quantum query circuit
$\epsilon$-\emph{computes} $f:D\to Y$ if
$o_x(f(x))\geq1-\epsilon$ for every $x\in D$.

The \emph{quantum query complexity} of $f$ relative to $\mathcal Q$ is
\begin{equation}
    Q_{\mathcal Q}^{\epsilon}(f)
    :=
    \min\{
        T:
        \text{a depth-$T$ quantum query circuit $\epsilon$-computes $f$}
    \}.
\end{equation}
If no such circuit exists, set $Q_{\mathcal Q}^{\epsilon}(f):=+\infty$.
We write $Q_{\mathcal Q}(f):=Q_{\mathcal Q}^{1/3}(f)$.
\end{definition}

The phase address allows the circuit to address different queries coherently
without introducing an adaptive classical selector. The information available
from each query remains exactly the information in its answer partition.

\section{The General Adversary Bound}
\label{sec:general-adversary-bound}

We use the general adversary bound with the separation filters from
Section~\ref{sec:query-models}. The bound is due to H{\o}yer, Lee, and
{\v S}palek~\cite{HoyerLeeSpalek2007}; Lee, Mittal, Reichardt,
{\v S}palek, and Szegedy give its filtered-$\gamma_2$ form
~\cite{LeeMittalReichardtSpalekSzegedy2011}.

Throughout this section all matrices are real.  For a matrix $A$, write
$\|A\|$ for its operator norm and $\|A\|_*$ for its nuclear norm, the sum of
its singular values.  The matrix pairing is
$\langle A,B\rangle:=\operatorname{Tr}(A^{\mathsf T}B)$.

\subsection{The general adversary bound}
\label{subsec:general-adversary-bound}

In the notation of Section~\ref{sec:query-models}, the weighted general
adversary bound is the following optimization
~\cite{HoyerLeeSpalek2007,LeeMittalReichardtSpalekSzegedy2011}.

\begin{definition}
\label{def:general-adversary-bound-weighted}
\label{def:general-adversary-bound}
Let $f:D\to Y$, let $\mathcal Q$ be a finite query set over $D$, and assign
each query a cost $c_q>0$. The \emph{weighted general adversary bound}
~\cite{HoyerLeeSpalek2007,LeeMittalReichardtSpalekSzegedy2011} is
\begin{equation}
    \operatorname{Adv}^{\pm}_{\mathcal Q,c}(f)
    :=
    \sup_\Gamma \|\Gamma\|,
\end{equation}
where $\Gamma$ ranges over symmetric matrices satisfying
\begin{equation}
    \Delta_f\Gamma=\Gamma,
    \qquad
    \|\Delta_q\Gamma\|\leq c_q
    \quad(q\in\mathcal Q).
\end{equation}
If these norms are unbounded, the value is $+\infty$.
\end{definition}

When every $c_q=1$, we write
$\operatorname{Adv}^{\pm}_{\mathcal Q}(f)$. Lee et al. give the
corresponding finiteness and attainment condition for filtered $\gamma_2$
~\cite{LeeMittalReichardtSpalekSzegedy2011}. In our notation it reduces to
query separation.

\begin{proposition}
\label{prop:adversary-attainment}
If $\mathcal Q$ separates $f$, then
$\operatorname{Adv}^{\pm}_{\mathcal Q,c}(f)$ is finite and attained.
\end{proposition}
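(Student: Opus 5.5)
We prove finiteness first and then attainment by compactness. The feasible set
\[
    \mathcal F:=\{\Gamma=\Gamma^{\mathsf T}:\ \Delta_f\Gamma=\Gamma,\ \|\Delta_q\Gamma\|\leq c_q\ (q\in\mathcal Q)\}
\]
is a subset of the finite-dimensional space of real symmetric $D\times D$ matrices. It is closed, because each constraint is either linear or a sublevel set of a continuous function, and it is nonempty, since it contains $\Gamma=0$. The objective $\Gamma\mapsto\|\Gamma\|$ is continuous. Hence it suffices to show that $\mathcal F$ is bounded: a continuous function on a nonempty compact set attains its supremum, which is then finite.

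For boundedness we bound each entry. Let $\Gamma\in\mathcal F$ and fix $(x,y)\in D\times D$. If $(x,y)\notin\operatorname{Sep}(f)$, then $\Delta_f\Gamma=\Gamma$ forces $\Gamma(x,y)=0$. If $(x,y)\in\operatorname{Sep}(f)$, the separation hypothesis (Definition~\ref{def:query-set-separates-function}) gives some $q\in\mathcal Q$ with $(x,y)\in\operatorname{Sep}(q)$. Then $\Gamma(x,y)=(\Delta_q\Gamma)(x,y)$. Any entry of a matrix is bounded in absolute value by its operator norm, since $|A(x,y)|=|e_x^{\mathsf T}Ae_y|\leq\|A\|$. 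Therefore
\[
    |\Gamma(x,y)|\leq\|\Delta_q\Gamma\|\leq c_q\leq\max_{q\in\mathcal Q}c_q=:c_{\max}.
\]
So every entry of every feasible $\Gamma$ is bounded by $c_{\max}$, which is finite because $\mathcal Q$ is finite. Consequently $\|\Gamma\|\leq\|\Gamma\|_F\leq|D|\,c_{\max}$. This shows $\mathcal F$ is bounded, and with the previous paragraph the value is finite and attained.

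The only substantive step is the entrywise bound. It is exactly where the separation hypothesis enters: every entry that the constraint $\Delta_f\Gamma=\Gamma$ allows to be nonzero is controlled by some query filter. Without separation, an entry $(x,y)\in\operatorname{Sep}(f)$ that no query separates could be scaled freely, and the value would be $+\infty$. The remaining points, namely that $\mathcal F$ is closed and that the operator norm is continuous, are routine.
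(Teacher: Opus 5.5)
Your proposal is correct and follows the paper's proof: the constraint set is closed, each entry on an opposite-output pair is bounded by $c_q$ for a query $q$ separating that pair, entries on equal-output pairs vanish, and compactness then gives finiteness and attainment. Your extra details, nonemptiness via $\Gamma=0$ and the explicit bound $\|\Gamma\|\leq|D|\max_q c_q$, simply spell out steps the paper leaves implicit.
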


\begin{proof}
The constraint set is closed.  If $f(x)\neq f(y)$, choose a query $q$ which
separates $x$ and $y$.  Then
$|\Gamma(x,y)|=|(\Delta_q\Gamma)(x,y)|\leq c_q$.  Entries on equal-output
pairs vanish.  Thus the constraint set is bounded, hence compact, and the
operator norm attains its maximum.
\end{proof}

For a Boolean target, adversary matrices have the usual bipartite block form
between the $0$- and $1$-inputs~\cite{HoyerLeeSpalek2007}.

\begin{lemma}
\label{lem:bipartite-spectrum}
Let $f:D\to\{0,1\}$ and let $A$ be symmetric with
$\Delta_fA=A$.  With $D=D_0\sqcup D_1$,
\begin{equation}
    A=
    \begin{pmatrix}
        0&G\\
        G^{\mathsf T}&0
    \end{pmatrix}
\end{equation}
for a real $D_0\times D_1$ matrix $G$, and $\|A\|=\|G\|$.  In
particular, the spectrum of $A$ is symmetric about zero.
\end{lemma}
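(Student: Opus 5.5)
We order the rows and columns of $A$ so that $D_0$ comes first and $D_1$ second, and then read off the block structure from the filter condition. By Definition~\ref{def:query-separation}, $\operatorname{Sep}(f)$ consists exactly of the pairs $(x,y)$ with $f(x)\neq f(y)$. These are the pairs lying in $D_0\times D_1$ or $D_1\times D_0$. The identity $\Delta_fA=A$ says every entry of $A$ outside $\operatorname{Sep}(f)$ is zero, so both diagonal blocks $D_0\times D_0$ and $D_1\times D_1$ vanish. Call the $D_0\times D_1$ block $G$. Symmetry of $A$ then forces the $D_1\times D_0$ block to be $G^{\mathsf T}$, which gives the displayed form.

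For the norm, we compute
\[
A^2=\begin{pmatrix} GG^{\mathsf T}&0\\ 0&G^{\mathsf T}G\end{pmatrix}.
\]
Since $A$ is symmetric, $\|A\|^2=\|A^2\|$, which is the largest eigenvalue of $A^2$. That eigenvalue is $\max(\|GG^{\mathsf T}\|,\|G^{\mathsf T}G\|)=\|G\|^2$. Taking square roots gives $\|A\|=\|G\|$.

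For the symmetry of the spectrum, we conjugate by the signature matrix $S=\operatorname{diag}(I_{D_0},-I_{D_1})$. A direct block computation gives $SAS^{-1}=-A$, so $A$ and $-A$ are similar. Their spectra therefore coincide with multiplicities, and the spectrum of $A$ is symmetric about zero. Equivalently, if $(u,v)$ is an eigenvector with eigenvalue $\lambda$, then $(u,-v)$ is an eigenvector with eigenvalue $-\lambda$.

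The only step that needs care is the norm identity. Using $A^2$ sidesteps any singular-value bookkeeping. Alternatively, the SVD of $G$ directly gives the eigenvalues $\pm\sigma_i$ of $A$, together with zeros.
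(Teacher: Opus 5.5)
Your proof is correct and follows the paper's approach: the block form comes from the support condition $\Delta_fA=A$ together with symmetry, and the norm identity comes from the block-diagonal square with blocks $GG^{\mathsf T}$ and $G^{\mathsf T}G$. Your conjugation $SAS^{-1}=-A$ by the signature matrix justifies the spectral symmetry more explicitly than the paper's one-line appeal to the eigenvalues being $\pm$ the singular values of $G$, but it is the same argument in substance.
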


\begin{proof}
The block form follows from the support condition.  Squaring gives diagonal
blocks $GG^{\mathsf T}$ and $G^{\mathsf T}G$, so the nonzero eigenvalues of
$A$ are the positive and negative singular values of $G$.
\end{proof}

Lee et al.\ prove that filtered $\gamma_2$ is unchanged when matching input
rows and columns are duplicated~\cite[Lemma~A.2(5)]{LeeMittalReichardtSpalekSzegedy2011}.
Lemma~\ref{lem:adversary-copy-invariance} states this invariance in the
adversary notation of Definition~\ref{def:general-adversary-bound-weighted}.

\begin{lemma}
\label{lem:adversary-copy-invariance}
Replace each input $x\in D$ by a nonempty finite set of copies.  Give every
copy the same target value as $x$, and let every query ignore the copy
label.  Then the weighted general adversary bound is unchanged.
\end{lemma}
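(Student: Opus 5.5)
We prove the two inequalities separately, working directly with Definition~\ref{def:general-adversary-bound-weighted}. Write $\widetilde D=\bigsqcup_{x\in D}S_x$, where $S_x$ is the nonempty set of copies of $x$, and let $p:\widetilde D\to D$ send a copy to its original. Put $m_x:=|S_x|$. Since every query ignores the copy label, and the target is copied as well, we have $(u,v)\in\operatorname{Sep}(q)$ if and only if $(p(u),p(v))\in\operatorname{Sep}(q)$. The same holds with $\operatorname{Sep}(f)$ in place of $\operatorname{Sep}(q)$. In matrix terms this means: if $P$ is the $\widetilde D\times D$ zero-one matrix of $p$, then for every matrix $A$ on $D$ we have $\Delta_q(PAP^{\mathsf T})=P(\Delta_qA)P^{\mathsf T}$, and similarly for $\Delta_f$.

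\emph{Copied value is at least the original.} Take a feasible $\Gamma$ on $D$. Let $W:=\operatorname{diag}(m_x^{-1/2})$ and $J:=PW$. The columns of $J$ are orthonormal, so $J^{\mathsf T}J=I$. Set $\widetilde\Gamma:=J\Gamma J^{\mathsf T}$. Then $\widetilde\Gamma=P(W\Gamma W)P^{\mathsf T}$. Since entrywise filtering commutes with diagonal scaling, $W\Gamma W$ is still $\Delta_f$-supported, and
\[
\Delta_q\widetilde\Gamma=P\,W(\Delta_q\Gamma)W\,P^{\mathsf T}=J(\Delta_q\Gamma)J^{\mathsf T}.
\]
An isometric conjugation preserves the operator norm. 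Hence $\|\Delta_q\widetilde\Gamma\|=\|\Delta_q\Gamma\|\le c_q$, $\Delta_f\widetilde\Gamma=\widetilde\Gamma$, and $\|\widetilde\Gamma\|=\|\Gamma\|$. This gives one inequality.

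\emph{Original value is at least the copied value.} Take a feasible $\widetilde\Gamma$ on $\widetilde D$. Let $u$ be a top eigenvector (or singular vector pair) of $\widetilde\Gamma$. The step we expect to be the main obstacle is compressing $\widetilde\Gamma$ back to $D$ without losing norm while keeping the filter constraints. Averaging with $J^{\mathsf T}\widetilde\Gamma J$ keeps the constraints, but it may lose norm when the optimal vector is not constant on copy classes.

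To handle this, we first symmetrize. For a tuple of permutations $\sigma$ acting within each $S_x$, the permuted matrix $\widetilde\Gamma^\sigma$ is still feasible. This is because $\Delta_q$ commutes with copy permutations, and permutations preserve norms. A symmetrization alone does not fix the top vector, so we will instead use a direct argument. Write $\|\widetilde\Gamma\|=\langle a,\widetilde\Gamma b\rangle$ with unit vectors $a,b$. Define $\alpha_x:=\|a|_{S_x}\|$ and unit vectors $\hat a_x$ on $S_x$, and similarly $\beta_y$ and $\hat b_y$.

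Next consider $\Gamma(x,y):=\langle\hat a_x,\widetilde\Gamma_{S_x,S_y}\hat b_y\rangle$, where $\widetilde\Gamma_{S_x,S_y}$ is the block of $\widetilde\Gamma$ with rows in $S_x$ and columns in $S_y$. This can be written as $\Gamma=K_a^{\mathsf T}\widetilde\Gamma K_b$, where $K_a$ has orthonormal columns $\hat a_x$ supported on $S_x$, and $K_b$ is defined in the same way. Then $\langle\alpha,\Gamma\beta\rangle=\|\widetilde\Gamma\|$, so $\|\Gamma\|\ge\|\widetilde\Gamma\|$.

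The filters pass through this compression because separation is constant on blocks $S_x\times S_y$. Thus $\Delta_q\Gamma=K_a^{\mathsf T}(\Delta_q\widetilde\Gamma)K_b$, which has norm at most $c_q$, and the support condition for $\Delta_f$ holds as well. The remaining issue is symmetry, since $\Gamma$ need not be symmetric when $a\ne b$. By Lemma~\ref{lem:bipartite-spectrum} (or for general $Y$, by replacing $\Gamma$ with $\tfrac12(\Gamma+\Gamma^{\mathsf T})$), we may take $\widetilde\Gamma$ in off-diagonal form and choose $a=b$ as a top eigenvector. For a symmetric matrix, $\|\widetilde\Gamma\|=|\langle a,\widetilde\Gamma a\rangle|$ for some unit eigenvector $a$. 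Taking $K_a=K_b$ then gives a symmetric $\Gamma$. Combining the two inequalities yields equality.
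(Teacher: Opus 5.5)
Your argument is correct and is essentially the paper's. In the hard direction you compress by the isometry $K_a$ built from the normalized copy-block pieces of a unit eigenvector of $\widetilde\Gamma$ whose eigenvalue has absolute value $\|\widetilde\Gamma\|$, which is exactly the paper's construction. In the easy direction, your uniform spreading $J=PW$ is an inessential variant of the paper's placement on a single chosen copy.

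The detour through singular pairs, Lemma~\ref{lem:bipartite-spectrum}, and the $\tfrac12(\Gamma+\Gamma^{\mathsf T})$ aside is unnecessary, and that aside is unjustified. Symmetry of $\widetilde\Gamma$ alone supplies the eigenvector for every $Y$. You should also state explicitly that $\hat a_x$ is an arbitrary unit vector whenever $a|_{S_x}=0$.
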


\begin{proof}
One inequality follows by placing an adversary matrix on one chosen copy of
each input and setting all other entries to zero.

For the reverse inequality, let $\widehat\Gamma$ be an adversary matrix on
the copied domain and choose a unit eigenvector $w$ for an eigenvalue of
absolute value $\|\widehat\Gamma\|$.  Write $w_x$ for its component in the
copy space over $x$ and put $a_x:=\|w_x\|$.  Map the basis vector $e_x$ to
$w_x/a_x$ when $a_x\neq0$, choosing any unit copy vector when $a_x=0$.
This defines an isometry $R$.  Query masks and the target mask are constant
on each pair of copy fibers, so
\begin{equation}
    \Delta_q(R^{\mathsf T}\widehat\Gamma R)=R^{\mathsf T}(\Delta_q\widehat\Gamma)R,
    \qquad
    \Delta_f(R^{\mathsf T}\widehat\Gamma R)=R^{\mathsf T}(\Delta_f\widehat\Gamma)R.
\end{equation}
The compressed matrix is thus an adversary matrix on $D$.  Moreover
$\|a\|=1$, $Ra=w$, and its Rayleigh quotient at $a$ has absolute value
$\|\widehat\Gamma\|$.
\end{proof}

\subsection{Routing}
\label{subsec:adversary-routing}

Filtered $\gamma_2$ already packages arbitrary families of query filters
~\cite{LeeMittalReichardtSpalekSzegedy2011}, while the ordinary
$\gamma_2$ norm has a rank-one trace-norm formulation
~\cite{LeeShraibmanSpalek2008}. We use finite-dimensional norm duality to
write the filtered operator-norm constraints in the corresponding
decomposition form. We call that decomposition \emph{routing}.

Assume in this subsection that $\mathcal Q$ separates $f$. The routing norm
will be used for the source-matroid calculations in
Section~\ref{sec:source-adversary}.

\begin{definition}
\label{def:routing-norm}
\label{def:source-routing-norm}
A \emph{routing} of a matrix $U$ is a family of matrices
$(R_q)_{q\in\mathcal Q}$ satisfying
\begin{equation}
    \Delta_fU=\Delta_f\sum_{q\in\mathcal Q}\Delta_qR_q.
\end{equation}
Its cost is $\sum_{q\in\mathcal Q}c_q\|R_q\|_*$.  Let
$\operatorname{Route}_{f,c}(U)$ be the infimum of this cost over all routings of $U$.
\end{definition}

Only entries in $\operatorname{Sep}(f)$ are prescribed. On the subspace of
matrices supported on $\operatorname{Sep}(f)$, $\operatorname{Route}_{f,c}$ is a norm.
Proposition~\ref{prop:adversary-routing-duality} is the filtered trace-norm
decomposition dual to the adversary operator-norm constraints. Its proof
records the quotient used here and applies finite-dimensional norm duality to
Definition~\ref{def:routing-norm}.

\begin{proposition}
\label{prop:adversary-routing-duality}
\label{thm:source-routing-duality}
For every $U$,
\begin{equation}
    \operatorname{Route}_{f,c}(U)
    =
    \sup\left\{\langle A,U\rangle:\Delta_fA=A,\ \|\Delta_qA\|\leq c_q\ \text{for all }q\right\}.
\end{equation}
Consequently,
\begin{equation}
    \operatorname{Adv}^{\pm}_{\mathcal Q,c}(f)=\sup_{\|u\|=\|v\|=1}\operatorname{Route}_{f,c}(uv^{\mathsf T}).
\end{equation}
\end{proposition}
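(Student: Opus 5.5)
We prove the first identity by finite-dimensional norm duality, working on the space $V_f$ of real $D\times D$ matrices supported on $\operatorname{Sep}(f)$, with the trace pairing. Only $\Delta_fU$ matters on both sides: the routing condition only prescribes $\Delta_fU$, and for $A$ with $\Delta_fA=A$ we have $\langle A,U\rangle=\langle A,\Delta_fU\rangle$. So we may assume $U\in V_f$.

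We write $\operatorname{Route}_{f,c}$ as a quotient norm. Consider the space $W=\bigoplus_{q}\mathbb R^{D\times D}$ with the norm $\|(R_q)\|_W=\sum_q c_q\|R_q\|_*$. Define the linear map $L:W\to V_f$ by $L(R_q)=\Delta_f\sum_q\Delta_qR_q$. Then $\operatorname{Route}_{f,c}(U)=\inf\{\|R\|_W: LR=U\}$ is the quotient norm of $W/\ker L$ transported to $V_f$.

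\textbf{Surjectivity of $L$.} Let $E_{xy}$ be the matrix unit. For $(x,y)\in\operatorname{Sep}(f)$, separation gives some $q$ with $(x,y)\in\operatorname{Sep}(q)$, and setting $R_q=E_{xy}$ produces $E_{xy}$. So $L$ is surjective, and the infimum is finite. Because $W$ is finite-dimensional, the infimum is attained and the quotient is a genuine norm on $V_f$; this justifies the remark before the proposition.

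\textbf{Dual norm.} The standard fact is that the dual of a quotient norm is the restricted dual norm on the annihilator, pulled back through $L^*$. Concretely, for $A\in V_f$ and any $R$ with $LR=U$,
\[
\langle A,U\rangle=\sum_q\langle \Delta_q\Delta_fA,R_q\rangle=\sum_q\langle\Delta_qA,R_q\rangle\le\sum_q\|\Delta_qA\|\,\|R_q\|_*.
\]
This uses the self-adjointness of the entrywise masks and operator/nuclear duality. Hence $\langle A,U\rangle\le\operatorname{Route}(U)$ whenever $\|\Delta_qA\|\le c_q$, which gives one inequality.

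For the reverse inequality, apply Hahn–Banach in finite dimensions, i.e.\ the bipolar theorem: $\operatorname{Route}(U)=\sup\{\langle A,U\rangle:\operatorname{Route}^*(A)\le1\}$. Then compute $\operatorname{Route}^*(A)=\sup_{\|R\|_W\le1}\langle A,LR\rangle=\max_q c_q^{-1}\|\Delta_qA\|$, using that the dual of a weighted $\ell_1$-sum of nuclear norms is the weighted $\ell_\infty$-sum of operator norms. So $\operatorname{Route}^*(A)\le1$ is exactly the constraint set, and the first identity follows. This identification of the dual unit ball with the adversary constraints is the step needing the most care, especially the bookkeeping showing the masks commute with the pairing.

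\textbf{Consequence.} Substitute $U=uv^{\mathsf T}$ and exchange the two suprema:
\[
\sup_{u,v}\operatorname{Route}(uv^{\mathsf T})=\sup_A\sup_{\|u\|=\|v\|=1}u^{\mathsf T}Av=\sup_A\|A\|.
\]
Here $A$ ranges over the constraint set without any symmetry requirement. The adversary bound takes the supremum over symmetric $\Gamma$, so we compare the two. Since $\operatorname{Sep}(f)$ and each $\operatorname{Sep}(q)$ are symmetric relations, any feasible $A$ yields the symmetric feasible dilation
\[
\begin{pmatrix}0&A\\A^{\mathsf T}&0\end{pmatrix}
\]
on the doubled domain $D\times\{1,2\}$, with $f$ and the queries ignoring the copy label. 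Its operator norm is $\|A\|$, and each masked block has norm $\|\Delta_qA\|$. Lemma~\ref{lem:adversary-copy-invariance} then returns this matrix to $D$ without changing the value. This gives $\sup_A\|A\|\le\operatorname{Adv}^{\pm}_{\mathcal Q,c}(f)$. The opposite inequality is trivial, since symmetric $\Gamma$ are among the admissible $A$.
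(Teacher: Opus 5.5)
Your proposal is correct and follows the paper's own proof. Weak duality comes from the masked pairing $\langle A,U\rangle=\sum_q\langle\Delta_qA,R_q\rangle$. The reverse inequality comes from finite-dimensional norm duality, using the adjoint $A\mapsto(\Delta_qA)_q$ and surjectivity from separation. The consequence follows from the symmetric dilation together with Lemma~\ref{lem:adversary-copy-invariance}. Your version differs only in writing out the quotient-norm and bipolar bookkeeping, and the reduction to $\Delta_fU$, more explicitly than the paper does.
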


\begin{proof}
For any routing and any matrix $A$ in the displayed supremum,
\begin{equation}
    \langle A,U\rangle=\sum_q\langle\Delta_qA,R_q\rangle\leq\sum_q c_q\|R_q\|_*.
\end{equation}
For the reverse inequality, use finite-dimensional norm duality. The nuclear
and operator norms are dual, and the adjoint of the routing map applies the
query filters $\Delta_q$. The map
$(R_q)_q\mapsto\Delta_f\sum_q\Delta_qR_q$ is onto the matrices supported on
$\operatorname{Sep}(f)$ because $\mathcal Q$ separates $f$.  The norm
$\sum_qc_q\|R_q\|_*$ has dual $\max_q\|C_q\|/c_q$, and the adjoint sends
$A$ to $(\Delta_qA)_q$.  This gives the first formula.

Taking the supremum over unit $u,v$ gives the largest operator norm among
all, not necessarily symmetric, matrices $A$ with the displayed constraints.
Symmetric adversary matrices are included.  Conversely, for any such $A$,
form the symmetric dilation
\begin{equation}
    \widehat A:=\begin{pmatrix}0&A\\A^{\mathsf T}&0\end{pmatrix}
\end{equation}
on two copies of the input set.  Its norm is $\|A\|$, every filtered norm is
$\|\Delta_qA\|$, and its support lies in the lifted target-separation set.
Lemma~\ref{lem:adversary-copy-invariance} consequently bounds $\|A\|$ by the
original general adversary bound.
\end{proof}

For a nonconstant Boolean target, put $D_b:=f^{-1}(b)$. If $U$ has rows in $D_0$ and
columns in $D_1$, then
\begin{equation}
\label{eq:boolean-rectangular-routing}
    \operatorname{Route}_{f,c}(U)=\inf\left\{\sum_qc_q\|R_q\|_*:U=\sum_q\Delta_qR_q\right\},
\end{equation}
where the masks are restricted to $D_0\times D_1$.  Compressing any square
routing to this rectangular block cannot increase nuclear norm.  Conversely,
a rectangular routing embeds into the square problem by setting the other
blocks to zero.  Hence
\begin{equation}
    \operatorname{Adv}^{\pm}_{\mathcal Q,c}(f)=\sup_{\|u\|=\|v\|=1}\operatorname{Route}_{f,c}(uv^{\mathsf T}),
\end{equation}
with $u$ supported on $D_0$ and $v$ on $D_1$. If $f$ is constant,
$\operatorname{Adv}^{\pm}_{\mathcal Q,c}(f)=0$, and the rectangular
unit-vector formula is unnecessary.

\section{Span Programs}
\label{sec:span-programs}

Karchmer and Wigderson introduced span programs
~\cite{KarchmerWigderson1993}. We use the indexed-vector presentation used
by Reichardt in the quantum query setting~\cite{Reichardt2009}, but take the
availability map $A:D\to2^I$ as primitive on an arbitrary finite domain.
We work over $\mathbb R$.

\begin{definition}
\label{def:span-program}
Let $D$ be a finite set. A \emph{real span program on $D$} is a tuple $P=(V,t,I,\phi,A)$, where $V$ is a finite-dimensional real vector space, $t\in V$ is the target vector, $I$ is a finite index set, $\phi:I\to V$ assigns a vector to each index, and $A:D\to 2^I$ is the availability map. The program accepts $x\in D$ when $t\in\operatorname{span}\{\phi(i):i\in A(x)\}$, and it computes $f:D\to\{0,1\}$ when it accepts exactly the inputs with $f(x)=1$.
\end{definition}

We allow $t=0$. In that case the program accepts every input. In
Proposition~\ref{prop:span-program-matroid-shadow}, the element representing
the target is then a loop.

Write $\Phi_\phi:\mathbb R^I\to V$ for the coefficient map $\Phi_\phi(a):=\sum_{i\in I}a(i)\phi(i)$. For $S\subseteq I$, write $\mathbb R^S$ for the functions in $\mathbb R^I$ supported on $S$.

\begin{definition}
\label{def:reconstruction-set}
For a span program $P=(V,t,I,\phi,A)$ and input $x$, the \emph{reconstruction set} is $\mathcal R_x(\phi):=\{a\in\mathbb R^{A(x)}:\Phi_\phi(a)=t\}$.
\end{definition}

\begin{definition}
\label{def:separation-set}
For a span program $P=(V,t,I,\phi,A)$ and input $x$, the \emph{separation set} is $\mathcal S_x(\phi):=\{\ell\in V^*:\ell(t)=1,\ \ell(\phi(i))=0\text{ for every }i\in A(x)\}$.
\end{definition}

The conditions in Definitions~\ref{def:reconstruction-set} and
\ref{def:separation-set} are the standard positive- and negative-witness
conditions~\cite{ReichardtSpalek2008,Reichardt2009}. We call their full
solution sets the reconstruction and separation sets. Thus elements of
$\mathcal R_x(\phi)$ are positive witnesses and elements of
$\mathcal S_x(\phi)$ are negative witnesses.

Proposition~\ref{prop:reconstruction-separation-acceptance} is the standard
positive/negative witness alternative~\cite{Reichardt2009}.

\begin{proposition}
\label{prop:reconstruction-separation-acceptance}
For every input $x$, the program accepts $x$ if and only if $\mathcal R_x(\phi)\neq\varnothing$, and it rejects $x$ if and only if $\mathcal S_x(\phi)\neq\varnothing$.
\end{proposition}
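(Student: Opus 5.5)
The first equivalence is immediate from the definitions. The program accepts $x$ exactly when $t\in\operatorname{span}\{\phi(i):i\in A(x)\}$. Membership in a span means there are coefficients $a(i)$, $i\in A(x)$, with $\sum_i a(i)\phi(i)=t$. Extending $a$ by zero outside $A(x)$ gives an element $a\in\mathbb R^{A(x)}$ with $\Phi_\phi(a)=t$, that is, an element of $\mathcal R_x(\phi)$. Conversely, any element of $\mathcal R_x(\phi)$ exhibits $t$ as such a combination. This step uses only Definitions~\ref{def:span-program} and~\ref{def:reconstruction-set}.

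For the second equivalence, I will show that $x$ is rejected if and only if $\mathcal S_x(\phi)\neq\varnothing$. Write $W:=\operatorname{span}\{\phi(i):i\in A(x)\}$.

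One direction is short. Suppose $\ell\in\mathcal S_x(\phi)$. Then $\ell$ vanishes on every $\phi(i)$ with $i\in A(x)$, so by linearity it vanishes on $W$. Since $\ell(t)=1\neq0$, we get $t\notin W$, and $x$ is rejected.

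Conversely, suppose $t\notin W$. I construct the separating functional directly in finite dimensions. The vector $t$ is nonzero, since $0\in W$. Choose a basis $w_1,\dots,w_k$ of $W$. Because $t\notin W$, the family $w_1,\dots,w_k,t$ is linearly independent, so it extends to a basis of $V$. Define $\ell\in V^*$ on this basis by $\ell(w_j)=0$, $\ell(t)=1$, and $\ell=0$ on the remaining basis vectors. Then $\ell$ vanishes on $W$, and in particular $\ell(\phi(i))=0$ for every $i\in A(x)$. Together with $\ell(t)=1$, this places $\ell$ in $\mathcal S_x(\phi)$.

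The only step with any content is building this functional. It is the linear-algebra counterpart of Proposition~\ref{prop:cocircuit-separates-closure} and is routine. Since acceptance and rejection are complementary, the two equivalences also show that exactly one of $\mathcal R_x(\phi)$ and $\mathcal S_x(\phi)$ is nonempty for each input $x$.
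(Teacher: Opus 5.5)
Your proof is correct and follows the same route as the paper: the first equivalence comes straight from the definitions, and for the second you show that a separating functional vanishes on the available span while taking the value $1$ on $t$, and conversely produce such a functional when $t\notin W$. The only difference is that you construct the functional explicitly, by extending a basis of $W$ together with $t$ to a basis of $V$, whereas the paper simply invokes finite-dimensional linear algebra for a functional nonzero on $t$ and then rescales it.
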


\begin{proof}
The first statement is the definition of acceptance written in coefficients. If $\ell\in\mathcal S_x(\phi)$, then $\ell$ vanishes on the span of the available vectors but takes value $1$ on the target, so the target is not in that span. Conversely, if the target is not in the available span, finite-dimensional linear algebra gives a functional which vanishes on that span and is nonzero on the target; rescaling it gives an element of $\mathcal S_x(\phi)$.
\end{proof}

\begin{proposition}
\label{prop:span-program-matroid-shadow}
Let $P=(V,t,I,\phi,A)$ be a real span program. Choose a new element
$\tau\notin I$, define
\begin{equation}
    \widetilde\phi:I\sqcup\{\tau\}\to V,
    \qquad
    \widetilde\phi(i):=\phi(i)\ (i\in I),
    \qquad
    \widetilde\phi(\tau):=t,
\end{equation}
and let $M_P:=M(\widetilde\phi)$. Then, for every $x\in D$,
\begin{equation}
    P\text{ accepts }x
    \quad\Longleftrightarrow\quad
    \tau\in\cl_{M_P}(A(x)).
\end{equation}

Conversely, let $M$ be a real-representable matroid on
$I\sqcup\{\tau\}$, let $A:D\to2^I$, and let
$\psi:I\sqcup\{\tau\}\to W$ be a real representation of $M$. Set
$V:=\operatorname{span}\psi(I\sqcup\{\tau\})$. The data
\begin{equation}
    \bigl(V,\psi(\tau),I,\psi|_I,A\bigr)
\end{equation}
define a real span program, and it accepts $x$ if and only if
$\tau\in\cl_M(A(x))$.
\end{proposition}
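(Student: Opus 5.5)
The proof reduces both directions to the observation recorded after Definition~\ref{def:vector-matroid}: for a representation $\psi$ of a matroid $M$, $e\in\cl_M(X)$ holds if and only if $\psi(e)\in\operatorname{span}\psi(X)$. Everything else is bookkeeping about which vector space and which index set are involved.

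For the forward direction, I first note that $M_P=M(\widetilde\phi)$ is by construction the vector matroid of $\widetilde\phi$. So $\widetilde\phi$ is a real representation of $M_P$ on the ground set $I\sqcup\{\tau\}$. I apply the span criterion with $e=\tau$ and $X=A(x)\subseteq I$. This gives $\tau\in\cl_{M_P}(A(x))$ if and only if $\widetilde\phi(\tau)\in\operatorname{span}\{\widetilde\phi(i):i\in A(x)\}$. Since $\widetilde\phi$ agrees with $\phi$ on $I$ and $\widetilde\phi(\tau)=t$, the right side reads $t\in\operatorname{span}\{\phi(i):i\in A(x)\}$. That is exactly the acceptance condition in Definition~\ref{def:span-program}.

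Two degenerate points need explicit attention. If $t=0$, then $\tau$ is a loop, so it lies in every closure; consistently, $0$ lies in every span, so every input is accepted. If $A(x)=\varnothing$, the span is $\{0\}$ and the closure is $\cl(\varnothing)$, so the criterion still applies verbatim.

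For the converse, I first check that the data form a span program. $V=\operatorname{span}\psi(I\sqcup\{\tau\})$ is a finite-dimensional real subspace of $W$. It contains $\psi(\tau)$ and every $\psi(i)$, so $\psi|_I$ is a well-defined map $I\to V$. The availability map $A$ takes values in $2^I$ as required.

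Next, the corestricted map $\psi:I\sqcup\{\tau\}\to V$ still represents $M$. This holds because linear independence of a family of vectors does not depend on whether we regard them as lying in $W$ or in the subspace $V$. Then the same span criterion gives $\tau\in\cl_M(A(x))$ if and only if $\psi(\tau)\in\operatorname{span}\psi(A(x))$, which is acceptance of $x$ by the program.

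I expect no real obstacle here. The only step needing care is justifying the span criterion for closure from the definitions. Closure is defined through rank, and rank equals the dimension of the span. Hence $r(A\cup\{e\})=r(A)$ if and only if $\dim\operatorname{span}\psi(A\cup\{e\})=\dim\operatorname{span}\psi(A)$, which holds if and only if $\psi(e)\in\operatorname{span}\psi(A)$. I will state this explicitly, since the paper only asserts it in passing.
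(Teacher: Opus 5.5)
Your proposal is correct and follows the paper's own proof: both directions come down to the represented-matroid identity $e\in\cl_M(X)\Leftrightarrow\psi(e)\in\operatorname{span}\psi(X)$, with $e=\tau$ and $X=A(x)$. Your extra checks go slightly beyond the paper's proof but change nothing in the approach. These are the rank-equals-dimension justification of that identity, the degenerate cases $t=0$ and $A(x)=\varnothing$, and the observation that corestricting $\psi$ to $V$ still represents $M$.
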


\begin{proof}
For the first statement, Definitions~\ref{def:vector-matroid} and
\ref{def:matroid-closure} give
\begin{equation}
    \tau\in\cl_{M_P}(A(x))
    \quad\Longleftrightarrow\quad
    t\in\operatorname{span}\{\phi(i):i\in A(x)\},
\end{equation}
which is exactly the acceptance condition in
Definition~\ref{def:span-program}.

For the converse, the same represented-matroid identity gives
\begin{equation}
    \tau\in\cl_M(A(x))
    \quad\Longleftrightarrow\quad
    \psi(\tau)\in\operatorname{span}\{\psi(i):i\in A(x)\}.
\end{equation}
This is the acceptance condition for the displayed span program.
\end{proof}

Linear span in a representation is matroid closure in the associated vector matroid. The availability map remains independent data.

\part{Query Geometry and Matroidal Structure}

This part develops the adversary quantities determined by the query problem
before any program matroid is chosen. Section~\ref{sec:correlation-bounds}
introduces the correlation quantities used by the adversary method for the
quantum query circuits of Section~\ref{sec:quantum-query-circuits}. Section
~\ref{sec:matroidal-query-models} then packages functional dependence, target
determination, and pairwise separation data into the source matroid used
throughout the remainder of the paper.

\section{Correlations and the General Adversary Bound}
\label{sec:correlation-bounds}

The adversary method can be expressed through the inner products of the input-dependent quantum states~\cite{BarnumSaksSzegedy2003,HoyerLeeSpalek2007}. In the partition-query model, these inner products give the correlation picture below. The Boolean-output minimization SDP dual to the general adversary bound is standard~\cite{Reichardt2009}; we use the separation filters of Section~\ref{sec:query-models} and call its feasible families correlation transports. We retain the dual variables explicitly.

Throughout this section, $f:D\to\{0,1\}$ is Boolean and $\mathcal Q$ is a
finite query gate set.  No matroidal assumption is made.

\subsection{Pairwise correlation}
\label{subsec:correlation-separation}

Before tracking correlations quantitatively, we record the support
condition: a bounded-error circuit can distinguish two inputs only if some
allowed query separates them.

\begin{proposition}
\label{prop:quantum-separation-cover}
If some quantum query circuit $\epsilon$-computes $f$ for
$\epsilon<1/2$, then $\mathcal Q$ separates $f$.
\end{proposition}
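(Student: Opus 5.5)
We argue by contraposition: if $\mathcal Q$ does not separate $f$, there is a pair $(x,y)\in\operatorname{Sep}(f)$ that no query separates, and we show that every quantum query circuit behaves identically on $x$ and $y$. Concretely, choose $(x,y)$ with $f(x)\neq f(y)$ and $(x,y)\notin\operatorname{Sep}(q)$ for every $q\in\mathcal Q$. Then $B_q(x)=B_q(y)$ for every query $q$.

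The key step is that the phase oracles coincide, $O_x=O_y$. By Definition~\ref{def:phase-query}, $(O_xa)(c)=\varphi_c(B_{q_c}(x))\,a(c)$. Since $B_{q_c}(x)=B_{q_c}(y)$ for every configuration $c$, the two diagonal sign matrices are equal.

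Next we induct on the depth. We have $a_x^0=R_0a^{\mathrm{in}}=a_y^0$. If $a_x^t=a_y^t$, then
\[
    a_x^{t+1}=R_{t+1}O_xa_x^t=R_{t+1}O_ya_y^t=a_y^{t+1}.
\]
Hence $a_x^T=a_y^T$, so $o_x=o_y$ as distributions on $Y$.

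Finally, suppose the circuit $\epsilon$-computes $f$ with $\epsilon<1/2$. Then $o_x(f(x))\geq1-\epsilon>1/2$ and $o_x(f(y))=o_y(f(y))\geq1-\epsilon>1/2$. Since $f(x)\neq f(y)$, these are two disjoint outcomes, so the total probability exceeds $1$. This contradicts the fact that $o_x$ is a probability distribution. Therefore $\mathcal Q$ must separate $f$.

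There is no real obstacle here. The only point needing care is observing that the phase address is fixed independently of the input, so the oracle depends on the input only through its query blocks.
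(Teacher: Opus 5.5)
Your proposal is correct and follows the paper's proof: take an unseparated opposite-output pair, observe that $O_x=O_y$ because the phase address depends on the input only through query blocks, conclude that the whole computation coincides on $x$ and $y$, and derive a contradiction with error below $1/2$. Your explicit induction on depth and the probability-sum argument just spell out steps that the paper states in one line each.
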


\begin{proof}
Suppose $f(x)\neq f(y)$ but no query in $\mathcal Q$ separates $x$ and
$y$. Then $B_q(x)=B_q(y)$ for every $q\in\mathcal Q$.

Consequently every phase assignment gives the same sign on $x$ and $y$ at
every configuration, so $O_x=O_y$. Since the initial amplitude assignment
and every recombination are input independent, the entire computation is the
same on $x$ and $y$. The two inputs consequently have the same output
distribution and cannot both be computed correctly with error less than
$1/2$.
\end{proof}

Thus failure of $\mathcal Q$ to separate $f$ prevents bounded-error computation over the query set.

Fix a quantum query circuit of depth $T$. For each input $x$, let $a_x^t:\Omega\to\mathbb R$ denote its amplitude assignment after $t$ queries. The adversary method tracks the Gram matrix of these input-dependent states~\cite{BarnumSaksSzegedy2003,HoyerLeeSpalek2007}. In our real circuit model this Gram matrix is the following kernel.

\begin{definition}
\label{def:correlation-kernel}
The \emph{correlation kernel} at time $t$ is the function
$C_t:D\times D\to\mathbb R$ given by
$C_t(x,y):=\sum_{c\in\Omega}a_x^t(c)a_y^t(c)$.
\end{definition}

Since every amplitude assignment is normalized, $C_t(x,x)=1$ for every
input $x$.

The correlations arising from a quantum circuit satisfy a positivity
property which will also appear in the static optimization.

\begin{definition}
\label{def:positive-kernel}
A \emph{positive semidefinite kernel} on $D$ is a symmetric function
$K:D\times D\to\mathbb R$ such that
\begin{equation}
    \sum_{x,y\in D}
        \alpha(x)\alpha(y)K(x,y)
    \geq0
\end{equation}
for every function $\alpha:D\to\mathbb R$.
\end{definition}

The next lemma is the two-coordinate Cauchy--Schwarz inequality for positive
semidefinite kernels.

\begin{lemma}
\label{lem:positive-kernel-entry-bound}
If $K$ is a positive semidefinite kernel, then
$|K(x,y)|^2\leq K(x,x)K(y,y)$ for every $x,y\in D$. In particular, if $K(x,x)=0$, then $K(x,y)=0$ for every $y\in D$.
\end{lemma}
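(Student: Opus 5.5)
The plan is to test positivity of $K$ on functions $\alpha$ supported on the two points $x$ and $y$. First dispose of the case $x=y$: then the claim reads $K(x,x)^2\leq K(x,x)^2$, which is trivial. Assume $x\neq y$, and for real parameters $s,u$ put $\alpha:=s\,\mathbf 1_x+u\,\mathbf 1_y$. By Definition~\ref{def:positive-kernel} and the symmetry of $K$,
\begin{equation}
    0\leq s^2K(x,x)+2su\,K(x,y)+u^2K(y,y)
\end{equation}
for all $s,u\in\mathbb R$. So the symmetric $2\times2$ matrix
\begin{equation}
    \begin{pmatrix}K(x,x)&K(x,y)\\K(x,y)&K(y,y)\end{pmatrix}
\end{equation}
is positive semidefinite. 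Its diagonal entries are nonnegative (take $u=0$ or $s=0$), and its determinant is nonnegative.

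I would prove the determinant bound through the discriminant rather than quoting the $2\times2$ criterion, splitting into two cases. If $K(x,x)>0$, take $u=1$. The quadratic $s\mapsto s^2K(x,x)+2sK(x,y)+K(y,y)$ is then nonnegative on all of $\mathbb R$ with a positive leading coefficient, so its discriminant satisfies $4K(x,y)^2-4K(x,x)K(y,y)\leq0$. This is the claimed inequality.

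If $K(x,x)=0$, the inequality $0\leq 2su\,K(x,y)+u^2K(y,y)$ must hold for every $s$ once $u=1$ is fixed. The right side is affine in $s$, so its slope must vanish, which forces $K(x,y)=0$. The inequality $|K(x,y)|^2\leq0=K(x,x)K(y,y)$ then holds with equality. This case is also exactly the "in particular" clause. Alternatively, the clause follows from the main inequality, since $|K(x,y)|^2\leq0$ when $K(x,x)=0$.

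The only point needing care is this degenerate case, where the discriminant argument fails because the leading coefficient vanishes. The affine-in-$s$ argument handles it directly.
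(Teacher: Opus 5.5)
Your proposal is correct and follows the same route as the paper: restrict the positivity condition to functions supported on $\{x,y\}$ and read the inequality off the discriminant of the resulting quadratic. Your separate treatment of a vanishing leading coefficient, where the affine-in-$s$ argument replaces the discriminant, makes explicit a degenerate case that the paper's one-line discriminant argument leaves implicit.
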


\begin{proof}
For fixed $x,y$, positivity applied to functions supported on $\{x,y\}$ says that
$K(x,x)+2tK(x,y)+t^2K(y,y)\geq0$ for every real $t$. Its discriminant is nonpositive, which gives the stated inequality. The final claim follows immediately.
\end{proof}

\begin{proposition}
\label{prop:correlation-positive}
For every $t$, the correlation kernel $C_t$ is positive semidefinite.
\end{proposition}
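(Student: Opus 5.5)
The plan is to show directly that $C_t$ is a Gram matrix of real vectors, and then to expand the quadratic form as a squared norm. Symmetry is immediate from Definition~\ref{def:correlation-kernel}, because $a_x^t(c)a_y^t(c)=a_y^t(c)a_x^t(c)$ for every configuration $c$.

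For positivity, fix an arbitrary $\alpha:D\to\mathbb R$ and form the single amplitude-valued function
\begin{equation}
    b:=\sum_{x\in D}\alpha(x)a_x^t\in\mathbb R^\Omega .
\end{equation}
Because $D$ and $\Omega$ are finite, all sums are finite and may be exchanged freely. We then expand
\begin{equation}
    \sum_{x,y\in D}\alpha(x)\alpha(y)C_t(x,y)
    =\sum_{c\in\Omega}\sum_{x,y\in D}\alpha(x)a_x^t(c)\,\alpha(y)a_y^t(c)
    =\sum_{c\in\Omega}b(c)^2\geq0 .
\end{equation}
This is exactly the condition in Definition~\ref{def:positive-kernel}.

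None of the steps is a real obstacle. The one point to be careful about is that the argument must not depend on the circuit structure. It uses only that each $a_x^t$ is a well-defined real vector in $\mathbb R^\Omega$, which holds for every $t$ and every $x$ by Definition~\ref{def:quantum-query-circuit}. Normalization and the recombination and query structure are not needed for positivity.
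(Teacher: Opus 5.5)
Your proposal is correct and matches the paper's proof: both expand the quadratic form as $\sum_{c\in\Omega}\bigl(\sum_{x}\alpha(x)a_x^t(c)\bigr)^2\geq0$. Your explicit check of symmetry, which Definition~\ref{def:positive-kernel} requires, is a small addition that the paper leaves implicit.
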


\begin{proof}
For any $\alpha:D\to\mathbb R$,
\begin{equation}
    \sum_{x,y}\alpha(x)\alpha(y)C_t(x,y)=\sum_c\left(\sum_x\alpha(x)a_x^t(c)\right)^2\geq0.
\end{equation}
\end{proof}

Because recombinations preserve the Euclidean inner product, they preserve the
entire correlation kernel.

\begin{proposition}
\label{prop:recombination-preserves-correlation}
Let $R:\mathbb R^\Omega\to\mathbb R^\Omega$ be a recombination. Then
$\sum_c(Ra)(c)(Rb)(c)=\sum_c a(c)b(c)$ for every
$a,b\in\mathbb R^\Omega$.
\end{proposition}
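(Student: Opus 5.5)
The plan is to prove the identity directly from the norm-preservation property in Definition~\ref{def:recombination}, by polarization. A recombination $R$ is linear and satisfies $\sum_c (Ra)(c)^2=\sum_c a(c)^2$ for every $a\in\mathbb R^\Omega$. Write $\langle a,b\rangle:=\sum_c a(c)b(c)$ for the standard inner product on $\mathbb R^\Omega$ and $\|a\|^2:=\langle a,a\rangle$. The hypothesis then says $\|Ra\|=\|a\|$ for all $a$.

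First, apply the hypothesis to $a+b$ and use linearity of $R$, so that $R(a+b)=Ra+Rb$. Expanding both sides bilinearly gives
\begin{equation}
    \|Ra\|^2+2\langle Ra,Rb\rangle+\|Rb\|^2
    =
    \|a\|^2+2\langle a,b\rangle+\|b\|^2 .
\end{equation}
Next, apply the hypothesis separately to $a$ and to $b$. This cancels the square terms on each side and leaves $\langle Ra,Rb\rangle=\langle a,b\rangle$, which is the claim. Equivalently, one can use the polarization identity $\langle u,v\rangle=\tfrac14(\|u+v\|^2-\|u-v\|^2)$ with $u=Ra$ and $v=Rb$.

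No step here is a genuine obstacle. The only point needing care is that linearity of $R$ is essential: it is what lets us write $R(a+b)=Ra+Rb$. Invertibility is not needed for this identity.

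As a remark following the proof, the consequence for the correlation kernel is the following. Applying this with $a=a_x^t$ and $b=a_y^t$ shows that a recombination step leaves $C_t(x,y)$ unchanged. Hence $C_t$ can change only across phase queries $O_x$.
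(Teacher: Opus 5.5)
Your proof is correct and is essentially the paper's argument: both derive the identity by polarization from linearity and preservation of squared amplitude. The paper applies the hypothesis to $a+b$ and $a-b$ and subtracts, while you use $a+b$ together with $a$ and $b$ separately; the difference is only cosmetic.
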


\begin{proof}
Apply preservation of squared amplitude to $a+b$ and $a-b$ and subtract the
two resulting identities.
\end{proof}

Accordingly, a phase query is the only operation which can change the correlation of two inputs. This is the usual support property in an adversary argument: one query can change the correlation of a pair only when that query gives different answers on the two inputs~\cite{HoyerLeeSpalek2007}. In our model, coordinate disagreement is replaced by query separation.

\begin{proposition}
\label{prop:query-correlation-support}
Let $c\in\Omega$ address the query $q_c$. If $q_c$ does not separate
$x$ and $y$, then the phase query leaves the contribution of $c$ to
$C_t(x,y)$ unchanged.
\end{proposition}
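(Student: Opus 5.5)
The plan is to compute the contribution of configuration $c$ to $C_t(x,y)$ before and after the phase query, and show the two agree. By Definition~\ref{def:correlation-kernel}, the contribution of $c$ to the correlation of $x$ and $y$ at a given stage is the product $a_x(c)a_y(c)$ of the two amplitudes at $c$. The phase query $O_x$ of Definition~\ref{def:phase-query} acts diagonally on configurations: it multiplies $a_x(c)$ by $\varphi_c(B_{q_c}(x))$. It multiplies $a_y(c)$ by $\varphi_c(B_{q_c}(y))$. After the query the contribution of $c$ is therefore
\begin{equation}
    \varphi_c(B_{q_c}(x))\,\varphi_c(B_{q_c}(y))\,a_x(c)a_y(c).
\end{equation}

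The key step is the hypothesis that $q_c$ does not separate $x$ and $y$. By Definition~\ref{def:query-separation}, this means $(x,y)\notin\operatorname{Sep}(q_c)$, that is, $B_{q_c}(x)=B_{q_c}(y)$. The two phase factors are then the same sign $\sigma\in\{-1,+1\}$, so their product is $\sigma^2=1$. The contribution of $c$ is therefore unchanged.

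Finally, I would remark that only the query changes correlations. The recombinations that follow preserve the full kernel by Proposition~\ref{prop:recombination-preserves-correlation}. The statement concerns only the per-configuration effect of the phase query, so no further argument is needed. I expect no real obstacle. The only point needing care is to note that $O_x$ is diagonal in the configuration basis, so the contributions of individual configurations are well defined and can be compared term by term.
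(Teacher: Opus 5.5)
Your proof is correct and matches the paper's argument: non-separation gives $B_{q_c}(x)=B_{q_c}(y)$, so the two phase signs are equal, their product is $1$, and the contribution $a_x(c)a_y(c)$ is unchanged. One small wording slip: the factor $\varphi_c(B_{q_c}(y))$ on $a_y(c)$ comes from the phase query $O_y$, not from $O_x$.
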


\begin{proof}
If $q_c$ does not separate $x$ and $y$, then
$B_{q_c}(x)=B_{q_c}(y)$. Hence
$\varphi_c(B_{q_c}(x))=\varphi_c(B_{q_c}(y))$, and the product of the two
signs is $1$. The contribution $a_x(c)a_y(c)$ is unchanged.
\end{proof}

\subsection{Correlation transport}
\label{subsec:static-correlation-transport}

Proposition~\ref{prop:query-correlation-support} leads to a standard static SDP. For coordinate queries, Reichardt showed that the Boolean-output general adversary bound is equal to a minimization over positive semidefinite query blocks~\cite{Reichardt2009}. Lee, Mittal, Reichardt, {\v S}palek, and Szegedy give the corresponding arbitrary-filter framework~\cite{LeeMittalReichardtSpalekSzegedy2011}. We write the same SDP using the separation filters $\Delta_q$ and call a feasible family a correlation transport. The kernels in Definition~\ref{def:correlation-transport} need not be intermediate correlation kernels of a circuit; the name records the interpretation, not an additional dynamical constraint.

\begin{definition}
\label{def:correlation-transport}
A \emph{correlation transport} for $f$ is a family of positive semidefinite kernels
$(X_q)_{q\in\mathcal Q}$ satisfying
\begin{equation}
    \sum_{\substack{q\in\mathcal Q\\
                    (x,y)\in\operatorname{Sep}(q)}}
        X_q(x,y)
    =
    1
\end{equation}
for every $(x,y)\in\operatorname{Sep}(f)$.

For an input $x$, its load is
$d_X(x):=\sum_{q\in\mathcal Q}X_q(x,x)$. The cost of the transport is
$\max_{x\in D}d_X(x)$.
\end{definition}

The kernels $X_q$ are not required to vanish on pairs which $q$ does not
separate. Those entries do not appear in the transport equations.
They cannot in general be deleted, because positivity relates the different
entries of each kernel.

\begin{definition}
\label{def:correlation-transport-value}
Define
\begin{equation}
    \operatorname{CT}_{\mathcal Q}(f)
    :=
    \inf_{(X_q)}
    \max_{x\in D}
    \sum_{q\in\mathcal Q}X_q(x,x),
\end{equation}
where the infimum is over all correlation transports for $f$. If no
correlation transport exists, set
$\operatorname{CT}_{\mathcal Q}(f):=+\infty$.
\end{definition}

This is the usual support condition for a filtered factorization; we give a direct construction in our notation~\cite{LeeMittalReichardtSpalekSzegedy2011}.

\begin{proposition}
\label{prop:correlation-transport-feasible}
If $\mathcal Q$ separates $f$, then a correlation transport for $f$ exists.
\end{proposition}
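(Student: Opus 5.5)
The plan is to build an explicit transport from query-indicator kernels, where one chosen query is responsible for each separated pair. Since $\mathcal Q$ separates $f$, for every $(x,y)\in\operatorname{Sep}(f)$ the number
\begin{equation}
    m(x,y):=\bigl|\{q\in\mathcal Q:(x,y)\in\operatorname{Sep}(q)\}\bigr|
\end{equation}
is at least one. We do not try to satisfy the transport equations with a single kernel per pair. Instead we use kernels that are positive semidefinite by construction and whose off-diagonal entries we control exactly.

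First try uniform block kernels. For each query $q$, set $X_q:=\mathbf 1$, the all-ones kernel, which is positive semidefinite since it equals $\alpha\mapsto(\sum_x\alpha(x))^2$. Then every separated pair receives the sum $m(x,y)$ rather than $1$, so this only works after rescaling, and $m$ varies from pair to pair. It is better to use the rank-one kernels
\begin{equation}
    K_{q,B}(x,y):=\mathbf 1[x\in B]\,\mathbf 1[y\in B]
\end{equation}
together with sign choices. The cleanest construction I would commit to uses, for each $q$, a family of sign vectors $s\in\{\pm1\}^{D}$ that are constant on the blocks of $q$.

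\textbf{Main construction.} For a fixed query $q$, let $S_q$ be the set of all maps $\varphi:\mathcal B_q\to\{\pm1\}$. Define
\begin{equation}
    Y_q(x,y):=\frac{1}{|S_q|}\sum_{\varphi\in S_q}
    \frac{1-\varphi(B_q(x))}{2}\cdot\frac{1-\varphi(B_q(y))}{2}.
\end{equation}
This is positive semidefinite as an average of rank-one kernels $v_\varphi v_\varphi^{\mathsf T}$. Its value is $1/2$ when $x,y$ lie in the same block and $1/4$ when they lie in different blocks. That does not isolate separated pairs well. So I would switch to the difference kernel
\begin{equation}
    Z_q(x,y):=\mathbf 1[B_q(x)=B_q(y)],
\end{equation}
which is positive semidefinite because it is a sum of block indicators $\sum_B K_{q,B}$. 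Then $J-Z_q$ is exactly the indicator of $\operatorname{Sep}(q)$. The indicator itself is not positive semidefinite, but the transport equation only reads the entries on $\operatorname{Sep}(q)$, and there $Z_q$ vanishes while $\mathbf 1$ equals $1$. Hence the plain kernel $X_q=\mathbf 1$ contributes exactly $1$ on separated pairs, and the only remaining problem is the multiplicity $m(x,y)$.

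\textbf{Handling the multiplicity (the main obstacle).} The difficulty is making the total equal to exactly $1$ on every pair of $\operatorname{Sep}(f)$ while keeping positivity. The simplest route is to pass to entrywise products. Schur products of positive semidefinite kernels remain positive semidefinite (Schur product theorem, or directly via tensor-product vectors). Order $\mathcal Q=\{q_1,\ldots,q_k\}$ and set
\begin{equation}
    X_{q_j}:=\mathbf 1\circ Z_{q_1}\circ\cdots\circ Z_{q_{j-1}},
\end{equation}
that is, $X_{q_j}(x,y)=1$ exactly when $q_1,\ldots,q_{j-1}$ all fail to separate $x$ and $y$. Each $X_{q_j}$ is a product of positive semidefinite block kernels, hence positive semidefinite; concretely it is the block-indicator kernel of the common refinement $\llbracket q_1\rrbracket\wedge\cdots\wedge\llbracket q_{j-1}\rrbracket$.

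For $(x,y)\in\operatorname{Sep}(f)$, let $j_0$ be the first index with $(x,y)\in\operatorname{Sep}(q_{j_0})$; it exists because $\mathcal Q$ separates $f$. In the transport sum, only indices $j$ with $(x,y)\in\operatorname{Sep}(q_j)$ appear, which forces $j\geq j_0$. Among those, $X_{q_j}(x,y)=1$ only for $j=j_0$, since for $j>j_0$ the factor $Z_{q_{j_0}}$ vanishes. The sum is therefore exactly $1$, so $(X_q)$ is a correlation transport, with finite load $d_X(x)\leq|\mathcal Q|$.
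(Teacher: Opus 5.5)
Your final construction is correct. Each kernel $X_{q_j}(x,y)=\prod_{i<j}\mathbf 1[B_{q_i}(x)=B_{q_i}(y)]$ is the block-indicator kernel of $\llbracket q_1\rrbracket\wedge\cdots\wedge\llbracket q_{j-1}\rrbracket$, so it is positive semidefinite. On a pair in $\operatorname{Sep}(f)$, only the first separating query contributes, and it contributes exactly $1$.

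The paper also makes a single query responsible for each pair, but it does so locally. It picks an arbitrary separating query $q_{xy}$ for each unordered opposite-output pair. It then lets $X_q$ be the sum, over the pairs assigned to $q$, of the two-point kernels $K_{xy}$ that equal $1$ on $\{x,y\}\times\{x,y\}$ and $0$ elsewhere. Positivity is then just $(\alpha(x)+\alpha(y))^2\geq0$, with no ordering of $\mathcal Q$ and no Schur-product or common-refinement argument.

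The difference between the two routes shows up in the load:
\begin{itemize}
\item In the paper's transport, $d_X(x)$ is the number of opposite-output partners of $x$. Its maximum is therefore $\max\{|f^{-1}(0)|,|f^{-1}(1)|\}$, which can be exponentially larger than $|\mathcal Q|$.
\item Your transport has load exactly $|\mathcal Q|$ at every input. This is because your kernels are the indistinguishability kernels of the nonadaptive procedure that asks $q_1,\ldots,q_k$ in order.
\end{itemize}
So your route gives, beyond feasibility, the quantitative bound $\operatorname{CT}_{\mathcal Q}(f)\leq|\mathcal Q|$. The paper's construction is the more elementary witness of existence. Existence is all the statement asks for, and the attainment argument in Proposition~\ref{prop:correlation-transport-attainment} only needs some transport of finite cost.

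In a write-up, drop the abandoned attempts with $Y_q$ and with plain all-ones kernels; they play no role in the argument.
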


\begin{proof}
For each unordered pair $\{x,y\}$ with $f(x)\neq f(y)$, choose a query
$q_{xy}$ which separates $x$ and $y$. Such a query exists because
$\mathcal Q$ separates $f$.

Let $K_{xy}$ be the kernel which takes value $1$ on
$\{x,y\}\times\{x,y\}$ and $0$ elsewhere. It is positive semidefinite, since for every
$\alpha:D\to\mathbb R$,
\begin{equation}
    \sum_{u,v}\alpha(u)\alpha(v)K_{xy}(u,v)
    =
    \bigl(\alpha(x)+\alpha(y)\bigr)^2
    \geq0.
\end{equation}

For each $q$, let $X_q$ be the sum of the kernels $K_{xy}$ for which
$q_{xy}=q$. Sums of positive semidefinite kernels are positive semidefinite. For each
$(x,y)\in\operatorname{Sep}(f)$, exactly one kernel assigned to a query
separating that pair contributes $1$ to its transport equation. Hence
$(X_q)_{q\in\mathcal Q}$ is a correlation transport.
\end{proof}

Proposition~\ref{prop:correlation-transport-feasible} shows that $\operatorname{CT}_{\mathcal Q}(f)$ is finite whenever $\mathcal Q$ separates $f$. The standard adversary SDP attains its optimum~\cite{Reichardt2009,LeeMittalReichardtSpalekSzegedy2011}; we record the compactness argument in the present notation.

\begin{proposition}
\label{prop:correlation-transport-attainment}
If $\mathcal Q$ separates $f$, there is a correlation transport whose cost
is $\operatorname{CT}_{\mathcal Q}(f)$.
\end{proposition}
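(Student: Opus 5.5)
The plan is a compactness argument on a minimizing sequence. Diagonal loads control every entry through positive semidefiniteness, so any cost-bounded family of transports lies in a compact set on which the cost is continuous.

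By Proposition~\ref{prop:correlation-transport-feasible}, some correlation transport $(X^0_q)_q$ exists. Let $c_0$ be its cost. Then $\operatorname{CT}_{\mathcal Q}(f)\le c_0<\infty$, and the infimum is also bounded below by $0$, since each diagonal entry of a positive semidefinite kernel is nonnegative. Choose transports $(X^{(n)}_q)_q$ whose costs decrease to $\operatorname{CT}_{\mathcal Q}(f)$, all with cost at most $c_0$.

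Next we bound every entry uniformly. Fix $q$ and $x$. Each term of the load $d_{X^{(n)}}(x)$ is nonnegative, so $0\le X^{(n)}_q(x,x)\le c_0$. Lemma~\ref{lem:positive-kernel-entry-bound} then gives
\[
|X^{(n)}_q(x,y)|\le\sqrt{X^{(n)}_q(x,x)\,X^{(n)}_q(y,y)}\le c_0
\]
for all $x,y$. Hence the sequence lies in a bounded subset of the finite-dimensional space $(\mathbb R^{D\times D})^{\mathcal Q}$. By Bolzano--Weierstrass, some subsequence converges to a family $(X_q)_q$.

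Finally we check that the limit is an optimal transport. Each limit $X_q$ is symmetric. It is positive semidefinite because, for each fixed $\alpha$, the condition $\sum_{x,y}\alpha(x)\alpha(y)X_q(x,y)\ge0$ is a closed linear inequality. For each $(x,y)\in\operatorname{Sep}(f)$, the transport equation is a finite linear equation in the entries, so it passes to the limit. Thus $(X_q)_q$ is a correlation transport. Its cost $\max_x\sum_q X_q(x,x)$ is a continuous function of the entries, so it equals the limit of the costs along the subsequence, which is $\operatorname{CT}_{\mathcal Q}(f)$.

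The one step needing care is the boundedness. The off-diagonal entries on pairs that no query in the relevant family constrains are not controlled by the transport equations at all. They are controlled only through positive semidefiniteness, via Lemma~\ref{lem:positive-kernel-entry-bound}, together with the cost bound on the diagonals. Once that bound is in place, the rest is routine closedness.
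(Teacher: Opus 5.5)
Your proposal is correct and follows essentially the same argument as the paper. Both restrict to transports of cost at most that of a feasible one, bound every entry through Lemma~\ref{lem:positive-kernel-entry-bound} and the diagonal load bound, and conclude by closedness and continuity. Your minimizing-sequence and Bolzano--Weierstrass phrasing is just the sequential form of the paper's compact-sublevel-set argument.
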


\begin{proof}
By Proposition~\ref{prop:correlation-transport-feasible}, choose one
correlation transport and let $L$ be its cost. It is enough to minimize over
transports of cost at most $L$.

By Lemma~\ref{lem:positive-kernel-entry-bound},
$|X_q(x,y)|^2\leq X_q(x,x)X_q(y,y)$. If a transport has cost at most $L$, every diagonal entry $X_q(x,x)$ lies between $0$ and $L$, and hence $|X_q(x,y)|\leq L$ for all $q,x,y$.

The transports of cost at most $L$ hence form a bounded subset of a
finite-dimensional space. The positivity constraints and transport equations define closed sets, so the bounded constraint set is compact. The cost is continuous and consequently
attains its infimum.
\end{proof}

\subsection{The correlation Lagrangian}
\label{subsec:correlation-lagrangian}

Assume that $\mathcal Q$ separates $f$. Direct Lagrange-multiplier derivations of adversary dual formulations appear in \v Spalek and Szegedy~\cite{SpalekSzegedy2006}; Reichardt's Boolean-output minimization formulation gives the SDP used here~\cite{Reichardt2009}. We retain the multipliers explicitly. By Propositions~\ref{prop:correlation-transport-feasible} and~\ref{prop:correlation-transport-attainment}, the transport problem has an attained optimum.

Introduce a scalar $t$ and minimize $t$ subject to
$\sum_qX_q(x,x)\leq t$ for every input $x$, together with the transport
equations and positivity of each $X_q$.

For kernels $K,L:D\times D\to\mathbb R$, write
$K\mathbin{\bullet}L:=\sum_{x,y}K(x,y)L(x,y)$ for their entrywise pairing.

Let $p:D\to[0,\infty)$ assign a multiplier to each load inequality. The
transport equations for $(x,y)$ and $(y,x)$ are identical because every
$X_q$ is symmetric. Replacing any pair of multipliers by their average
leaves the Lagrangian unchanged. We may consequently use a
symmetric multiplier $\Gamma:D\times D\to\mathbb R$ supported on
$\operatorname{Sep}(f)$. For each query $q$, define
$\Gamma_q(x,y):=\Gamma(x,y)$ when $(x,y)\in\operatorname{Sep}(q)$ and
$\Gamma_q(x,y):=0$ otherwise. Finally, let $D_p$ be the diagonal kernel with
$D_p(x,x)=p(x)$.

\begin{definition}
\label{def:correlation-transport-lagrangian}
The \emph{correlation Lagrangian} is
\begin{equation}
\begin{aligned}
    \mathcal L(t,(X_q),p,\Gamma)
    &:=
    t
    +
    \sum_x
        p(x)
        \left(
            \sum_qX_q(x,x)-t
        \right)\\
    &\quad+
    \sum_{(x,y)\in\operatorname{Sep}(f)}
        \Gamma(x,y)
        \left(
            1-
            \sum_{\substack{q\in\mathcal Q\\
                            (x,y)\in\operatorname{Sep}(q)}}
                X_q(x,y)
        \right).
\end{aligned}
\end{equation}
\end{definition}

Collecting the terms gives
\begin{equation}
    \mathcal L
    =
    t\left(1-\sum_xp(x)\right)
    +
    \sum_{x,y}\Gamma(x,y)
    +
    \sum_{q\in\mathcal Q}
        (D_p-\Gamma_q)\mathbin{\bullet}X_q.
\end{equation}

The first term is bounded below as a function of the unrestricted scalar
$t$ exactly when $\sum_xp(x)=1$.

To minimize the remaining terms over positive semidefinite kernels, we need to know when
$A\mathbin{\bullet}X$ can be bounded below as $X$ ranges over such kernels.

\begin{lemma}
\label{lem:positive-kernel-self-duality}
Let $A:D\times D\to\mathbb R$ be symmetric. Then
$\inf_{X\text{ positive semidefinite}}A\mathbin{\bullet}X$ is finite if and only if
$A$ is positive semidefinite. In that case the infimum is zero.
\end{lemma}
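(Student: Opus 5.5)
The statement is the self-duality of the positive semidefinite cone under the entrywise pairing $\bullet$. I will prove the two directions separately, using only rank-one kernels and the spectral theorem for real symmetric matrices.

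\emph{Sufficiency and the value of the infimum.} Assume $A$ is positive semidefinite. I will show that $A\mathbin{\bullet}X\geq0$ for every positive semidefinite $X$. Regard $X$ as a real symmetric matrix indexed by $D$ and diagonalize it as $X=\sum_k\mu_k v_kv_k^{\mathsf T}$ with all $\mu_k\geq0$. Nonnegativity of the $\mu_k$ comes from Definition~\ref{def:positive-kernel} applied to the eigenvectors. Bilinearity then gives
\[
A\mathbin{\bullet}X=\sum_k\mu_k\sum_{x,y}v_k(x)v_k(y)A(x,y).
\]
Each inner sum is nonnegative by positivity of $A$, so the whole expression is nonnegative. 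Taking $X=0$ shows that the infimum equals $0$, and in particular it is finite.

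\emph{Necessity.} Suppose $A$ is not positive semidefinite. By Definition~\ref{def:positive-kernel} there is some $\alpha:D\to\mathbb R$ with
\[
\sum_{x,y}\alpha(x)\alpha(y)A(x,y)=-c<0.
\]
Consider the rank-one kernels $X_s(x,y):=s\,\alpha(x)\alpha(y)$ for $s\geq0$. Each $X_s$ is symmetric and positive semidefinite, because its quadratic form is $s\bigl(\sum_x\beta(x)\alpha(x)\bigr)^2\geq0$ for every $\beta$. Moreover $A\mathbin{\bullet}X_s=-sc$. Letting $s\to\infty$ sends this to $-\infty$, so the infimum is not finite. Together with the first part, this gives the equivalence and the value zero.

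The only step that is not immediate is the spectral decomposition used in the sufficiency direction. It can be avoided: write $X=\sum_k w_kw_k^{\mathsf T}$, for instance as a Gram factorization, and then apply the positivity of $A$ to each vector $w_k$.
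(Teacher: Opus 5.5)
Your proof is correct and takes essentially the same approach as the paper. The paper also writes $X$ as a finite sum $\sum_i u_iu_i^{\mathsf T}$ to obtain $A\mathbin{\bullet}X=\sum_i u_i^{\mathsf T}Au_i\geq0$, with $X=0$ giving the value zero. For the other direction it likewise scales the rank-one kernel $\alpha(x)\alpha(y)$ to send the pairing to $-\infty$ when $A$ is not positive semidefinite.
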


\begin{proof}
Suppose first that $A$ is positive semidefinite. Every positive semidefinite kernel $X$ can be written
as a finite sum $X=\sum_i u_i u_i^{\mathsf T}$. Hence
$A\mathbin{\bullet}X=\sum_i u_i^{\mathsf T}Au_i\geq0$, while $X=0$ gives
value zero.

If $A$ is not positive semidefinite, choose $\alpha:D\to\mathbb R$ with
$\sum_{x,y}\alpha(x)\alpha(y)A(x,y)<0$. The kernel
$X(x,y):=\alpha(x)\alpha(y)$ is positive semidefinite. Multiplying $X$ by arbitrarily
large positive scalars sends $A\mathbin{\bullet}X$ to $-\infty$.
\end{proof}

Applying the lemma separately to each $X_q$ gives the dual problem. The resulting dual is the partition-query form of the standard adversary dual appearing in Reichardt's proof~\cite{Reichardt2009}.

\begin{proposition}
\label{prop:correlation-transport-dual}
Suppose $\mathcal Q$ separates $f$. The dual value of the correlation
transport problem is
\begin{equation}
    \sup_{p,\Gamma}
        \sum_{x,y}\Gamma(x,y),
\end{equation}
where $p(x)\geq0$, $\sum_xp(x)=1$, $\Gamma$ is symmetric and supported on
$\operatorname{Sep}(f)$, and $D_p-\Gamma_q$ is positive semidefinite for every
$q\in\mathcal Q$. This dual value equals
$\operatorname{CT}_{\mathcal Q}(f)$.
\end{proposition}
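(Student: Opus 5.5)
The plan has two parts: compute the Lagrangian dual function explicitly, then show there is no duality gap by a Slater-type argument.

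\emph{Computing the dual.} The dual function is
\[
g(p,\Gamma):=\inf_{t,\,X_q\succeq0}\mathcal L .
\]
We read it off from the collected form of $\mathcal L$ displayed after Definition~\ref{def:correlation-transport-lagrangian}.
\begin{itemize}
\item The coefficient of the free scalar $t$ is $1-\sum_xp(x)$. So $g=-\infty$ unless $\sum_xp(x)=1$, and in that case the $t$-term vanishes.
\item Each $X_q$ enters only through $(D_p-\Gamma_q)\mathbin{\bullet}X_q$. The kernel $D_p-\Gamma_q$ is symmetric, because $\Gamma$ is symmetric and $\operatorname{Sep}(q)$ is a symmetric relation.
\item By Lemma~\ref{lem:positive-kernel-self-duality}, the infimum over $X_q$ is $0$ if $D_p-\Gamma_q$ is positive semidefinite and $-\infty$ otherwise.
\end{itemize}
Hence on the stated feasible region $g(p,\Gamma)=\sum_{x,y}\Gamma(x,y)$, and $g=-\infty$ elsewhere. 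The restriction to symmetric $\Gamma$ supported on $\operatorname{Sep}(f)$ loses nothing, by the averaging remark made before the definition. This identifies the dual value as the displayed supremum.

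Weak duality then follows by the usual argument. For any primal-feasible $(t,(X_q))$ and dual-feasible $(p,\Gamma)$, the multiplier terms satisfy $p(x)\bigl(\sum_qX_q(x,x)-t\bigr)\le0$, and the equality-constraint terms vanish. Therefore $\sum\Gamma=g\le\mathcal L\le t$. Taking $t$ equal to the cost gives $\sup\le\operatorname{CT}_{\mathcal Q}(f)$.

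\emph{Strong duality.} This is the main step. The primal is a conic program over the PSD cones together with linear constraints, so I would verify a Slater condition and invoke conic strong duality. The primal optimum is attained (Proposition~\ref{prop:correlation-transport-attainment}), which makes the finite value of $\operatorname{CT}$ available for this.

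To produce a strictly feasible point, I would start from the transport $(X_q)$ of Proposition~\ref{prop:correlation-transport-feasible}. Adding $\varepsilon I$ to each $X_q$ changes only diagonal entries. Diagonal pairs $(x,x)$ never lie in $\operatorname{Sep}(f)$, so the transport equations are untouched, while each $X_q$ becomes positive definite. Finally, choose $t$ strictly larger than every load, so the inequalities $\sum_qX_q(x,x)\le t$ hold strictly. Every inequality and cone constraint is now strict and every equality is satisfied, so Slater's condition holds. Conic duality then gives zero gap, with dual attainment, and so the supremum equals $\operatorname{CT}_{\mathcal Q}(f)$.

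The one place needing care is that the equality constraints are affine and consistent, which follows from the feasibility just shown. The linear-independence requirement on the equality constraints is not needed for Slater's theorem in its standard convex form. So the hardest part is simply stating the correct version of conic strong duality. As a fallback, I would instead argue via the minimax theorem, using the compact set of primal transports of bounded cost from the proof of Proposition~\ref{prop:correlation-transport-attainment}.
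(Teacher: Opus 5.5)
Your proof is correct. The dual computation matches the paper's: the $t$-coefficient forces $\sum_xp(x)=1$, Lemma~\ref{lem:positive-kernel-self-duality} is applied to each symmetric kernel $D_p-\Gamma_q$, and averaging justifies taking $\Gamma$ symmetric.

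The difference is in which side you verify Slater's condition on. The paper checks it on the \emph{dual} side. It takes $p(x)>0$ for every input and $\Gamma=0$, so every $D_p-\Gamma_q=D_p$ is positive definite, and combines this with primal feasibility from Proposition~\ref{prop:correlation-transport-feasible}. You check it on the \emph{primal} side instead. You perturb the feasible transport to $X_q+\varepsilon I$, which is legitimate because diagonal pairs never lie in $\operatorname{Sep}(f)$. You then take $t$ above every load and combine this with finiteness of $\operatorname{CT}_{\mathcal Q}(f)$.

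What each buys:
\begin{itemize}
\item The paper's strictly feasible point is immediate and needs no perturbation argument. Its form of the duality theorem yields primal attainment, which Proposition~\ref{prop:correlation-transport-attainment} had already established.
\item Your form yields attainment of the dual supremum directly. The paper obtains this only afterward, in the proof of Theorem~\ref{thm:correlation-transport-adversary}, by building $(p,\Gamma)$ from an optimal adversary matrix.
\end{itemize}

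Your explicit weak-duality paragraph spells out a step the paper leaves inside its appeal to Slater strong duality. Neither the linear-independence remark nor the minimax fallback is needed.
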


\begin{proof}
The expansion following Definition~\ref{def:correlation-transport-lagrangian} shows that minimizing over the unrestricted
$t$ requires $\sum_xp(x)=1$. By
Lemma~\ref{lem:positive-kernel-self-duality}, minimizing over each positive semidefinite
kernel $X_q$ has a finite value exactly when $D_p-\Gamma_q$ is positive semidefinite.
When $\sum_xp(x)=1$ and every $D_p-\Gamma_q$ is positive semidefinite, the infimum of the Lagrangian is $\sum_{x,y}\Gamma(x,y)$.

It remains to compare the primal and dual values. Because $\mathcal Q$ separates $f$, Proposition~\ref{prop:correlation-transport-feasible} gives a primal point satisfying the transport constraints. For the dual, choose $p(x)>0$ for every input, normalize $\sum_xp(x)=1$, and take $\Gamma=0$. Then, for every nonzero $\alpha:D\to\mathbb R$, one has $\sum_xp(x)\alpha(x)^2>0$, so every kernel $D_p-\Gamma_q=D_p$ is positive definite.

After ordering the finite set $D$, positive semidefinite kernels are positive semidefinite matrices. The primal and dual are finite-dimensional semidefinite programs, and the constructed point satisfies the dual inequalities strictly. Slater strong duality gives equality of the primal and dual values~\cite[Theorem~3.4]{Lovasz2003}. The primal value is $\operatorname{CT}_{\mathcal Q}(f)$ by Definition~\ref{def:correlation-transport-value}.
\end{proof}

The function $p$ weights the load constraint of each input. The multiplier $\Gamma(x,y)$ belongs to the transport equation for the pair $(x,y)$. For a query $q$, the kernel $\Gamma_q$ keeps only the multipliers of pairs separated by $q$.

\subsection{Identifying the transport value}
\label{subsec:correlation-adversary-identification}

For coordinate queries, the identity below is Reichardt's Boolean-output dual formulation of the general adversary bound~\cite{Reichardt2009}. The argument depends only on the separation filters, so the same proof applies to the query partitions used here; this is also the arbitrary-filter setting of filtered-$\gamma_2$~\cite{LeeMittalReichardtSpalekSzegedy2011}. We include the argument to fix the normalization of $p$ and $\Gamma$.

\begin{theorem}
\label{thm:correlation-transport-adversary}
For every Boolean $f:D\to\{0,1\}$ and finite query gate set $\mathcal Q$,
\begin{equation}
    \operatorname{CT}_{\mathcal Q}(f)
    =
    \operatorname{Adv}^{\pm}_{\mathcal Q}(f),
\end{equation}
with both sides allowed to take the value $+\infty$.
\end{theorem}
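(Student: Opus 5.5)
We split by whether $\mathcal Q$ separates $f$. If it does not, pick $(x,y)\in\operatorname{Sep}(f)$ that no query separates. Then no correlation transport exists, because the transport equation for $(x,y)$ has an empty sum and reads $0=1$. Hence $\operatorname{CT}_{\mathcal Q}(f)=+\infty$. On the adversary side, take the symmetric matrix $\Gamma$ with $\Gamma(x,y)=\Gamma(y,x)=s$ and all other entries zero. Every filter kills it, so $\Delta_q\Gamma=0$ for all $q$, while $\|\Gamma\|=|s|$ is unbounded. Both sides are therefore $+\infty$. From now on assume $\mathcal Q$ separates $f$. Then both values are finite: the adversary value by Proposition~\ref{prop:adversary-attainment}, and $\operatorname{CT}_{\mathcal Q}(f)$ by Proposition~\ref{prop:correlation-transport-dual}, which identifies it with the dual supremum of $\sum_{x,y}\Gamma(x,y)$ subject to $p\ge0$, $\sum_x p(x)=1$, and $D_p-\Gamma_q\succeq0$. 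It remains to show that this dual value equals $\operatorname{Adv}^{\pm}_{\mathcal Q}(f)$.

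\emph{Dual $\le$ Adv.} Take a feasible pair $(p,\Gamma)$ and set $w(x):=\sqrt{p(x)}$. Restrict to the support of $p$; on that support define $A(x,y):=\Gamma(x,y)/(w(x)w(y))$. Outside the support of $p$, Lemma~\ref{lem:positive-kernel-entry-bound} applied to $D_p-\Gamma_q$ forces $\Gamma_q(x,\cdot)=0$. Every pair in $\operatorname{Sep}(f)$ is separated by some $q$, so $\Gamma$ itself vanishes on the rows and columns of such $x$. By construction $A$ is supported on $\operatorname{Sep}(f)$ and $\Delta_qA=W^{-1}\Gamma_qW^{-1}$ with $W=\operatorname{diag}(w)$. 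The condition $D_p\succeq\Gamma_q$ is equivalent to $I\succeq\Delta_qA$, which gives $\lambda_{\max}(\Delta_qA)\le1$. We also need a lower bound. For Boolean $f$, Lemma~\ref{lem:bipartite-spectrum} makes $\Delta_qA$ bipartite, so its spectrum is symmetric and $\|\Delta_qA\|\le1$. Thus $A$ is adversary-feasible, and
\[
\sum_{x,y}\Gamma(x,y)=w^{\mathsf T}Aw\le\|A\|\,\|w\|^2=\|A\|\le\operatorname{Adv}^{\pm}_{\mathcal Q}(f).
\]

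\emph{Adv $\le$ dual.} Take an attained optimal adversary matrix $A$. By Lemma~\ref{lem:bipartite-spectrum}, $\|A\|$ is its top eigenvalue $\lambda$. Choose a unit eigenvector $w$ for $\lambda$. Since $A$ has the bipartite form, $w$ can be taken to satisfy $w^{\mathsf T}Aw=\lambda$. Set $p(x):=w(x)^2$ and $\Gamma(x,y):=w(x)A(x,y)w(y)$. Then $\sum_{x,y}\Gamma(x,y)=\lambda$ and $\sum_x p(x)=1$. Moreover $D_p-\Gamma_q=W(I-\Delta_qA)W\succeq0$, because $\|\Delta_qA\|\le1$; here we use a congruence by the possibly singular matrix $W=\operatorname{diag}(w)$, which still preserves positive semidefiniteness. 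So $(p,\Gamma)$ is dual-feasible with value $\|A\|$.

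The main subtlety is the sign issue in the first direction. The dual constraint only controls $\lambda_{\max}(\Delta_qA)$, not the operator norm. Bipartiteness for Boolean $f$ supplies the missing bound on $\lambda_{\min}$; this is exactly where the Boolean hypothesis enters. The zero-weight inputs also need care, and they are handled by the entry bound as described above.
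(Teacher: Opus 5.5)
Your proposal is correct and follows the paper's proof step for step. In the nonseparating case you use the same unbounded two-entry witness. In the separating case you use the same $\sqrt{p}$-rescaling of a dual-feasible pair: the positive-semidefinite entry bound handles inputs with $p(x)=0$, and bipartite spectral symmetry upgrades $\lambda_{\max}(\Delta_qA)\le 1$ to $\|\Delta_qA\|\le 1$. The reverse inequality uses the same top-eigenvector construction $p=w^2$, $\Gamma=WAW$.
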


\begin{proof}
Suppose first that $\mathcal Q$ does not separate $f$. Choose
$x,y\in D$ with $f(x)\neq f(y)$ such that no query separates $x$ and $y$.

The transport equation for $(x,y)$ contains no terms, so no correlation
transport exists. By
Definition~\ref{def:correlation-transport-value},
$\operatorname{CT}_{\mathcal Q}(f)=+\infty$.

For $a>0$, define a symmetric kernel $\Gamma^{(a)}$ by
$\Gamma^{(a)}(x,y)=\Gamma^{(a)}(y,x)=a$ and set all other entries to zero.
No query separates $x$ and $y$, so $\Gamma_q^{(a)}=0$ for every $q$.
Thus every $\Gamma^{(a)}$ satisfies the constraints of
Definition~\ref{def:general-adversary-bound}, while
$\|\Gamma^{(a)}\|=a$. Hence
$\operatorname{Adv}^{\pm}_{\mathcal Q}(f)=+\infty$.

Now suppose that $\mathcal Q$ separates $f$. By
Proposition~\ref{prop:correlation-transport-dual},
$\operatorname{CT}_{\mathcal Q}(f)$ equals the supremum of the transport
dual.

Let $(p,\Gamma)$ satisfy the dual constraints. Suppose $p(x)=0$. For every query $q$, the positive semidefinite kernel $D_p-\Gamma_q$ has zero diagonal entry at $x$. Lemma~\ref{lem:positive-kernel-entry-bound} shows that its entire $x$-row and $x$-column vanish. Hence $\Gamma_q(x,y)=0$ for every $y$.

Because $\mathcal Q$ separates $f$, every pair in
$\operatorname{Sep}(f)$ belongs to $\operatorname{Sep}(q)$ for some query
$q$. Since $\Gamma$ is supported on $\operatorname{Sep}(f)$, its entire
$x$-row and $x$-column hence vanish whenever $p(x)=0$.

Set $s(x):=\sqrt{p(x)}$. Define
\begin{equation}
    \widehat\Gamma(x,y)
    :=
    \begin{cases}
        \dfrac{\Gamma(x,y)}{s(x)s(y)},
        &p(x)p(y)>0,\\[1.2ex]
        0,
        &p(x)p(y)=0.
    \end{cases}
\end{equation}
On the support of $p$, positive semidefiniteness of $D_p-\Gamma_q$ is equivalent, by
diagonal rescaling, to positive semidefiniteness of $I-\widehat\Gamma_q$. Outside that
support the corresponding rows and columns of $\widehat\Gamma_q$ vanish, so
$I-\widehat\Gamma_q$ is positive semidefinite on all of $\mathbb R^D$.

By Lemma~\ref{lem:bipartite-spectrum}, the eigenvalues of
$\widehat\Gamma_q$ occur in opposite pairs. Since
$I-\widehat\Gamma_q$ is positive semidefinite, every eigenvalue of
$\widehat\Gamma_q$ is at most $1$. The opposite-pair symmetry then gives
$\|\widehat\Gamma_q\|\leq1$.

Thus $\widehat\Gamma$ satisfies the constraints of
Definition~\ref{def:general-adversary-bound}. Since
$\sum_xs(x)^2=1$,
\begin{equation}
    \sum_{x,y}\Gamma(x,y)
    =
    \sum_{x,y}s(x)\widehat\Gamma(x,y)s(y)
    \leq
    \|\widehat\Gamma\|
    \leq
    \operatorname{Adv}^{\pm}_{\mathcal Q}(f).
\end{equation}
Taking the supremum over pairs satisfying the transport-dual constraints gives
$\operatorname{CT}_{\mathcal Q}(f)
\leq\operatorname{Adv}^{\pm}_{\mathcal Q}(f)$.

For the reverse inequality, let $\Gamma_0$ be an optimal kernel from
Proposition~\ref{prop:adversary-attainment}. By
Lemma~\ref{lem:bipartite-spectrum}, its largest eigenvalue equals
$\|\Gamma_0\|$. Choose a real unit eigenvector $v$ with
$\Gamma_0v=\|\Gamma_0\|v$.

Define $p(x):=v(x)^2$ and
$\Gamma(x,y):=v(x)\Gamma_0(x,y)v(y)$. Then $p(x)\geq0$,
$\sum_xp(x)=1$, and $\Gamma$ is symmetric and supported on
$\operatorname{Sep}(f)$.

For every query $q$, the inequality $\|\Gamma_{0,q}\|\leq1$ implies that
$I-\Gamma_{0,q}$ is positive semidefinite. Hence $D_p-\Gamma_q$ is positive semidefinite because,
for every $\alpha:D\to\mathbb R$,
\begin{equation}
    \sum_{x,y}\alpha(x)\alpha(y)(D_p-\Gamma_q)(x,y)=\sum_{x,y}\alpha(x)v(x)(I-\Gamma_{0,q})(x,y)v(y)\alpha(y)\geq0.
\end{equation}
Therefore $(p,\Gamma)$ satisfies the transport-dual constraints. Its objective
value is
\begin{equation}
    \sum_{x,y}\Gamma(x,y)
    =
    v^{\mathsf T}\Gamma_0v
    =
    \|\Gamma_0\|
    =
    \operatorname{Adv}^{\pm}_{\mathcal Q}(f).
\end{equation}
Proposition~\ref{prop:correlation-transport-dual} now gives
$\operatorname{CT}_{\mathcal Q}(f)
\geq\operatorname{Adv}^{\pm}_{\mathcal Q}(f)$, completing the proof.
\end{proof}

H{\o}yer, Lee, and {\v S}palek prove the same Boolean-output lower bound in the standard query model~\cite[Theorem~2]{HoyerLeeSpalek2007}. Appendix~\ref{app:adversary-lower-bound} gives the direct proof for the phase-query model of Section~\ref{sec:quantum-query-circuits}. In particular, for $0\leq\epsilon<1/2$,
\begin{equation}
    Q_{\mathcal Q}^{\epsilon}(f)
    \geq
    \frac{1-2\sqrt{\epsilon(1-\epsilon)}}{2}
    \operatorname{Adv}^{\pm}_{\mathcal Q}(f).
\end{equation}
The identity in Theorem~\ref{thm:correlation-transport-adversary} is independent of the constant in this comparison.

\section{Matroidal Query Models}
\label{sec:matroidal-query-models}

A set of query answers may determine the answer to another query. After the answer blocks are labeled, this is the standard notion of functional dependency~\cite{Armstrong1974,Matus1991}. Functional-dependency relations have also been combined with matroidal structure in network-computation models with demanded functions~\cite{GuptaRajan2016}. Here the queries are the ground elements and closure is derived from their answer partitions. We restrict to query systems for which this closure satisfies matroid exchange, adjoin a distinguished target element recording which query sets determine $f$, and retain the pairwise separation relations used by the adversary bound. The resulting colored targeted matroid is the $f$-separation matroid.

Throughout this section, $D$ is a finite nonempty domain and $\mathcal Q$ is
a finite set of query gates over $D$. For $S\subseteq\mathcal Q$, write $\llbracket S\rrbracket:=\bigwedge_{q\in S}\llbracket q\rrbracket$, with $\llbracket\varnothing\rrbracket:=\top_D$.

\subsection{Functional dependence among queries}
\label{subsec:functional-query-dependence}

After labeling the blocks of each query partition, the condition $\llbracket S\rrbracket\leq\llbracket q\rrbracket$ is exactly the functional dependency $S\to q$~\cite{Armstrong1974,Matus1991}. In partition language, every block of $\llbracket S\rrbracket$ then lies inside a block of $\llbracket q\rrbracket$.

\begin{definition}
\label{def:query-closure}
For $S\subseteq\mathcal Q$, define $\operatorname{cl}_{\mathcal Q}(S):=\{q\in\mathcal Q:\llbracket S\rrbracket\leq\llbracket q\rrbracket\}$. We say that the queries in $\operatorname{cl}_{\mathcal Q}(S)$ are
\emph{determined by} $S$.
\end{definition}

Thus $q\in\operatorname{cl}_{\mathcal Q}(S)$ means that once the answers to
the queries in $S$ are known, asking $q$ cannot further refine the current
information partition. The partition order makes the usual
functional-dependence closure axioms immediate~\cite{Armstrong1974,Matus1991}.

\begin{proposition}
\label{prop:query-closure-operator}
The map
$\operatorname{cl}_{\mathcal Q}:2^{\mathcal Q}\to2^{\mathcal Q}$
is a closure operator: it is extensive, monotone, and idempotent.
\end{proposition}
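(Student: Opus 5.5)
The plan is to prove the three closure axioms directly from properties of the refinement order on $\Part(D)$ and of the common refinement $\wedge$. First I record two facts. The partition $\llbracket S\rrbracket=\bigwedge_{q\in S}\llbracket q\rrbracket$ is the meet of its constituents in the refinement order, so $\pi\leq\sigma$ and $\pi\leq\rho$ imply $\pi\leq\sigma\wedge\rho$. Also $\sigma\wedge\rho\leq\sigma$. The second fact is immediate, since every block $B\cap C$ lies in $B$. For the first, each block $P$ of $\pi$ lies in some block $B$ of $\sigma$ and in some block $C$ of $\rho$; hence $P\subseteq B\cap C$, and $B\cap C$ is nonempty. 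From these facts it follows that $S\subseteq T$ implies $\llbracket T\rrbracket\leq\llbracket S\rrbracket$. The empty-set convention $\llbracket\varnothing\rrbracket=\top_D$ is consistent with this, since $\top_D$ is the largest partition.

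Extensivity is the first axiom. If $q\in S$, then $\llbracket S\rrbracket\leq\llbracket q\rrbracket$ because the meet refines each of its terms, so $q\in\operatorname{cl}_{\mathcal Q}(S)$.

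Monotonicity is the second. If $S\subseteq T$ and $q\in\operatorname{cl}_{\mathcal Q}(S)$, then $\llbracket T\rrbracket\leq\llbracket S\rrbracket\leq\llbracket q\rrbracket$. Transitivity of refinement then gives $q\in\operatorname{cl}_{\mathcal Q}(T)$.

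Idempotence is the third axiom and the only one with content. Put $S'=\operatorname{cl}_{\mathcal Q}(S)$. By definition, $\llbracket S\rrbracket\leq\llbracket q\rrbracket$ for every $q\in S'$. Applying the meet property repeatedly over the finite set $S'$ gives $\llbracket S\rrbracket\leq\llbracket S'\rrbracket$. Extensivity gives $S\subseteq S'$, and then monotonicity of $\llbracket\cdot\rrbracket$ gives the reverse inequality. Hence $\llbracket S'\rrbracket=\llbracket S\rrbracket$, and therefore $\operatorname{cl}_{\mathcal Q}(S')=\operatorname{cl}_{\mathcal Q}(S)$.

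The main obstacle, mild as it is, is the lemma that $\wedge$ is the greatest lower bound in the refinement order. Once that is established, each axiom is a one-line consequence of transitivity of refinement.
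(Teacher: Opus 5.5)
Your proof is correct and follows the same route as the paper. Extensivity and monotonicity come from $\llbracket T\rrbracket\leq\llbracket S\rrbracket$ for $S\subseteq T$, and idempotence comes from $\llbracket\operatorname{cl}_{\mathcal Q}(S)\rrbracket=\llbracket S\rrbracket$; the only difference is that you explicitly verify that $\wedge$ is the greatest lower bound for refinement, which the paper's idempotence step uses without comment.
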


\begin{proof}
If $q\in S$, then $\llbracket S\rrbracket\leq\llbracket q\rrbracket$, so
$S\subseteq\operatorname{cl}_{\mathcal Q}(S)$.

If $S\subseteq T$, then
$\llbracket T\rrbracket\leq\llbracket S\rrbracket$. Hence every query
determined by $S$ is also determined by $T$.

Finally, every query in $\operatorname{cl}_{\mathcal Q}(S)$ is already
determined by $S$. Therefore
$\llbracket\operatorname{cl}_{\mathcal Q}(S)\rrbracket
=\llbracket S\rrbracket$, and applying the closure a second time adds no new
queries.
\end{proof}

A functional-dependence closure need not satisfy matroid exchange. We single out the query systems for which it does.

\begin{definition}
\label{def:matroidal-query-system}
The query system $\mathcal Q$ is \emph{matroidal} if
$\operatorname{cl}_{\mathcal Q}$ satisfies
\begin{equation}
    q\in
    \operatorname{cl}_{\mathcal Q}(S\cup\{r\})
    \setminus
    \operatorname{cl}_{\mathcal Q}(S)
    \quad\Longrightarrow\quad
    r\in\operatorname{cl}_{\mathcal Q}(S\cup\{q\})
\end{equation}
for every $S\subseteq\mathcal Q$ and $q,r\in\mathcal Q$.

When this holds, let $N_{\mathcal Q}$ denote the matroid on
$\mathcal Q$ whose closure operator is
$\operatorname{cl}_{\mathcal Q}$. We call $N_{\mathcal Q}$ the
\emph{query matroid}.
\end{definition}

The terms \emph{matroidal query system} and \emph{query matroid} refer to
this matroidal special case of functional-dependence closure. General
functional-dependency structures need not be matroidal~\cite{Matus1991}; the
definition does not yet refer to a target function.

\subsection{Adding a target}
\label{subsec:target-query-matroid}

Fix a function $f:D\to Y$. A distinguished element of a matroid determines
the standard notion of a matroid port: its minimal spanning sets are the
circuits through the distinguished element~\cite{Lehman1964,Seymour1976Ports}.
Here the fixed structure is the query matroid $N_{\mathcal Q}$, and the
monotone family consists of the query sets that compute $f$. We ask whether
that family is the port of a single-element extension of $N_{\mathcal Q}$.
\begin{definition}
\label{def:target-query-matroid}
Suppose $\mathcal Q$ is matroidal, with query matroid $N_{\mathcal Q}$. We
say that $f$ is \emph{matroidally realizable over $\mathcal Q$} if there is a
single-element extension $M_f$ of $N_{\mathcal Q}$ by an element $\tau_f$
such that, for every $S\subseteq\mathcal Q$,
\begin{equation}
    \llbracket S\rrbracket\leq\Pi_f
    \quad\Longleftrightarrow\quad
    \tau_f\in\cl_{M_f}(S).
\end{equation}
We call $M_f$ the \emph{targeted query matroid} of $f$.
\end{definition}

This imposes more than the closure/computation equivalence in
Definition~\ref{def:target-query-matroid}: the extension must also restrict
on $\mathcal Q$ to the query matroid already determined by functional
dependence. Crapo's single-element-extension theorem gives the criterion
~\cite{Crapo1965}.

\begin{theorem}
\label{thm:target-modular-cut}
Let $\mathcal Q$ be matroidal, with query matroid $N_{\mathcal Q}$, and put
\begin{equation}
    \mathcal C_f
    :=
    \{F:F\text{ is a flat of }N_{\mathcal Q}
    \text{ and }\llbracket F\rrbracket\leq\Pi_f\}.
\end{equation}
Then $f$ is matroidally realizable over $\mathcal Q$ if and only if
$\mathcal C_f$ is a modular cut of $N_{\mathcal Q}$.
\end{theorem}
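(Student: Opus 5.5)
The plan is to reduce the statement to Crapo's correspondence, Proposition~\ref{prop:modular-cut-extension}, using one bridging observation. For every $S\subseteq\mathcal Q$ we have $\llbracket S\rrbracket=\llbracket\cl_{\mathcal Q}(S)\rrbracket$, as shown in the proof of Proposition~\ref{prop:query-closure-operator}. Hence $\llbracket S\rrbracket\leq\Pi_f$ holds exactly when the flat $F:=\cl_{N_{\mathcal Q}}(S)$ lies in $\mathcal C_f$. Likewise, in any single-element extension $M$ of $N_{\mathcal Q}$, the sets $S$ and $\cl_{N_{\mathcal Q}}(S)$ have the same rank in $M$. So $\tau\in\cl_M(S)$ holds exactly when $\tau\in\cl_M(F)$, that is, exactly when $F\in\mathcal C_M$. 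This is the same rank argument used in the proof of Proposition~\ref{prop:target-extension-criterion}.

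For the forward direction, suppose $M_f$ realizes $f$. Take $S=F$ to be an arbitrary flat of $N_{\mathcal Q}$. The defining equivalence in Definition~\ref{def:target-query-matroid} then says $F\in\mathcal C_f$ if and only if $\tau_f\in\cl_{M_f}(F)$, that is, if and only if $F\in\mathcal C_{M_f}$. Thus $\mathcal C_f=\mathcal C_{M_f}$, and this is a modular cut by Proposition~\ref{prop:modular-cut-extension}.

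For the converse, suppose $\mathcal C_f$ is a modular cut. Proposition~\ref{prop:modular-cut-extension} gives a single-element extension $M_f$ of $N_{\mathcal Q}$ by $\tau_f$ with $\mathcal C_{M_f}=\mathcal C_f$. For an arbitrary $S$, with $F=\cl_{N_{\mathcal Q}}(S)$, the bridging observation gives the chain
\[
\llbracket S\rrbracket\leq\Pi_f
\iff F\in\mathcal C_f
\iff F\in\mathcal C_{M_f}
\iff \tau_f\in\cl_{M_f}(S).
\]
This is exactly matroidal realizability.

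The only delicate point is the rank step, which needs the restriction of $M_f$ to $\mathcal Q$ to be $N_{\mathcal Q}$, so that $r_{M_f}(S)=r_{M_f}(F)$. I expect no real obstacle here. Nor does the fact that $\mathcal C_f$ is upward closed cause any difficulty: it follows directly from monotonicity of $\llbracket\cdot\rrbracket$, and in any case it is part of what being a modular cut asserts.
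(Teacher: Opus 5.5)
Your proof is correct and is essentially the paper's argument. The paper applies Proposition~\ref{prop:target-extension-criterion} to the domain $2^{\mathcal Q}$ with availability map $S\mapsto S$ and then notes that the required cut must be exactly $\mathcal C_f$, whereas you inline that proposition by using Crapo's correspondence (Proposition~\ref{prop:modular-cut-extension}) directly, together with the same observation that $S$ and $\cl_{N_{\mathcal Q}}(S)$ have the same query partition and the same closure in any extension, which makes the identification $\mathcal C_f=\mathcal C_{M_f}$ explicit in both directions.
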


\begin{proof}
Apply Proposition~\ref{prop:target-extension-criterion} to the finite domain
$2^{\mathcal Q}$ with availability map $S\mapsto S$ and Boolean target
$g(S)=1$ exactly when $\llbracket S\rrbracket\leq\Pi_f$. Since $S$ and
$\cl_{N_{\mathcal Q}}(S)$ induce the same query partition, the positive flats
for this extension problem are the members of $\mathcal C_f$.
The proposition thus asks for a modular cut containing every member of
$\mathcal C_f$ and no flat outside $\mathcal C_f$. Since every flat is either
computing or noncomputing, such a cut must be exactly $\mathcal C_f$.
Hence the required target extension exists if and only if $\mathcal C_f$ is
itself a modular cut.
\end{proof}

When the extension exists, its associated modular cut is $\mathcal C_f$. By
Crapo's correspondence, Proposition~\ref{prop:modular-cut-extension} hence
determines a unique single-element extension of $N_{\mathcal Q}$ on
$\mathcal Q\sqcup\{\tau_f\}$, and the minimal fixed query sets computing
$f$ form the port of $M_f$ at $\tau_f$~\cite{Lehman1964,Seymour1976Ports}.

\begin{proposition}
\label{prop:minimal-query-circuits}
Let $M_f$ be a targeted query matroid. A set $S\subseteq\mathcal Q$ is a
fixed query set computing $f$ which is minimal under inclusion if and only if
$S\cup\{\tau_f\}$ is a circuit of $M_f$ containing $\tau_f$.
\end{proposition}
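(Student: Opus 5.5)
The plan is to translate the statement into matroid language through the defining property of the targeted query matroid in Definition~\ref{def:target-query-matroid}. That property says that a set $S\subseteq\mathcal Q$ computes $f$, meaning $\llbracket S\rrbracket\leq\Pi_f$, exactly when $\tau_f\in\cl_{M_f}(S)$. Minimal computing sets are therefore exactly the inclusion-minimal sets $S\subseteq\mathcal Q$ with $\tau_f\in\cl_{M_f}(S)$. The claim then reduces to a general matroid fact: for an element $e$ and a set $S$ with $e\notin S$, the set $S$ is inclusion-minimal with $e\in\cl(S)$ if and only if $S\cup\{e\}$ is a circuit.

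\textbf{Forward direction.} Suppose $S$ is minimal with $\tau_f\in\cl_{M_f}(S)$. Proposition~\ref{prop:closure-circuit} gives a circuit $C$ with $\tau_f\in C$ and $C\setminus\{\tau_f\}\subseteq S$. Since $C\setminus\{\tau_f\}$ is contained in $C$ minus one of its elements, $\tau_f\in\cl_{M_f}(C\setminus\{\tau_f\})$, again by Proposition~\ref{prop:closure-circuit}. Minimality of $S$ then forces $C\setminus\{\tau_f\}=S$, so $S\cup\{\tau_f\}=C$ is a circuit.

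\textbf{Reverse direction.} Suppose $C=S\cup\{\tau_f\}$ is a circuit and $S\subseteq\mathcal Q$. Proposition~\ref{prop:closure-circuit} gives $\tau_f\in\cl_{M_f}(S)$, so $S$ computes $f$. Now take a proper subset $S'\subsetneq S$ and suppose, for contradiction, that $\tau_f\in\cl_{M_f}(S')$. Then $S'\cup\{\tau_f\}$ is dependent, because $S'$ spans $\tau_f$ and $\tau_f\notin S'$. But $S'\cup\{\tau_f\}$ is a proper subset of the circuit $C$, hence independent. This contradiction shows that no proper subset of $S$ computes $f$.

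Two edge cases need attention.
\begin{itemize}
\item If $\tau_f$ is a loop, which happens when $f$ is constant, then $S=\varnothing$ and $\{\tau_f\}$ is a circuit. Both directions above remain valid in this case.
\item The element $\tau_f$ is never itself in $S$, since $S\subseteq\mathcal Q$. Definition~\ref{def:target-query-matroid} is the only place where the partition data enter the argument.
\end{itemize}
I expect no real obstacle. The only care needed is applying the closure--circuit correspondence with $e\notin A$, as Proposition~\ref{prop:closure-circuit} requires.
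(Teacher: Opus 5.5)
Your proof is correct and follows the same route as the paper. You translate computation into $\tau_f\in\cl_{M_f}(S)$ via the defining property of the targeted query matroid, apply the closure--circuit characterization, and use minimality to force the circuit to equal $S\cup\{\tau_f\}$; your reverse direction just spells out, via independence of proper subsets of a circuit, the step the paper compresses into ``the converse follows in the same way.''
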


\begin{proof}
The set $S$ computes $f$ exactly when
$\tau_f\in\cl_{M_f}(S)$. By the circuit characterization of closure, this
holds exactly when some circuit containing $\tau_f$ is contained in
$S\cup\{\tau_f\}$. Minimality forces that circuit to be
$S\cup\{\tau_f\}$, and the converse follows in the same way.
\end{proof}

\subsection{Coordinate queries}
\label{subsec:coordinate-query-matroids}

Let $G$ be a finite set with at least two elements, let $D=G^k$, and let
$\mathcal Q=\{Q_1,\ldots,Q_k\}$, where $Q_i$ reveals the $i$th coordinate.

\begin{proposition}
\label{prop:coordinate-query-matroid}
The query matroid $N_{\mathcal Q}$ of the full coordinate-query system is the
free matroid on $\mathcal Q$.
\end{proposition}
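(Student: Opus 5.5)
The plan is to compute the closure operator $\operatorname{cl}_{\mathcal Q}$ directly and show that $\operatorname{cl}_{\mathcal Q}(S)=S$ for every $S\subseteq\mathcal Q$. Once this is done, the exchange condition of Definition~\ref{def:matroidal-query-system} holds vacuously. Indeed, $\operatorname{cl}_{\mathcal Q}(S\cup\{r\})\setminus\operatorname{cl}_{\mathcal Q}(S)=(S\cup\{r\})\setminus S\subseteq\{r\}$, so any $q$ in this difference equals $r$, and then $r\in\operatorname{cl}_{\mathcal Q}(S\cup\{q\})$ is trivial. Hence the system is matroidal, and its closure operator agrees with that of the free matroid of Definition~\ref{def:free-matroid}. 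Since a matroid is determined by its closure operator, $N_{\mathcal Q}$ is free.

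First I would describe the partitions. The partition $\llbracket Q_i\rrbracket$ has blocks $\{x\in G^k:x_i=g\}$ for $g\in G$. Consequently, for $S\subseteq\mathcal Q$ with index set $J=\{i:Q_i\in S\}$, the common refinement $\llbracket S\rrbracket$ has blocks $\{x:x_j=g_j\ (j\in J)\}$, one for each $(g_j)_{j\in J}\in G^J$. All of these blocks are nonempty because $D$ is the full product. When $S=\varnothing$ the only block is $D=\top_D$, which is consistent with the convention.

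The key step is to show that $Q_i\notin\operatorname{cl}_{\mathcal Q}(S)$ whenever $i\notin J$. Extensivity already gives $S\subseteq\operatorname{cl}_{\mathcal Q}(S)$ by Proposition~\ref{prop:query-closure-operator}. For $i\notin J$, I would use $|G|\geq2$ to pick distinct $g\neq g'$ in $G$ and an arbitrary point $x\in G^k$. Let $x'$ agree with $x$ except that $x_i:=g$ and $x'_i:=g'$. Then $x$ and $x'$ lie in the same block of $\llbracket S\rrbracket$, because they agree on $J$, but they lie in different blocks of $\llbracket Q_i\rrbracket$. So $\llbracket S\rrbracket\not\leq\llbracket Q_i\rrbracket$, and $\operatorname{cl}_{\mathcal Q}(S)=S$.

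There is no substantive obstacle here. The only points needing care are that the hypothesis $|G|\geq2$ is exactly what produces the separating pair, and that the full product domain guarantees that the two modified inputs actually belong to $D$.
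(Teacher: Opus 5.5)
Your proposal is correct and matches the paper's proof: both show $\operatorname{cl}_{\mathcal Q}(S)=S$ by exhibiting, for each $i$ with $Q_i\notin S$, two inputs that agree on the coordinates in $S$ but differ in coordinate $i$. Your explicit check that the exchange condition then holds trivially, so that the system is matroidal, is a small detail the paper leaves implicit.
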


\begin{proof}
If $i\notin S$, two inputs can agree on every coordinate in $S$ while taking
different values in coordinate $i$. Thus $Q_i$ is not determined by $S$.
Hence $\operatorname{cl}_{\mathcal Q}(S)=S$ for every
$S\subseteq\mathcal Q$, which is the closure operator of the free matroid.
\end{proof}

For a function $f:G^k\to Y$, a coordinate $i\in[k]$ is
\emph{essential for $f$} if there are inputs $x,x'$ differing only in
coordinate $i$ with $f(x)\neq f(x')$. Write
$\operatorname{Ess}(f):=\{i\in[k]:i\text{ is essential for }f\}$.

\begin{proposition}
\label{prop:coordinate-target-matroid}
Every function $f:G^k\to Y$ is matroidally realizable over the coordinate
queries. A fixed set $S$ of coordinate queries computes $f$ if and only if
$Q_i\in S$ for every $i\in\operatorname{Ess}(f)$.
\end{proposition}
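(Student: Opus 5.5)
The plan is to first prove the second sentence, that a fixed set $S$ of coordinate queries computes $f$ exactly when it contains $Q_i$ for every essential coordinate $i$. Realizability then follows from Theorem~\ref{thm:target-modular-cut}, because the resulting collection $\mathcal C_f$ turns out to be a principal filter in the Boolean lattice. Throughout, write $E:=\{Q_i:i\in\operatorname{Ess}(f)\}$. Note that two inputs lie in the same block of $\llbracket S\rrbracket$ exactly when they agree on every coordinate indexed by $S$.

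\emph{Necessity.} Suppose some $i\in\operatorname{Ess}(f)$ has $Q_i\notin S$. Take the witnessing inputs $x,x'$, which differ only in coordinate $i$ and satisfy $f(x)\neq f(x')$. Since $i\notin S$, they agree on every coordinate in $S$. Hence they share a block of $\llbracket S\rrbracket$ while lying in different blocks of $\Pi_f$, so $\llbracket S\rrbracket\not\leq\Pi_f$.

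\emph{Sufficiency.} Suppose $E\subseteq S$, and let $x,y$ agree on the coordinates of $S$. Let $j_1,\ldots,j_m$ be the coordinates where $x$ and $y$ differ; each lies outside $S$ and is therefore inessential. Form the chain $x=z_0,z_1,\ldots,z_m=y$, where $z_k$ is obtained from $z_{k-1}$ by replacing coordinate $j_k$ with $y_{j_k}$. Consecutive inputs differ only in one inessential coordinate, so by the definition of essentiality $f(z_{k-1})=f(z_k)$. Therefore $f(x)=f(y)$, and $\llbracket S\rrbracket\leq\Pi_f$. This chain argument is the only real content of the proof. The one thing to check carefully is that every intermediate $z_k$ is a genuine input, which holds because $D=G^k$ is a full product.

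\emph{Realizability.} By Proposition~\ref{prop:coordinate-query-matroid}, $N_{\mathcal Q}$ is free, so every subset of $\mathcal Q$ is a flat. The characterization just proved gives $\mathcal C_f=\{F\subseteq\mathcal Q:E\subseteq F\}$. This collection is upward closed. It is also closed under all intersections, since $E\subseteq F\cap G$ whenever $E\subseteq F$ and $E\subseteq G$, so in particular it is closed under intersections of modular pairs. Hence $\mathcal C_f$ is a modular cut, and Theorem~\ref{thm:target-modular-cut} yields the targeted query matroid $M_f$. If $f$ is constant, then $E=\varnothing$ and $\mathcal C_f$ consists of all flats, so $\tau_f$ is a loop; this case is consistent with the argument and needs no separate treatment.
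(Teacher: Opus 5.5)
Your proposal is correct and follows the paper's proof essentially step for step. Both argue necessity from the witnessing pair for an essential coordinate, sufficiency by changing inessential coordinates one at a time, and realizability by observing that on the free query matroid the computing flats are exactly the supersets of $\{Q_i:i\in\operatorname{Ess}(f)\}$. These supersets form a modular cut, so Theorem~\ref{thm:target-modular-cut} applies.
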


\begin{proof}
If $S$ contains every essential coordinate, then two inputs agreeing on $S$
can be connected by changing only nonessential coordinates one at a time.
None of these changes affects $f$, so $f$ is constant on every block of
$\llbracket S\rrbracket$.

Conversely, if an essential coordinate $i$ is absent from $S$, there are two
inputs agreeing on every queried coordinate but differing only in coordinate
$i$ on which $f$ takes different values. Thus $S$ does not compute $f$.

Since $N_{\mathcal Q}$ is free, every subset is a flat. The computing flats
are exactly those containing
$\{Q_i:i\in\operatorname{Ess}(f)\}$. They form a modular cut, so
Theorem~\ref{thm:target-modular-cut} gives the targeted query matroid.
\end{proof}

In the Boolean case the targeted query matroid is also representable over $\mathbb F_2$.

\begin{proposition}
\label{prop:boolean-coordinate-representation}
Let $D=\{0,1\}^k$. Represent $Q_i$ by the $i$th standard basis vector
$e_i\in\mathbb F_2^k$, and represent the target by $v_{\tau_f}:=\sum_{i\in\operatorname{Ess}(f)}e_i$. These vectors represent the targeted query matroid of $f$.
\end{proposition}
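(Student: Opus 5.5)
We must show that the vector matroid $M(\psi)$ on $\mathcal Q\sqcup\{\tau_f\}$, with $\psi(Q_i):=e_i$ and $\psi(\tau_f):=v_{\tau_f}=\sum_{i\in\operatorname{Ess}(f)}e_i$, coincides with the targeted query matroid $M_f$ from Proposition~\ref{prop:coordinate-target-matroid}. By Proposition~\ref{prop:modular-cut-extension}, a single-element extension of $N_{\mathcal Q}$ is determined by its restriction to $\mathcal Q$ together with its modular cut. It therefore suffices to check two things. First, $M(\psi)|\mathcal Q=N_{\mathcal Q}$. Second, for every $S\subseteq\mathcal Q$, $\tau_f\in\cl_{M(\psi)}(S)$ holds exactly when $\llbracket S\rrbracket\leq\Pi_f$.

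The first check is immediate. The vectors $e_1,\ldots,e_k$ are linearly independent over $\mathbb F_2$, so every subset of $\mathcal Q$ is independent in $M(\psi)$. Hence $M(\psi)|\mathcal Q$ is the free matroid, which equals $N_{\mathcal Q}$ by Proposition~\ref{prop:coordinate-query-matroid}.

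For the second check, use the represented-closure identity: $\tau_f\in\cl_{M(\psi)}(S)$ if and only if $v_{\tau_f}\in\operatorname{span}_{\mathbb F_2}\{e_i:Q_i\in S\}$. The span of these coordinate vectors is the set of vectors supported on the coordinates indexed by $S$. So membership holds exactly when the support of $v_{\tau_f}$, namely $\operatorname{Ess}(f)$, lies inside the set of indices of $S$. One small point needs care here. Since $v_{\tau_f}$ is a sum of \emph{distinct} standard basis vectors, its support is precisely $\operatorname{Ess}(f)$, so no cancellation can occur over $\mathbb F_2$. By Proposition~\ref{prop:coordinate-target-matroid}, the condition $Q_i\in S$ for all $i\in\operatorname{Ess}(f)$ is exactly the condition that $S$ computes $f$.

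Combining the two checks, $M(\psi)$ is a single-element extension of $N_{\mathcal Q}$ whose modular cut is $\mathcal C_f$. By the uniqueness in Proposition~\ref{prop:modular-cut-extension}, $M(\psi)=M_f$. The degenerate case $\operatorname{Ess}(f)=\varnothing$ needs no separate treatment. Then $v_{\tau_f}=0$ and $\tau_f$ is a loop, which agrees with $f$ being constant and computed by $\varnothing$. There is no genuine obstacle in this argument; the only step requiring attention is the support identification just described.
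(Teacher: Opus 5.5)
Your proof is correct and follows the paper's argument. Independence of the $e_i$ gives the free restriction, and $v_{\tau_f}\in\operatorname{span}\{e_i:Q_i\in S\}$ exactly when $S$ contains every $Q_i$ with $i\in\operatorname{Ess}(f)$, which Proposition~\ref{prop:coordinate-target-matroid} identifies with $S$ computing $f$; your appeal to uniqueness via Proposition~\ref{prop:modular-cut-extension} and your treatment of the degenerate case only make explicit what the paper leaves implicit.
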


\begin{proof}
The query vectors are independent, so their restriction is the free query
matroid. Moreover, $v_{\tau_f}$ lies in the span of
$\{e_i:Q_i\in S\}$ exactly when every essential coordinate belongs to $S$.
By Proposition~\ref{prop:coordinate-target-matroid}, this is the condition
that $S$ compute $f$.
\end{proof}

\subsection{The \texorpdfstring{$f$}{f}-separation matroid}
\label{subsec:f-separation-matroid}

The targeted query matroid $M_f$ records dependence among the queries and
which fixed query sets span the target. This is enough to recover fixed,
nonadaptive query complexity, but it does not retain all of the information
used by a quantum computation.

The missing data is the separation information already attached to the queries in Section~\ref{sec:query-models}. The relation $\operatorname{Sep}(q)$ records the pairs on which query $q$ gives different answers, while $\operatorname{Sep}(f)$ records the pairs the target itself separates. In the language of decision tables, the same pairwise data is recorded by the discernibility matrix~\cite{SkowronRauszer1992}. Dependency closure and discernibility are also treated together in generalized dependency-relation models~\cite{ChiaselottiGentileInfusino2017,ChiaselottiInfusino2018}. In our notation, the discernibility set of a pair $(x,y)$ is $\{q\in\mathcal Q:(x,y)\in\operatorname{Sep}(q)\}$.

The complementary object is standard in functional-dependency theory: the
attributes on which two tuples agree form their \emph{agree set} (or
equality set), and agree sets are tied to functional-dependency closure
~\cite{GinsburgHull1983}. Maximal agree sets which fail to determine a
specified target attribute are usually treated as maximal nondetermining sets
~\cite{LopesPetitLakhal2001}. Section~\ref{sec:source-adversary} uses the
matroidal specialization of these agree sets.

\begin{proposition}
\label{prop:query-separation-cover}
Let $M_f$ be a targeted query matroid and let $S\subseteq\mathcal Q$. The following are equivalent:
\begin{enumerate}
    \item $S$ computes $f$, meaning $\llbracket S\rrbracket\leq\Pi_f$;
    \item $\tau_f\in\cl_{M_f}(S)$;
    \item $S$ separates $f$ in the sense of Definition~\ref{def:query-set-separates-function}.
\end{enumerate}
\end{proposition}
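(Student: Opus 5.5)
The plan is to prove the cycle $(1)\Leftrightarrow(2)$ and $(1)\Leftrightarrow(3)$ separately. Each equivalence reduces to an earlier definition, so no genuinely new matroid argument is needed.

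For $(1)\Leftrightarrow(2)$, invoke Definition~\ref{def:target-query-matroid} directly. A targeted query matroid $M_f$ is, by definition, a single-element extension of $N_{\mathcal Q}$ by $\tau_f$ satisfying
\[
\llbracket S\rrbracket\leq\Pi_f \iff \tau_f\in\cl_{M_f}(S)
\]
for every $S\subseteq\mathcal Q$. That is exactly the equivalence of the first two items, so nothing further is required here. The existence of $M_f$, via Theorem~\ref{thm:target-modular-cut}, is part of the hypothesis.

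For $(1)\Leftrightarrow(3)$, I will argue purely at the level of partitions, with no matroid structure involved. First I note that $x$ and $y$ lie in the same block of $\llbracket S\rrbracket=\bigwedge_{q\in S}\llbracket q\rrbracket$ if and only if $B_q(x)=B_q(y)$ for every $q\in S$. This follows from Definition~\ref{def:common-refinement} by induction on $|S|$, and when $S=\varnothing$ the partition is $\top_D$. Equivalently, $(x,y)\notin\bigcup_{q\in S}\operatorname{Sep}(q)$.

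With this description of the blocks, I prove the two directions.
\begin{itemize}
\item Suppose $\llbracket S\rrbracket\leq\Pi_f$ and $(x,y)\in\operatorname{Sep}(f)$. If no $q\in S$ separated the pair, then $x$ and $y$ would share a block of $\llbracket S\rrbracket$. That block lies inside a single fiber of $f$, forcing $f(x)=f(y)$, a contradiction. Hence some $q\in S$ separates $(x,y)$.
\item Suppose $S$ separates $f$, and let $x,y$ lie in a common block of $\llbracket S\rrbracket$. Then no $q\in S$ separates them, so $(x,y)\notin\operatorname{Sep}(f)$, giving $f(x)=f(y)$. Every block therefore lies in a fiber, so $\llbracket S\rrbracket\leq\Pi_f$.
\end{itemize}

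The only step needing any care is the block description of the common refinement, in particular the empty-$S$ convention. It is routine, so I expect no real obstacle. The matroidal hypothesis is used only to make item (2) meaningful.
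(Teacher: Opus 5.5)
Your proof is correct and follows the paper's own argument: you read $(1)\Leftrightarrow(2)$ off the defining property of the targeted query matroid, and you derive $(1)\Leftrightarrow(3)$ from the fact that two inputs share a block of $\llbracket S\rrbracket$ exactly when no query in $S$ separates them. Your explicit treatment of the empty-$S$ convention and of the two directions supplies the details the paper leaves implicit.
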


\begin{proof}
The equivalence of the first two statements is the defining property of the
targeted query matroid.

For the first and third statements, two inputs $x$ and $y$ lie in the same
block of $\llbracket S\rrbracket$ exactly when every query in $S$ gives the
same answer on them. Thus $\llbracket S\rrbracket$ refines $\Pi_f$ exactly
when every pair with $f(x)\neq f(y)$ is separated by at least one query in
$S$.
\end{proof}

For the first and third conditions in Proposition~\ref{prop:query-separation-cover}, this is the standard discernibility criterion written in the query notation of Section~\ref{sec:query-models}~\cite{SkowronRauszer1992,ChiaselottiInfusino2018}. The target-closure condition adds the pointed-matroid description. The definition below packages these established dependence, target-extension, and pairwise-separation data for the adversary analysis.

\begin{definition}
\label{def:f-separation-matroid}
Let $M_f$ be the targeted query matroid of $f$ on
$\mathcal Q\sqcup\{\tau_f\}$. Define $\Sigma_f:\mathcal Q\sqcup\{\tau_f\}\to 2^{D\times D}$ by $\Sigma_f(q):=\operatorname{Sep}(q)$ for $q\in\mathcal Q$ and
$\Sigma_f(\tau_f):=\operatorname{Sep}(f)$.

The \emph{$f$-separation matroid} is the colored matroid
\begin{equation}
    \mathfrak M_f:=(M_f,\Sigma_f).
\end{equation}
\end{definition}

We call $\mathfrak M_f$ the \emph{$f$-separation matroid} or \emph{source matroid}. Its underlying matroid $M_f$ is the targeted query matroid. The underlying matroid records functional dependence and target determination, while the colors retain the pairwise separation relations used by the adversary bound. The coloring does not change the independent sets, rank, closure, flats, or circuits of $M_f$.

For Boolean $f$, exchanging the two output labels leaves both the fiber partition and target separation relation unchanged: $\Pi_f=\Pi_{1-f}$ and $\operatorname{Sep}(f)=\operatorname{Sep}(1-f)$. We retain $f$ together with its $f$-separation matroid.

Proposition~\ref{prop:query-separation-cover} can now be written as
\begin{equation}
    \tau_f\in\cl_{M_f}(S)
    \quad\Longleftrightarrow\quad
    \Sigma_f(\tau_f)
    \subseteq
    \bigcup_{q\in S}\Sigma_f(q)
\end{equation}
for every $S\subseteq\mathcal Q$.

The two sides retain different information. Closure records which query sets determine the target. The colors record which input pairs are distinguished by each individual query; these are the filters used by the general adversary bound of Section~\ref{sec:general-adversary-bound}. A response-labeled representable matroid can implement the same function while allowing the program dependence structure to differ from this source data.

\section{Agreement Flats in the General Adversary Bound}
\label{sec:flat-adversary}
\label{sec:source-adversary}

The $f$-separation matroid records both functional dependence among the
queries and the pairwise separation filters used by the adversary bound.
Pairwise agree sets are standard in functional-dependency theory
~\cite{GinsburgHull1983}; in a matroidal query system they become flats of
the query matroid. We use these agreement flats to reorganize the standard
Boolean general-adversary optimization
~\cite{HoyerLeeSpalek2007,LeeMittalReichardtSpalekSzegedy2011}, without
choosing a program matroid.

Throughout, let $f:D\to\{0,1\}$ be nonconstant, let
$\mathfrak M_f=(M_f,\Sigma_f)$ be its $f$-separation matroid, and put
$N_{\mathcal Q}:=M_f\setminus\tau_f$.  Write
$D_+:=f^{-1}(1)$ and $D_-:=f^{-1}(0)$.

\subsection{Agreement flats}
\label{subsec:agreement-flats}

For opposite-output inputs, let
$F_{xy}:=\{q\in\mathcal Q:(x,y)\notin\Sigma_f(q)\}$. Viewing the query
responses as attributes, this is the standard agree set of the two inputs
~\cite{GinsburgHull1983}; its complement is the discernibility set from
Section~\ref{sec:matroidal-query-models}. Agree sets are closed under
functional dependency, and maximal sets which fail to determine a specified
attribute are standard in database dependency theory
~\cite{LopesPetitLakhal2001}. The matroidal specialization is the following. Call a flat $F$ of $N_{\mathcal Q}$ \emph{computing} if $\tau_f\in\cl_{M_f}(F)$, and \emph{noncomputing} otherwise.

\begin{proposition}
\label{prop:agreement-set-flat}
\label{prop:noncomputing-flat-realized-above}
\label{prop:maximal-noncomputing-target-cocircuit}
For every $(x,y)\in D_+\times D_-$, the set $F_{xy}$ is a flat of
$N_{\mathcal Q}$ and does not span the target.  Every noncomputing flat is
contained in some $F_{xy}$.  The maximal noncomputing flats are the
complements in $\mathcal Q$ of the cocircuits of $M_f$ which contain
$\tau_f$.
\end{proposition}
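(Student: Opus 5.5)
We prove the three assertions in order, working in the closure language of Definition~\ref{def:query-closure}; we never need the partition lattice beyond refinement.

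\emph{Step 1: $F_{xy}$ is a noncomputing flat.} Fix $(x,y)\in D_+\times D_-$. By definition, $F_{xy}$ is the set of queries on which $x$ and $y$ agree. Hence $x$ and $y$ lie in the same block of $\llbracket F_{xy}\rrbracket$. If $q\in\operatorname{cl}_{\mathcal Q}(F_{xy})$, then $\llbracket F_{xy}\rrbracket\leq\llbracket q\rrbracket$, so $x$ and $y$ lie in the same block of $q$, and therefore $q\in F_{xy}$. Thus $F_{xy}$ is closed. For the same reason $\llbracket F_{xy}\rrbracket\not\leq\Pi_f$, since that block contains $x$ and $y$ with $f(x)\neq f(y)$. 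By the defining property of $M_f$ (equivalently Proposition~\ref{prop:query-separation-cover}), $\tau_f\notin\cl_{M_f}(F_{xy})$.

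\emph{Step 2: every noncomputing flat lies in some $F_{xy}$.} Let $F$ be a noncomputing flat. By Proposition~\ref{prop:query-separation-cover}, $F$ does not separate $f$. So some pair with $f(x)\neq f(y)$, which we orient so that $x\in D_+$, is separated by no query of $F$. That means $F\subseteq F_{xy}$. This step is purely definitional.

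\emph{Step 3: maximal noncomputing flats are complements of target cocircuits.} This is the only part that uses matroid structure, and it is where I expect the bookkeeping to need care, because flats of $N_{\mathcal Q}$ and flats of $M_f$ must be matched.

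First suppose $D^*$ is a cocircuit of $M_f$ with $\tau_f\in D^*$, and let $H=(\mathcal Q\sqcup\{\tau_f\})\setminus D^*$ be the corresponding hyperplane of $M_f$. Then $G:=\mathcal Q\setminus D^*=H$ is a flat of $N_{\mathcal Q}$, since it is the restriction of a flat of $M_f$. It does not span $\tau_f$, because $H$ is a flat and $\tau_f\notin H$, so it is noncomputing. For maximality, take any flat $G'\supsetneq G$ of $N_{\mathcal Q}$. Then $\cl_{M_f}(G')$ properly contains the hyperplane $H$, so it equals the whole ground set and contains $\tau_f$; hence $G'$ is computing.

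Conversely, let $G$ be a maximal noncomputing flat. Then $G$ is also a flat of $M_f$: its $M_f$-closure adds nothing from $\mathcal Q$ and does not add $\tau_f$. Apply Proposition~\ref{prop:cocircuit-separates-closure} in $M_f$ with $A=G$ and $e=\tau_f$. This gives a cocircuit $D^*\ni\tau_f$ disjoint from $G$, whose hyperplane $H$ contains $G$ and omits $\tau_f$. By the first direction, $H=\mathcal Q\setminus D^*$ is a noncomputing flat of $N_{\mathcal Q}$ containing $G$. Maximality of $G$ forces $G=H$, so $G$ is the complement in $\mathcal Q$ of $D^*$.

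The only delicate point is confirming that flats of $N_{\mathcal Q}$ not spanning $\tau_f$ are exactly the flats of $M_f$ avoiding $\tau_f$. This holds because $M_f\setminus\tau_f=N_{\mathcal Q}$, so closure in $M_f$ restricted to $\mathcal Q$ agrees with $\operatorname{cl}_{\mathcal Q}$.
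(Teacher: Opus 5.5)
Your proof is correct, and for the first two assertions it is essentially the paper's argument; the paper phrases Step~2 via a block of $\llbracket F\rrbracket$ containing inputs with different target values, which is the same as your appeal to Proposition~\ref{prop:query-separation-cover}. For the cocircuit characterization the paper argues differently. It observes that adjoining any query outside a maximal noncomputing flat spans $\tau_f$, applies closure exchange to conclude that the flat is a hyperplane of $M_f$, and cites the hyperplane--cocircuit correspondence for the converse. You instead place $G$ under the hyperplane supplied by Proposition~\ref{prop:cocircuit-separates-closure}, invoke maximality, and prove the converse explicitly; both routes are equally short and valid.
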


\begin{proof}
If $q\in\cl_{\mathcal Q}(F_{xy})$, then the joint answers to $F_{xy}$
determine the answer to $q$.  Since $x$ and $y$ agree on $F_{xy}$, they also
agree on $q$.  Hence $q\in F_{xy}$, so $F_{xy}$ is closed.  It cannot
determine $f$, since $x$ and $y$ lie in the same block of its joint-response
partition but have different target values.

Now let $F$ be any noncomputing flat.  One block of the joint-response
partition of $F$ contains inputs $x,y$ with different target values.  Then
$F\subseteq F_{xy}$.

Finally, let $F$ be maximal among the flats which do not span $\tau_f$.
Adjoining any query outside $F$ spans the target.  Closure exchange then
shows that $F$ is a hyperplane of $M_f$ avoiding $\tau_f$, so its complement
is a cocircuit containing $\tau_f$.  The converse is the usual
hyperplane--cocircuit correspondence~\cite[Chapter~2]{Oxley2011}.
\end{proof}

Let $\mathcal L_f^-$ be the noncomputing flats of $N_{\mathcal Q}$. For a realized flat $F$, put $\mathcal A_F:=\{(x,y)\in D_+\times D_-:F_{xy}=F\}$. On $\mathcal A_F$, query $q$ separates the pair if and only if $q\notin F$.
The adversary filters depend only on this incidence rule.

\subsection{Adversary matrices on agreement flats}
\label{subsec:flat-adversary-fibers}
\label{subsec:source-routing-norm}

Let $A$ be a real matrix with rows indexed by $D_+$ and columns indexed by
$D_-$.  Write $A_F$ for its restriction to the entries in $\mathcal A_F$.
Then $A=\sum_F A_F$. If $\Delta_q$ denotes the Schur projection onto pairs separated by query $q$, then $\Delta_qA=\sum_{F:q\notin F}A_F$.

Start from the standard rectangular form of the Boolean general adversary
bound~\cite{HoyerLeeSpalek2007}. Grouping its entries by agreement flat
gives the following equivalent agreement-flat formulation.

\begin{theorem}
\label{thm:flat-adversary-decomposition}
Let $f:D\to\{0,1\}$ be nonconstant, let $\mathcal Q$ be a finite query set
whose query problem admits the $f$-separation matroid
$\mathfrak M_f=(M_f,\Sigma_f)$, and write
$D_+:=f^{-1}(1)$ and $D_-:=f^{-1}(0)$. For each $q\in\mathcal Q$, let
$\Delta_q$ be the Schur projection onto pairs separated by $q$. Then
\begin{equation}
    \operatorname{Adv}^{\pm}_{\mathcal Q}(f)
    =
    \sup_{A\in\mathbb R^{D_+\times D_-}}\|A\|
\end{equation}
subject to $\|\Delta_qA\|\leq1$ for every $q\in\mathcal Q$. Equivalently,
\begin{equation}
    \operatorname{Adv}^{\pm}_{\mathcal Q}(f)
    =
    \sup_{(A_F)}
        \left\|\sum_FA_F\right\|,
\end{equation}
where $F$ ranges over the realized agreement flats,
$A_F\in\mathbb R^{D_+\times D_-}$ is supported on $\mathcal A_F$, and
\begin{equation}
    \left\|\sum_{F:q\notin F}A_F\right\|\leq1
    \qquad(q\in\mathcal Q).
\end{equation}
\end{theorem}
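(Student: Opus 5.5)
The plan is to prove the first (rectangular) formula from Definition~\ref{def:general-adversary-bound} with all $c_q=1$, and then obtain the second formula by reindexing entries by agreement flat.

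\textbf{Rectangular formula.} By Lemma~\ref{lem:bipartite-spectrum}, every symmetric $\Gamma$ with $\Delta_f\Gamma=\Gamma$ has the block form with off-diagonal block $G\in\mathbb R^{D_-\times D_+}$ and $\|\Gamma\|=\|G\|$. Set $A:=G^{\mathsf T}\in\mathbb R^{D_+\times D_-}$.

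Separation relations are symmetric, so each mask $\Delta_q$ acts blockwise on $\Gamma$. Hence $\Delta_q\Gamma$ is again a bipartite symmetric matrix, with off-diagonal block $\Delta_q G$. A second application of Lemma~\ref{lem:bipartite-spectrum} gives $\|\Delta_q\Gamma\|=\|\Delta_qA\|$, where $\Delta_q$ is restricted to $D_+\times D_-$.

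Conversely, any $A\in\mathbb R^{D_+\times D_-}$ produces, via the symmetric dilation, a feasible $\Gamma$ with the same norm and the same filtered norms. The two feasible sets therefore correspond bijectively with matching objectives, and the suprema agree, including the value $+\infty$. Both formulas are in fact finite: every pair in $D_+\times D_-$ has a noncomputing agree set, and the whole $\mathcal Q$ computes $f$ (it is a flat containing a computing set, or more simply, Proposition~\ref{prop:adversary-attainment} applies once $\mathcal Q$ separates $f$). Finiteness is not needed for the equality itself.

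\textbf{Agreement-flat formula.} By Proposition~\ref{prop:agreement-set-flat}, the sets $\mathcal A_F$, as $F$ ranges over the realized agreement flats $\{F_{xy}\}$, partition $D_+\times D_-$. Every $A$ therefore decomposes uniquely as $A=\sum_F A_F$ with $A_F$ supported on $\mathcal A_F$. Conversely, every such family $(A_F)$ sums to some $A$.

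For $(x,y)\in\mathcal A_F$ we have $(x,y)\in\operatorname{Sep}(q)$ iff $q\notin F_{xy}=F$. This gives $\Delta_qA=\sum_{F:q\notin F}A_F$. Substituting this into the constraints and the objective of the rectangular formula yields exactly the second optimization.

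\textbf{Main obstacle.} The only delicate point is the norm identity $\|\Delta_q\Gamma\|=\|\Delta_qA\|$. It requires the filtered matrix to keep the bipartite shape, which holds because each filter is a symmetric entrywise mask, so Lemma~\ref{lem:bipartite-spectrum} applies to $\Delta_q\Gamma$ as well. Once that is in place, the matroid hypothesis enters only through the fact that the agreement sets $F_{xy}$ are well defined and index a partition of $D_+\times D_-$. That fact is supplied by Proposition~\ref{prop:agreement-set-flat}, which moreover guarantees each $F_{xy}$ is a noncomputing flat.
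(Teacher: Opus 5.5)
Your proof is correct and follows the paper's argument: you pass to the off-diagonal block via the bipartite form, so that $\|\Gamma\|=\|A\|$ and the $q$-filtered block is $\Delta_qA$, and then split $A$ over the disjoint agreement fibers using the rule that $q$ separates a pair in $\mathcal A_F$ exactly when $q\notin F$. One side remark is inaccurate but harmless: the hypotheses do not force $\mathcal Q$ to compute $f$ (the empty modular cut, with $\tau_f$ a coloop, is allowed), so both sides can equal $+\infty$, and your bijection of feasible sets already covers that case.
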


\begin{proof}
Order the inputs as $D_+\sqcup D_-$.  Every adversary matrix has the form
\begin{equation}
    \Gamma=
    \begin{pmatrix}0&A\\A^{\mathsf T}&0\end{pmatrix},
\end{equation}
so $\|\Gamma\|=\|A\|$.  Its $q$-filtered block is $\Delta_qA$.  Splitting
$A$ over the disjoint agreement fibers gives the second form.
\end{proof}

The second optimization is the agreement-flat reindexing used below; its
query constraints depend only on the incidence rule $q\notin F$ for realized
agreement flats.

For the routing-norm discussion, assume that $\mathcal Q$ separates $f$; otherwise $\operatorname{Adv}^{\pm}_{\mathcal Q}(f)=+\infty$. The routing cost from Subsection~\ref{subsec:adversary-routing} gives the same constraints a second interpretation.  Let $\mathsf V:=\mathbb R^{D_+\times D_-}$.  In this
Boolean, unit-cost setting write $\operatorname{Route}_f:=\operatorname{Route}_{f,\mathbf 1}$, using the
rectangular form~\eqref{eq:boolean-rectangular-routing}.

If $\Pi_F$ denotes the projection onto the agreement fiber $\mathcal A_F$, then any routing $U=\sum_{q\in\mathcal Q}\Delta_qB_q$ satisfies $\Pi_FU=\sum_{q\notin F}\Pi_FB_q$ for every realized agreement flat $F$.  Thus demand on $F$ can leave only
through queries outside $F$. This part of the routing problem is controlled
directly by the agreement-flat data of the source matroid $\mathfrak M_f$.

\subsection{Routing and lower certificates}
\label{subsec:source-routing-certificates}

We use routings for upper certificates and feasible adversary matrices for
lower certificates. A related pair-to-query assignment appears in the
index-selection functions of Laplante, Lee, and
Szegedy~\cite{LaplanteLeeSzegedy2006}: each opposite-output pair is assigned
to a coordinate on which it differs. We use a fractional analogue constrained
to depend only on the agreement flat.

One source-defined routing starts with numbers $p_{Fq}\geq0$ on the
realized agreement flats, with $p_{Fq}=0$ for $q\in F$ and
$\sum_{q\notin F}p_{Fq}=1$.  Define
$M_q(x,y):=p_{F_{xy},q}$.  Then $\sum_qM_q$ is the all-ones matrix on
$D_+\times D_-$, and $M_q$ is supported on pairs separated by $q$.

\begin{proposition}
\label{prop:scalar-flat-routing-certificate}
For any such choice of $p_{Fq}$, let
\begin{equation}
    \kappa(p)
    :=
    \sup_{\|u\|=\|v\|=1}
        \sum_q\|M_q\circ(uv^{\mathsf T})\|_*.
\end{equation}
Then $\operatorname{Adv}^{\pm}_{\mathcal Q}(f)\leq\kappa(p)$.
\end{proposition}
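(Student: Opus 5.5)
The plan is to use the routing characterization of the adversary bound from Proposition~\ref{prop:adversary-routing-duality}, in its rectangular Boolean form~\eqref{eq:boolean-rectangular-routing}. With unit costs this gives
\[
    \operatorname{Adv}^{\pm}_{\mathcal Q}(f)=\sup_{\|u\|=\|v\|=1}\operatorname{Route}_f(uv^{\mathsf T}),
\]
where $u$ is supported on $D_+$ and $v$ on $D_-$. Here $\operatorname{Route}_f(U)$ is the infimum of $\sum_q\|R_q\|_*$ over decompositions $U=\sum_q\Delta_qR_q$ on $D_+\times D_-$. For each fixed pair of unit vectors, it therefore suffices to exhibit one feasible routing whose cost is $\sum_q\|M_q\circ(uv^{\mathsf T})\|_*$. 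Taking the supremum over $u,v$ then gives the bound $\kappa(p)$.

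The routing I would use is $R_q:=M_q\circ(uv^{\mathsf T})$. Two checks are needed. First, $R_q$ is supported on pairs separated by $q$. If $(x,y)$ is not separated by $q$, then $q\in F_{xy}$, so $p_{F_{xy},q}=0$ by the hypothesis $p_{Fq}=0$ for $q\in F$. It follows that $\Delta_qR_q=R_q$. Second, the sum recovers $uv^{\mathsf T}$:
\[
    \sum_q\Delta_qR_q=\Bigl(\sum_qM_q\Bigr)\circ(uv^{\mathsf T})=uv^{\mathsf T}.
\]
This uses the normalization $\sum_{q\notin F}p_{Fq}=1$, which makes $\sum_qM_q$ the all-ones matrix on $D_+\times D_-$. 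Every pair in $D_+\times D_-$ has a well-defined agreement flat $F_{xy}$ by Proposition~\ref{prop:agreement-set-flat}, so $M_q$ is defined at every entry.

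The only delicate step is whether Proposition~\ref{prop:adversary-routing-duality} is available, since that proposition assumes that $\mathcal Q$ separates $f$. I would argue that the existence of $p$ already forces this. The condition $\sum_{q\notin F}p_{Fq}=1$ needs, for each realized flat $F$, at least one query outside $F$. Hence every opposite-output pair is separated by some query, and $\mathcal Q$ separates $f$.

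As a fallback that does not rely on the duality, one can pair directly against any feasible adversary block $A$. For unit $u,v$,
\[
    u^{\mathsf T}Av=\langle A,uv^{\mathsf T}\rangle=\sum_q\langle A,R_q\rangle=\sum_q\langle\Delta_qA,R_q\rangle\le\sum_q\|\Delta_qA\|\,\|R_q\|_*\le\sum_q\|R_q\|_*.
\]
Taking the supremum over $u,v$ bounds $\|A\|$ by $\kappa(p)$. Theorem~\ref{thm:flat-adversary-decomposition} then converts this into the stated bound on $\operatorname{Adv}^{\pm}_{\mathcal Q}(f)$. This is the weak-duality half of Proposition~\ref{prop:adversary-routing-duality} and is the only direction the statement requires, so I would present this version as the proof.
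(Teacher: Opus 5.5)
Your proposal is correct and matches the paper's proof: your routing $R_q=M_q\circ(uv^{\mathsf T})$ is exactly the paper's $B_q$, and the argument is to bound $\operatorname{Route}_f(uv^{\mathsf T})$ by the cost of this routing and then take the supremum over unit $u,v$. Your fallback just unwinds the weak-duality half of Proposition~\ref{prop:adversary-routing-duality}, which is the only direction the paper's argument needs. Your observation that the existence of $p$ forces $\mathcal Q$ to separate $f$ correctly discharges that proposition's standing hypothesis.
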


\begin{proof}
For $U=uv^{\mathsf T}$, put $B_q=M_q\circ U$.  Since $B_q$ is supported on
pairs separated by $q$, $\Delta_qB_q=B_q$, and the routing identities give
$U=\sum_q\Delta_qB_q$.  Hence
$\operatorname{Route}_f(U)\leq\sum_q\|B_q\|_*$. Take the supremum over unit $u,v$.
\end{proof}

\begin{proposition}
\label{prop:routing-potential-optimality}
Suppose a routing argument proves
$\operatorname{Adv}^{\pm}_{\mathcal Q}(f)\leq L$.  If there is a matrix
$A\in\mathsf V$ with $\|\Delta_qA\|\leq1$ for every query and $\|A\|=L$,
then $\operatorname{Adv}^{\pm}_{\mathcal Q}(f)=L$.
\end{proposition}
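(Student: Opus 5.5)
The plan is a sandwich: we get one inequality from each side and show they meet at $L$. The hypothesis supplies the upper bound $\operatorname{Adv}^{\pm}_{\mathcal Q}(f)\leq L$ directly. A routing certificate of the kind in Proposition~\ref{prop:scalar-flat-routing-certificate}, or any bound on $\operatorname{Route}_f(uv^{\mathsf T})$ uniform over unit $u,v$, controls the adversary value through the rectangular duality of Proposition~\ref{prop:adversary-routing-duality}. So nothing further is needed for that direction.

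For the lower bound, we take the given matrix $A\in\mathsf V=\mathbb R^{D_+\times D_-}$ and show it is feasible for the rectangular adversary formulation of Theorem~\ref{thm:flat-adversary-decomposition}. That theorem states $\operatorname{Adv}^{\pm}_{\mathcal Q}(f)=\sup\|A\|$ over $A\in\mathbb R^{D_+\times D_-}$ with $\|\Delta_qA\|\leq1$ for every $q$. The hypothesis states exactly these constraints. Hence $\operatorname{Adv}^{\pm}_{\mathcal Q}(f)\geq\|A\|=L$. If one prefers the original symmetric form of Definition~\ref{def:general-adversary-bound}, one forms the bipartite dilation $\Gamma=\begin{pmatrix}0&A\\A^{\mathsf T}&0\end{pmatrix}$. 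Then $\Delta_f\Gamma=\Gamma$, $\|\Delta_q\Gamma\|=\|\Delta_qA\|\leq1$, and $\|\Gamma\|=\|A\|$ by Lemma~\ref{lem:bipartite-spectrum}.

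Combining the two inequalities gives equality. The only point needing care is to make sure the masks $\Delta_q$ in the routing argument and in the feasibility condition are the same rectangular restriction to $D_+\times D_-$. Equation~\eqref{eq:boolean-rectangular-routing} records precisely this identification, so I expect no real obstacle.
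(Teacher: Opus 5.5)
Your proposal is correct and is the paper's proof: the hypothesis gives $\operatorname{Adv}^{\pm}_{\mathcal Q}(f)\leq L$, and feasibility of $A$ in the rectangular formulation of Theorem~\ref{thm:flat-adversary-decomposition} gives $\operatorname{Adv}^{\pm}_{\mathcal Q}(f)\geq\|A\|=L$. The bipartite dilation and the check on the rectangular masks are harmless extras that the argument does not need.
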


\begin{proof}
The matrix $A$ gives the matching lower bound in
Theorem~\ref{thm:flat-adversary-decomposition}.
\end{proof}

\subsection{M\"obius decomposition on the flat lattice}
\label{subsec:flat-potential}

The same decomposition has a lattice form. The cumulative matrices below
are the zeta transform of the agreement-fiber blocks on the lattice of flats;
the fiber blocks are recovered by ordinary M\"obius inversion on a finite
poset~\cite{Rota1964}. Join shifts and their differences are standard
operators on meet/join lattices~\cite{PuschelSeifertWendler2021}, and
analogous differences indexed by queried information occur in learning-graph
dual formulations~\cite{BelovsRosmanis2014LearningGraphs}. The proposition
identifies the query Schur mask with the corresponding join difference of the
cumulative matrices.

Put
$\widehat0:=\cl_{N_{\mathcal Q}}(\varnothing)$.  For a flat $G$ of
$N_{\mathcal Q}$, let $\mathsf E_G$ keep the opposite-output pairs which
agree on every query in $G$, and put $C_G:=\mathsf E_GA$.  These matrices are determined by $A=C_{\widehat0}$, and
$\mathsf E_G\mathsf E_H=\mathsf E_{G\vee H}$.  When using the full flat
lattice below, set $A_F:=0$ for unrealized agreement flats.

\begin{proposition}
\label{prop:mobius-source-potential}
\label{prop:source-potential-query-derivative}
If $A=\sum_FA_F$ is split over agreement fibers, then
\begin{equation}
    C_G=\sum_{F\supseteq G}A_F.
\end{equation}
Thus the $A_F$ are the M\"obius increments of $C$.  If
$H_q:=\cl_{N_{\mathcal Q}}(\{q\})$, then
\begin{equation}
    \Delta_qA=C_{\widehat0}-C_{H_q},
    \qquad
    \mathsf E_G\Delta_qA=C_G-C_{G\vee H_q}.
\end{equation}
In particular, an adversary witness satisfies
$\|C_G-C_{G\vee H_q}\|\leq1$ for every $G$ and $q$.
\end{proposition}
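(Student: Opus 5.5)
The plan is to work entrywise, one opposite-output pair $(x,y)\in D_+\times D_-$ at a time, using only the incidence rule that a query $q$ separates a pair in $\mathcal A_F$ exactly when $q\notin F$.

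\textbf{The formula for $C_G$.} By definition, $\mathsf E_G$ keeps the pair $(x,y)$ precisely when $x$ and $y$ agree on every query in $G$, that is, when $G\subseteq F_{xy}$. Every entry of $A$ lies in exactly one agreement fiber $\mathcal A_{F_{xy}}$, and $A_F=0$ for unrealized flats. So the $(x,y)$ entry of $\sum_{F\supseteq G}A_F$ is $A(x,y)$ when $F_{xy}\supseteq G$ and zero otherwise. This is exactly the $(x,y)$ entry of $C_G$, which proves the zeta-transform identity. Ordinary M\"obius inversion on the finite flat lattice~\cite{Rota1964} then recovers the $A_F$ from the $C_G$.

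\textbf{The formula for $\Delta_qA$.} The key step is
\begin{equation}
    q\in F_{xy}\iff H_q\subseteq F_{xy},
\end{equation}
which holds because $F_{xy}$ is a flat of $N_{\mathcal Q}$ by Proposition~\ref{prop:agreement-set-flat}. One direction is trivial; for the other, $q\in F_{xy}$ gives $\cl_{N_{\mathcal Q}}(\{q\})\subseteq F_{xy}$. Hence $\mathsf E_{H_q}$ keeps exactly the pairs not separated by $q$. Every pair satisfies $\widehat0\subseteq F_{xy}$, since $F_{xy}$ is a flat and so contains $\cl(\varnothing)$; therefore $C_{\widehat0}=A$. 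Combining these gives $\Delta_qA=A-\mathsf E_{H_q}A=C_{\widehat0}-C_{H_q}$.

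\textbf{The composition rule and the localized identity.} Next we check $\mathsf E_G\mathsf E_H=\mathsf E_{G\vee H}$. For a flat $F$, the containments $G\subseteq F$ and $H\subseteq F$ together are equivalent to $\cl(G\cup H)\subseteq F$, and $G\vee H=\cl(G\cup H)$. Since these masks are Schur projections, they commute with $\Delta_q$. Applying $\mathsf E_G$ to the previous identity then gives
\begin{equation}
    \mathsf E_G\Delta_qA=\mathsf E_GA-\mathsf E_G\mathsf E_{H_q}A=C_G-C_{G\vee H_q}.
\end{equation}

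\textbf{The witness bound.} Any compression of $\Delta_qA$ to a subset of its entries has operator norm at most $\|\Delta_qA\|$. A Schur mask that keeps a set of entries is not in general norm-nonincreasing, so this step needs an argument. Rewrite $\mathsf E_G$ as a sum over the blocks of the joint-response partition $\llbracket G\rrbracket$: it keeps exactly the pairs lying in a common block $\beta$. Thus $\mathsf E_G B=\sum_\beta P_\beta BP_\beta$, where $P_\beta$ denotes the coordinate projection onto the inputs in $\beta$ (on the $D_+$ side and on the $D_-$ side respectively). This is a block-diagonal pinching, so its norm is the maximum of the block norms, which is at most $\|B\|$.

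Applying this with $B=\Delta_qA$ and using the adversary constraint $\|\Delta_qA\|\le 1$ from Theorem~\ref{thm:flat-adversary-decomposition} yields $\|C_G-C_{G\vee H_q}\|\le1$ for every flat $G$ and every query $q$. This last step is the only one that is not purely combinatorial: the bound relies on the block-diagonal structure coming from the partition $\llbracket G\rrbracket$, and would fail for an arbitrary mask.
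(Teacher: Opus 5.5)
Your proof is correct and follows the same route as the paper: the zeta-transform identity comes from $\mathsf E_G$ keeping $\mathcal A_F$ exactly when $G\subseteq F$, the two difference identities come from $\Delta_q=I-\mathsf E_{H_q}$ and $\mathsf E_G\mathsf E_{H_q}=\mathsf E_{G\vee H_q}$, and the norm bound comes from $\mathsf E_G$ being the block-diagonal pinching for the joint-response partition of $G$. You also fill in details the paper leaves implicit, namely $q\in F_{xy}\iff H_q\subseteq F_{xy}$ because agreement sets are flats, and $C_{\widehat0}=A$; one wording fix is that your phrase ``compression to a subset of its entries'' should say submatrix (a subset of rows and columns), since, as you note yourself, an arbitrary entry mask is not contractive.
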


\begin{proof}
The first formula follows because $\mathsf E_G$ keeps the fiber
$\mathcal A_F$ when $G\subseteq F$; M\"obius inversion recovers
the increments.  The remaining identities follow from
$\Delta_q=I-\mathsf E_{H_q}$ and
$\mathsf E_G\mathsf E_{H_q}=\mathsf E_{G\vee H_q}$.  For fixed $G$, $\mathsf E_G$ is the block-diagonal projection associated with the joint response partition of $G$, so it is operator-norm contractive.  Applying it to $\Delta_qA$ gives the final norm bound.
\end{proof}

Actual block-diagonal structure permits blockwise operator-norm estimates; disjoint entry supports alone do not decouple the spectral norm. When one target fiber is a singleton, the matrix geometry collapses further to an exact fractional packing and covering problem.

\subsection{Singleton-fiber adversary linear program}
\label{subsec:singleton-fiber-scalarization}

Suppose one target fiber consists of a single input. Then every adversary
block is a column vector, and the agreement fibers have disjoint row supports.
The general adversary problem reduces to a scalar packing problem.
For ordinary coordinate queries, the corresponding linear program is the
fractional-block-sensitivity program at the singleton input
~\cite{KulkarniTal2016}; related fractional certificate measures occur in
quantum certificate complexity~\cite{AaronsonQuantumCertificate}. In matroid
language, circuits through a distinguished element and their blocker
coverings are standard port structures
~\cite{Seymour1977Ports,ChenDingZang2008}. The theorem below identifies this
packing/covering problem exactly for an arbitrary finite query set. By
exchanging the roles of $D_+$ and $D_-$, assume $D_-=\{y_0\}$.

Let \(\mathcal F_{\mathrm{ag}}:=\{F_{xy}:f(x)\neq f(y)\}\) denote the realized agreement flats.

\begin{theorem}
\label{thm:singleton-fiber-scalarization}
Assume that $\mathcal Q$ separates $f$.  If $|D_-|=1$ or $|D_+|=1$, then
\begin{equation}
    \operatorname{Adv}^{\pm}_{\mathcal Q}(f)^2
    =
    \max_{b_F\geq0}
        \sum_{F\in\mathcal F_{\mathrm{ag}}}b_F
\end{equation}
subject to
\begin{equation}
    \sum_{\substack{F\in\mathcal F_{\mathrm{ag}}\\q\notin F}}b_F
    \leq1
    \qquad(q\in\mathcal Q).
\end{equation}
The dual linear program is
\begin{equation}
    \operatorname{Adv}^{\pm}_{\mathcal Q}(f)^2
    =
    \min_{\lambda_q\geq0}
        \sum_{q\in\mathcal Q}\lambda_q
\end{equation}
subject to
\begin{equation}
    \sum_{q\notin F}\lambda_q\geq1
    \qquad(F\in\mathcal F_{\mathrm{ag}}).
\end{equation}
Moreover, the dual constraints may be restricted to maximal noncomputing
flats, equivalently to cocircuits of \(M_f\) containing \(\tau_f\).
\end{theorem}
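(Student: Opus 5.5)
Assume $D_-=\{y_0\}$; the case $|D_+|=1$ is the same after swapping the roles of $D_+$ and $D_-$. Start from the rectangular form in Theorem~\ref{thm:flat-adversary-decomposition}. Here $A$ is a column vector $a\in\mathbb R^{D_+}$, and $\|A\|=\|a\|_2$. The filtered matrix $\Delta_qA$ is the vector $a$ restricted to rows $x$ with $q\notin F_{xy_0}$, so
\[
    \|\Delta_qA\|^2=\sum_{x:\,q\notin F_{xy_0}}a(x)^2 .
\]
Set $w(x):=a(x)^2\geq0$. The problem becomes: maximize $\sum_x w(x)$ subject to $\sum_{x:q\notin F_{xy_0}}w(x)\leq1$ for every $q$. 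Every nonnegative $w$ arises from some $a$, namely $a=\sqrt w$, so this linear program has value exactly $\operatorname{Adv}^{\pm}_{\mathcal Q}(f)^2$.

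Next, group the variables by agreement flat. Put $b_F:=\sum_{x:F_{xy_0}=F}w(x)$. Both the objective and every constraint depend on $w$ only through the $b_F$, since the rule for $q$ depends only on whether $q\notin F$. Conversely, any $b\geq0$ on $\mathcal F_{\mathrm{ag}}$ is realized by placing the mass $b_F$ on a single $x$ with $F_{xy_0}=F$. Such an $x$ exists because every $F\in\mathcal F_{\mathrm{ag}}$ is realized. This gives the stated packing LP.

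Because $\mathcal Q$ separates $f$, every realized $F$ omits some query $q$, so each $b_F$ is bounded by that $q$'s constraint. The LP is therefore bounded and feasible ($b=0$), and it attains its maximum. Ordinary LP duality gives the covering dual: minimize $\sum_q\lambda_q$ subject to $\sum_{q\notin F}\lambda_q\geq1$, with equal optimal value.

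For the last claim, use Proposition~\ref{prop:agreement-set-flat}. Every realized $F$ is a noncomputing flat, so it lies inside some maximal noncomputing flat $H$. Because $\lambda\geq0$ and $\mathcal Q\setminus H\subseteq\mathcal Q\setminus F$, we get $\sum_{q\notin F}\lambda_q\geq\sum_{q\notin H}\lambda_q$. So the constraints for maximal noncomputing flats imply all the others.

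It remains to check that these maximal flats are themselves realized, so that restricting to them neither adds constraints nor leaves any out. By the proposition, each such $H$ is contained in some realized $F_{xy}$, which is noncomputing; maximality forces $H=F_{xy}$. The proposition also identifies the maximal noncomputing flats with the complements of cocircuits of $M_f$ through $\tau_f$. In that form the dual constraint reads $\lambda(C^*\setminus\{\tau_f\})\geq1$ for each such cocircuit $C^*$.

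The main obstacle is the first step: checking carefully that, with a single column, the operator norm of each filtered block is exactly the Euclidean norm of the surviving entries. Once that is in place, the substitution $w=a^2$ is a bijection onto the nonnegative orthant, and the rest is a routine LP argument.
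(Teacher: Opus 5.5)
Your proposal is correct and follows essentially the same route as the paper. You reduce the rectangular problem of Theorem~\ref{thm:flat-adversary-decomposition} to a single column, turn each filtered norm into a sum of squared entries, group by agreement flat, and invoke finite LP duality. You then use Proposition~\ref{prop:agreement-set-flat} to see that the maximal noncomputing flats are realized and that their covering constraints imply all the others. The only cosmetic difference is that you pass through per-input weights $w(x)=a(x)^2$ before grouping, whereas the paper sets $b_F=\|A_F\|^2$ directly.
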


\begin{proof}
Assume \(D_-=\{y_0\}\).  Each \(A_F\) in
Theorem~\ref{thm:flat-adversary-decomposition} is a column vector supported
on the rows \(x\) with \(F_{x y_0}=F\).  Distinct fibers have disjoint supports. Therefore $\|\sum_FA_F\|^2=\sum_F\|A_F\|^2$, and for each query $q$, $\|\sum_{F:q\notin F}A_F\|^2=\sum_{F:q\notin F}\|A_F\|^2$. Setting $b_F=\|A_F\|^2$ gives the displayed packing program.  Conversely,
any nonnegative $(b_F)$ is realized by choosing one vector of norm
$\sqrt{b_F}$ on each nonempty fiber, so no information is lost.  Since
$\mathcal Q$ separates $f$, every realized agreement flat omits some query;
hence each $b_F$ is bounded by one in the packing region.  The packing optimum
is finite and attained.  The covering region is nonempty, for
example by taking $\lambda_q=1$ for every query, and its minimum is likewise
attained.  Ordinary finite-dimensional linear-programming duality gives the
covering program and equality of the two optimum values.

If \(F\) is a nonmaximal realized agreement flat, choose a maximal
noncomputing flat \(F'\supseteq F\).  Proposition~\ref{prop:maximal-noncomputing-target-cocircuit}
shows that \(F'\) is realized.  Since
\(\mathcal Q\setminus F'\subseteq\mathcal Q\setminus F\), the nonnegative
covering constraint for \(F'\) implies the constraint for \(F\).  Hence only
maximal noncomputing flats are needed.  The cocircuit description follows
from Proposition~\ref{prop:maximal-noncomputing-target-cocircuit}.
\end{proof}

For coordinate queries, this recovers the fractional-block-sensitivity
packing and fractional-certificate covering programs at the singleton input
~\cite{KulkarniTal2016}. In matroid language, the irredundant constraints are
the standard port/blocker data
~\cite{Seymour1977Ports,ChenDingZang2008}. Thus
\(\operatorname{Adv}^{\pm}(f)^2\) is exactly the fractional matching
number, and hence the fractional vertex-cover number, of the hypergraph whose
edges are the non-target parts of the target cocircuits.

\part{Matroidal Span Programs}

An abstract matroid span program records the dependence structure used to
span its target separately from any numerical representation. A real
representation assigns reconstruction and separation energies to that
structure. For regular program matroids, the representation freedom reduces
to positive element weights, and the resulting weighted basis ratios respect
Seymour decomposition. The only exceptional leaf beyond the graphic and
cographic cases is the fixed matroid $R_{10}$.

\section{Response-Labeled Matroid Span Programs}
\label{sec:matroid-span-programs}

Standard span programs label input vectors by the input responses that make
them available, and several vectors may carry the same response label
~\cite{KarchmerWigderson1993,Reichardt2009}. Non-binary span programs use
the same form with larger response alphabets~\cite{BeigiTaghavi2019}.

Forget the numerical coefficients of such a vector configuration and retain
the dependence matroid on its target and input elements; we call this matroid
$M_P$ the \emph{program matroid}. The target designation and response labels
remain separate data of the program. The program matroid is an auxiliary
object and need not equal the targeted query matroid $M_f$. When the program matroid is real-representable,
Proposition~\ref{prop:span-program-matroid-shadow} turns a representation
back into an ordinary span program. Throughout this section, fix a Boolean function $f:D\to\{0,1\}$ together
with its $f$-separation matroid
$\mathfrak M_f=(M_f,\Sigma_f)$. The function $f$ remains part of the data because $\mathfrak M_f=\mathfrak M_{1-f}$. Since $\Sigma_f(q)=\operatorname{Sep}(q)$, the coloring recovers the response partition $\mathcal B_q$: two inputs lie in the same block exactly when their pair is not in $\Sigma_f(q)$. Thus the response-label set $\mathsf R$ is determined by $\mathfrak M_f$ together with its query labels.

\subsection{Abstract matroid span programs}
\label{subsec:abstract-matroid-span-programs}

Here ``over $\mathfrak M_f$'' means that the response labels and target
function come from the query problem. We impose no matroid morphism, minor,
quotient, or extension relation between the targeted query matroid $M_f$ and
the program matroid $M_P$.

In the representable case, $M_P$ is the vector matroid of the usual labeled
span-program configuration. We use the same target-closure rule before
choosing a representation.

\begin{definition}
\label{def:f-matroid-span-program}
An \emph{$f$-matroid span program over $\mathfrak M_f$} is a tuple
\(
P=(M_P,\tau_P,\lambda_P),
\)
where $M_P=(E_P,\mathcal I_P)$ is a finite matroid, $\tau_P\in E_P$ is a
distinguished target element, and
\begin{equation}
    \lambda_P:
    E_P\setminus\{\tau_P\}
    \longrightarrow
    \mathsf R
\end{equation}
assigns a query-response label to every non-target element.

For $x\in D$, define the set of elements available on $x$ by
\begin{equation}
    A_P(x)
    :=
    \{
        e\in E_P\setminus\{\tau_P\}:
        \lambda_P(e)=(q,B_q(x))
        \text{ for some }q\in\mathcal Q
    \}.
\end{equation}

We require that the program compute $f$, meaning that, for every $x\in D$,
\begin{equation}
    f(x)=1
    \quad\Longleftrightarrow\quad
    \tau_P\in\cl_{M_P}(A_P(x)).
\end{equation}
\end{definition}

The target placement can be separated from the rest of the program. Given $P$, write $N_P:=M_P\setminus\tau_P$ and $F_P(x):=\cl_{N_P}(A_P(x))$.
Proposition~\ref{prop:target-extension-criterion} says that the possible
placements of a target over the fixed non-target matroid $N_P$, with its
inherited response labels, are exactly the modular cuts which contain every $F_P(x)$ with $f(x)=1$ and
contain no $F_P(y)$ with $f(y)=0$. In particular, if the smallest modular cut generated by the positive flats
already contains a negative flat, Proposition~\ref{prop:target-extension-criterion}
rules out that response matroid before any representation is chosen.

Proposition~\ref{prop:target-extension-criterion} does not assert that the smallest allowed modular cut gives a real-representable extension, so the program optimization still ranges over all allowed target cuts and real-representable extensions.

As in ordinary span programs, the labeling map need not be injective:
several elements may carry the same response label~\cite{Reichardt2009}. On
a given input, all of those elements are available simultaneously.

The targeted query matroid $M_f$ records functional dependence among the
allowed queries and target determination. The program matroid $M_P$ records
dependence among the elements of a particular abstract span program. The
response labels belong to the program $P$ and connect its elements to the
queries, but Definition~\ref{def:f-matroid-span-program} requires no matroid
map between $M_f$ and $M_P$.

\begin{definition}
\label{def:representable-matroid-span-program-class}
Let $\mathscr R_{\mathbb R}(f,\mathfrak M_f)$ denote the class of $f$-matroid span programs over $\mathfrak M_f$ whose program matroid is representable over $\mathbb R$. We identify two such programs when a matroid isomorphism carries target to target and preserves every response label.

Because $f$ is fixed throughout this section, we abbreviate this class by $\mathscr R(\mathfrak M_f)$.
\end{definition}

Definition~\ref{def:f-matroid-span-program} requires every non-target element
to carry a response label, so the definition contains no separate class of
unlabeled, always-available elements. If a query has the trivial one-block
partition, however, an element carrying that response label is available on
every input.

\subsection{Linear realizations}
\label{subsec:matroid-span-linear-realizations}

The standard span-program model is recovered after choosing a real
representation of $M_P$. Input vectors in the usual model are labeled by
the responses which make them available
~\cite{KarchmerWigderson1993,Reichardt2009}; for non-binary response
alphabets see~\cite{BeigiTaghavi2019}. We use the same labeling syntax for
the arbitrary query partitions of Section~\ref{sec:query-models}.

\begin{definition}
\label{def:response-labeled-span-program}
A \emph{real response-labeled span program over $\mathfrak M_f$} is a real span program $P=(V,t,I,\phi,A)$ on $D$ together with a label map $\lambda:I\to\mathsf R$ such that $A(x)=\{i\in I:\lambda(i)=(q,B_q(x))\text{ for some }q\in\mathcal Q\}$ for every $x\in D$.
\end{definition}

Thus the query model enters only through the label-induced availability map; the linear acceptance rule is the one fixed in Section~\ref{sec:span-programs}.

\begin{proposition}
\label{prop:matroid-program-to-linear-span-program}
Let
$P=(M_P,\tau_P,\lambda_P)\in\mathscr R(\mathfrak M_f)$, and choose any real
representation $\phi_P:E_P\to W_P$ of $M_P$. Set
\begin{equation}
    V_P:=\operatorname{span}_{\mathbb R}\{\phi_P(e):e\in E_P\}\subseteq W_P.
\end{equation}
Then, viewed in $V_P$, the target vector $\phi_P(\tau_P)$, the input vectors
$\phi_P(e)$ for $e\neq\tau_P$, and the labels $\lambda_P(e)$ form a real
response-labeled span program computing $f$.
\end{proposition}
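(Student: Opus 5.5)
The plan is to assemble the required tuple $(V_P,t,I,\phi,A)$ together with the label map $\lambda$, and then reduce the computing condition to the closure condition already built into Definition~\ref{def:f-matroid-span-program}. Put $t:=\phi_P(\tau_P)$, $I:=E_P\setminus\{\tau_P\}$, $\phi:=\phi_P|_I$, $\lambda:=\lambda_P$, and define
\[
A(x):=\{i\in I:\lambda(i)=(q,B_q(x))\text{ for some }q\in\mathcal Q\}.
\]
This $A$ coincides with $A_P(x)$ by definition. It is also exactly the label-induced availability map required by Definition~\ref{def:response-labeled-span-program}.

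First I will verify that the data form a real span program in the sense of Definition~\ref{def:span-program}. The space $V_P$ is a finite-dimensional real space because $E_P$ is finite. Every vector $\phi_P(e)$ lies in $V_P$ by construction, so $t\in V_P$ and $\phi:I\to V_P$. The index set $I$ is finite. The case $t=0$ causes no trouble, since the definition allows it; in that case $\tau_P$ is a loop.

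The main step is acceptance. Here I would invoke the converse half of Proposition~\ref{prop:span-program-matroid-shadow}, applied with $M=M_P$ on $I\sqcup\{\tau_P\}$, representation $\psi=\phi_P$, and availability map $A$. That result gives that the program accepts $x$ if and only if $\tau_P\in\cl_{M_P}(A(x))$. It uses precisely $V=\operatorname{span}\psi(I\sqcup\{\tau\})=V_P$. Restricting the codomain from $W_P$ to $V_P$ does not change spans or linear independence, so $\phi_P$ still represents $M_P$ viewed in $V_P$.

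Finally, since $A(x)=A_P(x)$, the computing requirement of Definition~\ref{def:f-matroid-span-program} gives $f(x)=1$ if and only if $\tau_P\in\cl_{M_P}(A_P(x))$, which holds if and only if the program accepts $x$. Hence the span program computes $f$. I expect no real obstacle. The only point needing care is the bookkeeping that $A$ equals $A_P$ and that the change of ambient space leaves the represented matroid unchanged.
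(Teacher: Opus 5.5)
Your proposal is correct and follows essentially the paper's argument. Both reduce acceptance to the fact that closure in $M_P$ is detected by linear span of the represented vectors, and both note that passing from $W_P$ to the finite-dimensional $V_P$ changes no spans or dependencies. The only difference is packaging: you invoke this fact through the converse half of Proposition~\ref{prop:span-program-matroid-shadow}, whereas the paper applies the closure--span identity directly to $A_P(x)$.
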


\begin{proof}
On input $x$, the available vectors are those indexed by
$A_P(x)$. Since $\phi_P$ represents $M_P$, matroid closure is detected by
linear span. Therefore
\begin{equation}
\begin{aligned}
    f(x)=1
    &\Longleftrightarrow
    \tau_P\in\cl_{M_P}(A_P(x))\\
    &\Longleftrightarrow
    \phi_P(\tau_P)
    \in
    \operatorname{span}
    \{
        \phi_P(e):
        e\in A_P(x)
    \}.
\end{aligned}
\end{equation}
The last equivalence is the acceptance condition of the associated linear span program. Since $E_P$ is finite, $V_P$ is finite-dimensional; restricting to it changes none of the represented dependencies or spans.
\end{proof}

\begin{proposition}
\label{prop:linear-span-program-to-matroid-program}
Every finite real response-labeled span program computing $f$ determines a
member of $\mathscr R(\mathfrak M_f)$.
\end{proposition}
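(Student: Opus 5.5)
Given a finite real response-labeled span program $P=(V,t,I,\phi,A)$ with label map $\lambda:I\to\mathsf R$ computing $f$, I will build the triple by essentially reversing Proposition~\ref{prop:matroid-program-to-linear-span-program}.

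First, adjoin a fresh element $\tau\notin I$. Let $\widetilde\phi:I\sqcup\{\tau\}\to V$ extend $\phi$ by $\widetilde\phi(\tau):=t$, exactly as in Proposition~\ref{prop:span-program-matroid-shadow}. Set $M_P:=M(\widetilde\phi)$ on $E_P:=I\sqcup\{\tau\}$, with target $\tau_P:=\tau$ and labels $\lambda_P:=\lambda$ on $E_P\setminus\{\tau_P\}=I$. This matroid is a vector matroid over $\mathbb R$ by construction, since $\widetilde\phi$ represents it (Definitions~\ref{def:vector-matroid} and~\ref{def:matroid-representability}). So $M_P$ is real-representable. Every non-target element carries a response label in $\mathsf R$, as Definition~\ref{def:f-matroid-span-program} requires.

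Next, I check that the availability sets agree. By Definition~\ref{def:response-labeled-span-program}, $A(x)=\{i:\lambda(i)=(q,B_q(x))\text{ for some }q\}$. This is verbatim the set $A_P(x)$ of Definition~\ref{def:f-matroid-span-program}. Then the first part of Proposition~\ref{prop:span-program-matroid-shadow} gives
\[
    P\text{ accepts }x\iff\tau_P\in\cl_{M_P}(A_P(x)).
\]
Since $P$ computes $f$, the left side holds exactly when $f(x)=1$. So the triple satisfies the computation requirement and is a member of $\mathscr R(\mathfrak M_f)$.

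Two technicalities remain. The first is that $\phi$ may be non-injective or take zero values. This is harmless, because Definition~\ref{def:vector-matroid} uses indexed families, so repeated vectors become parallel elements and zero vectors become loops. The same applies to $t=0$, which makes $\tau$ a loop. The second is the identification in Definition~\ref{def:representable-matroid-span-program-class}. The construction depends only on the indexed data, so the resulting class element is well defined up to label- and target-preserving isomorphism.

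I see no real obstacle here. The only point needing care is matching the two availability definitions literally, so that the acceptance equivalence transfers without hidden assumptions.
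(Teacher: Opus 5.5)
Your proposal is correct and is essentially the paper's proof: you adjoin a formal target element represented by $t$, take the vector matroid of the extended indexed family with the inherited labels, and note that the label-induced availability sets coincide, so the target lies in $\cl_{M_P}(A_P(x))$ exactly when $t$ lies in the span of the available vectors. Citing Proposition~\ref{prop:span-program-matroid-shadow} for that last equivalence, instead of rederiving it from the vector-matroid definition as the paper does, is only a cosmetic difference. Your remarks on parallel elements, loops, and the case $t=0$ are extra care that the paper leaves implicit.
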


\begin{proof}
Let the span program have vector space $V$, target $t$, input index set $I$,
input-vector map $\phi:I\to V$, and label map
$\lambda:I\to\mathsf R$.

Introduce one new formal element $\tau_P$ and let
$E_P:=I\sqcup\{\tau_P\}$. Represent each $i\in I$ by $\phi(i)$ and
represent $\tau_P$ by $t$. These vectors determine a real-representable
matroid $M_P$ on $E_P$.

Let $\lambda_P$ agree with $\lambda$ on the non-target elements. For every
input $x$, the set $A_P(x)$ is the set of indices of the vectors
available to the original span program. Hence
$\tau_P\in\cl_{M_P}(A_P(x))$ exactly when $t$ lies in the span of the
available vectors. Since the original span program computes $f$, this occurs
exactly when $f(x)=1$.
\end{proof}

Thus $\mathscr R(\mathfrak M_f)$ may be viewed as the class of abstract
dependency types carried by real span programs compatible with the query
problem. Passing from a linear span program to its matroid forgets the
numerical coefficients in its dependencies. Choosing a representation puts
such numerical information back.

\subsection{Support}
\label{subsec:matroid-span-support}

Before measuring those numerical coefficients, we can ask how large the
required dependencies and separations must be using the abstract matroid
alone.

Suppose first that $f(x)=1$. Then
$\tau_P\in\cl_{M_P}(A_P(x))$, so some subset of the available elements
already spans the target.

\begin{definition}
\label{def:positive-support}
For $P\in\mathscr R(\mathfrak M_f)$ and $x$ with $f(x)=1$, define the
\emph{positive support} by
\begin{equation}
    s_+(P,x)
    :=
    \min
    \{
        |S|:
        S\subseteq A_P(x),\
        \tau_P\in\cl_{M_P}(S)
    \}.
\end{equation}
\end{definition}

The closure--circuit characterization of
Proposition~\ref{prop:closure-circuit} rewrites the same minimum in circuit
language.

\begin{proposition}
\label{prop:positive-support-circuits}
For every $x$ with $f(x)=1$,
\begin{equation}
    s_+(P,x)
    =
    \min
    \{
        |C|-1:
        C\text{ is a circuit of }M_P,\
        \tau_P\in C,\
        C\setminus\{\tau_P\}\subseteq A_P(x)
    \}.
\end{equation}
\end{proposition}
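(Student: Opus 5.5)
The plan is to prove the two inequalities separately. Both rest on Proposition~\ref{prop:closure-circuit}, together with the observation that a minimal spanning set is independent. Fix $x$ with $f(x)=1$, and write $m_C$ for the circuit minimum on the right-hand side.

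\emph{Step 1: $s_+(P,x)\leq m_C$.} Let $C$ be a circuit with $\tau_P\in C$ and $C\setminus\{\tau_P\}\subseteq A_P(x)$. Put $S:=C\setminus\{\tau_P\}$. Then $\tau_P\notin S$, and Proposition~\ref{prop:closure-circuit} gives $\tau_P\in\cl_{M_P}(S)$. So $S$ is admissible in Definition~\ref{def:positive-support}, and it has size $|C|-1$.

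\emph{Step 2: $m_C\leq s_+(P,x)$.} The minimum in Definition~\ref{def:positive-support} is over a nonempty finite family, because the program computes $f$ and $f(x)=1$ give $\tau_P\in\cl_{M_P}(A_P(x))$. Choose an admissible $S$ of size $s_+(P,x)$; note $\tau_P\notin S\subseteq E_P\setminus\{\tau_P\}$. By Proposition~\ref{prop:closure-circuit} there is a circuit $C\ni\tau_P$ with $C\setminus\{\tau_P\}\subseteq S\subseteq A_P(x)$. Therefore $|C|-1\leq|S|$.

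\emph{Degenerate case.} The only point needing care is when $\tau_P$ is a loop, which happens for instance when the target vector is zero. Then $\cl_{M_P}(\varnothing)\ni\tau_P$, so $s_+=0$. The circuit $\{\tau_P\}$ gives $m_C=0$, and this is consistent with Proposition~\ref{prop:closure-circuit} read with $A=\varnothing$. Beyond this case no real obstacle is expected: the two inequalities together give equality, and the minimum on the right is attained.
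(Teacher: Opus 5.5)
Your proof is correct and follows the paper's route: both inequalities come from applying the closure--circuit characterization (Proposition~\ref{prop:closure-circuit}) with $e=\tau_P$ and then minimizing over the admissible subsets of $A_P(x)$. The loop case needs no separate treatment, because that proposition already covers $A=\varnothing$. The independence of minimal spanning sets, announced in your plan, is never actually used.
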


\begin{proof}
Apply Proposition~\ref{prop:closure-circuit} with $e=\tau_P$ and minimize
over the allowed subsets of $A_P(x)$.
\end{proof}

For a negative input, the dual certificate is a cocircuit which contains the
target and avoids every available element. Indeed,
$f(x)=0$ means
$\tau_P\notin\cl_{M_P}(A_P(x))$, so
Proposition~\ref{prop:cocircuit-separates-closure} gives such a cocircuit
directly.

\begin{definition}
\label{def:negative-support}
For $P\in\mathscr R(\mathfrak M_f)$ and $x$ with $f(x)=0$, define the
\emph{negative support} by
\begin{equation}
    s_-(P,x)
    :=
    \min
    \{
        |D|-1:
        D\text{ is a cocircuit of }M_P,\
        \tau_P\in D,\
        D\cap A_P(x)=\varnothing
    \}.
\end{equation}
\end{definition}

The subtraction of one removes the target itself. The two definitions use
the dual certificate families of matroid theory: a positive input asks for a
small target circuit contained in the available set together with $\tau_P$,
while a negative input asks for a small target cocircuit disjoint from the
available set away from $\tau_P$.

\begin{definition}
\label{def:matroid-span-support-profile}
Define $S_+(P):=\max_{f(x)=1}s_+(P,x)$ and $S_-(P):=\max_{f(x)=0}s_-(P,x)$, with a maximum over an empty target fiber understood to be zero. We call the pair $(S_+(P),S_-(P))$ the \emph{support profile} of $P$.
\end{definition}

The support profile depends only on the labeled abstract program. It records
the minimum number of coordinates needed for reconstruction or separation,
but not the sizes of the coefficients in a real representation. After a
representation is chosen, the same two problems are measured by squared
coefficient mass and squared functional evaluations, respectively
~\cite{ReichardtSpalek2008,Reichardt2009}. Section
~\ref{sec:quantitative-representations} makes this replacement and then
optimizes over the real representations of the fixed program $P$.

\section{Representation Costs and Optimization}
\label{sec:quantitative-representations}

This section sets up the two optimization levels used throughout the rest of
Part~III. For a fixed labeled program $P$ and a real representation of its
program matroid $M_P$, the reconstruction and separation problems carry the
standard positive and negative span-program costs
~\cite{ReichardtSpalek2008,Reichardt2009,Reichardt2014Equivalent}. We first
minimize these quadratic costs over real representations of the fixed program,
obtaining $\operatorname{lincost}(P)$, and then minimize over representable
labeled programs, obtaining $\operatorname{MSP}(f,\mathfrak M_f)$. Keeping
these two levels separate allows adversary lower certificates to distinguish
a poor representation from a poor choice of program matroid.

Throughout this section, fix
$P=(M_P,\tau_P,\lambda_P)\in\mathscr R(\mathfrak M_f)$ and a real
representation $\phi_P:E_P\to V_P$ of $M_P$. We take
$V_P=\operatorname{span}\{\phi_P(e):e\in E_P\}$ and write
$E_P^\circ:=E_P\setminus\{\tau_P\}$. By
Proposition~\ref{prop:matroid-program-to-linear-span-program}, $\phi_P$
gives a real response-labeled span program with reconstruction sets
$\mathcal R_x(\phi_P)$ and separation sets $\mathcal S_x(\phi_P)$.
We also write $\Phi_P:=\Phi_{\phi_P}$ for the coefficient map from
Section~\ref{sec:span-programs}.

\subsection{Support as coordinate support}
\label{subsec:support-as-coordinate-support}

For $a\in\mathbb R^{E_P^\circ}$, define
$\operatorname{supp}(a):=\{e\in E_P^\circ:a(e)\neq0\}$ and write
$\|a\|_0:=|\operatorname{supp}(a)|$. The notation $\|\cdot\|_0$ counts
nonzero coordinates; it is not a norm.

The representation also determines an evaluation map
$\operatorname{ev}_{\phi_P}:V_P^*\to\mathbb R^{E_P^\circ}$, defined by
$\operatorname{ev}_{\phi_P}(\ell)(e):=\ell(\phi_P(e))$.

The support quantities from Section~\ref{subsec:matroid-span-support} are the minimum coordinate supports of the reconstruction and separation problems.

\begin{proposition}
\label{prop:support-linear-characterization}
If $f(x)=1$, then
\begin{equation}
    s_+(P,x)
    =
    \min_{a\in\mathcal R_x(\phi_P)}
        \|a\|_0.
\end{equation}
If $f(x)=0$, then
\begin{equation}
    s_-(P,x)
    =
    \min_{\ell\in\mathcal S_x(\phi_P)}
        \|\operatorname{ev}_{\phi_P}(\ell)\|_0.
\end{equation}
\end{proposition}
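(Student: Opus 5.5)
We prove the two formulas separately, in each case showing both inequalities by translating between linear solutions and matroid certificates through the representation $\phi_P$.

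\textbf{Positive case.} Let $f(x)=1$.

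For the inequality $\min_{a\in\mathcal R_x(\phi_P)}\|a\|_0\geq s_+(P,x)$, take $a\in\mathcal R_x(\phi_P)$ and put $S:=\operatorname{supp}(a)\subseteq A_P(x)$. Then $\phi_P(\tau_P)=\sum_{e\in S}a(e)\phi_P(e)$ lies in the span of $\phi_P(S)$. Since $\phi_P$ represents $M_P$, this gives $\tau_P\in\cl_{M_P}(S)$. Hence $s_+(P,x)\leq|S|=\|a\|_0$.

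For the reverse inequality, take a minimizing $S$ in Definition~\ref{def:positive-support}. The relation $\tau_P\in\cl_{M_P}(S)$ means $\phi_P(\tau_P)\in\operatorname{span}\phi_P(S)$. Choosing coefficients supported on $S$ and extending by zero gives some $a\in\mathcal R_x(\phi_P)$ with $\|a\|_0\leq|S|$. The minimum over $\mathcal R_x(\phi_P)$ is attained because $\|\cdot\|_0$ takes values in a finite set and $\mathcal R_x(\phi_P)\neq\varnothing$ by Proposition~\ref{prop:reconstruction-separation-acceptance}.

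\textbf{Negative case.} Let $f(x)=0$. The key tool is Proposition~\ref{prop:functional-zero-flat}. For $\ell\in V_P^*$, the zero set $F_\ell:=\{e\in E_P:\ell(\phi_P(e))=0\}$ is a flat of $M_P$, and for $\ell\in\mathcal S_x(\phi_P)$ it contains $A_P(x)$ but not $\tau_P$.

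For $\|\operatorname{ev}_{\phi_P}(\ell)\|_0\geq s_-(P,x)$, the complement $E_P\setminus F_\ell$ contains $\tau_P$ and is disjoint from $A_P(x)$. Its size is $\|\operatorname{ev}_{\phi_P}(\ell)\|_0+1$. Choose a hyperplane $H\supseteq F_\ell$ with $\tau_P\notin H$: apply Proposition~\ref{prop:cocircuit-separates-closure} to $A=F_\ell$ and $e=\tau_P$. The resulting cocircuit $D=E_P\setminus H$ contains $\tau_P$ and satisfies $D\subseteq E_P\setminus F_\ell$. Thus $|D|-1\leq\|\operatorname{ev}_{\phi_P}(\ell)\|_0$, and $D\cap A_P(x)=\varnothing$.

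For the reverse inequality, take a cocircuit $D\ni\tau_P$ disjoint from $A_P(x)$ attaining $s_-(P,x)$. Its complement $H$ is a hyperplane, so $\operatorname{span}\phi_P(H)$ has codimension one in $V_P$. Pick $\ell\in V_P^*$ vanishing on $\operatorname{span}\phi_P(H)$. Since $\tau_P\notin H=\cl(H)$, we have $\phi_P(\tau_P)\notin\operatorname{span}\phi_P(H)$, so $\ell(\phi_P(\tau_P))\neq0$; rescale so that $\ell(\phi_P(\tau_P))=1$. Then $\ell$ vanishes on $A_P(x)\subseteq H$, so $\ell\in\mathcal S_x(\phi_P)$. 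Moreover $\ell$ is nonzero on every element of $D$: if $e\in D$ and $\ell(\phi_P(e))=0$, then $\phi_P(e)\in\ker\ell=\operatorname{span}\phi_P(H)$, forcing $e\in\cl(H)=H$, a contradiction. Hence $\|\operatorname{ev}_{\phi_P}(\ell)\|_0=|D|-1$.

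The main point needing care is this last step, namely that $\ker\ell$ is exactly $\operatorname{span}\phi_P(H)$. This holds because $H$ has rank $r(M_P)-1$ and $V_P$ has dimension $r(M_P)$, so the span of $\phi_P(H)$ is already a hyperplane of $V_P$.
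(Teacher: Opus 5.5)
Your proof is correct and follows the paper's argument essentially step for step. The positive case goes through supports of reconstructions. In the negative case, one direction uses the zero flat $F_\ell$ of Proposition~\ref{prop:functional-zero-flat}, and the other uses a functional whose kernel is $\operatorname{span}\phi_P(H)$ for the hyperplane $H$ complementary to a minimizing cocircuit. The only difference is cosmetic: you invoke Proposition~\ref{prop:cocircuit-separates-closure} to obtain the hyperplane containing $F_\ell$ and excluding $\tau_P$, where the paper repeats that basis-extension construction inline.
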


\begin{proof}
Suppose first that $f(x)=1$. Let
$a\in\mathcal R_x(\phi_P)$ and set $S:=\operatorname{supp}(a)$. Then
$S\subseteq A_P(x)$ and
$\phi_P(\tau_P)\in\operatorname{span}\{\phi_P(e):e\in S\}$. Since
$\phi_P$ represents $M_P$, this is equivalent to
$\tau_P\in\cl_{M_P}(S)$. Thus every reconstruction gives a set admitted in
Definition~\ref{def:positive-support}.

Conversely, if $S\subseteq A_P(x)$ satisfies
$\tau_P\in\cl_{M_P}(S)$, then $\phi_P(\tau_P)$ lies in the span of the
vectors indexed by $S$. Choosing coefficients for such a linear combination
gives an element of $\mathcal R_x(\phi_P)$ supported on $S$. The first
identity follows.

Now suppose $f(x)=0$. For $\ell\in\mathcal S_x(\phi_P)$, let $F_\ell:=\{e\in E_P:\ell(\phi_P(e))=0\}$.
By Proposition~\ref{prop:functional-zero-flat}, this is a flat of $M_P$.
It contains $A_P(x)$ and excludes $\tau_P$. Choose a basis
$B_\ell$ of $F_\ell$. Since $\tau_P\notin F_\ell$,
$B_\ell\cup\{\tau_P\}$ is independent. Extend this set to a basis $B$ of
$M_P$ and set $H:=\cl_{M_P}(B\setminus\{\tau_P\})$. Then $H$ is a
hyperplane containing $F_\ell$ and excluding $\tau_P$. Its complement
$D:=E_P\setminus H$ is a cocircuit containing $\tau_P$ and disjoint from
$A_P(x)$. Moreover,
$D\setminus\{\tau_P\}$ is contained in
$\operatorname{supp}(\operatorname{ev}_{\phi_P}(\ell))$. Hence
\begin{equation}
    s_-(P,x)
    \leq
    \|\operatorname{ev}_{\phi_P}(\ell)\|_0.
\end{equation}

For the reverse inequality, choose a cocircuit $D$ attaining
$s_-(P,x)$ and let $H:=E_P\setminus D$. Then $H$ is a hyperplane containing
$A_P(x)$ and excluding $\tau_P$. Because the represented elements span
$V_P$, the subspace
$\operatorname{span}\{\phi_P(e):e\in H\}$ has codimension one in $V_P$.
Choose a nonzero linear functional whose kernel is this subspace. Since
$\tau_P\notin H$ and $H$ is a flat, its value on $\phi_P(\tau_P)$ is
nonzero; rescale it so that this value is $1$.

The resulting functional belongs to $\mathcal S_x(\phi_P)$. Moreover,
$\ell(\phi_P(e))=0$ exactly when $e\in H$. Hence
\begin{equation}
    \|\operatorname{ev}_{\phi_P}(\ell)\|_0
    =
    |D|-1
    =
    s_-(P,x).
\end{equation}
\end{proof}

Thus $s_+$ and $s_-$ count active coordinates in the reconstruction and separation problems, while the representation supplies their numerical values.

\subsection{Quadratic reconstruction and separation}
\label{subsec:quadratic-reconstruction-separation}

The same reconstruction and separation sets carry the standard span-program
quadratic costs. For a positive input, the reconstruction energy is the
minimum squared coefficient norm of a target reconstruction; for a negative
input, the separation energy is the minimum squared norm of the evaluations
of a normalized separating functional on the input vectors
~\cite{ReichardtSpalek2008,Reichardt2009}.

For $a\in\mathbb R^{E_P^\circ}$, write
$\|a\|_2^2:=\sum_{e\in E_P^\circ}a(e)^2$. This is the coordinate
sum of squares indexed by the program elements. In particular, it does not
place a norm or inner product on the representation space $V_P$.

\begin{definition}
\label{def:positive-reconstruction-energy}
For $x$ with $f(x)=1$, the \emph{reconstruction energy} of $x$ is
\begin{equation}
    c_+(\phi_P,x)
    :=
    \min_{a\in\mathcal R_x(\phi_P)}
        \|a\|_2^2.
\end{equation}
\end{definition}

This is the standard positive span-program quadratic cost, written here as
the reconstruction energy~\cite{ReichardtSpalek2008,Reichardt2009}.

\begin{definition}
\label{def:negative-separation-energy}
For $x$ with $f(x)=0$, the \emph{separation energy} of $x$ is
\begin{equation}
    c_-(\phi_P,x)
    :=
    \min_{\ell\in\mathcal S_x(\phi_P)}
        \|\operatorname{ev}_{\phi_P}(\ell)\|_2^2.
\end{equation}
\end{definition}

This is the standard negative span-program quadratic cost, written here as
the separation energy~\cite{ReichardtSpalek2008,Reichardt2009}. Equivalently,
it minimizes $\sum_{e\in E_P^\circ}\ell(\phi_P(e))^2$ subject to
$\ell(\phi_P(\tau_P))=1$ and $\ell(\phi_P(e))=0$ for every available
element $e$.

These definitions optimize over all possible supports. A reconstruction of
minimum squared size need not use a set attaining $s_+(P,x)$, and a
separating functional of minimum squared size need not have support equal to a
cocircuit attaining $s_-(P,x)$. The support quantities and the quadratic
quantities measure different features of the same constraint sets.

Because the spaces involved are finite dimensional, these two input-level
minima are attained.

\begin{proposition}
\label{prop:representation-energy-attainment}
For every $x$ with $f(x)=1$, the minimum defining
$c_+(\phi_P,x)$ is attained. For every $x$ with $f(x)=0$, the minimum
defining $c_-(\phi_P,x)$ is attained.
\end{proposition}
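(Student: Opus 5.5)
The plan is to treat each minimum as the minimization of a continuous, coercive (or strictly convex) quadratic function over a nonempty closed affine subset of a finite-dimensional space. Attainment then follows from the standard closest-point argument.

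For $f(x)=1$, the reconstruction set $\mathcal R_x(\phi_P)$ is nonempty by Proposition~\ref{prop:reconstruction-separation-acceptance}, since the program accepts $x$. By Proposition~\ref{prop:matroid-program-to-linear-span-program} the represented program computes $f$, so acceptance is guaranteed. The set is the preimage of $\phi_P(\tau_P)$ under the linear map $\Phi_P$ restricted to $\mathbb R^{A_P(x)}$. It is therefore an affine subspace of $\mathbb R^{E_P^\circ}$, and in particular closed. Fix any $a_0\in\mathcal R_x(\phi_P)$ and intersect with the ball $\{\|a\|_2^2\le\|a_0\|_2^2\}$. This intersection is compact, and it suffices to minimize over it. The continuous function $\|a\|_2^2$ attains its minimum there. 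Alternatively, the minimizer is the orthogonal projection of $0$ onto the affine subspace.

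For $f(x)=0$, the separation set $\mathcal S_x(\phi_P)$ is nonempty, again by Proposition~\ref{prop:reconstruction-separation-acceptance}. It is an affine subspace of $V_P^*$, cut out by the linear equations $\ell(\phi_P(\tau_P))=1$ and $\ell(\phi_P(e))=0$ for $e\in A_P(x)$. The objective $\ell\mapsto\|\operatorname{ev}_{\phi_P}(\ell)\|_2^2$ need not be coercive a priori. The cleaner route is to push the problem forward along the linear map $\operatorname{ev}_{\phi_P}$.

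The image $\operatorname{ev}_{\phi_P}(\mathcal S_x(\phi_P))$ is the image of an affine subspace under a linear map. It is therefore a nonempty affine subspace of the finite-dimensional space $\mathbb R^{E_P^\circ}$, hence closed. Minimizing $\|b\|_2^2$ over this closed affine set is attained by the same compactness argument as in the positive case. Any preimage $\ell$ of the minimizing $b$ then attains $c_-(\phi_P,x)$.

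I could also note that $\operatorname{ev}_{\phi_P}$ is injective on $V_P^*$ except possibly for the $\tau_P$ direction. This is because $V_P$ is spanned by the represented elements, and whether $\phi_P(\tau_P)$ lies in the span of the non-target vectors matters. The image argument avoids needing this injectivity.

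The only point requiring care is closedness of the image in the negative case. Images of closed sets under linear maps need not be closed in general, but images of affine subspaces are affine subspaces in finite dimensions, so this step is routine.
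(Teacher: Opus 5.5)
Your proposal is correct and matches the paper's proof. In the positive case you intersect the nonempty closed affine reconstruction set with a bounded sublevel set. In the negative case you push the separation set forward under $\operatorname{ev}_{\phi_P}$ to a closed affine image, minimize there, and lift back to a separating functional.
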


\begin{proof}
For a positive input, $\mathcal R_x(\phi_P)$ is a nonempty closed affine
subspace of the finite-dimensional coordinate space
$\mathbb R^{E_P^\circ}$. Intersecting it with any bounded sublevel set of
$\|a\|_2^2$ gives a compact set, so the minimum is attained.

For a negative input,
$\operatorname{ev}_{\phi_P}(\mathcal S_x(\phi_P))$ is a nonempty affine
subspace of $\mathbb R^{E_P^\circ}$ and hence is closed. The same compact
sublevel-set argument gives a minimum on this image, and every point of the
image is the evaluation vector of a member of $\mathcal S_x(\phi_P)$.
\end{proof}

Define the worst-case quantities
\begin{equation}
    C_+(\phi_P):=\max_{f(x)=1}c_+(\phi_P,x),
    \qquad
    C_-(\phi_P):=\max_{f(x)=0}c_-(\phi_P,x).
\end{equation}
These are the worst-case reconstruction and separation energies of the
represented span program.
As in Definition~\ref{def:matroid-span-support-profile}, a maximum over an
empty target fiber is understood to be zero.

\subsection{Removing arbitrary scale}
\label{subsec:representation-balancing}

The standard positive and negative span-program quadratic costs have a
one-parameter scaling freedom: rescaling the target by $c\neq0$ multiplies
the positive cost by $|c|^2$ and the negative cost by $|c|^{-2}$
~\cite{ReichardtSpalek2008,Reichardt2009}. The two-parameter representation
rescaling below is the same relative freedom together with a common scaling
of the whole represented configuration.

For $\alpha,\beta>0$, define $\phi_P^{(\alpha,\beta)}$ by
$\phi_P^{(\alpha,\beta)}(\tau_P):=\alpha\phi_P(\tau_P)$ and
$\phi_P^{(\alpha,\beta)}(e):=\beta\phi_P(e)$ for
$e\in E_P^\circ$. This is another representation of the same matroid.

\begin{proposition}
\label{prop:representation-scaling}
For every $\alpha,\beta>0$,
\begin{equation}
    C_+(\phi_P^{(\alpha,\beta)})
    =
    \frac{\alpha^2}{\beta^2}
    C_+(\phi_P),
    \qquad
    C_-(\phi_P^{(\alpha,\beta)})
    =
    \frac{\beta^2}{\alpha^2}
    C_-(\phi_P).
\end{equation}
\end{proposition}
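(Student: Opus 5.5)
The plan is to prove both identities input by input, by writing down explicit bijections between the witness sets of $\phi_P$ and of $\phi_P^{(\alpha,\beta)}$, and then taking maxima over the two target fibers. First we record that $\phi_P^{(\alpha,\beta)}$ represents $M_P$: rescaling vectors by nonzero scalars does not change which indexed families are linearly independent. The two representations therefore span the same space $V_P$ and have the same availability sets $A_P(x)$, so acceptance is unchanged and both programs compute $f$.

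For a positive input $x$, take $a\in\mathcal R_x(\phi_P)$ and set $a'(e):=(\alpha/\beta)a(e)$. Then
\[
\sum_e a'(e)\beta\phi_P(e)=\alpha\sum_e a(e)\phi_P(e)=\alpha\phi_P(\tau_P),
\]
and $a'$ is still supported on $A_P(x)$, so $a'\in\mathcal R_x(\phi_P^{(\alpha,\beta)})$. This map is a linear bijection between the two reconstruction sets; its inverse multiplies by $\beta/\alpha$. It also satisfies $\|a'\|_2^2=(\alpha^2/\beta^2)\|a\|_2^2$. Since the minima are attained by Proposition~\ref{prop:representation-energy-attainment}, we get $c_+(\phi_P^{(\alpha,\beta)},x)=(\alpha^2/\beta^2)c_+(\phi_P,x)$.

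For a negative input $x$, take $\ell\in\mathcal S_x(\phi_P)$ and set $\ell':=\ell/\alpha$. Then $\ell'(\alpha\phi_P(\tau_P))=1$, and $\ell'$ still vanishes on $\beta\phi_P(e)$ for every available $e$. Again this is a bijection of the separation sets. The evaluation vectors satisfy $\operatorname{ev}_{\phi_P^{(\alpha,\beta)}}(\ell')=(\beta/\alpha)\operatorname{ev}_{\phi_P}(\ell)$, so the squared norms scale by $\beta^2/\alpha^2$. This gives the corresponding identity for $c_-$.

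Finally, maxima over a fiber commute with multiplication by a positive constant. The empty-fiber convention gives $0$ on both sides, so the same scaling passes to $C_\pm$. No step here is a real obstacle; the only point needing care is checking that each map is a bijection, so that the minima, and not merely an inequality, transform correctly.
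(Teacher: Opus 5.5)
Your proposal is correct and follows the same route as the paper: scale reconstruction coefficients by $\alpha/\beta$, scale separating functionals by $\alpha^{-1}$, and observe that the evaluations scale by $\beta/\alpha$. Your explicit bijection and empty-fiber checks are slightly more careful than the paper's argument, and the bijection alone already gives equality of the minima, so you do not need the appeal to attainment.
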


\begin{proof}
A coefficient vector $a$ reconstructs $\phi_P(\tau_P)$ under $\phi_P$
exactly when $(\alpha/\beta)a$ reconstructs
$\alpha\phi_P(\tau_P)$ under $\phi_P^{(\alpha,\beta)}$. The squared
coefficient size is multiplied by $\alpha^2/\beta^2$.

For separation, if $\ell$ is normalized on $\phi_P(\tau_P)$, then
$\alpha^{-1}\ell$ is normalized on the rescaled target. Its evaluation on
each rescaled non-target vector is
$(\beta/\alpha)\ell(\phi_P(e))$. The separation energy is consequently
multiplied by $\beta^2/\alpha^2$.
\end{proof}

Thus the two worst-case energies change oppositely under an arbitrary common
relative scale between the target and the non-target vectors, and their
product is invariant. Optimizing over this relative scale gives the standard
geometric-mean balance of the positive and negative quadratic costs
~\cite{ReichardtSpalek2008,Reichardt2009}.

\begin{proposition}
\label{prop:representation-balance}
For every chosen representation $\phi_P$,
\begin{equation}
    \inf_{\lambda>0}
    \max
    \{
        \lambda^{-2}C_+(\phi_P),
        \lambda^2C_-(\phi_P)
    \}
    =
    \sqrt{
        C_+(\phi_P)C_-(\phi_P)
    }.
\end{equation}
\end{proposition}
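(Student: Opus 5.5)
We prove the identity by minimizing a one-variable function of two nonnegative constants. Write $a:=C_+(\phi_P)$ and $b:=C_-(\phi_P)$, both in $[0,\infty)$ since they are finite maxima of attained minima (Proposition~\ref{prop:representation-energy-attainment}), or zero for an empty fiber. Put $g(\lambda):=\max\{\lambda^{-2}a,\lambda^2b\}$. The statement is then the elementary claim $\inf_{\lambda>0}g(\lambda)=\sqrt{ab}$.

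\textbf{Lower bound.} For every $\lambda>0$, the maximum of two nonnegative numbers is at least their geometric mean. Hence
\[
    g(\lambda)\geq\sqrt{\lambda^{-2}a\cdot\lambda^2b}=\sqrt{ab}.
\]
This gives $\inf g\geq\sqrt{ab}$.

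\textbf{Upper bound, case $a,b>0$.} Choose $\lambda_*:=(a/b)^{1/4}$. Then
\[
    \lambda_*^{-2}a=\sqrt{ab}=\lambda_*^2b,
\]
so $g(\lambda_*)=\sqrt{ab}$ and the infimum is attained.

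\textbf{Upper bound, degenerate cases.} The only step needing care is when one of $a,b$ vanishes, since then no balancing $\lambda$ exists and the infimum need not be attained.
\begin{enumerate}
    \item If $b=0$, then $g(\lambda)=\lambda^{-2}a\to0$ as $\lambda\to\infty$.
    \item If $a=0$, then $g(\lambda)=\lambda^2b\to0$ as $\lambda\to0^+$.
\end{enumerate}
In both cases $g\geq0$, so $\inf g=0=\sqrt{ab}$.

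Finally, this is the quantity relevant to the program. By Proposition~\ref{prop:representation-scaling} with $\alpha/\beta=\lambda^{-1}$, the pair $(\lambda^{-2}a,\lambda^2b)$ is exactly $(C_+,C_-)$ of the rescaled representation. So the infimum is genuinely over representations of the same program matroid, although the identity itself needs only the arithmetic above.
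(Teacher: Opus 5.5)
Your proof is correct and follows the paper's argument: balance the two terms at $\lambda^4=C_+(\phi_P)/C_-(\phi_P)$ when both are positive, and send $\lambda$ toward $0$ or $\infty$ in the degenerate cases. Your lower bound via $\max\{s,t\}\geq\sqrt{st}$ is just a more explicit form of the paper's remark that moving $\lambda$ away from the balance point cannot decrease the maximum.
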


\begin{proof}
Write $A:=C_+(\phi_P)$ and $B:=C_-(\phi_P)$. If $A,B>0$, the two terms
$\lambda^{-2}A$ and $\lambda^2B$ are equal when
$\lambda^4=A/B$. Their common value is $\sqrt{AB}$, and increasing either
term away from this balance cannot decrease their maximum.

If $A=0$ or $B=0$, the right side is zero, and the same value is approached
on the left by sending $\lambda$ toward zero or infinity as appropriate.
\end{proof}

\begin{definition}
\label{def:representation-cost}
The \emph{balanced cost} of the chosen representation $\phi_P$ is
\begin{equation}
    \mathcal E(\phi_P)
    :=
    \sqrt{
        C_+(\phi_P)C_-(\phi_P)
    }.
\end{equation}
\end{definition}

For the represented span program associated with $\phi_P$,
$\mathcal E(\phi_P)$ is the standard balanced span-program quadratic
complexity measure in the normalization used here
~\cite{ReichardtSpalek2008,Reichardt2009}. Propositions
~\ref{prop:representation-scaling} and~\ref{prop:representation-balance}
show explicitly how this removes the common relative scale between the target
and non-target vectors.

\subsection{Dependence on the chosen representation}
\label{subsec:representation-dependence}

The balanced cost is unaffected by a change of coordinates in the ambient
vector space.

\begin{proposition}
\label{prop:representation-coordinate-invariance}
Let $T:V_P\to W$ be an invertible linear map and define
$\psi_P:=T\circ\phi_P$. Then
$c_+(\psi_P,x)=c_+(\phi_P,x)$ on positive inputs and
$c_-(\psi_P,x)=c_-(\phi_P,x)$ on negative inputs. Consequently
$\mathcal E(\psi_P)=\mathcal E(\phi_P)$.
\end{proposition}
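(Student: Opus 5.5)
The plan is to compare the feasible sets of the reconstruction and separation problems for $\phi_P$ and $\psi_P$, input by input, and to show that each correspondence preserves the objective exactly. The availability sets $A_P(x)$ depend only on the labels, so they are the same for both representations.

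\emph{Positive inputs.} For the positive side we show $\mathcal R_x(\psi_P)=\mathcal R_x(\phi_P)$. The coefficient maps satisfy $\Phi_{\psi_P}=T\circ\Phi_{\phi_P}$ by linearity of $T$, and the target vectors satisfy $\psi_P(\tau_P)=T\phi_P(\tau_P)$. Injectivity of $T$ then gives
\begin{equation}
\Phi_{\psi_P}(a)=\psi_P(\tau_P)\iff \Phi_{\phi_P}(a)=\phi_P(\tau_P)
\end{equation}
for every $a\in\mathbb R^{A_P(x)}$. The objective $\|a\|_2^2$ is a coordinate sum of squares on $\mathbb R^{E_P^\circ}$ and involves no inner product on the ambient space. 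Hence the two minima coincide, and $c_+(\psi_P,x)=c_+(\phi_P,x)$.

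\emph{Negative inputs.} For the negative side we use the transpose bijection $\ell\mapsto \ell\circ T$ from $W_0^*$ to $V_P^*$, where $W_0=T(V_P)=\operatorname{span}\psi_P(E_P)$. If $T$ is only assumed invertible onto its image, we replace $W$ by $W_0$, which changes no spans. Under this bijection, $\ell(\psi_P(e))=(\ell\circ T)(\phi_P(e))$ for every $e\in E_P$. So $\ell\in\mathcal S_x(\psi_P)$ exactly when $\ell\circ T\in\mathcal S_x(\phi_P)$, and the evaluation vectors agree: $\operatorname{ev}_{\psi_P}(\ell)=\operatorname{ev}_{\phi_P}(\ell\circ T)$. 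The minima of $\|\operatorname{ev}(\cdot)\|_2^2$ are therefore equal, giving $c_-(\psi_P,x)=c_-(\phi_P,x)$.

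\emph{Balanced cost.} Taking maxima over each target fiber gives $C_\pm(\psi_P)=C_\pm(\phi_P)$, with the empty-fiber convention handled identically on both sides. Definition~\ref{def:representation-cost} then gives $\mathcal E(\psi_P)=\mathcal E(\phi_P)$.

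\emph{Main obstacle.} There is no real obstacle; the one point needing care is the negative-side bijection. The separation set is defined on the dual of the span of the representation, so one must check that $T$ restricted to $V_P$ is an isomorphism onto $\operatorname{span}\psi_P(E_P)$. This holds because $T$ is invertible, and functionals on the larger space $W$, if used, are determined on the relevant vectors by their restriction to that span.
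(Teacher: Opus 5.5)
Your proposal is correct and matches the paper's proof. The reconstruction sets coincide because $T$ is injective, and $\ell\mapsto\ell\circ T$ is a bijection between separating functionals that preserves evaluation vectors. Your extra step of replacing $W$ by $\operatorname{span}\psi_P(E_P)$ does no harm but is unnecessary, since invertibility of $T:V_P\to W$ already forces $W=T(V_P)$.
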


\begin{proof}
For reconstruction,
\begin{equation}
    \sum_e a(e)T\phi_P(e)
    =
    T\phi_P(\tau_P)
\end{equation}
holds exactly when
$\sum_e a(e)\phi_P(e)=\phi_P(\tau_P)$. Thus the coefficient vectors satisfying the reconstruction equation
are unchanged.

For separation, composition with $T$ gives a bijection between the relevant
linear functionals: a functional $\ell:W\to\mathbb R$ satisfies the separation constraints for
$\psi_P$ if and only if $\ell\circ T$ satisfies them for $\phi_P$. Corresponding
functionals have the same evaluations on the corresponding program vectors.
The two input-level energies are unchanged.
\end{proof}

This invariance does not make $\mathcal E(\phi_P)$ an invariant of the
abstract matroid $M_P$. In matroid representation theory, invertible row
operations and nonzero column scalings preserve the represented matroid and
generate projective equivalence over $\mathbb R$
~\cite[Sec.~6.3]{Oxley2011}. The reconstruction and separation energies are invariant under the former
but not under the latter: rescaling one represented element can change
reconstruction coefficients and separation evaluations. Some matroids also
admit projectively inequivalent real representations. Thus the optimization
below includes both column-weight choices inside a projective class and,
when present, genuine realization moduli of the same abstract matroid.

Let $\operatorname{Rep}_{\mathbb R}(M_P)$ denote the real representations
$\phi_P:E_P\to V_P$ of $M_P$ for which the represented elements span the
finite-dimensional ambient space $V_P$.

We now keep the abstract response-labeled program fixed and minimize the
representation cost $\mathcal E$ over all real representations of its program
matroid.

\begin{definition}
\label{def:linear-cost}
For $P\in\mathscr R(\mathfrak M_f)$, define
\begin{equation}
    \operatorname{lincost}(P)
    :=
    \inf_{\phi_P\in\operatorname{Rep}_{\mathbb R}(M_P)}
        \mathcal E(\phi_P).
\end{equation}
\end{definition}

The space of real representations of a fixed matroid is generally
noncompact, so we use an infimum; attainment of
$\operatorname{lincost}(P)$ is not asserted for an arbitrary fixed program.

The quantity $\operatorname{lincost}(P)$ keeps the abstract labeled matroid
program fixed while optimizing the numerical form of its real
representations. In particular, the support profile of $P$ does not change
during this optimization.

\subsection{Optimizing over labeled programs}
\label{subsec:matroid-span-program-cost}

For ordinary Boolean coordinate queries, Reichardt proved that minimizing the
standard span-program quadratic complexity measure over all span programs
computing $f$ gives the general adversary bound, with an optimum attained
~\cite{Reichardt2009,Reichardt2014Equivalent}. We reorganize that global
search by first fixing the abstract labeled program $P$ and optimizing real
representations of its program matroid $M_P$, and only then varying $P$.

Because $\mathfrak M_f$ does not by itself distinguish $f$ from $1-f$, we
retain $f$ in the full notation for this quantity.

\begin{definition}
\label{def:matroid-span-program-cost}
Define
\begin{equation}
    \operatorname{MSP}(f,\mathfrak M_f)
    :=
    \inf_{P\in\mathscr R_{\mathbb R}(f,\mathfrak M_f)}
        \operatorname{lincost}(P).
\end{equation}
If $\mathscr R_{\mathbb R}(f,\mathfrak M_f)$ is empty, set
$\operatorname{MSP}(f,\mathfrak M_f):=+\infty$.
\end{definition}

When $f$ is fixed and no ambiguity can arise, we may write
$\operatorname{MSP}(\mathfrak M_f)$.

The definition contains two optimizations. For a fixed abstract program
$P$, $\operatorname{lincost}(P)$ ranges over its real representations. The
outer infimum then ranges over all representable matroid span programs with
the prescribed response labels and acceptance rule. For ordinary Boolean
coordinate queries, global attainment is already part of the standard
span-program/adversary equivalence~\cite{Reichardt2014Equivalent}.
Appendix~\ref{app:universal-normal-form} gives the corresponding constructive
attainment statement for the present response-labeled quotient construction. No
fixed-program attainment claim for $\operatorname{lincost}(P)$ is made here.

For a program $P$, first choose its non-target restriction
$N_P=M_P\setminus\tau_P$ and an allowed target modular cut. A real
representation of the resulting program matroid is chosen only afterward,
and the balanced cost $\mathcal E(\phi_P)$ is evaluated at that stage.
Proposition~\ref{prop:target-extension-criterion} can discard a candidate
$N_P$ before any linear representation is considered whenever the positive
availability flats already force a negative flat into their generated
modular cut.

For finite response partitions, the quotient construction of
Appendix~\ref{app:universal-normal-form} gives a quantitative comparison
between this global program optimum and the general adversary bound. Put
\begin{equation}
    \kappa(\mathcal Q)
    :=
    \max_{q\in\mathcal Q}
        \bigl(|\mathcal B_q|-1\bigr).
\end{equation}

\begin{theorem}
\label{thm:multiresponse-msp-comparison}
If $f$ is nonconstant, $\mathcal Q$ is a finite query set that separates
$f$, each query has a finite response partition, and the query problem admits
the $f$-separation matroid $\mathfrak M_f$, then
\begin{equation}
    \operatorname{Adv}^{\pm}_{\mathcal Q}(f)
    \leq
    \operatorname{MSP}(f,\mathfrak M_f)
    \leq
    \sqrt{\kappa(\mathcal Q)}\,
    \operatorname{Adv}^{\pm}_{\mathcal Q}(f).
\end{equation}
For binary queries, $\kappa(\mathcal Q)=1$, so equality holds.
\end{theorem}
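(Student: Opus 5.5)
The two inequalities have different proofs. The lower bound is the standard span-program lower bound. The upper bound is a quotient construction from an optimal correlation transport.

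\emph{Lower bound.} Fix $P\in\mathscr R(\mathfrak M_f)$ and a representation $\phi_P$. By Proposition~\ref{prop:representation-scaling} we may rescale so that $C_+(\phi_P)=C_-(\phi_P)=\mathcal E(\phi_P)$. For a positive input $x$ choose the optimal reconstruction $a_x$, and for a negative input $y$ the optimal separating functional $\ell_y$. Then
\[
1=\ell_y(\phi_P(\tau_P))=\sum_e a_x(e)\,\ell_y(\phi_P(e)),
\]
and only elements $e\in A_P(x)\setminus A_P(y)$ contribute. Such an element carries a label $(q,B)$ with $B=B_q(x)\neq B_q(y)$, so $(x,y)\in\operatorname{Sep}(q)$. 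Grouping elements by query gives vectors $u_x^q:=(a_x(e))_{\lambda_P(e)\in\{q\}\times\mathcal B_q}$ and $w_y^q:=(\ell_y(\phi_P(e)))_{\lambda_P(e)\in\{q\}\times\mathcal B_q}$. These satisfy $\sum_{q:(x,y)\in\operatorname{Sep}(q)}\langle u_x^q,w_y^q\rangle=1$, with $\sum_q\|u_x^q\|^2\le\mathcal E$ and $\sum_q\|w_y^q\|^2\le\mathcal E$.

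There is one subtlety: the inner product for query $q$ must vanish on pairs that $q$ does not separate. This holds after restricting $u_x^q$ to elements labeled $(q,B_q(x))$ and $w_y^q$ to elements not labeled $(q,B_q(y))$. The vector $w_y^q$ is automatically zero on available elements, because $\ell_y$ vanishes there.

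Set $X_q:=$ the Gram matrix of $\{u_x^q\}\cup\{w_y^q\}$. This is positive semidefinite, every load is at most $\mathcal E$, and the transport equation holds on $D_+\times D_-$. So $(X_q)$ is a correlation transport of cost $\le\mathcal E(\phi_P)$, and Theorem~\ref{thm:correlation-transport-adversary} gives $\operatorname{Adv}^\pm_{\mathcal Q}(f)\le\mathcal E(\phi_P)$. Taking infima over $\phi_P$ and $P$ gives the first inequality.

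\emph{Upper bound.} Take an optimal correlation transport $(X_q)$ from Proposition~\ref{prop:correlation-transport-attainment}, of cost $\operatorname{Adv}^\pm_{\mathcal Q}(f)$, and factor $X_q(x,y)=\langle v_x^q,v_y^q\rangle$. The standard Reichardt construction needs separate vectors for each response: for each $q$ and each block $B\in\mathcal B_q$, create input elements labeled $(q,B)$ indexed by coordinates of a copy of the factor space. Take the target to be the all-ones vector $t\in\mathbb R^{D_-}$. The vector for element $(q,B,j)$ is the column whose $y$-entry is $v_y^q(j)\,\mathbf 1[B_q(y)\neq B]$.

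\begin{itemize}
\item \emph{Positive inputs.} The coefficients $a_x(q,B_q(x),j)=v_x^q(j)$ reconstruct $t$ by the transport equations. Their cost is $\sum_q\|v_x^q\|^2\le\operatorname{Adv}^\pm_{\mathcal Q}(f)$.
\item \emph{Negative inputs.} For $y\in D_-$, the evaluation functional at coordinate $y$ takes value $1$ on $t$ and vanishes on available elements. Its squared evaluations equal $\sum_q\sum_{B\neq B_q(y)}\|v_y^q\|^2=\sum_q(|\mathcal B_q|-1)\|v_y^q\|^2\le\kappa(\mathcal Q)\,\operatorname{Adv}^\pm_{\mathcal Q}(f)$.
\end{itemize}

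Balancing gives $\mathcal E\le\sqrt{\kappa(\mathcal Q)}\,\operatorname{Adv}^\pm_{\mathcal Q}(f)$. By Proposition~\ref{prop:linear-span-program-to-matroid-program} this represented program lies in $\mathscr R(\mathfrak M_f)$, provided it computes $f$. It does: negative inputs are rejected by the exhibited functional, and positive inputs are accepted by the exhibited reconstruction.

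For binary queries $\kappa(\mathcal Q)=1$, so the two bounds coincide and equality holds.

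The main obstacle is the positive reconstruction in the upper bound. The transport equations are only imposed on $\operatorname{Sep}(f)$, and the constructed vectors must vanish exactly on the unseparated pairs. The indicator $\mathbf 1[B_q(y)\neq B]$ is what arranges this. Since $B=B_q(x)$, the $y$-coordinate of $\sum_j a_x(j)\phi(j)$ equals $\sum_{q:(x,y)\in\operatorname{Sep}(q)}X_q(x,y)=1$, which is exactly what the transport equations provide. The remaining work is to check that normalizing $t$ and applying Proposition~\ref{prop:representation-balance} is consistent with the quotient form in Appendix~\ref{app:universal-normal-form}.
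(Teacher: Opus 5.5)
Your lower bound is the paper's argument: Proposition~\ref{prop:represented-program-gives-transport} together with Lemma~\ref{lem:reconstruction-separation-pairing}. You group the optimal witnesses by query, note that non-separating queries contribute zero to the pairing, take the Gram kernels as a correlation transport, and balance.

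Your upper bound is correct but uses a different presentation of the program.

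\begin{itemize}
\item \emph{The paper's route.} It works in the free response space $F_f(u)$ modulo the relation space $L_f(u)$ spanned by the positive-input relations $r_x$. Acceptance of positive inputs then holds by construction. The work is showing that the functionals $\widehat\ell_y$ vanish on $L_f(u)$ and so descend to the quotient (Theorem~\ref{thm:quotient-normal-form-correct}).
\item \emph{Your route.} You use Reichardt's canonical program in $\mathbb R^{D_-}$. Rejection of each $y$ is just coordinate evaluation, and the transport equations are used instead for positive reconstruction.
\end{itemize}

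The two are directly related. Put $\Lambda:=(\widehat\ell_y)_{y\in D_-}:F_f(u)\to\mathbb R^{D_-}$. Your target is $\Lambda(\mathbf t)=\mathbf 1$ and your element vectors are exactly $\Lambda(\mathbf e_{q,B,k})$. Since $\Lambda$ kills $L_f(u)$, your program is the image of the paper's quotient program under the induced map $V_f(u)\to\mathbb R^{D_-}$.

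The two programs coincide precisely when the rows of $S_u$ span $L_f(u)^\perp$ (Proposition~\ref{prop:quotient-complementary-rank}). In general yours is a further quotient with possibly more dependencies. Both witness families survive that quotient, so the same bounds $C_+\le\operatorname{Adv}^{\pm}_{\mathcal Q}(f)$ and $C_-\le\kappa(\mathcal Q)\operatorname{Adv}^{\pm}_{\mathcal Q}(f)$ follow.

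What each buys: your route is shorter and fully explicit. The paper's route gives the structural description the appendix wants: the program matroid is a contraction of a matroid in which each positive input contributes a relation element with an explicit fundamental target circuit. The cost is the extra descent argument.

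Finally, the ``remaining work'' you flag at the end is not needed. Your canonical program stands on its own, with $V$ taken to be the span of its vectors. Proposition~\ref{prop:linear-span-program-to-matroid-program} places it in $\mathscr R(\mathfrak M_f)$, so no consistency check against the quotient form is required.
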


\begin{proof}
Appendix~\ref{app:universal-normal-form} expands an optimal computational
factorization by response labels and quotients by the positive-input
relations. Proposition~\ref{prop:quotient-normal-form-cost} gives the
$\sqrt{\kappa(\mathcal Q)}$ upper factor, while
Theorem~\ref{thm:adversary-lower-bounds-msp} gives the lower bound.
\end{proof}

\section{Regular Program Matroids: Representation Optimization and Certification}
\label{sec:regular-matroid-optimization}

For binary query sets,
Theorem~\ref{thm:binary-msp-adversary-equality} identifies the global
matroid-span-program optimum with the established span-program/adversary
optimum~\cite{Reichardt2009,Reichardt2014Equivalent}. For arbitrary finite
response partitions, Theorem~\ref{thm:multiresponse-msp-comparison} gives the
corresponding $\sqrt{\kappa(\mathcal Q)}$ comparison.

This section fixes a regular program matroid and optimizes only its real
representations. All real representations of a regular matroid are
projectively equivalent~\cite{BrylawskiLucas1976,Oxley2011}. After fixing one
totally unimodular representation, every other real representation differs
by an ambient change of basis and nonzero column scalings. Ambient basis
changes preserve reconstruction and separation energies, so the numerical
search reduces to positive element weights. Theorem
~\ref{thm:regular-matroid-energy} expresses each input cost as a ratio of
weighted basis sums on a restriction or contraction minor, and Corollary
~\ref{cor:regular-linear-cost} gives the resulting fixed-program objective.

Standard one-port regular-matroid theory develops current/voltage energy,
blocking duality, and determinant/basis formulations for regular matroids
with distinguished ports~\cite{Chaiken1987,Chaiken2006PortedTutte}. Here the
positive and negative span-program costs are evaluated on different
input-dependent minors while one common element-weight assignment is
optimized across all inputs.

\begin{definition}
\label{def:regular-matroid-span-program}
An $f$-matroid span program is \emph{regular} when its program matroid is
regular.
\end{definition}

Theorem~\ref{thm:regular-quotient-form} and
Corollary~\ref{cor:regular-extension-chain} show that a regular candidate can
be built by regular single-element extensions of a free matroid followed by
contraction of the added elements.  Proposition~\ref{prop:modular-cut-extension}
describes each extension by a modular cut.

Target placement uses Proposition~\ref{prop:target-extension-criterion}.  If
$N$ is a regular non-target matroid equipped with response labels, put
$F_x:=\cl_N(A(x))$ and let $\mathcal C_+$ be the smallest modular cut
containing the positive flats.  If $\mathcal C_+$ contains a negative flat,
no target extension of $N$ computes $f$.  Otherwise we keep the modular cuts
which contain $\mathcal C_+$, avoid every negative flat, and whose target
extension is regular.  Once the abstract program has been chosen, the
calculation below needs only its element weights.

\subsection{Representations reduce to element weights}
\label{subsec:regular-representation-weights}

Fix a regular program
$P=(M_P,\tau_P,\lambda_P)$ computing a nonconstant function. Choose once and for all a totally unimodular representation $e\mapsto v_e$ of $M_P$. Since the function is nonconstant, the target is not a loop, so
$v_{\tau_P}\neq0$.

Write $w=(w_e)_{e\in E_P\setminus\{\tau_P\}}>0$ for a positive weight assignment on the non-target elements, and define
\begin{equation}
    \phi_w(\tau_P):=v_{\tau_P},
    \qquad
    \phi_w(e):=\sqrt{w_e}\,v_e
    \quad(e\neq\tau_P).
\end{equation}
Independent nonzero rescaling does not change the represented matroid, so
$\phi_w$ is again a representation of $M_P$. By the standard projective
uniqueness theorem for real representations of regular matroids
~\cite{BrylawskiLucas1976,Oxley2011}, this accounts for every real
representation up to an ambient change of basis.

\begin{proposition}
\label{prop:regular-representations-are-weights}
For a regular program $P$,
\begin{equation}
    \operatorname{lincost}(P)
    =
    \inf_{w>0}
        \mathcal E(\phi_w).
\end{equation}
Thus optimizing over all real representations of $M_P$ is equivalent to
optimizing positive scalar weights on its non-target elements.
\end{proposition}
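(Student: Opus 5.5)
The plan is to prove the two inequalities $\operatorname{lincost}(P)\leq\inf_{w>0}\mathcal E(\phi_w)$ and $\operatorname{lincost}(P)\geq\inf_{w>0}\mathcal E(\phi_w)$ separately.

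The first inequality is immediate. Each $\phi_w$ is a real representation of $M_P$. After restricting the codomain to its span, $\phi_w$ lies in $\operatorname{Rep}_{\mathbb R}(M_P)$, and $\operatorname{lincost}(P)$ is an infimum over a set containing every $\phi_w$.

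For the reverse inequality, let $\psi\in\operatorname{Rep}_{\mathbb R}(M_P)$ be arbitrary. By Proposition~\ref{prop:regular-basic-properties} and Definition~\ref{def:projective-equivalence}, there are a linear isomorphism $T$ from $\operatorname{span}\{v_e\}$ onto the ambient space of $\psi$ and nonzero scalars $c_e$ such that $\psi(e)=c_eTv_e$ for every nonloop $e$.

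Loops have to be handled separately. A loop $e$ has $\psi(e)=0$ in every representation, so it contributes nothing to any reconstruction equation and nothing to any separation evaluation. Choosing any weight for it leaves the energies unchanged. The target is not a loop, since $f$ is nonconstant.

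By Proposition~\ref{prop:representation-coordinate-invariance}, we may drop $T$ and replace $\psi$ by $e\mapsto c_ev_e$.

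Next I normalize the target. Rescaling all vectors by the common factor $c_{\tau_P}^{-1}$ is an ambient change of basis, so by the same proposition we may assume $c_{\tau_P}=1$.

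The key step is a sign-flip invariance. Changing $\psi(e)$ to $-\psi(e)$ for a single non-target element $e$ leaves both energies unchanged.
\begin{itemize}
\item For reconstruction, the map $a\mapsto a'$ with $a'(e)=-a(e)$ and all other coordinates fixed is a bijection between the reconstruction sets $\mathcal R_x$ of the two representations, and it preserves $\|a\|_2^2$.
\item For separation, the feasible functionals $\ell$ in $\mathcal S_x$ are identical for both representations, because the constraints $\ell(\pm\psi(e))=0$ and $\ell(\psi(\tau_P))=1$ do not depend on the sign. The evaluation vector changes only by the sign of one coordinate, so its squared norm is unchanged.
\end{itemize}
Both energies are therefore preserved for every input, and hence so are $C_\pm$ and $\mathcal E$.

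We may therefore replace each $c_e$ by $|c_e|$ and set $w_e:=c_e^2>0$. This gives $\mathcal E(\psi)=\mathcal E(\phi_w)$, which yields $\operatorname{lincost}(P)\geq\inf_{w>0}\mathcal E(\phi_w)$.

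The main obstacle is applying projective uniqueness correctly. Definition~\ref{def:projective-equivalence} leaves scalars on loops and the target's scale free, so the loop and target normalizations above must be justified. The sign-flip invariance is the only genuinely new verification; the rest is bookkeeping.
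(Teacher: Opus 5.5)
Your proposal is correct and follows the paper's proof essentially step for step. Like the paper, you take the trivial inequality from $\phi_w\in\operatorname{Rep}_{\mathbb R}(M_P)$, and for the reverse you use projective uniqueness, remove $T$ by Proposition~\ref{prop:representation-coordinate-invariance}, normalize the target by a common rescaling, invoke sign-flip invariance of both energies, and set $w_e=c_e^2$, with the same loop and nonloop-target bookkeeping.
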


\begin{proof}
Every representation $\phi_w$ is a real representation of $M_P$, so the
right-hand side is at least $\operatorname{lincost}(P)$.

For the reverse inequality, use the unique representability recorded in
Proposition~\ref{prop:regular-basic-properties}. Any real representation
$\psi$ of $M_P$ has the form
\begin{equation}
    \psi(e)=c_e T v_e
\end{equation}
for an invertible linear map $T$ and nonzero real scalars $c_e$ on the
nonloop elements. Proposition~\ref{prop:representation-coordinate-invariance}
removes $T$. A common ambient rescaling then normalizes the target scalar to
$1$. The remaining non-target scalars may be replaced by their absolute
values: changing the sign of one represented vector only changes the sign of
the corresponding reconstruction coefficient and of the corresponding
functional evaluation, so neither squared energy changes. Setting
$w_e=c_e^2$ then gives a weighted representation $\phi_w$ with the
same balanced cost as $\psi$. Here the ambient spaces are restricted to the
spans of their columns. Loop columns are zero and may receive arbitrary
positive weights without affecting either energy; nonconstancy ensures that
the target is a nonloop, so its normalization is valid.
\end{proof}

For a regular program, Proposition~\ref{prop:regular-representations-are-weights}
reduces the real realization space to one positive scalar per non-target
element.

Regularity also turns the support profile from
Section~\ref{subsec:matroid-span-support} into a quantitative bound. A
totally unimodular representation carries the standard signed circuit and
cocircuit structure of a regular matroid: circuit dependencies and
cocircuit row-space representatives may be chosen with all nonzero
coefficients equal to $\pm1$~\cite{Oxley2011}. For a target cocircuit $D$,
let
$H:=E_P\setminus D$, choose a basis of $H$, and adjoin $\tau_P$ to obtain a
basis of $M_P$. Pivoting to this basis preserves total unimodularity. The row
corresponding to $\tau_P$ then has support $D$ and all nonzero entries in
$\{\pm1\}$~\cite{Oxley2011}. Thus target circuits and target cocircuits
produce reconstruction and separation witnesses with unit-magnitude
coefficients.

\begin{theorem}
\label{thm:regular-support-bound}
Let $P$ be a regular matroid span program computing a nonconstant function.
Then
\begin{equation}
    \operatorname{lincost}(P)
    \leq
    \sqrt{S_+(P)S_-(P)}.
\end{equation}
\end{theorem}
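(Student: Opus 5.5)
We evaluate the unit-weight totally unimodular representation itself: take $w\equiv 1$, so $\phi_w(e)=v_e$ for every element. By Proposition~\ref{prop:regular-representations-are-weights}, $\operatorname{lincost}(P)\leq\mathcal E(\phi_{\mathbf 1})=\sqrt{C_+(\phi_{\mathbf 1})C_-(\phi_{\mathbf 1})}$. It then suffices to show
\[
C_+(\phi_{\mathbf 1})\leq S_+(P)
\qquad\text{and}\qquad
C_-(\phi_{\mathbf 1})\leq S_-(P).
\]
Each bound will follow from exhibiting, for each input, a witness whose support attains the support quantity and whose nonzero coefficients all have magnitude one.

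For a positive input $x$, use Proposition~\ref{prop:positive-support-circuits} to choose a circuit $C\ni\tau_P$ with $C\setminus\{\tau_P\}\subseteq A_P(x)$ and $|C|-1=s_+(P,x)$. In a totally unimodular matrix, a minimal dependent column set $C$ has a kernel vector, unique up to scale, with entries in $\{0,\pm1\}$ and support exactly $C$. To obtain it, restrict to the column space spanned by $C$ and pivot on the basis $C\setminus\{\tau_P\}$. Total unimodularity is preserved, so the coordinates of $v_{\tau_P}$ in that basis are $\pm1$ on the fundamental circuit $C$. Solving $v_{\tau_P}=\sum_{e\in C\setminus\{\tau_P\}}a(e)v_e$ therefore gives $a(e)\in\{\pm1\}$. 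This $a$ lies in $\mathcal R_x(\phi_{\mathbf 1})$, so
\[
c_+(\phi_{\mathbf 1},x)\leq\|a\|_2^2=|C|-1=s_+(P,x)\leq S_+(P).
\]

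For a negative input $x$, choose a cocircuit $D\ni\tau_P$ disjoint from $A_P(x)$ with $|D|-1=s_-(P,x)$. Following the remark before the theorem, let $H=E_P\setminus D$ be the complementary hyperplane. Choose a basis of $H$, adjoin $\tau_P$, and pivot the totally unimodular matrix to this basis. The row indexed by $\tau_P$ is a functional $\ell$, namely the $\tau_P$-coordinate functional in that basis. It satisfies $\ell(v_{\tau_P})=1$, vanishes on $H\supseteq A_P(x)$, and takes values in $\{0,\pm1\}$. Its support is exactly $D$, because $\ell(v_e)\neq0$ precisely when $e\notin\cl(H)=H$. Hence $\ell\in\mathcal S_x(\phi_{\mathbf 1})$ and
\[
c_-(\phi_{\mathbf 1},x)\leq\sum_{e\neq\tau_P}\ell(v_e)^2=|D|-1\leq S_-(P).
\]

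Taking maxima over each fiber gives both displayed inequalities, and the theorem follows. The main point needing care is the pivoting claim that total unimodularity survives a change to an arbitrary basis, so that all coordinates remain in $\{0,\pm1\}$. This is the standard fact that pivoting a totally unimodular matrix on a $\pm1$ entry preserves total unimodularity, as in Oxley. If $V_P$ is larger than the span, we first restrict to the span of the columns, which is harmless by Proposition~\ref{prop:representation-coordinate-invariance}. The degenerate case where one fiber is empty gives a right-hand side of $0$. That case does not occur, because the function is nonconstant.
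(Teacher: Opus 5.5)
Your proof is correct and follows the paper's argument. You evaluate the unit-weight totally unimodular representation, obtain a $\pm1$ reconstruction of squared size $s_+(P,x)$ from a minimum target circuit, and obtain a $\pm1$ separating functional of squared size $s_-(P,x)$ as the $\tau_P$-row after pivoting to a basis of $E_P\setminus D$ together with $\tau_P$. Your pivoting (equivalently Cramer's rule) justification of the $\pm1$ circuit coefficients makes explicit what the paper cites as standard.
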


\begin{proof}
Use the fixed totally unimodular representation $e\mapsto v_e$ with all
non-target weights equal to $1$, and denote this unit-weight representation by
$\phi_{\mathbf 1}$.

For a positive input $x$, choose a circuit $C_x$ attaining
$s_+(P,x)$. Since $\tau_P\in C_x$, the primitive dependence on $C_x$ may
be normalized as
\begin{equation}
    v_{\tau_P}
    =
    \sum_{e\in C_x\setminus\{\tau_P\}}
        \sigma_e v_e,
    \qquad
    \sigma_e\in\{\pm1\}.
\end{equation}
This reconstructs the target with squared coefficient size
$|C_x|-1=s_+(P,x)$. Hence
\begin{equation}
    C_+(\phi_{\mathbf 1})\leq S_+(P).
\end{equation}

For a negative input $y$, choose a cocircuit $D_y$ attaining
$s_-(P,y)$. Its complement is a hyperplane containing $A_P(y)$. The
cocircuit form of total unimodularity gives a linear functional $\ell_y$
with
\begin{equation}
    \ell_y(v_{\tau_P})=1,
    \qquad
    \ell_y(v_e)=0\quad(e\notin D_y),
    \qquad
    \ell_y(v_e)\in\{\pm1\}\quad(e\in D_y\setminus\{\tau_P\}).
\end{equation}
Since $D_y\cap A_P(y)=\varnothing$, this functional vanishes on every
available element. Its squared evaluation size is hence
$|D_y|-1=s_-(P,y)$, so
\begin{equation}
    C_-(\phi_{\mathbf 1})\leq S_-(P).
\end{equation}
Thus
\begin{equation}
    \mathcal E(\phi_{\mathbf 1})
    \leq
    \sqrt{S_+(P)S_-(P)},
\end{equation}
and the definition of $\operatorname{lincost}(P)$ gives the result.
\end{proof}

A partition of the non-target elements gives a sharper bound. Let
\begin{equation}
    E_P\setminus\{\tau_P\}
    =
    E_1\sqcup\cdots\sqcup E_r.
\end{equation}

\begin{corollary}
\label{cor:regular-typed-support-bound}
Suppose there are numbers $a_j,b_j\geq0$ such that every positive input
$x$ has a target circuit $C_x$ with
\begin{equation}
    C_x\setminus\{\tau_P\}\subseteq A_P(x)
\end{equation}
and
\begin{equation}
    |C_x\cap E_j|\leq a_j
    \qquad(1\leq j\leq r),
\end{equation}
and every negative input $y$ has a target cocircuit $D_y$ satisfying
$D_y\cap A_P(y)=\varnothing$ and
\begin{equation}
    |D_y\cap E_j|\leq b_j
    \qquad(1\leq j\leq r).
\end{equation}
Then
\begin{equation}
    \operatorname{lincost}(P)
    \leq
    \sum_{j=1}^r\sqrt{a_jb_j}.
\end{equation}
\end{corollary}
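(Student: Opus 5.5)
The plan is to refine the unit-weight argument of Theorem~\ref{thm:regular-support-bound} by using one weight per block, and then to balance the resulting costs. Fix the totally unimodular representation $e\mapsto v_e$. Choose block weights $w_e:=\omega_j>0$ for $e\in E_j$, and use the representation $\phi_w$ of Proposition~\ref{prop:regular-representations-are-weights}, so that $\phi_w(e)=\sqrt{\omega_j}\,v_e$ on $E_j$.

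First bound the reconstruction energy. For a positive input $x$, take the signed unit dependence $v_{\tau_P}=\sum_{e\in C_x\setminus\{\tau_P\}}\sigma_e v_e$ used in the proof of Theorem~\ref{thm:regular-support-bound}. Rewritten in $\phi_w$, its coefficient on $e\in E_j$ is $\sigma_e/\sqrt{\omega_j}$, and its support lies in $A_P(x)$. Hence
\begin{equation}
c_+(\phi_w,x)\le\sum_j |C_x\cap E_j|/\omega_j\le\sum_j a_j/\omega_j.
\end{equation}
The target is not in any $E_j$, so $|C_x\cap E_j|$ counts only non-target elements.

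Next bound the separation energy. For a negative input $y$, take the unimodular functional $\ell_y$ from the cocircuit $D_y$, exactly as in that proof. It is normalized on the unchanged target vector and vanishes off $D_y$, hence on $A_P(y)$. Its evaluations on $\phi_w(e)$ equal $\pm\sqrt{\omega_j}$ for $e\in D_y\cap E_j$. Therefore
\begin{equation}
c_-(\phi_w,y)\le\sum_j b_j\omega_j.
\end{equation}

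Combining the two bounds gives
\begin{equation}
\operatorname{lincost}(P)\le\mathcal E(\phi_w)\le\Bigl(\sum_j a_j/\omega_j\Bigr)^{1/2}\Bigl(\sum_j b_j\omega_j\Bigr)^{1/2}.
\end{equation}
The remaining step, and the only delicate one, is the optimization over $\omega$. When all $a_j,b_j>0$, choose $\omega_j=\sqrt{a_j/b_j}$. Then both sums equal $\sum_j\sqrt{a_jb_j}$, and the product of their square roots is $\sum_j\sqrt{a_jb_j}$. This is the Cauchy--Schwarz equality case.

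The obstacle is the degenerate case where some $a_j$ or $b_j$ vanishes. Weights must be positive, so we cannot simply set a weight to zero or infinity. I would handle this with $\varepsilon$ perturbations.
\begin{itemize}
\item If $a_j=0<b_j$, take $\omega_j=\varepsilon$. Its $b$-term is then $\varepsilon b_j$, and its $a$-term is zero.
\item If $b_j=0<a_j$, take $\omega_j=1/\varepsilon$.
\item If both vanish, take $\omega_j=1$.
\end{itemize}
The two sums then converge to $\sum_j\sqrt{a_jb_j}$ as $\varepsilon\to0$. When that limit is $0$, one sum tends to $0$ while the other stays bounded, so the product still tends to $0$. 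Taking the infimum in Definition~\ref{def:linear-cost} gives the stated bound.
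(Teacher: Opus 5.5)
Your proposal is correct and follows the paper's proof essentially step for step. It uses block-constant weights on each $E_j$ and the signed unit circuit and cocircuit witnesses from Theorem~\ref{thm:regular-support-bound}, which give the bounds $\sum_j a_j/w_j$ and $\sum_j b_jw_j$, followed by the balancing choice $w_j\propto\sqrt{a_j/b_j}$ and the limits $w_j\to0$ or $w_j\to\infty$ for degenerate indices. Your explicit $\varepsilon$-family is just a more careful version of the paper's limiting argument; in the case $\sum_j\sqrt{a_jb_j}=0$ both sums tend to zero, so your final remark about that case is not needed.
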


\begin{proof}
Give every element of $E_j$ the same weight $w_j>0$. By the availability
hypothesis, the circuit relation used above is an admissible reconstruction
of the target, with squared coefficient size at most
\begin{equation}
    \sum_{j=1}^r\frac{a_j}{w_j},
\end{equation}
while the cocircuit functional has squared evaluation size at most
\begin{equation}
    \sum_{j=1}^r b_jw_j.
\end{equation}
Consequently
\begin{equation}
    \operatorname{lincost}(P)
    \leq
    \inf_{w_1,\ldots,w_r>0}
    \left(
        \sum_j\frac{a_j}{w_j}
    \right)^{1/2}
    \left(
        \sum_j b_jw_j
    \right)^{1/2}.
\end{equation}
For $a_j,b_j>0$, the choice
$w_j\propto\sqrt{a_j/b_j}$ makes the product equal to
$\bigl(\sum_j\sqrt{a_jb_j}\bigr)^2$. If $a_j=0<b_j$, let $w_j$ tend to
$0$; if $b_j=0<a_j$, let it tend to infinity; indices with
$a_j=b_j=0$ do not matter. Since $\operatorname{lincost}(P)$ is an
infimum, the same bound follows in all cases.
\end{proof}

Theorem~\ref{thm:regular-support-bound} is the one-class case of the
preceding argument. The typed corollary can be strictly stronger when
positive circuits and negative cocircuits use different parts of the ground
set.

\subsection{Weighted basis energies and the fixed-program objective}
\label{subsec:regular-basis-energy}

This is the main computational reduction for a fixed regular program. Once
the abstract program is chosen, projective equivalence reduces the remaining
representation freedom to positive element weights. The theorem below
evaluates every input cost from weighted bases of two input-dependent minors,
so the remaining optimization is intrinsic to the program matroid and those
weights.

The required determinant identities are standard. For a regular matroid,
Cauchy--Binet and a totally unimodular representation give the weighted
basis-generating polynomial in determinant form
~\cite{ChoeOxleySokalWagner2004}. These determinants also define the
determinantal probability measure on matroid bases, with positive column
weights multiplying the mass of a basis by the product of its element
weights~\cite{Lyons2003}. The one-port electrical/basis interpretation is standard for regular matroids~\cite{Chaiken1987,Chaiken2006PortedTutte}.

Let $N$ be a matroid containing a
distinguished nonloop element $\tau$. For a positive weight assignment $w$
on its non-target elements and a basis $B$ of $N$, put $w(B):=\prod_{e\in B\setminus\{\tau\}}w_e$.
We separate the weighted basis sum according to whether the target belongs to
the basis:
\begin{equation}
    Z_{\mathrm{in}}(N;w)
    :=
    \sum_{\substack{B\text{ basis of }N\\ \tau\in B}}w(B),
    \qquad
    Z_{\mathrm{out}}(N;w)
    :=
    \sum_{\substack{B\text{ basis of }N\\ \tau\notin B}}w(B).
\end{equation}
Whenever $N$ is a minor of $M_P$, the notation uses the restriction of the
program weight assignment $w$ to the surviving non-target elements.

For a positive input $x$, let $M_x:=M_P|(A_P(x)\cup\{\tau_P\})$, and for a negative input $y$, let $M_y:=M_P/A_P(y)$. Both are regular minors of $M_P$.

The reconstruction and separation energies are exactly the target-inclusion
odds on the corresponding restriction and contraction minors when the same
global weight assignment is used on every input.

\begin{theorem}
\label{thm:regular-matroid-energy}
Let $P$ be a regular matroid span program computing a nonconstant function,
and let $\phi_w$ be the weighted representation above. Then, for every
positive input $x$,
\begin{equation}
    c_+(\phi_w,x)
    =
    \frac{Z_{\mathrm{in}}(M_x;w)}
         {Z_{\mathrm{out}}(M_x;w)},
\end{equation}
and for every negative input $y$,
\begin{equation}
    c_-(\phi_w,y)
    =
    \frac{Z_{\mathrm{out}}(M_y;w)}
         {Z_{\mathrm{in}}(M_y;w)}.
\end{equation}
If $\tau_P$ is a coloop of $M_y$, then
$Z_{\mathrm{out}}(M_y;w)=0$ and both sides of the second identity are zero.
\end{theorem}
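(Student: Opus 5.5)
The plan is to reduce both identities to a single determinant formula for a weighted totally unimodular configuration, then use Cauchy--Binet.

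\emph{Positive inputs.} Fix $x$ with $f(x)=1$. Let $S:=A_P(x)$ and write $U:=\operatorname{span}\{v_e:e\in S\}$; the target $v_{\tau}:=v_{\tau_P}$ lies in $U$. Choose coordinates on $U$ and let $\Phi$ be the matrix whose columns are $\sqrt{w_e}\,v_e$ for $e\in S$, a rank-$r$ matrix with $r=\dim U$. The minimum-norm reconstruction is given by the pseudoinverse, so
\begin{equation}
    c_+(\phi_w,x)=v_\tau^{\mathsf T}(\Phi\Phi^{\mathsf T})^{-1}v_\tau .
\end{equation}
Here $\Phi\Phi^{\mathsf T}$ is invertible on $U$ once we restrict to an $r$-dimensional coordinate system.

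Apply the identity $u^{\mathsf T}G^{-1}u=\det(G+uu^{\mathsf T})/\det G-1$ with $G:=\Phi\Phi^{\mathsf T}$. Cauchy--Binet gives
\begin{equation}
    \det G=\sum_{B}w(B)\det[v_B]^2 .
\end{equation}
Here $B$ runs over the $r$-subsets of $S$. By total unimodularity, $\det[v_B]^2=1$ exactly when $B$ is a basis of $M_P|S$, and it is zero otherwise. Since $\tau_P\in\cl(S)$, these bases are exactly the bases of $M_x$ avoiding $\tau_P$, so $\det G=Z_{\mathrm{out}}(M_x;w)$.

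Similarly, $G+v_\tau v_\tau^{\mathsf T}$ is the Gram matrix of the augmented family in which $\tau$ has weight $1$. Its determinant is the full basis sum $Z_{\mathrm{in}}+Z_{\mathrm{out}}$. Subtracting the $1$ yields $Z_{\mathrm{in}}/Z_{\mathrm{out}}$.

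\emph{Main obstacle.} Total unimodularity is a statement about a fixed coordinate basis of the whole span. After restricting to $U$ and choosing coordinates there, the unimodularity of the minors is not immediate. I will avoid re-coordinatizing. Ratios of Gram determinants are basis-independent, and any two bases of the row space of a TU matrix restricted to $U$ change every maximal minor by the same nonzero constant $c$. So $\det[v_B]^2$ is either $0$ or a common value $c^2$ across all bases $B$ of $M_x$, including those containing $\tau$. That constant cancels in the ratio, and this is all that is needed.

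\emph{Negative inputs.} Fix $y$ with $f(y)=0$ and use Proposition~\ref{prop:represented-minors}. Pass to the quotient $V/\operatorname{span}\{v_e:e\in A_P(y)\}$. Functionals vanishing on the available vectors are exactly functionals on this quotient, so $c_-(\phi_w,y)$ is the separation energy of the contracted configuration representing $M_y$.

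Apply the dual computation there. Minimize $\sum_e w_e\,\ell(\bar v_e)^2$ subject to $\ell(\bar v_\tau)=1$, with $e$ ranging over the non-target elements of $M_y$. The minimum is $1/(\bar v_\tau^{\mathsf T}H^{-1}\bar v_\tau)$, where $H:=\sum_e w_e\bar v_e\bar v_e^{\mathsf T}$. This formula requires $H$ to be invertible, which fails exactly when $\tau_P$ is a coloop of $M_y$. In that case there is a unique functional supported only on $\tau$, giving energy $0$ and $Z_{\mathrm{out}}=0$, which matches the final clause of the statement.

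Otherwise, the same determinant lemma gives
\begin{equation}
    \bar v_\tau^{\mathsf T}H^{-1}\bar v_\tau=\frac{Z_{\mathrm{in}}(M_y;w)}{Z_{\mathrm{out}}(M_y;w)},
\end{equation}
so $c_-=Z_{\mathrm{out}}/Z_{\mathrm{in}}$. The contracted configuration is again a projective image of a TU matrix, by pivoting on a basis of $A_P(y)$ as in the proof of Theorem~\ref{thm:regular-quotient-form}. The same common-constant argument therefore applies to its maximal minors.
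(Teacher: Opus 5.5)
Your proposal is correct and follows the paper's proof essentially step for step: the minimum-norm Gram formula $t^{\mathsf T}K^{-1}t$, the matrix determinant lemma, and Cauchy--Binet with total unimodularity; for negative inputs, pivoting on a basis of $A_P(y)$, solving $\min\{z^{\mathsf T}Kz:z^{\mathsf T}t=1\}$, and treating the coloop case separately. The one difference is minor: where you use a common-constant argument for the maximal minors after restricting to $U$, the paper deletes dependent rows to obtain a full-row-rank totally unimodular submatrix whose maximal minors lie in $\{0,\pm1\}$, and this is exactly the reference coordinatization your argument needs in order to conclude that all nonzero maximal minors have the same absolute value.
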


\begin{proof}
Fix a positive input $x$. Since the target lies in the closure of the
available set, the available elements have the same rank as $M_x$. Restrict
the fixed totally unimodular representation of $M_P$ to
$A_P(x)\cup\{\tau_P\}$, then delete dependent rows to obtain a full-row-rank
totally unimodular matrix. This compression preserves the reconstruction
problem: on the restricted columns, each deleted row is a linear combination
of the retained rows, so a coefficient vector reconstructs the compressed
target if and only if it reconstructs the original target. The prescribed
column factors $\sqrt{w_e}$ are unchanged. Write the resulting target column
as $t$, and let $C$ be the matrix whose other columns are the corresponding
weighted available columns. The available columns span the row space, so
$K:=CC^{\mathsf T}$ is positive definite. The minimum squared coefficient
size in a reconstruction $Ca=t$ is
\begin{equation}
    c_+(\phi_w,x)=t^{\mathsf T}K^{-1}t.
\end{equation}
By Cauchy--Binet and total unimodularity,
\begin{equation}
    \det K=Z_{\mathrm{out}}(M_x;w),
\end{equation}
while adding the target column gives
\begin{equation}
    \det(K+tt^{\mathsf T})
    =
    Z_{\mathrm{out}}(M_x;w)+Z_{\mathrm{in}}(M_x;w).
\end{equation}
The matrix determinant lemma gives
$\det(K+tt^{\mathsf T})=\det K(1+t^{\mathsf T}K^{-1}t)$, and comparison
proves the positive formula.

Now fix a negative input $y$ and choose a basis $B_y$ of $A_P(y)$.
Pivot the fixed totally unimodular representation on the columns of $B_y$.
Basis pivoting preserves total unimodularity. Deleting the pivot rows and the
contracted columns represents $M_P/B_y$; the elements of
$A_P(y)\setminus B_y$ become loops and may then be deleted. The resulting
matrix represents $M_y=M_P/A_P(y)$. Since the pivoting acts on rows while the
program weights are column scalars, every surviving non-target column keeps
its factor $\sqrt{w_e}$. Thus the contraction is regular with the same
surviving weights.

A separating functional vanishes on $\operatorname{span}A_P(y)$, so it
descends to the quotient by that span. Conversely, a functional on the
quotient with value $1$ on the target lifts to a separating functional on the
original representation. Contracted elements contribute zero to the
separation energy.

If $\tau_P$ is a coloop of $M_y$, the other elements lie in a hyperplane
excluding the target. A normalized functional can vanish on every
non-target element, so $c_-(\phi_w,y)=0$; equivalently no basis of $M_y$
avoids the target, and $Z_{\mathrm{out}}(M_y;w)=0$.

Otherwise the non-target elements span $M_y$. Use the contracted totally
unimodular matrix above, write its target column as $t$, let $C$ contain its
weighted non-target columns, and put $K=CC^{\mathsf T}$. Writing a linear
functional as $z^{\mathsf T}$, the separation problem is
\begin{equation}
    \min\{z^{\mathsf T}Kz:z^{\mathsf T}t=1\},
\end{equation}
whose value is $1/(t^{\mathsf T}K^{-1}t)$. The same determinant calculation
gives
\begin{equation}
    t^{\mathsf T}K^{-1}t
    =
    \frac{Z_{\mathrm{in}}(M_y;w)}
         {Z_{\mathrm{out}}(M_y;w)},
\end{equation}
and the negative formula follows.
\end{proof}

Equivalently, under the weighted random-basis measure on $M_x$,
\begin{equation}
    c_+(\phi_w,x)
    =
    \frac{\Pr[\tau_P\in B]}{\Pr[\tau_P\notin B]},
\end{equation}
while on $M_y$,
\begin{equation}
    c_-(\phi_w,y)
    =
    \frac{\Pr[\tau_P\notin B]}{\Pr[\tau_P\in B]}.
\end{equation}
Thus the positive and negative witness energies are opposite target-basis
odds on the two input minors~\cite{Lyons2003}.

The determinant and weighted-basis identities are standard; their
consequence here is that the representation optimization of a fixed regular
program is intrinsic to the abstract matroid and its element weights.

In graphic span programs, a single positive edge-weight assignment has
already been optimized against the worst positive and negative witness
energies~\cite[Thm.~12]{JefferyKimmel2017}. The corollary below extends this
shared-weight form to arbitrary regular program matroids, with the two
energies expressed by weighted basis ratios of the input restriction and
contraction minors.

\begin{corollary}
\label{cor:regular-linear-cost}
Let $P$ be a regular matroid span program computing a nonconstant function.
Then
\begin{equation}
\begin{aligned}
    \operatorname{lincost}(P)
    =
    \inf_{w>0}
    \Bigg(&
        \max_{f(x)=1}
        \frac{Z_{\mathrm{in}}(M_x;w)}
             {Z_{\mathrm{out}}(M_x;w)}
        \cdot
        \max_{f(y)=0}
        \frac{Z_{\mathrm{out}}(M_y;w)}
             {Z_{\mathrm{in}}(M_y;w)}
    \Bigg)^{1/2}.
\end{aligned}
\end{equation}
\end{corollary}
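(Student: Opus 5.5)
The plan is to obtain the corollary by substituting earlier results in sequence. First, Proposition~\ref{prop:regular-representations-are-weights} gives $\operatorname{lincost}(P)=\inf_{w>0}\mathcal E(\phi_w)$. Next, Definition~\ref{def:representation-cost} writes $\mathcal E(\phi_w)=\sqrt{C_+(\phi_w)C_-(\phi_w)}$, where $C_\pm$ are the worst-case energies over the two target fibers. Finally, I replace each input-level energy inside those maxima by the weighted basis ratio supplied by Theorem~\ref{thm:regular-matroid-energy}. That theorem gives $Z_{\mathrm{in}}(M_x;w)/Z_{\mathrm{out}}(M_x;w)$ for positive $x$ and $Z_{\mathrm{out}}(M_y;w)/Z_{\mathrm{in}}(M_y;w)$ for negative $y$. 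The square root of a product of nonnegative numbers then matches the displayed expression term by term.

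The only points needing care are well-definedness of the ratios and the degenerate cases. Since $f$ is nonconstant, both fibers are nonempty, so neither maximum is the empty-maximum convention. For a positive $x$, the target lies in $\cl_{M_P}(A_P(x))$, so the available elements span $M_x$. Hence some basis of $M_x$ avoids $\tau_P$, and $Z_{\mathrm{out}}(M_x;w)>0$ because all weights are positive. For a negative $y$, the target is not in $\cl_{M_P}(A_P(y))$, so it is a nonloop of $M_y=M_P/A_P(y)$. Hence some basis of $M_y$ contains $\tau_P$, and $Z_{\mathrm{in}}(M_y;w)>0$. Thus every denominator is strictly positive. The coloop case of Theorem~\ref{thm:regular-matroid-energy}, where $Z_{\mathrm{out}}(M_y;w)=0$, only makes a negative term vanish and needs no separate treatment.

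I expect no substantial obstacle, since the corollary is a direct combination of the cited results. The one step to check is that Theorem~\ref{thm:regular-matroid-energy} is applied with the same global weight $w$ for every input. This holds because the minors $M_x$ and $M_y$ inherit the restriction of $w$, as stipulated in the notation preceding that theorem. The infimum over $w$ in Proposition~\ref{prop:regular-representations-are-weights} is therefore taken over one common assignment, exactly as the corollary states.
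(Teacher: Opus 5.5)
Your proposal is correct and follows the paper's proof, which simply combines Proposition~\ref{prop:regular-representations-are-weights}, Definition~\ref{def:representation-cost}, and Theorem~\ref{thm:regular-matroid-energy} under one shared weight assignment $w$. Your positivity checks on the denominators are valid: $Z_{\mathrm{out}}(M_x;w)>0$ because $A_P(x)$ spans $M_x$, and $Z_{\mathrm{in}}(M_y;w)>0$ because $\tau_P$ is a nonloop of $M_P/A_P(y)$. They make explicit details the paper leaves implicit.
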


\begin{proof}
Combine Proposition~\ref{prop:regular-representations-are-weights},
Theorem~\ref{thm:regular-matroid-energy}, and
Definition~\ref{def:representation-cost}.
\end{proof}

For a fixed regular program, the continuous optimization thus has only
positive element weights as variables. The objective is expressed through
weighted basis sums of restrictions and contractions rather than through a
free Gram factorization or semidefinite variable. This is a reduction in
optimization variables, not by itself a claim that the basis sums are
efficiently computable in every input model.

\subsection{Comparison with the Adversary Optimum}
\label{subsec:regular-optimality-certification}

Every represented program gives an upper bound on the universal optimum, so
for any regular program $P$,
\begin{equation}
    \operatorname{Adv}^{\pm}_{\mathcal Q}(f)
    \leq
    \operatorname{lincost}(P).
\end{equation}

\begin{proposition}
\label{prop:regular-optimality-certificate}
Suppose
$\operatorname{Adv}^{\pm}_{\mathcal Q}(f)=L$, and let
$P\in\mathscr R_{\mathbb R}(f,\mathfrak M_f)$ be a regular program. If
$\operatorname{lincost}(P)=L$, equivalently if the weight optimization of
Corollary~\ref{cor:regular-linear-cost} has value $L$, then
\begin{equation}
    \operatorname{lincost}(P)
    =
    \operatorname{MSP}(f,\mathfrak M_f)
    =
    \operatorname{Adv}^{\pm}_{\mathcal Q}(f)
    =
    L.
\end{equation}
Thus $P$ is globally optimal.
\end{proposition}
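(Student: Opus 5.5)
The argument is a sandwich between two inequalities that are already available, so no new estimate is needed. The equivalence in the hypothesis is immediate from Corollary~\ref{cor:regular-linear-cost}: that corollary identifies $\operatorname{lincost}(P)$ with the value of the weight optimization. So I will assume $\operatorname{lincost}(P)=L$ and show that both $\operatorname{MSP}(f,\mathfrak M_f)$ and $\operatorname{Adv}^{\pm}_{\mathcal Q}(f)$ are squeezed between $L$ and $L$.

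\emph{Upper bound.} Since $P$ belongs to $\mathscr R_{\mathbb R}(f,\mathfrak M_f)$, Definition~\ref{def:matroid-span-program-cost} gives
\[
\operatorname{MSP}(f,\mathfrak M_f)\leq\operatorname{lincost}(P)=L .
\]

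\emph{Lower bound.} I need $\operatorname{Adv}^{\pm}_{\mathcal Q}(f)\leq\operatorname{MSP}(f,\mathfrak M_f)$, which is the left inequality of Theorem~\ref{thm:multiresponse-msp-comparison}. To invoke that theorem I must check its hypotheses.
\begin{itemize}
\item $f$ is nonconstant: a regular program with a nonloop target is assumed throughout Section~\ref{sec:regular-matroid-optimization}.
\item The source matroid $\mathfrak M_f$ exists, since $P$ is a program over it.
\item $\mathcal Q$ separates $f$: this follows because $\operatorname{Adv}^{\pm}_{\mathcal Q}(f)=L$ is finite, and Theorem~\ref{thm:correlation-transport-adversary} shows the bound is $+\infty$ otherwise.
\end{itemize}
If one prefers not to rely on the finiteness argument, the same lower bound comes directly from Theorem~\ref{thm:adversary-lower-bounds-msp} in the appendix. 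Chaining the inequalities gives
\[
L=\operatorname{Adv}^{\pm}_{\mathcal Q}(f)\leq\operatorname{MSP}(f,\mathfrak M_f)\leq\operatorname{lincost}(P)=L ,
\]
so all three quantities equal $L$. Global optimality of $P$ then means exactly that $P$ attains the outer infimum defining $\operatorname{MSP}$.

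The only step with any content is the lower bound $\operatorname{Adv}^{\pm}\leq\operatorname{lincost}(P')$ for every represented program $P'$. This is the standard fact that a span program yields a feasible dual, and that the adversary bound is at most the span-program witness size. I take it from the cited theorem rather than reprove it. The one point to watch is that it holds for arbitrary (non-binary) response partitions, where only the inequality, not equality, is available; that direction is the one Theorem~\ref{thm:multiresponse-msp-comparison} asserts unconditionally.
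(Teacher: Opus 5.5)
Your proof is correct and is the same sandwich the paper uses: $L=\operatorname{Adv}^{\pm}_{\mathcal Q}(f)\leq\operatorname{MSP}(f,\mathfrak M_f)\leq\operatorname{lincost}(P)=L$, with the upper bound coming from the definition of $\operatorname{MSP}$. For the lower bound the paper cites Theorem~\ref{thm:adversary-lower-bounds-msp} directly (your fallback), which has fewer hypotheses and so spares you the nonconstancy and separation checks that Theorem~\ref{thm:multiresponse-msp-comparison} requires.
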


\begin{proof}
The adversary certificate gives the first equality. Theorem~\ref{thm:adversary-lower-bounds-msp}
gives
\begin{equation}
    L
    =
    \operatorname{Adv}^{\pm}_{\mathcal Q}(f)
    \leq
    \operatorname{MSP}(f,\mathfrak M_f)
    \leq
    \operatorname{lincost}(P)
    =
    L.
\end{equation}
\end{proof}

Thus a source certificate of value $L$ and a regular program with
$\operatorname{lincost}(P)=L$ certify global optimality. Equality of the
fixed-program infimum does not imply that an optimizing weight assignment
exists.

\section{Regular Program Costs under Seymour Decomposition}
\label{sec:seymour-calculus}

Corollary~\ref{cor:regular-linear-cost} reduces the representation problem for
a fixed regular program to positive weights on its non-target elements. For a
fixed assignment $w$, the positive and negative energies are determined by
ratios of the weighted basis sums $Z_{\mathrm{in}}$ and
$Z_{\mathrm{out}}$ on regular restriction and contraction minors. Each such
minor admits a $\{1,2,3\}$-decomposition. Weighted basis-generating functions
and their behavior under ported matroid sums are classical
~\cite{Chaiken1989PortedMatroid,Chaiken2006PortedTutte}. Our use here is to
propagate the target-conditioned restriction/contraction ratios of a fixed
regular span program under one shared weight assignment. The basis ratio is
preserved by the following reductions of summands not containing the target.

Ratios are interpreted in the extended nonnegative reals: a target loop
gives $Z_{\mathrm{in}}/Z_{\mathrm{out}}=0$, and a target coloop gives
$Z_{\mathrm{in}}/Z_{\mathrm{out}}=\infty$, with reciprocal zero. At positive
weights the two basis sums never vanish simultaneously.

\subsection{One-sums}
\label{subsec:seymour-one-sum}

\begin{proposition}
\label{prop:seymour-one-sum}
Let $N=N_0\oplus_1N_1$ be a regular matroid with target
$\tau\in E(N_0)$, and let $w$ be a positive weight assignment on the
non-target elements. Then
\begin{equation}
    Z_{\mathrm{in}}(N;w)
    =Z_{\mathrm{in}}(N_0;w)Z(N_1;w),
    \qquad
    Z_{\mathrm{out}}(N;w)
    =Z_{\mathrm{out}}(N_0;w)Z(N_1;w),
    \label{eq:seymour-one-sum-factorization}
\end{equation}
and hence
\begin{equation}
    \frac{Z_{\mathrm{in}}(N;w)}{Z_{\mathrm{out}}(N;w)}
    =
    \frac{Z_{\mathrm{in}}(N_0;w)}{Z_{\mathrm{out}}(N_0;w)}.
    \label{eq:seymour-one-sum-ratio}
\end{equation}
\end{proposition}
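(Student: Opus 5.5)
The plan is to show that the bases of a direct sum split as products. First I would record that a $1$-sum is the ordinary direct sum (Definition~\ref{def:regular-matroid-sums}): with $E(N_0)\cap E(N_1)=\varnothing$, the cycles of $N$ are exactly the disjoint unions $C_0\sqcup C_1$ of cycles of the summands. In particular the circuits of $N$ are the circuits of $N_0$ together with those of $N_1$. Hence a set $X\subseteq E(N_0)\sqcup E(N_1)$ is independent in $N$ iff $X\cap E(N_0)$ and $X\cap E(N_1)$ are independent in the respective summands. It follows that the bases of $N$ are exactly the sets $B_0\sqcup B_1$ with $B_i$ a basis of $N_i$, and the map $(B_0,B_1)\mapsto B_0\sqcup B_1$ is a bijection.

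Next I would write $Z(N_1;w):=\sum_{B_1}\prod_{e\in B_1}w_e$ for the full weighted basis sum of $N_1$. This is the natural meaning of the symbol in the statement, since $N_1$ contains no target and all its elements are weighted. Because $\tau\in E(N_0)$, the weight is multiplicative:
\begin{equation}
    w(B_0\sqcup B_1)=w(B_0)\,\prod_{e\in B_1}w_e,
\end{equation}
where $w(B_0)$ omits $\tau$ as in the definition of $Z_{\mathrm{in}}$. Moreover, $\tau\in B_0\sqcup B_1$ iff $\tau\in B_0$. Splitting the sum over bases of $N$ by whether $\tau$ lies in the basis and applying the bijection gives the two factorizations \eqref{eq:seymour-one-sum-factorization}.

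For the ratio \eqref{eq:seymour-one-sum-ratio}, I would note that $N_1$ has at least one basis (possibly empty) and that all weights are positive, so $Z(N_1;w)>0$ and the factor cancels. The extended-real conventions just stated cover the case where $\tau$ is a loop or coloop of $N_0$. In that case $\tau$ is likewise a loop or coloop of $N$, and the corresponding $Z$ vanishes on both sides simultaneously, so the ratio is $0$ or $\infty$ on both sides.

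The only step that needs care is the identification of the bases of the $1$-sum from the cycle-based definition; the rest is bookkeeping. For that step I would argue via circuits as minimal nonempty cycles. A minimal nonempty set of the form $C_0\sqcup C_1$ must have one of the two parts empty and the other a circuit, which gives the circuit description and hence the independence and basis description.
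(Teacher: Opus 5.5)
Your proposal is correct and follows the paper's own argument. The paper uses the same ingredients: bases of the direct sum are exactly the disjoint unions $B_0\sqcup B_1$, the weight is multiplicative with $\tau\in E(N_0)$, splitting by target inclusion gives the factorization, and the positive factor $Z(N_1;w)$ cancels. Your extra steps, deriving the basis description from the cycle definition and checking the loop/coloop conventions, only fill in details the paper leaves implicit.
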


\begin{proof}
Every basis of a direct sum is the disjoint union of a basis of $N_0$ and a
basis of $N_1$. Since the target belongs to $N_0$, separating the bases of
$N$ according to whether they contain $\tau$ gives
\eqref{eq:seymour-one-sum-factorization}. The common factor $Z(N_1;w)$ then
cancels.
\end{proof}

For either input minor in Theorem~\ref{thm:regular-matroid-energy}, a
$1$-sum component not containing the target changes neither the
reconstruction energy nor the separation energy.

\subsection{Two-sums}
\label{subsec:seymour-two-sum}

Let $N$ be a regular matroid with distinguished element $e$ and positive
weights on $E(N)\setminus\{e\}$. Define
\begin{equation}
    P_0(N,e;w)
    :=\sum_{\substack{B\text{ basis of }N\\ e\notin B}}w(B),
    \qquad
    P_1(N,e;w)
    :=\sum_{\substack{B\text{ basis of }N\\ e\in B}}
        w(B\setminus\{e\}).
    \label{eq:seymour-two-sum-P}
\end{equation}
When $P_1(N,e;w)>0$, put
$m_e(N;w):=P_0(N,e;w)/P_1(N,e;w)$.
\begin{theorem}
\label{thm:seymour-two-sum}
Let $M=H\oplus_2N$, where $H\cap N=\{e\}$ and the target $\tau$ belongs to
$H$. Give all elements of $M$ positive weights, inherited by
$H\setminus\{e\}$ and $N\setminus\{e\}$. For
$\sigma\in\{\mathrm{in},\mathrm{out}\}$, let $Z_\sigma(H;m)$ denote the
weighted basis sum of $H$ after assigning weight $m$ to $e$. If
$P_1(N,e;w)>0$, then
\begin{equation}
    Z_\sigma(M;w)
    =P_1(N,e;w)\,
      Z_\sigma\!\left(H;m_e(N;w)\right).
    \label{eq:seymour-two-sum-factorization}
\end{equation}
Consequently,
\begin{equation}
    \frac{Z_{\mathrm{in}}(M;w)}{Z_{\mathrm{out}}(M;w)}
    =
    \frac{Z_{\mathrm{in}}(H;m_e)}{Z_{\mathrm{out}}(H;m_e)}.
    \label{eq:seymour-two-sum-ratio}
\end{equation}
\end{theorem}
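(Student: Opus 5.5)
The plan is to describe the bases of a $2$-sum explicitly and then sort them by whether they contain $\tau$. The standard description (Oxley, Ch.~7) states that $B$ is a basis of $M=H\oplus_2N$ exactly in one of two cases:
\begin{enumerate}
\item[(a)] $B=B_H\cup(B_N\setminus\{e\})$, where $B_H$ is a basis of $H$ avoiding $e$ and $B_N$ is a basis of $N$ containing $e$;
\item[(b)] $B=(B_H\setminus\{e\})\cup B_N$, where $B_H$ is a basis of $H$ containing $e$ and $B_N$ is a basis of $N$ avoiding $e$.
\end{enumerate}
Moreover, each basis of $M$ has exactly one such representation. I would derive this from the cycle definition in Definition~\ref{def:regular-matroid-sums}. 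The rank identity $r(M)=r(H)+r(N)-1$ holds because $e$ is neither a loop nor a coloop. A set $X\subseteq E(M)$ then spans $M$ exactly when $X\cap E(H)$ spans $H$ or spans $H/e$, with the complementary condition on the $N$ side. One checks this directly: a circuit of $M$ is either a circuit of $H\setminus e$, a circuit of $N\setminus e$, or $(C_H\cup C_N)\setminus\{e\}$ with $e\in C_H\cap C_N$. Counting cardinalities then gives exactly cases (a) and (b). Uniqueness follows because the $H$-part $X\cap E(H)$ either has rank $r(H)$ or rank $r(H)-1$, and this decides the case.

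For the weights, the elements of $E(H)\setminus\{e\}$ and $E(N)\setminus\{e\}$ are disjoint and keep their inherited weights, so the weight of $B$ factors in each case.
\begin{itemize}
\item In case (a), $w(B)=w(B_H)\cdot w(B_N\setminus\{e\})$.
\item In case (b), $w(B)=w(B_H\setminus\{e\})\cdot w(B_N)$.
\end{itemize}
Because $\tau\in E(H)\setminus\{e\}$, membership of $\tau$ in $B$ is decided by $B_H$ alone. Summing over bases with $\tau\in B$ or $\tau\notin B$ (for $\sigma\in\{\mathrm{in},\mathrm{out}\}$) therefore gives
\begin{equation*}
Z_\sigma(M;w)=P_1(N,e;w)\sum_{\substack{B_H\ni\!\!\!/\,e\\ \sigma}}w(B_H)+P_0(N,e;w)\sum_{\substack{B_H\ni e\\ \sigma}}w(B_H\setminus\{e\}).
\end{equation*}
Now assign $e$ the weight $m=m_e(N;w)$ in $H$. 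Then $Z_\sigma(H;m)$ is the first $H$-sum plus $m$ times the second. Factoring out $P_1$, using $P_0=m\,P_1$, gives \eqref{eq:seymour-two-sum-factorization}. The ratio identity \eqref{eq:seymour-two-sum-ratio} follows because the common factor $P_1>0$ cancels. The extended-real conventions for a target loop or coloop are respected, since $Z_\sigma(H;m)$ vanishes exactly when $Z_\sigma(M;w)$ does.

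The main obstacle is the basis description of the $2$-sum. The paper defines sums through cycles rather than through bases, so I would have to justify the correspondence between $\mathbb F_2$ cycle spaces and bases carefully. My first approach is a rank computation, $r_M(X)=r_H(X_H\cup\{e\})+r_N(X_N\cup\{e\})-1$ adjusted appropriately, obtained from the parallel-connection picture $M=P(H,N)\setminus e$. If that becomes messy, I would fall back on citing Oxley's Proposition~7.1.22 for the characterization of bases of $2$-sums.
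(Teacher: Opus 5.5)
Your proposal is correct and matches the paper's proof. The paper likewise writes $Z_\sigma(H;m)=A_{\sigma,0}+mA_{\sigma,1}$ by splitting the bases of $H$ according to whether they contain $e$, and uses the fact that a basis of a proper $2$-sum uses the common element on exactly one side to get $Z_\sigma(M;w)=A_{\sigma,0}P_1+A_{\sigma,1}P_0=P_1Z_\sigma(H;m_e)$. Your derivation of the basis correspondence, either through the parallel-connection spanning criterion or by citing Oxley, supplies the step the paper states without proof.
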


\begin{proof}
Write $Z_\sigma(H;m)=A_{\sigma,0}+mA_{\sigma,1}$, separating bases of $H$
according to whether they omit or contain $e$. A basis of a proper $2$-sum
uses the common element on exactly one side. Therefore
\begin{equation}
    Z_\sigma(M;w)
    =A_{\sigma,0}P_1+A_{\sigma,1}P_0
    =P_1Z_\sigma(H;m_e),
\end{equation}
which proves both identities.
\end{proof}

If $P_0=0$, the same identity is interpreted by deleting $e$ from $H$; if
$P_1=0$, it is interpreted by contracting $e$. Thus a $2$-sum component not containing the target
in an input minor is replaced, for the basis ratio in
Theorem~\ref{thm:regular-matroid-energy}, by the common element $e$ with
weight $m_e$.

\begin{proposition}
\label{prop:seymour-two-sum-input-minors}
Let $P=(N,e,\lambda)$ be a regular $f$-matroid span program over
$\mathfrak M_f$, let $w$ be positive weights on
$E(N)\setminus\{e\}$, and let $\phi_w$ be the corresponding weighted real
representation. For an input $z$, put
\[
    N_z^+:=N|(A_P(z)\cup\{e\}),
    \qquad
    N_z^-:=N/A_P(z).
\]
Then
\begin{equation}
    m_e(N_z^+;w)
    =
    \begin{cases}
        c_+(\phi_w,z)^{-1}, & f(z)=1,\\
        0, & f(z)=0,
    \end{cases}
    \label{eq:two-sum-positive-input-value}
\end{equation}
and
\begin{equation}
    m_e(N_z^-;w)
    =
    \begin{cases}
        \infty, & f(z)=1,\\
        c_-(\phi_w,z), & f(z)=0.
    \end{cases}
    \label{eq:two-sum-negative-input-value}
\end{equation}
\end{proposition}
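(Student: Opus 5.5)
The plan is to read everything off Theorem~\ref{thm:regular-matroid-energy}, after one bookkeeping identity that relates $P_0,P_1$ to $Z_{\mathrm{out}},Z_{\mathrm{in}}$. With $e$ as the distinguished element (the target of $P$) we have $w(B)=\prod_{b\in B\setminus\{e\}}w_b$. Hence $P_0(N',e;w)=Z_{\mathrm{out}}(N';w)$ and $P_1(N',e;w)=Z_{\mathrm{in}}(N';w)$ for every minor $N'$ of $N$ containing $e$. So $m_e(N';w)=Z_{\mathrm{out}}/Z_{\mathrm{in}}$ on $N'$, which is the reciprocal of the target-inclusion odds. We read it in the extended nonnegative reals, as fixed at the start of this section: $m_e=\infty$ when $P_1=0$ (target a loop), and $m_e=0$ when $P_0=0$ (target a coloop). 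At positive weights the two sums never vanish together, so these conventions are consistent.

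For the restriction minor $N_z^+$ we split by the value of $f(z)$.

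If $f(z)=1$, then $N_z^+$ is exactly the minor $M_z$ of Theorem~\ref{thm:regular-matroid-energy}. That theorem gives $c_+(\phi_w,z)=Z_{\mathrm{in}}/Z_{\mathrm{out}}$, so $m_e=c_+^{-1}$. This needs $Z_{\mathrm{out}}>0$. That holds because $e\in\cl(A_P(z))$: a basis of $A_P(z)$ is a basis of $N_z^+$ avoiding $e$. It also needs $Z_{\mathrm{in}}>0$, which holds because $e$ is a nonloop (nonconstancy, as in Proposition~\ref{prop:regular-representations-are-weights}).

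If $f(z)=0$, then $e\notin\cl(A_P(z))$. So $e$ is a coloop of $N_z^+$, every basis contains $e$, and $P_0=0$ while $P_1>0$. Thus $m_e=0$.

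For the contraction minor $N_z^-$ the argument is dual.

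If $f(z)=1$, then $e\in\cl_N(A_P(z))$. So $e$ is a loop of $N/A_P(z)$, giving $P_1=0$ and $P_0\ge1$ (the empty-rank basis sum, or any basis). Thus $m_e=\infty$.

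If $f(z)=0$, then $N_z^-=M_z$ is the contraction minor of Theorem~\ref{thm:regular-matroid-energy}. That theorem gives $c_-(\phi_w,z)=Z_{\mathrm{out}}/Z_{\mathrm{in}}=m_e$ directly. Here $Z_{\mathrm{in}}>0$ because $e$ is not a loop of the contraction. The degenerate coloop case $Z_{\mathrm{out}}=0$ gives $m_e=0=c_-$, which the theorem's final clause already covers.

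The only real subtlety is the weight bookkeeping. On a minor the weights are restricted, and contracted or deleted elements drop out of $w(B)$. Also, $P_1$ uses $w(B\setminus\{e\})$ while $Z_{\mathrm{in}}$ uses $w(B)$ with $\tau$ excluded, so these coincide only because $e$ is unweighted. I will make both points explicit first. After that, the four cases above are each a one-line check.
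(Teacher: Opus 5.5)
Your proposal is correct and follows the paper's proof. Both identify $P_0,P_1$ with $Z_{\mathrm{out}},Z_{\mathrm{in}}$, read the two nondegenerate cases off Theorem~\ref{thm:regular-matroid-energy}, and dispatch the others by noting that $e$ is a coloop of $N_z^+$ when $f(z)=0$ and a loop of $N_z^-$ when $f(z)=1$. One small slip: in the loop case you need only $P_0>0$, since every term is a positive product. The claim $P_0\ge1$ can fail for small weights.
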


\begin{proof}
If $f(z)=1$, then $e$ is spanned by the available elements. In $N_z^+$,
Theorem~\ref{thm:regular-matroid-energy} gives
$c_+(\phi_w,z)=P_1/P_0$, hence $m_e=P_0/P_1=c_+(\phi_w,z)^{-1}$.
If $f(z)=0$, then $e$ is a coloop of $N_z^+$, so $P_0=0$ and $m_e=0$.

For $N_z^-$, if $f(z)=0$, Theorem~\ref{thm:regular-matroid-energy} gives
$c_-(\phi_w,z)=P_0/P_1=m_e$. If $f(z)=1$, contraction of the available elements
makes $e$ a loop, so $P_1=0$ and $m_e=\infty$.
\end{proof}

If all non-target weights of $N$ are multiplied by $\lambda>0$, then
$P_0$ and $P_1$ are homogeneous of degrees $r(N)$ and $r(N)-1$,
respectively. Hence
\begin{equation}
    m_e(N;\lambda w)=\lambda m_e(N;w).
    \label{eq:two-sum-homogeneity}
\end{equation}
The quadratic formulas of Section~\ref{sec:regular-matroid-optimization}
also give the monotonicity used later: increasing an available positive
weight cannot increase reconstruction energy, while increasing a finite
weight in a negative contraction minor cannot decrease separation energy.

\subsection{Three-sums}
\label{subsec:seymour-three-sum}

For the local replacements in this subsection, $\oplus_3$ also denotes the
same cycle-space gluing along a coindependent triangle when a replacement
piece has fewer than seven elements. In particular, gluing $M(K_4)$ is the standard $\Delta$--$Y$ operation
~\cite{Seymour1980,Oxley2011}; see also
\cite[Sec.~2]{BercziMatravolgyiSchwarcz2023}.
The size requirement in Definition~\ref{def:regular-matroid-sums} remains in
force for proper sums in a Seymour decomposition.

Let $N$ be a regular matroid with distinguished triangle
$T=\{t_1,t_2,t_3\}$ and positive weights on $E(N)\setminus T$. For each
$L\subseteq T$, define
\begin{equation}
    \Phi_L(N,T;w)
    :=
    \sum_{\substack{S\subseteq E(N)\setminus T\text{ independent}\\
                     S\cup T\text{ spans }N\\
                     \cl_N(S)\cap T=L}}
        w(S).
    \label{eq:seymour-three-sum-phi}
\end{equation}
Since $T$ is a circuit, if $\cl_N(S)$ contains two elements of $T$, it
contains the third. The only possible intersections are
$\varnothing$, $\{t_1\}$, $\{t_2\}$, $\{t_3\}$, and $T$. Put
\begin{equation}
    a:=\Phi_T,
    \qquad
    p_i:=\Phi_{\{t_i\}}\quad(i=1,2,3),
    \qquad
    c:=\Phi_\varnothing.
    \label{eq:seymour-three-sum-five}
\end{equation}

\begin{lemma}
\label{lem:seymour-three-sum-five}
Let $M=H\oplus_3N$, where $H\cap N=T$ and the target belongs to $H\setminus T$.
For fixed weights on $H\setminus T$, each of
$Z_{\mathrm{in}}(M;w)$ and $Z_{\mathrm{out}}(M;w)$ is a linear function of
$a,p_1,p_2,p_3,c$, with coefficients depending only on $H$ and its weights.
\end{lemma}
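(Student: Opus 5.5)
We characterize the bases of the $3$-sum $M=H\oplus_3N$ by how they split across the two sides, and then sum over that split. Every element of $M$ lies in exactly one of $H\setminus T$ or $N\setminus T$, so a basis $B$ of $M$ is a disjoint union $B=B_H\sqcup B_N$ with $B_H\subseteq H\setminus T$ and $B_N\subseteq N\setminus T$. Its weight factors as $w(B)=w(B_H)\,w(B_N)$, and whether $\tau\in B$ is decided by $B_H$ alone.

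The key structural step is to show that whether $B_H\sqcup B_N$ is a basis of $M$ depends on $B_N$ only through the trace $L:=\cl_N(B_N)\cap T$, together with the condition that $B_N$ is independent and $B_N\cup T$ spans $N$. I would work in a common totally unimodular (equivalently, $\mathbb F_2$ or real) representation in which $M$ is the cycle-space gluing. Concretely, represent $M$ as the contraction, or equivalently the projection, of the generalized parallel connection $P_T(H,N)$ along $T$. Then
\begin{equation}
    M=P_T(H,N)\setminus T,
    \qquad
    r(M)=r(H)+r(N)-4 .
\end{equation}
This is the standard description of the $3$-sum in Seymour/Oxley for a triangle that is a modular flat of $M(K_4)$-type. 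In $P_T(H,N)$, the independence and spanning properties of $B_H\cup B_N$ are governed by the modular flat $\cl(T)$.

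With that representation in hand, we compute $r_{P_T}(B_H\cup B_N)$ by submodularity across the modular flat. A standard property of generalized parallel connections along a modular flat gives
\begin{equation}
    r(X\cup Y)=r_H\bigl(X\cup \cl_N(Y)\cap T\bigr)+r_N(Y)-r(\cl_N(Y)\cap T)
\end{equation}
for $X\subseteq H$ and $Y\subseteq N$. So $B_H\sqcup B_N$ is a basis of $M$ iff
\begin{enumerate}
\item $B_N$ is independent in $N$;
\item $B_H\cup L$ is independent and of the correct size in $H$, spanning the appropriate flat;
\item the total rank equals $r(M)$.
\end{enumerate}
Together these force $B_N\cup T$ to span $N$, with $L$ determined. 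The set $L$ ranges over $\varnothing$, $\{t_i\}$, and $T$, as noted before the lemma.

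Summing over bases, group the pairs by $L$:
\begin{equation}
    Z_\sigma(M;w)=\sum_{L}\Bigl(\sum_{B_H\,:\,(B_H,L)\ \mathrm{admissible},\ \sigma}w(B_H)\Bigr)\Phi_L(N,T;w).
\end{equation}
The inner sums depend only on $H$, its weights, $L$ and $\sigma$, which gives the claimed linearity in $a,p_1,p_2,p_3,c$.

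The main obstacle is the second step: establishing rigorously that admissibility of $B_N$ depends only on its trace on $T$. The difficulty is that $\oplus_3$ is defined through cycles rather than through the parallel connection. My first attempt would be to invoke the known identification of the $3$-sum with $P_T(H,N)\setminus T$ from Seymour1980/Oxley2011. I would then verify the rank formula using modularity of $T$ in $M(K_4)$. If that identification fails for the small replacement pieces, I would instead argue directly with $\mathbb F_2$ cycle spaces. In that approach a set $S$ is a basis iff it meets every cocircuit, and cocircuits of $M$ restrict to cocircuits of $H$ and $N$ that agree on $T$.
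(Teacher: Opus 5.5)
Your strategy is the paper's: split each basis as $B_H\sqcup B_N$, show that admissibility depends on $B_N$ only through its trace $L=\cl_N(B_N)\cap T$, and group by class. However, the structural step contains a concrete error that breaks the basis characterization as written.

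The $3$-sum is the \emph{deletion} $P_T(H,N)\setminus T$, as in your display, not a contraction or projection. Its rank is $r(H)+r(N)-2$, not $r(H)+r(N)-4$. The generalized parallel connection has rank $r(H)+r(N)-r(T)$, and deleting $T$ costs nothing: $T$ contains no cocircuit, so $E(H)\setminus T$ already spans $T$. The value $r(H)+r(N)-4$ is the rank of $P_T(H,N)/T=(H/T)\oplus(N/T)$, a direct sum that is not $M$. With that value, your condition (3) counts sets of the wrong size. Your claim that (1)--(3) ``force $B_N\cup T$ to span $N$'' then does not follow, since that deduction needs exactly the count $r(H)+r(N)-2$.

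Once the rank is corrected, your rank formula, which is valid in the binary amalgam, finishes the proof. A triangle of a binary matroid spans a full projective line, so it is modular. Hence $r_N(B_N)-r(L)=r_N(B_N\cup T)-2\le r(N)-2$, while $r_H(B_H\cup L)\le r(H)$. So $B_H\sqcup B_N$ is a basis of $M$ if and only if two things hold:
\begin{itemize}
\item $B_N$ is independent and $B_N\cup T$ spans $N$;
\item $B_H$ together with a basis of $L$ is a basis of $H$.
\end{itemize}
This is the paper's condition $U_{S_H}\oplus U_{S_N}=U$. It also makes your vague item (2) precise: the $H$-side coefficient of $a=\Phi_T$ is $c_H^\sigma$, that of $c=\Phi_\varnothing$ is $a_H^\sigma$, and that of $p_j$ is $\sum_{i\ne j}p_{H,i}^\sigma$.

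The paper reaches the same compatibility rule without $P_T$ or a rank formula. It glues binary representations along $U=\langle T\rangle$; independence of $S_H\cup S_N$ means $U_{S_H}\cap U_{S_N}=0$, and the basis size forces the two dimensions to sum to two. In that picture the identification with the cycle definition, which you flagged as the main obstacle, is immediate. A dependency $\sum_{x\in X}x=\sum_{y\in Y}y$ with $X\subseteq E(H)\setminus T$ and $Y\subseteq E(N)\setminus T$ has its common value in $\langle E(H)\rangle\cap\langle E(N)\rangle=U$. That value is therefore $0$ or some $t_i$, which is exactly the description of cycles as $C_1\triangle C_2$.
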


\begin{proof}
Use binary representations with the common triangle spanning a common
two-dimensional space $U$. For an independent set $S$ on either side with
$S\cup T$ spanning that side, put $U_S=\langle S\rangle\cap U$. Its dimension
is $0$, $1$, or $2$. Over $\mathbb F_2$, the only one-dimensional subspaces
of $U$ are the three lines represented by $t_1,t_2,t_3$; thus these are exactly
the five classes in~\eqref{eq:seymour-three-sum-five}.

A union $S_H\cup S_N$ is a basis of the glued matroid precisely when the
sets on each side are independent, span their side after adjoining $T$, and
$U_{S_H}\oplus U_{S_N}=U$. Indeed independence requires their intersection
to be zero, and the basis size $r(H)+r(N)-2$ requires their dimensions to
sum to two. The compatible class pairs are therefore $(T,\varnothing)$,
$(\varnothing,T)$, and $(\{t_i\},\{t_j\})$ with $i\ne j$; this is the
usual $3$-sum basis correspondence~\cite{Seymour1980,Oxley2011}; see also
\cite[Sec.~2]{BercziMatravolgyiSchwarcz2023}.

For $\sigma\in\{\mathrm{in},\mathrm{out}\}$, restrict the $H$-side class
sums according to target inclusion, omitting the target from the weight
product, and denote them by $a_H^\sigma,p_{H,i}^\sigma,c_H^\sigma$. Then
\[
 Z_\sigma(H\oplus_3N)
 =a_H^\sigma c+c_H^\sigma a+\sum_{i\ne j}p_{H,i}^\sigma p_j.
\]
This proves the stated linear dependence.
\end{proof}

\begin{proposition}
\label{prop:regular-triangle-identity}
For a regular matroid $N$ and a distinguished triangle $T$ satisfying the
$3$-sum hypotheses, the quantities in
\eqref{eq:seymour-three-sum-five} satisfy
\begin{equation}
    ac=p_1p_2+p_1p_3+p_2p_3.
    \label{eq:regular-triangle-identity}
\end{equation}
\end{proposition}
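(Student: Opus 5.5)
We prove \eqref{eq:regular-triangle-identity} by reading each of $a$, $p_i$, $c$ as a Cauchy--Binet sum for one weighted Gram matrix. The identity then becomes a $2\times2$ Jacobi complementary-minor relation combined with the linear dependence on the triangle.

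\textbf{Setup.} Fix a totally unimodular representation of $N$ and delete dependent rows, so it is an $r\times|E(N)|$ matrix of full row rank, with $r=r(N)$. Let $C$ be the matrix of weighted columns $\sqrt{w_e}\,v_e$ for $e\in E(N)\setminus T$, and put $K:=CC^{\mathsf T}$ and $G:=\operatorname{adj}K$. Since $T$ is a circuit of a regular matroid, its primitive dependence has unit coefficients. Thus $v_{t_3}=\varepsilon_1v_{t_1}+\varepsilon_2v_{t_2}$ with $\varepsilon_i\in\{\pm1\}$, after possibly changing column signs, which does not affect any basis weight. The hypothesis that $T$ contains no cocircuit means $N\setminus T$ has rank $r$. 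Hence $a=\det K>0$ by Cauchy--Binet, since total unimodularity makes every nonzero maximal minor squared equal to $1$ times the weight.

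\textbf{Single triangle elements.} For a column vector $u$, Cauchy--Binet applied to the bordered matrices gives
\[
u^{\mathsf T}Gu=\sum_{|S|=r-1}w(S)\det[C_S\,u]^2,
\]
where only terms with $S$ independent and $u\notin\cl(S)$ contribute, each with weight $w(S)$ by total unimodularity. Such $S$ has size $r-1$ and spans $N$ together with $T$. Its class is therefore a singleton $\{t_j\}$, because class $T$ has size $r$ and class $\varnothing$ has size $r-2$. The condition $t_i\notin\cl(S)$ then means $j\neq i$. Setting $g_{ij}:=v_{t_i}^{\mathsf T}Gv_{t_j}$, we get
\[
g_{11}=p_2+p_3,\qquad g_{22}=p_1+p_3,\qquad g_{33}=p_1+p_2.
\]
Expanding $g_{33}$ through $v_{t_3}=\varepsilon_1v_{t_1}+\varepsilon_2v_{t_2}$ gives $g_{33}=g_{11}+g_{22}+2\varepsilon_1\varepsilon_2g_{12}$. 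Hence $g_{12}^2=p_3^2$.

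\textbf{Pair of triangle elements.} Let $X=[v_{t_1}\,v_{t_2}]$. The generalized Cauchy--Binet identity, which is the Schur complement together with Jacobi's complementary-minor theorem, gives
\[
\sum_{|S|=r-2}w(S)\det[C_S\,X]^2=\det K\cdot\det(X^{\mathsf T}K^{-1}X)=\frac{\det(X^{\mathsf T}GX)}{\det K}.
\]
The left side counts independent $S$ of size $r-2$ with $\langle S\rangle\cap U=0$, that is, the class $\varnothing$, so it equals $c$. Therefore
\[
ac=g_{11}g_{22}-g_{12}^2=(p_2+p_3)(p_1+p_3)-p_3^2=p_1p_2+p_1p_3+p_2p_3.
\]

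\textbf{Main obstacle.} The delicate step is verifying the class identifications: each nonzero minor must contribute exactly $w(S)$, and the size/closure conditions must match $\Phi_L$ exactly. In particular, for class $\varnothing$ the condition ``$S\cup T$ spans'' is equivalent to nonsingularity of $[C_S\,X]$, and a bordered determinant with $u=v_{t_i}$ vanishes precisely when $t_i\in\cl(S)$. A secondary point is that the Jacobi step uses $\det K\neq0$, which is exactly where the no-cocircuit hypothesis enters. Alternatively, one could state the identity as a polynomial identity in $G$ and argue by continuity.
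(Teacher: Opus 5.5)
Your proof is correct and is essentially the paper's argument. The paper also reads $a$, $p_2+p_3$, $p_1+p_3$, $p_1+p_2$ and $c$ as coefficients of $\det\bigl(K+\sum_i x_it_it_i^{\mathsf T}\bigr)$, uses the triangle dependence to get $t_1^{\mathsf T}K^{-1}t_2=-p_3/a$ (your $g_{12}^2=p_3^2$, differing only by the factor $a$ from your use of $\operatorname{adj}K$ in place of $K^{-1}$), and identifies $c/a$ with the $2\times2$ Gram determinant $AB-s^2$.

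One cosmetic fix: since $C_S$ already carries the factors $\sqrt{w_e}$, write $\det[C_S\,u]^2$, or $w(S)\det[V_S\,u]^2$ with unweighted columns, rather than $w(S)\det[C_S\,u]^2$, which counts the weights twice.
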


\begin{proof}
Choose a full-row-rank totally unimodular real representation whose columns for $T$ are
$t_1,t_2,t_3$. After changing column signs, assume $t_3=t_1+t_2$. Since $T$
contains no cocircuit, $E(N)\setminus T$ spans $N$. If $C$ is the matrix of
non-triangle columns and $W$ is the diagonal matrix of their positive
weights, then $K:=CWC^{\mathsf T}$ is positive definite. By Cauchy--Binet,
\begin{equation}
\begin{aligned}
    \det\!\left(K+\sum_{i=1}^3x_it_it_i^{\mathsf T}\right)
    ={}&a
      +(p_2+p_3)x_1+(p_1+p_3)x_2+(p_1+p_2)x_3\\
     &+c(x_1x_2+x_1x_3+x_2x_3).
    \label{eq:triangle-basis-polynomial}
\end{aligned}
\end{equation}
The determinant representation of weighted basis-generating polynomials is
standard for regular matroids~\cite{ChoeOxleySokalWagner2004}; the use of
Pl\"ucker coordinates for distinguished elements is developed in
\cite{Chaiken2006PortedTutte}.

Put $u_i=K^{-1/2}t_i$, so $u_3=u_1+u_2$, and write
$A=\langle u_1,u_1\rangle$, $B=\langle u_2,u_2\rangle$, and
$s=\langle u_1,u_2\rangle$. Since $\det K=a$, the matrix determinant lemma
and comparison with~\eqref{eq:triangle-basis-polynomial} give
\begin{equation}
    \frac{p_1}{a}=B+s,
    \qquad
    \frac{p_2}{a}=A+s,
    \qquad
    \frac{p_3}{a}=-s,
    \qquad
    \frac{c}{a}=AB-s^2.
\end{equation}
In particular, $s=-p_3/a\leq0$, with strict inequality when $p_3>0$;
no sign assumption on $s$ is needed. Therefore
\begin{equation}
    \frac{p_1p_2+p_1p_3+p_2p_3}{a^2}
    =AB-s^2
    =\frac ca,
\end{equation}
and~\eqref{eq:regular-triangle-identity} follows.
\end{proof}

\begin{theorem}
\label{thm:seymour-three-sum}
Let $M=H\oplus_3N$, where $H\cap N=T=\{t_1,t_2,t_3\}$ and the target
belongs to $H\setminus T$. Assume $p_1p_2p_3>0$, and put
$\alpha_i:=a/p_i$ for $i=1,2,3$. Let $K$ be a copy of $M(K_4)$ whose common
triangle with $H$ is $T$, and give its three non-triangle edges weights
$\alpha_1,\alpha_2,\alpha_3$, where spoke $i$ meets the vertex opposite
triangle edge $t_i$. Then
\begin{equation}
    \frac{Z_{\mathrm{in}}(H\oplus_3N;w)}
         {Z_{\mathrm{out}}(H\oplus_3N;w)}
    =
    \frac{Z_{\mathrm{in}}(H\oplus_3K;w)}
         {Z_{\mathrm{out}}(H\oplus_3K;w)}.
    \label{eq:seymour-three-sum-ratio}
\end{equation}
\end{theorem}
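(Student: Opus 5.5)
By Lemma~\ref{lem:seymour-three-sum-five}, each of $Z_{\mathrm{in}}(H\oplus_3N;w)$ and $Z_{\mathrm{out}}(H\oplus_3N;w)$ is given by the bilinear formula
\[
 Z_\sigma=a_H^\sigma c+c_H^\sigma a+\sum_{i\ne j}p_{H,i}^\sigma p_j .
\]
Here the $H$-side coefficients depend only on $H$ and its weights. The same formula applies verbatim to $H\oplus_3K$, with $(a,p_1,p_2,p_3,c)$ replaced by the five class sums $(a',p_1',p_2',p_3',c')$ of the weighted $M(K_4)$ relative to $T$. The lemma's proof uses only binary representations and the gluing rule, so it covers the small-piece gluing convention stated at the start of this subsection. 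It therefore suffices to show that $(a',p_1',p_2',p_3',c')$ is a common positive multiple $\mu$ of $(a,p_1,p_2,p_3,c)$. Then both $Z_{\mathrm{in}}$ and $Z_{\mathrm{out}}$ scale by $\mu$, and the ratio is unchanged.

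The core step is a direct enumeration for $K$. Let the spokes $s_1,s_2,s_3$ have weights $\alpha_i$, with $s_i$ joining the apex to the vertex opposite $t_i$. Independent sets $S$ of spokes with $S\cup T$ spanning give the following classes.
\begin{itemize}
\item $S=\varnothing$ gives $\cl(S)\cap T=\varnothing$, so $c'=1$.
\item $S=\{s_i\}$ also gives $\cl(S)\cap T=\varnothing$, since a single spoke spans no triangle edge. These sets are nonetheless counted in $c'$ only if $S\cup T$ spans.
\end{itemize}
Since $r(K_4)=3$ and $r(T)=2$, the spanning requirement forces $|S|\ge1$. This gives $c'=\alpha_1+\alpha_2+\alpha_3$ from single spokes, and $\varnothing$ is excluded.

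The pair $\{s_j,s_k\}$ spans exactly the triangle edge joining the far endpoints of $s_j$ and $s_k$. Those endpoints are the vertices opposite $t_j$ and $t_k$, so the edge joining them is $t_i$ with $\{i,j,k\}=\{1,2,3\}$. Hence $p_i'=\alpha_j\alpha_k$. All three spokes span everything, so $a'=\alpha_1\alpha_2\alpha_3$.

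Now substitute $\alpha_i=a/p_i$:
\[
 a'=\frac{a^3}{p_1p_2p_3},\qquad p_i'=\frac{a^2}{p_jp_k}=\frac{a^2p_i}{p_1p_2p_3},
\]
\[
 c'=a\Bigl(\frac1{p_1}+\frac1{p_2}+\frac1{p_3}\Bigr)=\frac{a(p_1p_2+p_1p_3+p_2p_3)}{p_1p_2p_3}.
\]
Take $\mu=a^2/(p_1p_2p_3)$. Then $a'=\mu a$ and $p_i'=\mu p_i$ immediately. For the last class, $c'=\mu c$ is equivalent to $a(p_1p_2+p_1p_3+p_2p_3)=a^2c$, which is exactly Proposition~\ref{prop:regular-triangle-identity}.

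The main obstacle is checking that the identity's hypotheses hold and that the labeling is correct. The identity requires the $3$-sum hypotheses on $N$: $T$ is a circuit containing no cocircuit. We also need $a>0$, which holds because $E(N)\setminus T$ spans $N$, and $\mu>0$, which follows from $p_1p_2p_3>0$. On the labeling side, one must confirm that the spoke incident to the vertex opposite $t_i$ really yields $p_i'=\alpha_j\alpha_k$ rather than $\alpha_i\cdot(\cdot)$, by tracing the $K_4$ picture once.

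Finally, $\mu$ does not depend on $\sigma$, so it cancels in the ratio. This gives \eqref{eq:seymour-three-sum-ratio}, with extended-real conventions if some $H$-side sums vanish.
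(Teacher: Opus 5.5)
Your proposal is correct and is essentially the paper's proof. You compute the five class sums of the weighted $M(K_4)$ as $(\alpha_1\alpha_2\alpha_3,\alpha_2\alpha_3,\alpha_1\alpha_3,\alpha_1\alpha_2,\alpha_1+\alpha_2+\alpha_3)$, use Proposition~\ref{prop:regular-triangle-identity} to show they are $a^2/(p_1p_2p_3)$ times those of $N$, and conclude by the linearity in Lemma~\ref{lem:seymour-three-sum-five}. The first bullet asserting $c'=1$ from $S=\varnothing$ should simply be deleted, since you correctly retract it yourself: the spanning condition excludes $S=\varnothing$.
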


\begin{proof}
For the weighted $K_4$ piece, the five quantities in
\eqref{eq:seymour-three-sum-five} are
\begin{equation}
    \left(
        \alpha_1\alpha_2\alpha_3,
        \alpha_2\alpha_3,
        \alpha_1\alpha_3,
        \alpha_1\alpha_2,
        \alpha_1+\alpha_2+\alpha_3
    \right).
    \label{eq:k4-five-sums}
\end{equation}
With $\alpha_i=a/p_i$, the first four entries are
$\lambda(a,p_1,p_2,p_3)$, where
$\lambda:=a^2/(p_1p_2p_3)$. Proposition~\ref{prop:regular-triangle-identity}
gives $\alpha_1+\alpha_2+\alpha_3=\lambda c$, so all five quantities for
$K$ are $\lambda$ times those for $N$. Lemma~\ref{lem:seymour-three-sum-five}
thus multiplies both $Z_{\mathrm{in}}$ and $Z_{\mathrm{out}}$ by the
same factor $\lambda$ when $N$ is replaced by $K$.
\end{proof}

The zero cases have an explicit replacement as well. The hypotheses give
$a>0$ because $N\setminus T$ spans, and $c>0$ because an independent pair
in $T$ extends to a basis. Thus~\eqref{eq:regular-triangle-identity} permits
at most one $p_i$ to vanish. If $p_3=0$, then $p_1p_2=ac>0$.
Contract spoke $3$ of $K_4$ and give the remaining spokes weights
$\alpha_1=a/p_1$ and $\alpha_2=a/p_2$. Its five class sums are
\[
 (\alpha_1\alpha_2,\alpha_2,\alpha_1,0,1)
 =\frac{a}{p_1p_2}(a,p_1,p_2,0,c).
\]
The same linearity proof preserves the ratio. Permuting the indices covers
the other zero cases. Hence every such component has a finite weighted
graphic replacement, possibly with one spoke contracted.

\subsection{Optimization on a decomposition}
\label{subsec:seymour-fixed-weight}

Fix a positive weight assignment $w$. For each input $z$, choose a
$\{1,2,3\}$-decomposition of the regular minor $M_z$ from
Theorem~\ref{thm:regular-matroid-energy}. Apply
Proposition~\ref{prop:seymour-one-sum}, Theorem~\ref{thm:seymour-two-sum},
and Theorem~\ref{thm:seymour-three-sum} successively to summands not
containing $\tau_P$. If $(\widetilde M_z(w),\widetilde w_z(w))$ is the
result, define
\begin{equation}
    \mathscr R_z(w)
    :=
    \frac{Z_{\mathrm{in}}(\widetilde M_z(w);\widetilde w_z(w))}
         {Z_{\mathrm{out}}(\widetilde M_z(w);\widetilde w_z(w))}.
    \label{eq:seymour-reduced-ratio}
\end{equation}
Each reduction preserves the target-basis ratio, so
\begin{equation}
    \mathscr R_z(w)
    =
    \frac{Z_{\mathrm{in}}(M_z;w)}{Z_{\mathrm{out}}(M_z;w)}.
    \label{eq:seymour-ratio-invariant}
\end{equation}

\begin{theorem}
\label{thm:seymour-linear-cost}
Let $P$ be a regular matroid span program computing a nonconstant function.
Then
\begin{equation}
    \operatorname{lincost}(P)
    =
    \inf_{w>0}
    \left(
        \max_{f(x)=1}\mathscr R_x(w)
        \max_{f(y)=0}\mathscr R_y(w)^{-1}
    \right)^{1/2}.
    \label{eq:seymour-linear-cost}
\end{equation}
\end{theorem}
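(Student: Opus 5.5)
The statement combines Corollary~\ref{cor:regular-linear-cost} with the invariance identity \eqref{eq:seymour-ratio-invariant}; the proof substitutes one into the other input by input. Fix $w>0$. For each positive input $x$, Corollary~\ref{cor:regular-linear-cost} uses the factor $Z_{\mathrm{in}}(M_x;w)/Z_{\mathrm{out}}(M_x;w)$, which equals $\mathscr R_x(w)$ by \eqref{eq:seymour-ratio-invariant}. For each negative input $y$, the factor is $Z_{\mathrm{out}}(M_y;w)/Z_{\mathrm{in}}(M_y;w)$, which is the reciprocal of $\mathscr R_y(w)$. Substituting, taking the two maxima, and then taking the infimum over $w$ gives \eqref{eq:seymour-linear-cost}.

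The substantive step is justifying \eqref{eq:seymour-ratio-invariant} for every input. I would prove it by induction along the chosen decomposition tree, applying one reduction at a time to a summand that avoids $\tau_P$:
\begin{itemize}
\item A $1$-sum component not containing the target is discarded by Proposition~\ref{prop:seymour-one-sum}, since its factor $Z(N_1;w)$ cancels.
\item A $2$-sum child not containing the target is replaced by the common element $e$ with weight $m_e$, by Theorem~\ref{thm:seymour-two-sum}. If $P_0=0$ or $P_1=0$, the remark after that theorem applies instead: delete or contract $e$ in $H$. The ratio is unchanged in either case.
\item A $3$-sum child not containing the target is replaced by a weighted $K_4$, by Theorem~\ref{thm:seymour-three-sum}. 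If some $p_i=0$, use the $K_4$ with one spoke contracted, as in the remark following that theorem.
\end{itemize}
Each step preserves $Z_{\mathrm{in}}/Z_{\mathrm{out}}$, including the extended values $0$ and $\infty$. Because the decomposition tree is finite, the composition of these steps preserves the ratio, which is \eqref{eq:seymour-ratio-invariant}.

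The main obstacle is bookkeeping at degenerate and boundary cases. The reduced ratio must be well defined even when the decomposition is not unique or when replacement pieces are smaller than the size bounds in Definition~\ref{def:regular-matroid-sums}. It suffices to note that $\mathscr R_z(w)$ is defined from a chosen decomposition but is shown to equal an intrinsic quantity, so the choice does not matter. The replacement sums are used only as cycle-space gluings, which the basis-correspondence arguments of Lemma~\ref{lem:seymour-three-sum-five} and Theorem~\ref{thm:seymour-two-sum} handle.

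The extended-real conventions also need care. A negative input with $\tau_P$ a coloop of $M_y$ gives $\mathscr R_y=\infty$ and contributes reciprocal $0$, exactly as in Theorem~\ref{thm:regular-matroid-energy}. Positive inputs always have $Z_{\mathrm{out}}(M_x;w)>0$, because the available elements span. So the product inside the infimum matches the corollary's product term by term, including the convention for an empty target fiber, and the equality of infima follows.
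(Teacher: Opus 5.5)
Your proposal is correct and follows the paper's proof. The paper likewise substitutes the ratio invariance \eqref{eq:seymour-ratio-invariant} into Corollary~\ref{cor:regular-linear-cost}, and it justifies that invariance only by the fact that each $1$-, $2$-, and $3$-sum reduction preserves $Z_{\mathrm{in}}/Z_{\mathrm{out}}$; your extended-real checks (finite positive $\mathscr R_x$, and $\mathscr R_y^{-1}=0$ exactly when $\tau_P$ is a coloop of $M_y$) agree with Theorem~\ref{thm:regular-matroid-energy} and exclude any $0\cdot\infty$ ambiguity.
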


\begin{proof}
By~\eqref{eq:seymour-ratio-invariant}, the quantities
$\mathscr R_x(w)$ and $\mathscr R_y(w)^{-1}$ are exactly the reconstruction
and separation energies in Theorem~\ref{thm:regular-matroid-energy}.
Substituting them into Corollary~\ref{cor:regular-linear-cost} gives
\eqref{eq:seymour-linear-cost}.
\end{proof}

The optimization variables are the same as in
Section~\ref{sec:regular-matroid-optimization}: one positive weight is
assigned to each non-target program element, and the same assignment is used
for every input. The decomposition changes how the objective is evaluated.
A $1$-sum factor not containing the target cancels, a $2$-sum component is replaced by the
single weight $P_0/P_1$, and a $3$-sum component is replaced by three weights
on $M(K_4)$. Thus the fixed-program optimization from Section~\ref{sec:regular-matroid-optimization} can be
evaluated through the smaller matroids in a Seymour decomposition without
changing the shared weight assignment being optimized.

\section{The \texorpdfstring{$R_{10}$}{R10} Summand}
\label{sec:r10-summand}

Theorem~\ref{thm:seymour-linear-cost} reduces the fixed-weight objective of a
regular program to target-basis ratios on the summands of a
$\{1,2,3\}$-decomposition. Graphic and cographic summands retain the
structure of graphs and their duals. The only other leaves allowed by
Seymour's theorem are copies of $R_{10}$.

Unlike the graphic and cographic families, $R_{10}$ is a fixed ten-element
matroid. Its contribution can be described by finitely many
weighted basis polynomials. Its deletion and contraction structure is equally
restrictive: a single deletion gives $M(K_{3,3})$ and a single contraction
gives its dual~\cite[Lem.~6.6.9]{Oxley2011}. Thus the exceptional part of the
fixed-weight calculation is confined to copies of $R_{10}$ that remain intact.

\subsection{Attachments in a decomposition}
\label{subsec:r10-attachments}

\begin{proposition}
\label{prop:r10-no-three-sum}
Let $N\cong R_{10}$ be a non-root leaf in a $\{1,2,3\}$-decomposition.  The parent of
$N$ is not a $3$-sum.  Hence, if $N$ lies in the same connected component as
the target, it meets the rest of the decomposition through a $2$-sum.
\end{proposition}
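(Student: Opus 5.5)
The plan is to use the definition of a $3$-sum (Definition~\ref{def:regular-matroid-sums}). If the parent of $N$ were $N\oplus_3 N'$ or $N'\oplus_3 N$, the common set $T=E(N)\cap E(N')$ would have to be a circuit of both summands, and in particular a $3$-element circuit, that is, a triangle, of $N\cong R_{10}$. So it suffices to show that $R_{10}$ has no triangles, and the argument reduces to a statement about the fixed ten-element matroid.

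To prove that $R_{10}$ is triangle-free, I would use a parity argument on the totally unimodular matrix of Figure~\ref{fig:r10-k33}, reduced mod $2$. Over $\mathbb F_2$ every column has exactly three nonzero entries: columns $0$--$4$ are unit vectors, and columns $5$--$9$ each have three ones. A triangle $\{a,b,c\}$ would be a minimal dependent set, so over $\mathbb F_2$ it would satisfy $v_a+v_b+v_c=0$, since a regular matroid's $\mathbb F_2$ representation gives the same matroid. The weight of each column is $1$ or $3$, which is odd, so a sum of three of them with total zero must have even overlap structure. I would check the cases directly. Three unit vectors are independent. Two unit vectors $e_i+e_j$ have weight $2$, which never equals a column of weight $1$ or $3$. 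One unit vector plus a weight-$3$ column gives weight $2$ or $4$, which is never a column. For three weight-$3$ columns, it is enough to check that no two of them sum to a third, which is finitely many checks on cyclic shifts of $(1,1,0,0,1)$. Alternatively, I would cite the standard fact that every circuit of $R_{10}$ has size $4$ or $6$ (its circuits correspond to even cycles in the $\mathbb F_2$ description), and so $R_{10}$ has girth $4$ \cite{Oxley2011}. I expect this case check, though finite, to be the main obstacle in presenting the argument cleanly, so I would lean on the parity reduction, where any two weight-$3$ shifts overlap in $1$ or $2$ positions.

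For the second sentence, let $N$ be in the target's connected component and be a non-root leaf. Its parent is then a $1$-, $2$-, or $3$-sum, and by the above it is not a $3$-sum. If the parent were a $1$-sum, then $N$ would be a connected component (a direct summand) of the parent, and hence of every ancestor. This holds because $1$-sums, $2$-sums and $3$-sums taken higher up never glue across elements of $N$ that have already been separated. More carefully, every higher sum shares elements only with the sibling side, so $N$'s elements stay in a separate component of the root unless $N$ itself contributes shared elements, which it cannot do as a $1$-summand. So if $N$ is connected to the target, its attachment must be a $2$-sum. I would phrase this last point using the cycle description of sums: cycles of the result restricted to $E(N)$ are cycles of $N$ alone, so no circuit meets both $E(N)$ and the target.
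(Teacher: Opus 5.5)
Your proof that the parent of $N$ is not a $3$-sum is correct, but it reaches triangle-freeness of $R_{10}$ by a different route. The paper deletes an element outside a putative triangle and uses $R_{10}\setminus e\cong M(K_{3,3})$ together with the fact that $K_{3,3}$ is bipartite. Your mod-$2$ check needs only Figure~\ref{fig:r10-k33} and the fact that a totally unimodular matrix represents the same matroid over $\mathbb F_2$. Your case analysis also collapses to one line. Over $\mathbb F_2$ the weight of a sum is congruent mod $2$ to the sum of the weights, and every column has odd weight ($1$ or $3$; your phrase ``exactly three nonzero entries'' should say this). So any sum of an odd number of columns is nonzero. This even shows that every circuit of $R_{10}$ has even size, and it avoids citing Oxley's identification of the single-element deletions.

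The gap is in the second sentence. Your step ``if the parent were a $1$-sum, then $N$ would be a direct summand of the parent, and hence of every ancestor'' is false, because a $1$-summand at the parent can still supply the common element of a $2$-sum higher up. Take $P=N\oplus_1N'$ and a connected regular $S$ with $E(S)\cap E(P)=\{e\}$ for some $e\in E(N)$, chosen so that the $2$-sum hypotheses hold. Then $Q=P\oplus_2S$ is a legitimate node: $e$ is neither a loop nor a coloop of $P$, since $R_{10}$ is connected. Moreover, $Q$ has circuits $(C_1\setminus e)\cup(C_2\setminus e)$, where $C_1$ is a circuit of $N$ and $C_2$ a circuit of $S$, both through $e$. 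If the target lies in $E(S)\setminus e$, then $N$ is in the target's component even though its parent is a $1$-sum. So your claim that cycles of the result restricted to $E(N)$ are cycles of $N$ fails, and so does your conclusion that the parent must be a $2$-sum.

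What survives is the weaker reading in which the paper phrases its conclusion. Since $1$-sums share no elements, $N$ reaches the target's component only through some ancestor sum whose common elements meet $E(N)$. To know that this sum is a $2$-sum, you must still exclude a $3$-sum at a higher ancestor whose triangle meets the surviving elements of $N$; your triangle argument covers only the parent. And if you want the attaching $2$-sum to be the parent of $N$, you have to re-associate the tree, e.g.\ $(N\oplus_1N')\oplus_2S=(N\oplus_2S)\oplus_1N'$, rather than assert it for every decomposition.
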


\begin{proof}
A $3$-sum requires the two summands to share a three-element circuit by
Definition~\ref{def:regular-matroid-sums}. Fix a three-element set
$C\subseteq E(R_{10})$ and choose $e\notin C$. If $C$ were a circuit of
$R_{10}$, then it would remain a circuit of $R_{10}\setminus e$. By
Oxley's description of the single-element minors,
$R_{10}\setminus e\cong M(K_{3,3})$~\cite[Lem.~6.6.9]{Oxley2011}, but
$K_{3,3}$ has no three-edge cycle. Hence $R_{10}$ has no three-element
circuit. Thus $N$ can occur at its parent only in a $1$- or $2$-sum. A
$1$-sum separates connected components, so a copy of $R_{10}$ lying in the
target component is joined to the rest by a $2$-sum.
\end{proof}

Thus a copy of $R_{10}$ not containing the target enters the fixed-weight calculation
through the ratio $P_0/P_1$ already defined in
\eqref{eq:seymour-two-sum-P}.  No additional $3$-sum formula is required for
an undeleted copy of $R_{10}$.

\subsection{Weighted basis sums}
\label{subsec:r10-weighted-bases}

Fix $e\in E(R_{10})$.  For weights $w$ on $E(R_{10})\setminus\{e\}$, write
\begin{equation}
    P_0^e(w):=P_0(R_{10},e;w),
    \qquad
    P_1^e(w):=P_1(R_{10},e;w).
    \label{eq:r10-P01}
\end{equation}

\begin{proposition}
\label{prop:r10-basis-polynomials}
For every $e\in E(R_{10})$, the matroid $R_{10}$ has $162$ bases, of which
$81$ contain $e$ and $81$ omit $e$.  Consequently $P_0^e$ and $P_1^e$ are
square-free polynomials with $81$ monomials of coefficient one, of degrees
five and four respectively.  At unit weights,
\begin{equation}
    P_0^e(\mathbf 1)=P_1^e(\mathbf 1)=81.
    \label{eq:r10-unit-P01}
\end{equation}
\end{proposition}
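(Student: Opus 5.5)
Since $R_{10}$ is regular and represented by the totally unimodular matrix $A=[I_5\mid D]$ of Figure~\ref{fig:r10-k33}, Cauchy--Binet with unit weights gives $\det(AA^{\mathsf T})=\sum_{|S|=5}\det(A_S)^2$, and each term equals $1$ exactly when $S$ is a basis. So the first step is to compute $\det(AA^{\mathsf T})=\det(I_5+DD^{\mathsf T})$ and check that it equals $162$. The matrix $D$ is a circulant-type matrix: each row has entries $1,-1,1$ in cyclically shifted positions. Hence $DD^{\mathsf T}$ is a $5\times5$ circulant, and its eigenvalues come from the discrete Fourier transform of its first row. We then take the product of $1+\text{eigenvalue}$. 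As a sanity check, the count should also satisfy $162=81+81$. Equivalently, the number of bases of $R_{10}$ is a standard value (Oxley lists it), and the determinant computation verifies it.

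For the split over a fixed element $e$, I would avoid enumerating bases. The bases omitting $e$ are exactly the bases of $R_{10}\setminus e\cong M(K_{3,3})$; this uses that $e$ is not a coloop, and $R_{10}$ is $3$-connected. The bases containing $e$ correspond bijectively to the bases of $R_{10}/e\cong M^*(K_{3,3})$; this uses that $e$ is not a loop. Both isomorphisms are \cite[Lem.~6.6.9]{Oxley2011}, invoked for every $e$. A cographic matroid has as many bases as the graphic one, so both counts equal the number of spanning trees of $K_{3,3}$. By the bipartite Matrix-Tree formula $\tau(K_{m,n})=m^{n-1}n^{m-1}$, this number is $3^2\cdot3^2=81$. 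This gives $81+81=162$ directly, so the circulant computation becomes only a consistency check.

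The polynomial statements follow from the definitions in \eqref{eq:seymour-two-sum-P}. $P_0^e$ sums $w(B)=\prod_{f\in B}w_f$ over bases omitting $e$, and each such $B$ has $5$ elements. $P_1^e$ sums $w(B\setminus\{e\})$, a product over $4$ elements. Distinct bases give distinct square-free monomials, so every coefficient is $1$. Setting all weights to $1$ gives \eqref{eq:r10-unit-P01}.

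The only real obstacle is making sure the appeal to Oxley's lemma is legitimate for \emph{every} $e$, not just for element $0$ as drawn in Figure~\ref{fig:r10-k33}. I would justify this by the transitivity of the automorphism group of $R_{10}$ on its elements, which is part of the same standard description. If needed, the cyclic symmetry visible in $D$, together with the swap of the identity block and $D$, supplies explicit automorphisms.
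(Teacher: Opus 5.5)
Your proof is correct, but it takes a different route from the paper.

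\textbf{The paper's route.} It takes the total of $162$ bases from the exact enumeration recorded in Appendix~\ref{app:r10-atlas-geometry}, and then splits that total by double counting. Each basis has five elements, so there are $162\cdot5=810$ element--basis incidences. Transitivity of $\operatorname{Aut}(R_{10})$ on the ten elements then forces $810/10=81$ bases through each $e$, and the remaining $81$ omit $e$.

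\textbf{Your route.} You obtain the two halves separately from the single-element minors $R_{10}\setminus e\cong M(K_{3,3})$ and $R_{10}/e\cong M^*(K_{3,3})$, which the paper itself records for every $e$ in Proposition~\ref{prop:r10-proper-minors}. Combined with the spanning-tree count $3^2\cdot3^2=81$ for $K_{3,3}$, this makes $162$ an output of the argument rather than an input.

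\textbf{Comparison.} Your argument is computer-free. It also explains why the split is even: both halves count spanning trees of $K_{3,3}$. In the paper's argument, the equality of the two halves is the arithmetic fact $r(R_{10})/|E(R_{10})|=1/2$. The paper's argument, in turn, needs only the total count and element-transitivity, not the identification of the minors.

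\textbf{Your consistency check.} The circulant calculation does go through. $D$ is the symmetric circulant with first row $(-1,1,0,0,1)$. Its eigenvalues are $1$, then $(\sqrt5-3)/2$ twice, then $-(\sqrt5+3)/2$ twice. Hence
\[
\det(AA^{\mathsf T})=\det(I_5+D^2)=2\cdot\Bigl(\tfrac{9-3\sqrt5}{2}\cdot\tfrac{9+3\sqrt5}{2}\Bigr)^2=2\cdot 9^2=162 .
\]
Combined with the paper's double counting, this would also give a computer-free proof.

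\textbf{Your last paragraph.} The worry about transitivity is moot once you cite the single-element minor isomorphisms for every $e$, as the paper does.
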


\begin{proof}
Appendix~\ref{app:r10-atlas-geometry} records an exact enumeration of the
$162$ bases and the transitive action of $\operatorname{Aut}(R_{10})$ on the
ten elements.  Since every basis has five elements, there are
$162\cdot5=810$ element--basis incidences.  Transitivity gives $81$
incidences at each element, so exactly $81$ bases contain $e$ and the other
$81$ omit it.  The assertions about $P_0^e$ and $P_1^e$ then follow from
\eqref{eq:seymour-two-sum-P}.
\end{proof}

If $R_{10}$ is a side of a $2$-sum that does not contain the target at $e$, then
Theorem~\ref{thm:seymour-two-sum} replaces it by the common element with
weight
\begin{equation}
    m_e(w)=\frac{P_0^e(w)}{P_1^e(w)}.
    \label{eq:r10-effective-weight}
\end{equation}
If instead $e$ is the target of the copy of $R_{10}$, its target-basis ratio
is $P_1^e/P_0^e$. Thus an intact copy contributes no new kind of quantity to
the Section~\ref{sec:seymour-calculus} calculation: it contributes the same
weighted basis ratio already required by a $2$-sum. The finite size of
$R_{10}$ makes the numerator and denominator explicit polynomials.

\subsection{Symmetry}
\label{subsec:r10-symmetry}

\begin{proposition}
\label{prop:r10-symmetry}
The automorphism group of $R_{10}$ has order $720$ and acts transitively on
$E(R_{10})$.  If $\varphi\in\operatorname{Aut}(R_{10})$ and
$(\varphi_*w)_{\varphi(f)}:=w_f$, then
\begin{equation}
    P_b^{\varphi(e)}(\varphi_*w)=P_b^e(w)
    \qquad (b\in\{0,1\}).
    \label{eq:r10-P01-equivariance}
\end{equation}
\end{proposition}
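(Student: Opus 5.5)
The proposition has two parts: a group-theoretic fact about $\operatorname{Aut}(R_{10})$, and the equivariance \eqref{eq:r10-P01-equivariance}. I would treat them separately, with the equivariance being purely formal.

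For the equivariance, fix $\varphi\in\operatorname{Aut}(R_{10})$. Since $\varphi$ is a matroid automorphism, $B\mapsto\varphi(B)$ is a bijection on the set of bases of $R_{10}$. It satisfies $e\in B$ if and only if $\varphi(e)\in\varphi(B)$, so it matches the bases omitting $e$ with those omitting $\varphi(e)$, and likewise for the bases containing them. With $(\varphi_*w)_{\varphi(f)}=w_f$, every basis $B$ omitting $e$ has weight
\[
(\varphi_*w)(\varphi(B))=\prod_{f\in B}(\varphi_*w)_{\varphi(f)}=\prod_{f\in B}w_f=w(B).
\]
The same computation applies to $B\setminus\{e\}$ when $e\in B$. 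Summing over the bijection in \eqref{eq:seymour-two-sum-P} gives $P_b^{\varphi(e)}(\varphi_*w)=P_b^e(w)$ for $b\in\{0,1\}$. Note that $\varphi_*w$ is defined on $E(R_{10})\setminus\{\varphi(e)\}$, as required.

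For transitivity, I would use the standard description of $R_{10}$ as the vector matroid over $\mathbb F_2$ of the ten vectors in $\mathbb F_2^5$ with exactly three nonzero coordinates. Equivalently, index the elements by the $2$-subsets of $\{1,\dots,5\}$, sending $\{i,j\}$ to the complement of $e_i+e_j$. Then $S_5$ acts by coordinate permutations and visibly preserves linear dependence, so it embeds in $\operatorname{Aut}(R_{10})$ and acts transitively on the ten $2$-subsets. To use this, I need an explicit isomorphism between the matrix of Figure~\ref{fig:r10-k33}, reduced mod $2$, and this weight-three model. Because a binary matroid has a unique representation up to row operations, it suffices to exhibit an invertible $5\times5$ change of basis over $\mathbb F_2$ carrying one set of columns to the other. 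This is a finite check that I would record, or delegate to the enumeration in Appendix~\ref{app:r10-atlas-geometry}.

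For the order $720$, this embedding gives $|\operatorname{Aut}(R_{10})|\ge120$, and the complementation map $v\mapsto v+\mathbf 1$ is not visible in it, so I would expect $S_5\times\mathbb Z_2$. The main obstacle is the upper bound together with the extra factor. One route goes through stabilizers: by transitivity, $|\operatorname{Aut}|=10\,|\operatorname{Stab}(e)|$. The stabilizer acts on $R_{10}\setminus e\cong M(K_{3,3})$ (Oxley, Lem.~6.6.9). Since $K_{3,3}$ is $3$-connected, Whitney's theorem shows that $\operatorname{Aut}M(K_{3,3})=\operatorname{Aut}(K_{3,3})$, which has order $72$. Hence $|\operatorname{Stab}(e)|\le72$, and the stabilizer embeds there because an automorphism fixing $e$ is determined by its action on the other nine elements. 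This gives $|\operatorname{Aut}|\le720$, and I would then show that the embedding is onto. Lacking a clean argument for surjectivity, I would fall back on the exact computer enumeration in Appendix~\ref{app:r10-atlas-geometry}, checking all permutations that preserve the $162$ bases, which certifies order $720$ directly. Only transitivity, which is already established above, is needed for \eqref{eq:r10-P01-equivariance} and for Proposition~\ref{prop:r10-basis-polynomials}.
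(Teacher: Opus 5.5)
Your equivariance argument is the paper's: an automorphism permutes the bases and preserves whether the marked element lies in them, so the monomials of $P_b^e(w)$ are relabeled into those of $P_b^{\varphi(e)}(\varphi_*w)$. This part uses no property of the group at all.

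For the group-theoretic claims, the paper cites Oxley for transitivity and the exact enumeration of Appendix~\ref{app:r10-atlas-geometry} for the order. Your $S_5$ action on the ten weight-three vectors of $\mathbb F_2^5$ is the standard argument behind that citation. Your Whitney bound $|\operatorname{Stab}(e)|\le|\operatorname{Aut}(K_{3,3})|=72$ is a useful addition. But since you fall back on the same enumeration for the exact value $720$, both proofs rest on the same computation there.

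One remark in your proposal is wrong and should be dropped. The group cannot be $S_5\times\mathbb Z_2$, which has order $240$, not $720$. Moreover, $v\mapsto v+\mathbf 1$ sends weight-three vectors to weight-two vectors, so it does not even permute $E(R_{10})$. The symmetry beyond $S_5$ has index $6$, not $2$:
\begin{itemize}
\item identify $\mathbf 1+e_i+e_j$ with the split $\{\{i,j,6\},[5]\setminus\{i,j\}\}$ of $[6]$ into two triples;
\item the four-element circuits then become the sets of four splits transverse to a perfect matching of $[6]$;
\item the six-element circuits are their complements.
\end{itemize}
Hence $S_6$ acts by automorphisms.

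You can also close your own stabilizer argument by hand; take $e=0$ with the labeling of Section~\ref{sec:r10-pointed}. By uniqueness of binary representations, choose a mod-$2$ representation of $R_{10}$ whose restriction to $E\setminus e$ is the vertex--edge incidence representation of $K_{3,3}$ in $\mathbb F_2^6$. Then $e$ must be represented by $\mathbf 1_6$. Indeed, $\{e\}$ together with a perfect matching is a circuit by Lemma~\ref{lem:r10-minimal-positive}, and the incidence vectors of a perfect matching sum to $\mathbf 1_6$. Every vertex permutation fixes $\mathbf 1_6$, so all $72$ automorphisms of $K_{3,3}$ extend to automorphisms of $R_{10}$ fixing $e$. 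Thus $|\operatorname{Stab}(e)|=72$, and with transitivity $|\operatorname{Aut}(R_{10})|=720$, without any computer search.
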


\begin{proof}
Transitivity is established in the proof of
Oxley~\cite[Lem.~6.6.9]{Oxley2011}. The exact enumeration in
Appendix~\ref{app:r10-atlas-geometry} gives the group order $720$. An
automorphism sends bases to bases and preserves whether a basis contains the
distinguished element. Relabeling the monomials in the two sums thus
gives~\eqref{eq:r10-P01-equivariance}.
\end{proof}

In particular, the calculation at an arbitrary $2$-sum element is obtained
from the calculation at element $0$ by permuting the weights.  The sparse
polynomials for this choice are stored in
\path{anc/r10_atlas/r10_response_atlas.json}.\footnote{The \texttt{anc/}
directory is available from the ``Ancillary files'' link on this paper's arXiv
abstract page.}

\subsection{Deletion and contraction}
\label{subsec:r10-proper-minors}

\begin{proposition}
\label{prop:r10-proper-minors}
For every $e\in E(R_{10})$,
\begin{equation}
    R_{10}\setminus e\cong M(K_{3,3}),
    \qquad
    R_{10}/e\cong M^*(K_{3,3}).
    \label{eq:r10-single-element-minors}
\end{equation}
Every proper minor of $R_{10}$ is graphic or cographic.  Every minor with at
most eight elements is both graphic and cographic.
\end{proposition}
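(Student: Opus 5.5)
The first display is Oxley's lemma~\cite[Lem.~6.6.9]{Oxley2011}, already used in Proposition~\ref{prop:r10-no-three-sum}; I would cite it and check it once at element $0$. Deleting column $0$ from the matrix in Figure~\ref{fig:r10-k33} leaves a totally unimodular rank-five representation with nine columns, and I would exhibit an explicit bijection to the edges of $K_{3,3}$ under which circuits correspond to cycles. For the contraction, I would use that $R_{10}$ is self-dual, $R_{10}\cong R_{10}^*$, which is standard and can also be read off from the standard form $[I_5\mid D]$: $D^{\mathsf T}$ agrees with $-D$ up to row and column permutations and signings. Then
\[
    R_{10}/e\cong (R_{10}^*\setminus e)^*\cong (R_{10}\setminus e')^*\cong M^*(K_{3,3}),
\]
and transitivity of $\operatorname{Aut}(R_{10})$ (Proposition~\ref{prop:r10-symmetry}) extends the statement from element $0$ to every $e$.

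For the second claim, every proper minor $N$ of $R_{10}$ arises by deleting or contracting at least one element first, since minors commute in the sense $M\setminus X/Y$. So $N$ is a minor of $R_{10}\setminus e$ or of $R_{10}/e$, hence a minor of $M(K_{3,3})$ or of $M^*(K_{3,3})$. Graphic and cographic matroids are each minor-closed classes, so $N$ is graphic or cographic.

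The third claim is the main obstacle. A minor with at most eight elements is a minor of a one-element deletion or contraction of some $9$-element minor. Since $9$-element minors are $R_{10}\setminus e$ or $R_{10}/e$, every such minor lies below $M(K_{3,3})\setminus e$, $M(K_{3,3})/e$, or their duals. By symmetry of $K_{3,3}$ (edge-transitive), it suffices to show $M(K_{3,3}\setminus e)$ and $M(K_{3,3}/e)$ are both graphic and cographic. The graph $K_{3,3}\setminus e$ is planar: removing one edge from $K_{3,3}$ gives a planar graph, since Kuratowski obstructions require all nine edges or a $K_5$ subdivision, which is impossible with $8$ edges and max degree $3$. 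Likewise $K_{3,3}/e$ has $5$ vertices and $8$ edges, is not $K_5$, and contains no $K_{3,3}$ subdivision because it has only $5$ vertices, so it is planar. Planar graphic matroids are cographic by Whitney duality, and duals of planar graphic matroids are graphic. Hence all four $8$-element minors are both graphic and cographic, and both classes are minor-closed, which gives the claim. The delicate point is the case analysis ensuring every $\le 8$-element minor passes through one of these two graphs; this follows from the reordering $M\setminus X/Y$ together with edge-transitivity.
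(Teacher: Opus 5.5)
Your proposal is correct and follows essentially the same route as the paper: cite Oxley's lemma for the single-element minors, note that every proper minor lies below $M(K_{3,3})$ or $M^*(K_{3,3})$, and use planarity of the proper minors of $K_{3,3}$ together with Whitney duality. The differences are cosmetic. You derive the contraction case from self-duality, which is valid here because $D$ is symmetric. You also check planarity of $K_{3,3}\setminus e$ and $K_{3,3}/e$ directly via Kuratowski, whereas the paper simply invokes that $K_{3,3}$ is minor-minimal nonplanar.
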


\begin{proof}
The two single-element identities are standard
\cite[Lem.~6.6.9]{Oxley2011}. After the first deletion or contraction, any
further minor is a minor of $M(K_{3,3})$ or of its dual. This proves
that every proper minor is graphic or cographic.

The graph $K_{3,3}$ is minor-minimal nonplanar, so every proper minor of
$M(K_{3,3})$ is represented by a planar graph. A graphic matroid represented
by a planar graph is also cographic, and dually for a proper minor of
$M^*(K_{3,3})$. Thus every minor of $R_{10}$ with at most eight elements is
both graphic and cographic.
\end{proof}

Thus deletion or contraction of any element of a copy of $R_{10}$ removes
the exceptional case. In particular, the zero and infinite limits in the
$2$-sum formula of Section~\ref{subsec:seymour-two-sum} move the corresponding
part of the calculation into the graphic or cographic classes. After a
second deletion or contraction, both descriptions are available.

\subsection{Finite classification}
\label{subsec:r10-finite-classification}

The automorphisms of $R_{10}$ also reduce the number of distinct ways in
which other $2$-sum reductions can meet a fixed copy of $R_{10}$.

\begin{proposition}
\label{prop:r10-subset-orbits}
The action of $\operatorname{Aut}(R_{10})$ on subsets of $E(R_{10})$ has
$19$ orbits.  For fixed $e\in E(R_{10})$, the stabilizer of $e$ has $26$
orbits on subsets of $E(R_{10})\setminus\{e\}$.
\end{proposition}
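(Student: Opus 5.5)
The approach is a Burnside count: apply the orbit-counting lemma to $G:=\operatorname{Aut}(R_{10})$ acting on $2^{E(R_{10})}$, and then to the stabilizer $G_e$ acting on $2^{E(R_{10})\setminus\{e\}}$. The number of orbits of $G$ on subsets is $\frac{1}{|G|}\sum_{g\in G}2^{c(g)}$, where $c(g)$ is the number of cycles of $g$ on the ten elements. For the stabilizer, the count is $\frac{1}{|G_e|}\sum_{g\in G_e}2^{c(g)-1}$, since $e$ is a fixed point and contributes one cycle that is dropped.

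The first step is to identify $G$ concretely. A standard description of $R_{10}$ takes the ground set to be the ten edges of $K_5$, with circuits the $4$-cycles and the complements of $4$-cycles, or equivalently the even-cycle description. Under this identification $\operatorname{Aut}(R_{10})\cong S_5\times\mathbb Z_2$, of order $240$. The paper, however, records order $720$ via the enumeration in Appendix~\ref{app:r10-atlas-geometry}. I would therefore take the group itself from that exact enumeration, or check the order directly, and not rely on the $K_5$ picture; in either case I would work with the explicit list of permutations. Transitivity, from Proposition~\ref{prop:r10-symmetry}, gives $|G_e|=|G|/10$.

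The second step is to tabulate cycle types by conjugacy class. With the permutation list in hand, I would group elements by cycle type on $E(R_{10})$, compute $\sum_g 2^{c(g)}$, and divide by $|G|$ to check that the result is $19$. For the stabilizer I would keep only the elements fixing $e$ (element $0$, say) and evaluate the second sum, expecting $26$. As a consistency check, if $G$ has $S_5$ acting on edges of $K_5$ as a factor, the orbits of that factor on edge subsets number $34$, the graphs on five vertices up to isomorphism; the extra symmetry should merge these down to $19$, which is plausible.

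The main obstacle is getting the group exactly right, since the orbit counts depend entirely on it. The published order conflicts with the familiar $S_5\times\mathbb Z_2$ picture, so I would generate $G$ from the matrix in Figure~\ref{fig:r10-k33}. The procedure is to search for permutations of $\{0,\dots,9\}$ that carry the $162$ bases to themselves, and to verify closure under composition. This is a finite computer-checkable enumeration, and the orbit counts then follow mechanically, either by Burnside's lemma or by directly canonicalizing all $1024$ subsets (respectively all $512$ subsets avoiding $e$) under the group, matching the atlas in the ancillary files.
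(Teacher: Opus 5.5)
Your plan is essentially the paper's proof: the paper also obtains both counts by exact enumeration of $\operatorname{Aut}(R_{10})$ from the matrix in Figure~\ref{fig:r10-k33}, with the ancillary verifier reconstructing the group and the orbit partitions, and Burnside or canonicalization over the $1024$ (resp.\ $512$) subsets is a valid way to finish. One correction: the order $720$ conflicts with no standard fact, because $\operatorname{Aut}(R_{10})\cong S_6$, not $S_5\times\mathbb Z_2$ (here $S_6$ acts on the ten $3{+}3$ splits of a $6$-set, the fifteen $4$-circuits correspond to the synthemes, and the $S_5$ of the $K_5$ picture is a subgroup of index $6$); with this group your Burnside sums can even be done by hand, giving $13680/720=19$ and $1872/72=26$.
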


\begin{proof}
The orbit counts are obtained by exact enumeration of the automorphism group
from the representation in Figure~\ref{fig:r10-k33}.  The representatives,
stabilizer orders, and dual pairs for the $26$ orbits with $e=0$ are listed
in Table~\ref{tab:r10-marked-configurations}.  The ancillary verifier in
\path{anc/r10_atlas/} reconstructs the group and the orbit partitions.
\end{proof}

Suppose $e$ joins the copy of $R_{10}$ to the part of the decomposition that
contains the target.  After the other attached components have been reduced
by Theorem~\ref{thm:seymour-two-sum}, each of their common elements carries
the resulting positive weight.  Substituting those weights, together with
the ordinary element weights, into $P_0^e$ and $P_1^e$ gives the weight
\eqref{eq:r10-effective-weight} supplied at $e$.  Proposition
~\ref{prop:r10-subset-orbits} identifies the distinct subsets of elements on
which such substitutions occur up to automorphism; it does not alter the
polynomials or the weight variables themselves.

\subsection{Fixed-weight regular programs}
\label{subsec:r10-fixed-weight-programs}

\begin{theorem}
\label{thm:r10-fixed-weight-evaluation}
Let $P$ be a regular matroid span program, let $w>0$ be a weight assignment,
and let $z$ be an input.  In the computation of $\mathscr R_z(w)$ from
\eqref{eq:seymour-reduced-ratio}, every copy of $R_{10}$ that is not changed
by deletion or contraction contributes either
\begin{equation}
    \frac{P_0^e(w')}{P_1^e(w')}
    \label{eq:r10-parent-ratio}
\end{equation}
for a $2$-sum element $e$, or its reciprocal when $e$ is the target.  Here
$w'$ consists of the original element weights together with the weights
produced by reductions of the attached $2$-sum components.  Every copy of
$R_{10}$ changed by a proper deletion or contraction is graphic or
cographic.
\end{theorem}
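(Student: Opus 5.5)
The argument is a bookkeeping pass over the reduction procedure of Section~\ref{subsec:seymour-fixed-weight}, combined with the structural facts about $R_{10}$ from this section. The plan is to follow the computation of $\mathscr R_z(w)$ step by step. Fix the input $z$ and the regular minor $M_z$ of Theorem~\ref{thm:regular-matroid-energy}. Fix also the chosen $\{1,2,3\}$-decomposition of $M_z$ from Theorem~\ref{thm:seymour-decomposition}. Then consider an arbitrary leaf $N\cong R_{10}$ that is not altered by any deletion or contraction. Here a deletion or contraction means either a degenerate $2$-sum interpretation ($P_0=0$ or $P_1=0$), or the minor operation producing $M_z$ from $M_P$ when the copy is taken as a copy inside $M_P$.

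First, dispose of the case where $N$ lies in a connected component not containing $\tau_P$. Proposition~\ref{prop:seymour-one-sum} removes such a component as a common factor, so it contributes nothing to the ratio. This case is vacuous for the statement, or is absorbed into a trivial factor.

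Otherwise $N$ lies in the target component. If $N$ is the root, or the target lies in $N$, then the target-basis ratio of that copy is $P_1^e/P_0^e$ with $e=\tau_P$. This is the reciprocal of \eqref{eq:r10-parent-ratio}. If $N$ is a non-root leaf not containing the target, Proposition~\ref{prop:r10-no-three-sum} shows that its parent is a $2$-sum along some element $e$, because $R_{10}$ has no triangles. The reductions are carried out from the leaves upward, so every other $2$-sum attached to $N$ at an element $f\neq e$ has already been collapsed by Theorem~\ref{thm:seymour-two-sum}. No $3$-sum can be attached to $N$, again by Proposition~\ref{prop:r10-no-three-sum}. Each collapsed attachment therefore puts a positive weight $m_f$ on the common element $f$. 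This yields the modified weight vector $w'$, consisting of the original weights on genuine elements and the effective weights on the common elements.

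Next, apply Theorem~\ref{thm:seymour-two-sum} at $e$. This replaces $N$ by $e$ with weight $P_0(N,e;w')/P_1(N,e;w')$, which by \eqref{eq:r10-P01} is exactly \eqref{eq:r10-parent-ratio}. The hypothesis $P_1>0$ follows from positivity of $w'$ and Proposition~\ref{prop:r10-basis-polynomials}, since both polynomials have $81$ monomials with positive coefficients. Hence the copy stays intact and the nondegenerate formula applies. Proposition~\ref{prop:r10-symmetry} lets one compute this at $e=0$ after permuting weights, but that is not needed for the proof.

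The main obstacle is the degenerate case. A copy may be "changed" because an attached reduction produces weight $0$ or $\infty$, which by the remark after Theorem~\ref{thm:seymour-two-sum} means deleting or contracting that common element. It may also be changed because the input minor already deletes or contracts some of its elements. In either case, the resulting matroid on that summand is a proper minor of $R_{10}$. By Proposition~\ref{prop:r10-proper-minors} it is then graphic or cographic, so it can be treated as a graphic or cographic leaf. The care needed is to check that the zero and infinite weights from Proposition~\ref{prop:seymour-two-sum-input-minors} are correctly interpreted as deletion and contraction, so that the ratio identity survives in the limit. I would justify this by continuity of the basis-sum ratios in the extended nonnegative reals, as set up at the start of Section~\ref{sec:seymour-calculus}.
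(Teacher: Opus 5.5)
Your proposal follows the paper's own proof: Proposition~\ref{prop:r10-no-three-sum} rules out $3$-sums, Theorem~\ref{thm:seymour-two-sum} gives $P_0^e(w')/P_1^e(w')$ (and its reciprocal when $e$ is the target), and Proposition~\ref{prop:r10-proper-minors} handles every changed copy. One bookkeeping correction is needed: $e$ should be the common element through which the copy connects to the side containing the target (the paper's ``$2$-sum whose $R_{10}$ side does not contain the target''), which need not be the $2$-sum element of its parent in the binary tree, and the other attachments are collapsed by orienting the reductions toward the target rather than by a leaves-upward pass, in which attachments made at ancestor nodes would not yet have been processed.
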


\begin{proof}
Proposition~\ref{prop:r10-no-three-sum} excludes a $3$-sum at an unchanged
copy of $R_{10}$.  Theorem~\ref{thm:seymour-two-sum} gives
\eqref{eq:r10-parent-ratio} for a $2$-sum whose $R_{10}$ side does not contain the target, while the definition
of $Z_{\mathrm{in}}/Z_{\mathrm{out}}$ gives the reciprocal when the
specified element is the target.  Proposition~\ref{prop:r10-proper-minors}
handles every proper deletion or contraction.
\end{proof}

Combining Theorem~\ref{thm:r10-fixed-weight-evaluation} with
Theorem~\ref{thm:seymour-linear-cost} reduces the fixed-weight objective for
a regular program to weighted basis calculations on graphic and cographic
matroids together with the two $81$-term polynomials $P_0^e$ and $P_1^e$ for
an intact copy of $R_{10}$. The full supports of those polynomials and the
exact finite orbit data are recorded in Appendix~\ref{app:r10-atlas} and
\path{anc/r10_atlas/}.

The exceptional contribution cannot proliferate under the input minors used
in the reconstruction and separation energies. If an $R_{10}$ summand remains
intact, it contributes one of the finite polynomial ratios above. Once an
input restriction, an input contraction, or a deletion--contraction limit of
a $2$-sum changes that summand, Proposition~\ref{prop:r10-proper-minors}
places the resulting matroid in the graphic or cographic classes. The
remaining optimization is consequently concentrated in weighted calculations on
graphic and cographic summands, with intact copies of $R_{10}$ contributing a
finite collection of exceptional factors.

\part{Applications}

The examples compare the source-matroid adversary calculation with explicit
regular program matroids. Grover search recovers the familiar $\sqrt n$
value using a rank-one program matroid. For pointed $R_{10}$ closure, the
function is defined by $R_{10}$ but the targeted query matroid is uniform,
and a regular program different from the defining matroid is cheaper than
the natural $R_{10}$ program. Recursive composition propagates these finite
calculations by exact adversary composition and the $2$-sum identities of
Part~III.

\section{Grover Search}
\label{sec:grover-example}
\label{subsec:or-source-routing-example}

Let $D_n:=\{0,1\}^n$ and $f_n:=\operatorname{OR}_n$. For
$i\in[n]$ and $\epsilon\in\{0,1\}$, let
$B_i^\epsilon:=\{x\in D_n:x_i=\epsilon\}$, let $Q_i$ have answer partition
$\llbracket Q_i\rrbracket=\{B_i^0,B_i^1\}$, and put
$\mathcal Q_n:=\{Q_i:i\in[n]\}$. The common refinement of these partitions is
$\bot_{D_n}$, so $\mathcal Q_n$ is complete. Every coordinate query is
essential: if $Q_i$ is omitted, then $0^n$ and the string supported only on
$i$ remain indistinguishable. Proposition~\ref{prop:coordinate-target-matroid}
gives $M_{f_n}\cong U_{n,n+1}$.

Grover's algorithm gives $Q_{\mathcal Q_n}(f_n)=O(\sqrt n)$, and the
general adversary lower bound gives the matching $\Omega(\sqrt n)$ bound
~\cite{Grover1996,HoyerLeeSpalek2007}. In the present framework, the exact
adversary value and an optimal regular program follow independently from the
source and program matroids. Since $0^n$ is the unique negative input, the
source calculation is also the simplest instance of
Theorem~\ref{thm:singleton-fiber-scalarization}.

\begin{proposition}
\label{prop:grover-source-value}
For the query set $\mathcal Q_n$,
$\operatorname{Adv}^{\pm}_{\mathcal Q_n}(f_n)=\sqrt n$.
\end{proposition}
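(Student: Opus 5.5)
The plan is to apply Theorem~\ref{thm:singleton-fiber-scalarization} with $D_-=\{0^n\}$. The query set $\mathcal Q_n$ is complete, hence separates $f_n$, so the hypotheses hold. It then suffices to identify the realized agreement flats and solve the resulting packing/covering linear program exactly.

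First, compute the agreement flats. For a positive input $x$, the flat $F_{x,0^n}$ is the set of coordinate queries $Q_i$ with $x_i=0$, i.e.\ the complement of the support of $x$. Since $N_{\mathcal Q_n}$ is free, every subset is a flat. The realized flats are therefore exactly the sets $\mathcal Q_n\setminus\{Q_i:i\in S\}$ for nonempty $S\subseteq[n]$. The maximal noncomputing flats are $\mathcal Q_n\setminus\{Q_i\}$, realized by the weight-one strings $e_i$; equivalently, the target cocircuits of $U_{n,n+1}$ have non-target part of size one.

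Next, obtain the lower bound from the packing program. Put $b_F=1$ on each flat $F_i:=\mathcal Q_n\setminus\{Q_i\}$ and $b_F=0$ elsewhere. For each query $Q_j$, the only flat $F_i$ omitting $Q_j$ is $F_j$, so every constraint reads $1\le1$. The packing objective is then $n$, which gives $\operatorname{Adv}^{\pm}_{\mathcal Q_n}(f_n)^2\ge n$.

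Then obtain the upper bound from the covering program. Set $\lambda_q=1$ for every query. By the final clause of Theorem~\ref{thm:singleton-fiber-scalarization}, it suffices to check the maximal flats $F_i$. Each requires $\sum_{q\notin F_i}\lambda_q=\lambda_{Q_i}\ge1$, which holds. The covering objective is $n$, so $\operatorname{Adv}^{\pm}_{\mathcal Q_n}(f_n)^2\le n$. Combining the two bounds gives equality $\sqrt n$.

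There is no real obstacle. The only points needing care are verifying that the maximal realized flats are exactly the $n$ co-singletons, and that the reduction to maximal flats in the covering program is licensed by the theorem. The covering feasibility could instead be checked directly for all realized flats, since each omits at least one query and so has $\sum_{q\notin F}\lambda_q\ge1$ anyway.
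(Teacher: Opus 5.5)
Your proof is correct and follows essentially the same route as the paper: reduce to the singleton-fiber packing program over the realized agreement flats (all proper subsets of $\mathcal Q_n$), and get the lower bound from $b_{\mathcal Q_n\setminus\{Q_i\}}=1$. For the upper bound, the paper sums the $n$ packing constraints to get $\sum_F(n-|F|)b_F\le n$, which is exactly your covering certificate $\lambda_q\equiv1$ written on the primal side.
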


\begin{proof}
Let $0:=0^n$. For a positive input $x$, its agreement flat with $0$ is
$F_{x0}=\{Q_i:x_i=0\}$. Hence the realized agreement flats are the proper
subsets of $\mathcal Q_n$, and Theorem~\ref{thm:flat-adversary-decomposition}
gives
\begin{equation}
    \operatorname{Adv}^{\pm}_{\mathcal Q_n}(f_n)^2
    =
    \max\left\{
        \sum_{F\subsetneq\mathcal Q_n} a_F^2:
        \sum_{F:Q_i\notin F} a_F^2\leq1\text{ for every }i\in[n]
    \right\}.
\end{equation}
Putting $b_F:=a_F^2$ and summing the $n$ constraints gives
$\sum_F(n-|F|)b_F\leq n$. Since every $F$ is proper,
$\sum_F b_F\leq n$. Equality is attained by setting
$b_{\mathcal Q_n\setminus\{Q_i\}}=1$ for every $i$ and all other $b_F$ to
zero.
\end{proof}

The following program has, up to column scaling, the vector matroid of the
standard one-dimensional weighted OR span program of Reichardt and
{\v S}palek~\cite{ReichardtSpalek2008}.

\begin{proposition}
\label{prop:grover-program-value}
There is a regular $f_n$-matroid span program $P_n$ with
$\operatorname{lincost}(P_n)=\sqrt n$.
\end{proposition}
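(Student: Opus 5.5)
The plan is to build the rank-one program and compute its cost twice: an upper bound from Corollary~\ref{cor:regular-linear-cost} (or directly from Theorem~\ref{thm:regular-matroid-energy}), and a matching lower bound from Proposition~\ref{prop:grover-source-value}.

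\textbf{Construction.} Take non-target elements $e_1,\ldots,e_n$ with $\lambda_{P_n}(e_i):=(Q_i,B_i^1)$, plus a target $\tau$. Let $M_{P_n}:=U_{1,n+1}$ on $\{\tau,e_1,\ldots,e_n\}$, represented by the totally unimodular $1\times(n+1)$ matrix of all ones. This matroid is regular. On input $x$ we have $A_{P_n}(x)=\{e_i:x_i=1\}$, and $\tau$ lies in its closure exactly when this set is nonempty, i.e.\ exactly when $f_n(x)=1$. So $P_n$ computes $\operatorname{OR}_n$.

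\textbf{Upper bound.} Give every $e_i$ weight $w$ and apply Theorem~\ref{thm:regular-matroid-energy}.
\begin{itemize}
\item For a positive input $x$ with $k=|x|\geq1$, the minor $M_x$ is $U_{1,k+1}$. Its bases are singletons, so $Z_{\mathrm{in}}=1$ and $Z_{\mathrm{out}}=kw$. Hence $c_+=1/(kw)$, which is at most $1/w$, with the maximum attained at $k=1$.
\item For the unique negative input $0^n$, nothing is available and the minor is $M_{P_n}$ itself. Here $Z_{\mathrm{in}}=1$ and $Z_{\mathrm{out}}=nw$, so $c_-=nw$.
\end{itemize}
The product of the worst-case energies is $(1/w)\cdot nw=n$ for every $w$. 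Corollary~\ref{cor:regular-linear-cost} then gives $\operatorname{lincost}(P_n)\leq\sqrt n$; in fact the fixed-weight value is exactly $\sqrt n$.

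\textbf{Lower bound.} Non-uniform weights cannot do better, and this follows for free. Every represented program upper-bounds the adversary value, as in the inequality opening Subsection~\ref{subsec:regular-optimality-certification} and Theorem~\ref{thm:multiresponse-msp-comparison} with binary queries. Together with Proposition~\ref{prop:grover-source-value}, this gives $\sqrt n\leq\operatorname{lincost}(P_n)$. Equality follows, and Proposition~\ref{prop:regular-optimality-certificate} even certifies that $P_n$ is globally optimal.

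The only mildly delicate point is checking the basis sums for general weights $w_i$, if one wants a direct lower bound without invoking the adversary. In that case $c_+(x)=1/\sum_{i:x_i=1}w_i$, whose maximum is $1/\min_i w_i$, and $c_-=\sum_i w_i$. Their product is at least $n$ because $\sum_i w_i\geq n\min_i w_i$. I would include this short computation as a self-contained alternative to citing the adversary comparison.
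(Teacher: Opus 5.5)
Your proposal is correct. The construction and the uniform-weight upper bound match the paper's. The paper's lower bound is exactly your ``self-contained alternative'': it applies Proposition~\ref{prop:regular-representations-are-weights} and Theorem~\ref{thm:regular-matroid-energy} to get $\mathcal E(\phi_w)^2=\sum_iw_i/\min_iw_i\geq n$ for every weight assignment.

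Your primary lower bound takes a different route. It imports $\operatorname{Adv}^{\pm}_{\mathcal Q_n}(f_n)=\sqrt n$ from Proposition~\ref{prop:grover-source-value} and combines it with $\operatorname{Adv}^{\pm}_{\mathcal Q_n}(f_n)\leq\mathcal E(\phi)$ from Proposition~\ref{prop:represented-program-gives-transport}.
\begin{itemize}
\item \emph{What your route buys:} it bounds every real representation directly, with no appeal to projective uniqueness.
\item \emph{What it costs:} the program-side value now depends on the source-side calculation. The paper deliberately keeps the two computations independent and only then combines them through Proposition~\ref{prop:regular-optimality-certificate}. That certificate needs only your upper bound $\operatorname{lincost}(P_n)\leq\sqrt n$. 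So on your primary route, $\operatorname{lincost}(P_n)=\sqrt n$ is a consequence of the certificate rather than independent evidence for it.
\end{itemize}

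The direct computation also gives slightly more. Since $\sum_iw_i=n\min_iw_i$ forces all weights to be equal, the optimal weight assignments are exactly the uniform ones.
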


\begin{proof}
Let $E_{P_n}=\{\tau,e_1,\ldots,e_n\}$ and take
$M_{P_n}=U_{1,n+1}$, with $\lambda_{P_n}(e_i)=(Q_i,B_i^1)$. The available
set on $x$ is $\{e_i:x_i=1\}$, which spans $\tau$ exactly when
$f_n(x)=1$.

The program matroid is regular, so
Proposition~\ref{prop:regular-representations-are-weights} reduces the real
representation optimization to positive element weights. For weights
$w_i>0$, Theorem~\ref{thm:regular-matroid-energy} gives reconstruction energy
$1/\sum_{i\in S}w_i$ on a positive support $S$ and separation energy
$\sum_iw_i$ on $0^n$. Thus the largest reconstruction energy occurs on a
singleton support, and
\begin{equation}
    \mathcal E(\phi_w)^2
    =\frac{\sum_iw_i}{\min_iw_i}\geq n.
\end{equation}
Uniform weights attain equality, so
$\operatorname{lincost}(P_n)=\sqrt n$.
\end{proof}

Propositions~\ref{prop:grover-source-value} and~\ref{prop:grover-program-value},
together with Proposition~\ref{prop:regular-optimality-certificate}, give
\begin{equation}
    \operatorname{MSP}(f_n,\mathfrak M_{f_n})
    =\operatorname{Adv}^{\pm}_{\mathcal Q_n}(f_n)
    =\sqrt n.
\end{equation}
For $n\geq2$, the targeted query matroid and the program matroid have
different ranks:
$M_{f_n}\cong U_{n,n+1}$ while $M_{P_n}\cong U_{1,n+1}$. The source matroid
$\mathfrak M_{f_n}$ contains the targeted query matroid together with the
pairwise separation data, whereas $M_{P_n}$ supplies a different dependence
structure that computes the same function at optimal cost.

\section{Pointed \texorpdfstring{$R_{10}$}{R10} Closure}
\label{sec:r10-pointed}
\label{subsec:r10-pointed}

Choose element $0$ of $R_{10}$ from Definition~\ref{def:r10} as the target
$\tau$, and write $E^\circ:=E(R_{10})\setminus\{\tau\}$. By
Proposition~\ref{prop:r10-proper-minors},
$R_{10}\setminus\tau\cong M(K_{3,3})$; use the labeling from
Figure~\ref{fig:r10-k33}. The correspondence with the column indices of
Definition~\ref{def:r10}, used in the verification archive, is
\[
\begin{array}{c|rrrrrrrrr}
\text{column}&1&2&3&4&5&6&7&8&9\\ \hline
\text{edge}&e_{00}&e_{11}&e_{01}&e_{12}&e_{21}&e_{22}&e_{10}&e_{02}&e_{20}.
\end{array}
\]
For $x\in\{0,1\}^{E^\circ}$, put
$A(x):=\{e\in E^\circ:x_e=1\}$ and define
\begin{equation}
    f_R(x):=\mathbf 1\{\tau\in\cl_{R_{10}}(A(x))\}.
\end{equation}
For $e\in E^\circ$, let $Q_e$ reveal whether $e\in A(x)$, and write
$\mathcal Q_R:=\{Q_e:e\in E^\circ\}$.

For $i,j\in\{0,1,2\}$, let
$R_i:=\{e_{i0},e_{i1},e_{i2}\}$ be the three edges incident with row
vertex $i$, and let $C_j:=\{e_{0j},e_{1j},e_{2j}\}$ be the three edges
incident with column vertex $j$.

\begin{lemma}
\label{lem:r10-minimal-positive}
\label{lem:r10-boundary-inputs}
The minimal positive inputs for $f_R$ are the six perfect matchings of
$K_{3,3}$ and the nine sets $R_i\cup C_j$. The maximal negative
inputs are the nine sets $(R_i\cup C_j)\setminus\{e_{ij}\}$ and the six sets
$E^\circ\setminus R_i$ and $E^\circ\setminus C_j$.
\end{lemma}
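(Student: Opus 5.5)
The plan is to translate both claims into statements about circuits and cocircuits of $R_{10}$ through the target $\tau$. By Proposition~\ref{prop:closure-circuit}, $f_R(x)=1$ exactly when $A(x)\cup\{\tau\}$ contains a circuit through $\tau$. Hence the minimal positive inputs are precisely the sets $C\setminus\{\tau\}$ for circuits $C$ of $R_{10}$ with $\tau\in C$.

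Dually, a negative input has available set contained in a flat avoiding $\tau$. Maximal such sets are the hyperplanes avoiding $\tau$, as in the argument of Proposition~\ref{prop:maximal-noncomputing-target-cocircuit}. So the maximal negative inputs are the sets $E(R_{10})\setminus D$ for cocircuits $D\ni\tau$. The proof therefore reduces to two lists: the circuits through element $0$ and the cocircuits through element $0$.

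For the circuits, I would use the structure of $R_{10}$ directly. Circuits of $R_{10}$ have size $4$ or $6$; this is standard in Oxley, and $R_{10}$ is isomorphic to its dual, being the vector matroid of the ten vectors in $\mathbb F_2^5$ with exactly three ones. The circuits avoiding $\tau$ are the cycles of $K_{3,3}$, namely $4$-cycles and Hamiltonian $6$-cycles. A circuit $C\ni\tau$ with $C\setminus\tau=S$ means that $S$ is independent in $R_{10}\setminus\tau=M(K_{3,3})$, so $S$ is a forest, and that $v_\tau$ is the signed sum over $S$.

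I would then identify $v_\tau$ in the graphic picture. Contracting $\tau$ gives $M^*(K_{3,3})$. Equivalently, the target circuits are the minimal edge sets $S$ such that $S\cup\{\tau\}$ is dependent. Using the totally unimodular matrix of Figure~\ref{fig:r10-k33} over $\mathbb F_2$ and the column–edge dictionary, I would compute $v_\tau$ as the indicator of a set that is neither a cut nor a cycle. A forest $S$ spans $\tau$ when its fundamental data sums to $v_\tau$. Checking this characterization, I expect to get exactly two types. The first type is the forests with $3$ edges, giving circuits of size $4$; these are the six perfect matchings. The second type is the forests with $5$ edges, giving circuits of size $6$; these are the spanning trees of the form $R_i\cup C_j$, a double star. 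The count check is $6+9=15$ circuits through $\tau$. This should match the known totals of $R_{10}$: it has $15$ four-element circuits and $30$ six-element circuits, so $15\cdot 4/10=6$ and $30\cdot 6/10\cdot\frac12=9$ circuits through each element. I would verify these totals by the automorphism transitivity of Proposition~\ref{prop:r10-symmetry}.

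For the cocircuits through $\tau$, I would use self-duality, or compute directly. The complement of a hyperplane avoiding $\tau$ is a cocircuit $D\ni\tau$. The claimed complements are $D\setminus\tau=\{e_{ij}\}\cup$ (edges outside $R_i\cup C_j$), which has size $5$, and $D\setminus\tau=R_i$ or $C_j$, which has size $3$. I would check that each claimed set $H$ is a flat of rank $4$ not containing $\tau$. The quickest route is to verify that $D=E\setminus H$ meets every target circuit listed above in at least one element besides $\tau$, which is the blocker property, and that each $D$ is minimal with this property. The port/blocker duality then identifies the cocircuits through $\tau$ with the minimal transversals of $\{C\setminus\tau\}$. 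The transversal check is concrete. A star $R_i$ meets every perfect matching and every double star. A set of the form $\{e_{ij}\}\cup$ (the complement of $R_i\cup C_j$) meets the matchings and the double stars by a short case check.

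The main obstacle is completeness: ruling out any other minimal transversals or circuits. For that I would rely on the counts ($15$ target circuits and $6+9=15$ target cocircuits, matching $R_{10}$'s self-duality), with exact enumeration in the ancillary verifier as a backup.
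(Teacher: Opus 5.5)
Your reduction is the same as the paper's. Minimal positives are the sets $C\setminus\{\tau\}$ for target circuits $C$, and maximal negatives are the hyperplanes avoiding $\tau$; the paper's phrase ``adjoining any missing element creates a target circuit'' is your blocker characterization. Like you, the paper then reads the two lists off the representation and relies on the exact $512$-input enumeration of Appendix~\ref{app:r10-program-boundary}.

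There is one concrete error, in your completeness count. $R_{10}$ has fifteen six-element circuits, not thirty (Appendix~\ref{app:r10-atlas-geometry}). Your own description confirms this: the six Hamiltonian cycles of $K_{3,3}$ avoid $\tau$ and the nine double stars pass through it, giving $6+9=15$. With the correct total, transitivity gives $15\cdot6/10=9$ six-element circuits through $\tau$ directly. The factor $\tfrac12$ you inserted has no justification. Taken at face value, thirty circuits would give $18$ six-element target circuits, so your count would not show that the nine double stars are exhaustive.

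The step you only ``expect'' can also be made exact without counts, replacing the vague ``indicator of a set that is neither a cut nor a cycle.''
\begin{itemize}
\item Since $R_{10}$ is binary, any two $\mathbb F_2$-representations differ by a linear isomorphism. The mod-$2$ reduction of the matrix, restricted to $E^\circ$, therefore corresponds to the incidence representation $e_{ij}\mapsto r_i+c_j$ of $K_{3,3}$.
\item The circuit $\{\tau,e_{00},e_{11},e_{22}\}$ (columns $0,1,2,6$) then forces $v_\tau=\mathbf 1_V$. Hence $\tau\in\cl_{R_{10}}(S)$ exactly when $S$ contains an edge set with odd degree at every vertex.
\item The target circuits are such edge sets that are forests. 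Every component has an even number of vertices. A $4$-vertex component would be a $K_{1,3}$ that strands two vertices on the same side. A spanning tree needs degree sum $5$ on each side with odd degrees at most $3$, which forces $R_i\cup C_j$. So the target circuits are exactly the perfect matchings and the double stars.
\end{itemize}
Your blocker check then yields the negative side. Completeness there follows from self-duality and transitivity: $6$ four-element and $9$ six-element cocircuits pass through $\tau$.
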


\begin{proof}
A set $A\subseteq E^\circ$ is minimal with
$\tau\in\cl_{R_{10}}(A)$ exactly when $A\cup\{\tau\}$ is a target circuit.
The target circuits of the representation in Definition~\ref{def:r10} give
the six perfect matchings and the nine sets $R_i\cup C_j$. A negative set is maximal
exactly when adjoining any missing element creates one of these target
circuits. Appendix~\ref{app:r10-program-boundary} records the exact
enumeration of all $512$ inputs.
\end{proof}

\begin{lemma}
\label{lem:r10-targeted-query-matroid}
Every coordinate of $f_R$ is essential, and $M_{f_R}\cong U_{9,10}$.
\end{lemma}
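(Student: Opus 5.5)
The plan is to reduce the lemma to the coordinate-query setting and then apply Proposition~\ref{prop:coordinate-target-matroid}. The queries $Q_e$ are coordinate queries on $\{0,1\}^{E^\circ}$ with $k=9$. So Proposition~\ref{prop:coordinate-query-matroid} makes the query matroid $N_{\mathcal Q_R}$ free on the nine queries. Proposition~\ref{prop:coordinate-target-matroid} then shows that $f_R$ is matroidally realizable, and that a set $S$ computes $f_R$ exactly when it contains every essential coordinate. Once every coordinate is known to be essential, the only computing flat is $\mathcal Q_R$ itself. The targeted query matroid is therefore the extension of the free matroid $U_{9,9}$ by $\tau_{f_R}$ whose modular cut is $\{\mathcal Q_R\}$. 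In that extension $\tau_{f_R}$ lies in the closure of $S$ iff $S=\mathcal Q_R$. Its bases are the $9$-subsets of the $10$ elements, so $M_{f_R}\cong U_{9,10}$. Concretely, the unique circuit is the whole ground set, which also agrees with Proposition~\ref{prop:minimal-query-circuits}.

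The main step is essentiality of each $e\in E^\circ$. For each $e$ I will exhibit an input $x$ with $f_R(x)=0$ such that adding $e$ gives $f_R(x+\mathbf 1_e)=1$. Lemma~\ref{lem:r10-boundary-inputs} supplies such inputs directly: every maximal negative input $A$ has the property that adjoining any missing element yields a positive input. It therefore suffices to check that the missing sets of the maximal negatives cover $E^\circ$. The nine maximal negatives $(R_i\cup C_j)\setminus\{e_{ij}\}$ already miss $e_{ij}$ itself, and adjoining $e_{ij}$ gives $R_i\cup C_j$, which is a minimal positive input. Since $(i,j)$ ranges over all nine pairs, every edge $e_{ij}$ of $K_{3,3}$ is covered. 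So every coordinate is essential.

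As an independent sanity check that avoids relying on the enumeration, I would argue from minimal positives alone. Each $e_{ij}$ lies in the minimal positive $R_i\cup C_j$, and a minimal positive set $A$ makes $A\setminus\{e\}$ negative for every $e\in A$. This gives essentiality immediately, and it uses only the first sentence of Lemma~\ref{lem:r10-minimal-positive}.

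I expect no real obstacle here. The only point needing care is that the cited propositions apply verbatim: $G=\{0,1\}$, the domain is all of $\{0,1\}^9$, and the queries are exactly coordinate reveals. The identification with $U_{9,10}$ then follows from the modular cut $\{\mathcal Q_R\}$ via Theorem~\ref{thm:target-modular-cut}.
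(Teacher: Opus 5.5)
Your proposal is correct and follows the paper's proof. For each $e$ you exhibit a target circuit $C\ni e$, so that $C\setminus\{\tau\}$ is positive while $C\setminus\{\tau,e\}$ is negative, and then Proposition~\ref{prop:coordinate-target-matroid} makes $\{\mathcal Q_R\}$ the computing cut, which gives $U_{9,10}$. The only difference is the source of the circuit: the paper gets a circuit through $\tau$ and $e$ from connectivity of $R_{10}$, whereas you use the explicit double stars $R_i\cup C_j$. The paper's route therefore does not depend on the enumerated boundary families, which your ``independent'' check still relies on through the first sentence of Lemma~\ref{lem:r10-minimal-positive}.
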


\begin{proof}
Fix $e\in E^\circ$. Since $R_{10}$ is connected, there is a circuit $C$
containing $\tau$ and $e$. The inputs supported on
$C\setminus\{\tau\}$ and $C\setminus\{\tau,e\}$ differ only in coordinate
$e$ and have different outputs. Proposition~\ref{prop:coordinate-target-matroid}
gives $M_{f_R}\cong U_{9,10}$.
\end{proof}

Thus $M_{f_R}\setminus\{\tau_f\}\cong U_{9,9}$ is free even though
$f_R$ is defined by closure in $R_{10}$. The source matroid
$\mathfrak M_{f_R}$ still retains the pairwise separation data through the
coloring $\Sigma_{f_R}$.

\begin{theorem}
\label{thm:r10-source-interval}
For $f_R$ relative to the query set $\mathcal Q_R$,
\begin{equation}
\label{eq:r10-adversary-interval}
    3.908755342845
    <
    \operatorname{Adv}^{\pm}_{\mathcal Q_R}(f_R)
    <
    3.9301.
\end{equation}
\end{theorem}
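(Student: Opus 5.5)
The two inequalities will be proved separately: a feasible adversary matrix gives the lower bound, and a dual object gives the upper bound. Both certificates are finite and exactly checkable, since the domain $\{0,1\}^9$ has $512$ inputs.

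For the lower bound, use the rectangular form in Theorem~\ref{thm:flat-adversary-decomposition}. Rows are indexed by $D_+$ and columns by $D_-$. It suffices to exhibit $A\in\mathbb R^{D_+\times D_-}$ with $\|\Delta_{Q_e}A\|\le 1$ for all nine $e$ and $\|A\|>3.908755342845$. First reduce by symmetry. The stabilizer of $\tau=0$ in $\operatorname{Aut}(R_{10})$ acts on $E^\circ$ and preserves $f_R$. By convexity of the constraints and of the objective direction, averaging a near-optimal $A$ over this group keeps it feasible without decreasing the value along the averaged top singular pair. So we may take $A$ invariant and entries constant on orbits of pairs $(x,y)$. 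Lemma~\ref{lem:r10-boundary-inputs} further suggests supporting $A$ near the boundary: rows on minimal positives and columns on maximal negatives, possibly with a few additional layers. This leaves a small SDP in orbit variables. Solve it numerically, then round to rational entries. Finally certify rigorously: compute $\|\Delta_{Q_e}A\|^2$ as the top eigenvalue of a rational symmetric matrix, shrinking $A$ by a rational factor $1-\delta$ so each is certifiably $\le1$. Certify this with a rational Cholesky/LDL factorization of $I-(\Delta_eA)^{\mathsf T}\Delta_eA$. Bound $\|A\|$ from below by a rational Rayleigh quotient $u^{\mathsf T}Av/(\|u\|\|v\|)$ with explicit rational vectors.

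For the upper bound, use Theorem~\ref{thm:correlation-transport-adversary}. It suffices to exhibit a correlation transport $(X_q)$ with $\max_x\sum_qX_q(x,x)<3.9301$. Equivalently, use the routing form of Proposition~\ref{prop:adversary-routing-duality}, but the transport is easier to certify. Take the numerical optimal primal SDP solution, again symmetrized under the stabilizer. Write each $X_q=L_qL_q^{\mathsf T}$ with rationalized factors $L_q$, so positivity is automatic. Then the transport equations must be repaired exactly: the pairwise equality constraints on $\operatorname{Sep}(f)$ will be violated by small rational residues. Fix them by adding explicitly PSD corrections. For each violated pair, add a multiple of a two-point kernel $K_{xy}$ as in Proposition~\ref{prop:correlation-transport-feasible}, assigned to a separating query. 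For negative residues, first scale the whole transport up slightly so all residues have one sign. The added load is bounded by the residue sizes, which keeps the final cost below $3.9301$. The loose upper constant leaves room for this.

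The main obstacle is the exact certification rather than the theory. The lower bound is the sharp one, stated to twelve digits. Rounding must preserve both the norm constraints and the objective to that precision, which requires a high-precision SDP solve and careful exact-arithmetic PSD checks, together with a consistent orbit reduction. I would first try the reduced boundary-supported SDP. If its value falls short of $3.908755342845$, I would enlarge the support to all of $D_+\times D_-$ in orbit coordinates before rounding.
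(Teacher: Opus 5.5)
Your plan is essentially the paper's argument. The lower certificate there is a $G$-invariant rational witness: it uses the $72$ source symmetries and one integer coefficient per pair orbit on a restricted $87\times 51$ subdomain. It is certified by an exact $LDL^{\mathsf T}$ factorization for one representative query and by exact integer test vectors for the Rayleigh quotient. The upper certificate is a dyadic filtered factorization whose exact residuals are repaired pair by pair on a separating query.

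Your upper-bound step needs two small corrections.

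First, scaling the transport \emph{up} makes every residual an excess, and nonnegative multiples of the all-ones kernel $K_{xy}$ cannot cancel an excess. Instead, for $r:=1-\sum_qX_q(x,y)$, add to a separating $X_q$ the PSD kernel with entries $|r|$ at $(x,x),(y,y)$ and $r$ at $(x,y),(y,x)$. In coordinates this is a fresh coordinate taking values $\sqrt{|r|}$ and $\operatorname{sgn}(r)\sqrt{|r|}$ at $x$ and $y$, which is the paper's repair. It needs no rescaling and costs $|r|$ at each endpoint.

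Second, $3.9301$ is not a loose constant. The paper's repaired loads are $P\approx 3.93017$ and $N\approx 3.92995$, so the maximum load already exceeds $3.9301$. Only the balanced value $\sqrt{PN}\approx 3.93006$ lies below it; this comes from rescaling positive-side vectors by $\lambda$ and negative-side vectors by $\lambda^{-1}$. Build that balancing into your check rather than relying on the maximum load.

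Your averaging justification for an invariant lower witness is not right as stated, since the operator norm is convex and averaging $A$ can decrease it. It is harmless, though: it is only a search heuristic, and any feasible $A$ certifies the lower bound.
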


\begin{proof}
Appendix~\ref{app:r10-adversary-certificates} records the exact certificates.
The lower certificate is a rational adversary matrix on $71$ pair orbits;
all nine filtered operator norms are at most one, and an exact Rayleigh
quotient exceeds $3.908755342845$. The upper certificate is an exact repaired
filtered factorization whose positive and negative loads have product below
$3.9301^2$. The verifier in
\path{anc/r10_adversary_reproducibility_archive/} checks both statements in
exact arithmetic.
\end{proof}

Every query in $\mathcal Q_R$ has two response blocks. Hence
Theorem~\ref{thm:binary-msp-adversary-equality} applies and gives
\begin{equation}
    \operatorname{MSP}(f_R,\mathfrak M_{f_R})
    =\operatorname{Adv}^{\pm}_{\mathcal Q_R}(f_R).
\end{equation}
The exact certificates involve only the fixed $512$-input problem.

\subsection{The natural program}
\label{subsec:r10-program-calculation}

Label each non-target element $e$ of $R_{10}$ by the positive response of
$Q_e$, and denote the resulting program by $P_R$. Its available set on input
$x$ is $A(x)$, so it computes $f_R$.

\begin{proposition}
\label{prop:r10-natural-program-cost}
$\operatorname{lincost}(P_R)=5$.
\end{proposition}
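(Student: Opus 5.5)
The plan is to prove the matching upper and lower bounds separately. Both rest on Corollary~\ref{cor:regular-linear-cost}, which applies because $P_R$ is a regular program with $M_{P_R}=R_{10}$ and target $\tau=0$. Lemma~\ref{lem:r10-minimal-positive} says which inputs to use.

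\emph{The extreme inputs.} First take a minimal positive input $R_i\cup C_j$. It has five elements, and $(R_i\cup C_j)\cup\{\tau\}$ is a target circuit. So $M_x$ is a six-element circuit, a rank-five matroid whose bases are the six sets omitting one element. The basis containing no $\tau$ omits $\tau$, while each basis containing $\tau$ omits one available element $e$. Hence, by Theorem~\ref{thm:regular-matroid-energy},
\[
c_+(\phi_w,R_i\cup C_j)=\sum_{e\in R_i\cup C_j}\frac1{w_e}.
\]
Next take a maximal negative input $H_{ij}:=(R_i\cup C_j)\setminus\{e_{ij}\}$. It has four elements and is a hyperplane of $R_{10}$ missing $\tau$. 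Its complement $D_{ij}$ is therefore a cocircuit with $|D_{ij}\setminus\{\tau\}|=5$. Contracting $H_{ij}$ leaves a rank-one matroid on $D_{ij}$ in which all six elements are parallel, so Theorem~\ref{thm:regular-matroid-energy} gives
\[
c_-(\phi_w,H_{ij})=\sum_{e\in D_{ij}\setminus\{\tau\}}w_e.
\]
I will verify the sizes $|H_{ij}|=4$ and $|D_{ij}\setminus\{\tau\}|=5$ directly from the representation in Figure~\ref{fig:r10-k33}. The set $D_{ij}\setminus\{\tau\}$ consists of the four edges meeting neither row vertex $i$ nor column vertex $j$, together with $e_{ij}$.

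\emph{Upper bound.} For the upper bound I would apply Theorem~\ref{thm:regular-support-bound}. Positive inputs need a target circuit of size at most six, since the minimal positive inputs have target circuits of sizes four (perfect matchings) and six ($R_i\cup C_j$); hence $S_+(P_R)=5$. For negative inputs, every negative input lies in a maximal one. The maximal negative inputs are the nine hyperplanes $H_{ij}$, which have target cocircuits of size six, and the six sets $E^\circ\setminus R_i$, $E^\circ\setminus C_j$, which have target cocircuits $R_i\cup\{\tau\}$ and $C_j\cup\{\tau\}$ of size four. A cocircuit disjoint from a larger available set is also disjoint from a smaller one, so $S_-(P_R)\le5$. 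This gives $\operatorname{lincost}(P_R)\le\sqrt{25}=5$.

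\emph{Lower bound.} For the lower bound, fix any $w>0$. Write $\Sigma_1:=\sum_{e\in E^\circ}w_e^{-1}$ and $\Sigma_w:=\sum_{e\in E^\circ}w_e$. An edge $e_{kl}$ lies in $R_i\cup C_j$ exactly when $k=i$ or $l=j$, which happens for $5$ of the $9$ pairs $(i,j)$. An edge $e_{kl}$ lies in $D_{ij}$ when $k\ne i$ and $l\ne j$, or when $(k,l)=(i,j)$, which also happens for $4+1=5$ pairs. Averaging over the nine pairs therefore gives
\[
\max_x c_+\ \ge\ \tfrac59\Sigma_1,
\qquad
\max_y c_-\ \ge\ \tfrac59\Sigma_w .
\]
By Cauchy--Schwarz over the nine elements, $\Sigma_1\Sigma_w\ge81$. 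So $\mathcal E(\phi_w)^2\ge\tfrac{25}{81}\cdot81=25$, and Corollary~\ref{cor:regular-linear-cost} gives $\operatorname{lincost}(P_R)\ge5$.

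The main obstacle is the combinatorial bookkeeping: confirming, with the column labeling in the table, that the $H_{ij}$ really are hyperplanes, and computing each cocircuit $D_{ij}$ exactly. If needed, the exact enumeration in Appendix~\ref{app:r10-program-boundary} serves as a backup check for this step.
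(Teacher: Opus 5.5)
Your proposal is correct and follows essentially the paper's route. The lower bound is the paper's argument: each element lies in five of the nine size-five target circuits $R_i\cup C_j$ and in five of the nine size-five target cocircuits, so averaging and then Cauchy--Schwarz give $C_+(w)C_-(w)\geq 25$. For the upper bound you invoke Theorem~\ref{thm:regular-support-bound} with $S_+(P_R)=S_-(P_R)=5$; this only repackages the paper's direct unit-weight computation $C_+=C_-=5$ via Theorem~\ref{thm:regular-matroid-energy} on the boundary families.
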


\begin{proof}
Regularity reduces the representation optimization to positive element
weights. At unit weights, Theorem~\ref{thm:regular-matroid-energy} and the
boundary families of Lemma~\ref{lem:r10-boundary-inputs} give
$C_+=C_-=5$. Conversely, every non-target element occurs in five of the nine
target circuits with five non-target elements and in five of the nine target
cocircuits with five non-target elements. Hence Theorem~\ref{thm:regular-matroid-energy} gives
\begin{equation}
    C_+(w)\geq\frac{5}{9}\sum_e w_e^{-1},
    \qquad
    C_-(w)\geq\frac{5}{9}\sum_e w_e.
\end{equation}
Cauchy--Schwarz gives $C_+(w)C_-(w)\geq25$.
Appendix~\ref{app:r10-natural-program-verification} records the exact circuit
and cocircuit check.
\end{proof}

Thus the source interval already separates the closure problem from its
natural program matroid:
\begin{equation}
    \operatorname{MSP}(f_R,\mathfrak M_{f_R})<3.9301<5
    =\operatorname{lincost}(P_R).
\end{equation}

\subsection{A second regular program}

Let $H$ have three internally disjoint length-three paths from $s$ to $m$,
one for each $i\in\{0,1,2\}$, whose edges carry the positive response
labels of the queries indexed by $R_i$. Add three internally disjoint
length-three paths from $m$ to $t$, one for each $j\in\{0,1,2\}$, whose
edges carry the positive response labels of the queries indexed by $C_j$.
Add a target edge from $s$ to $t$. Appendix~\ref{app:r10-regular-program-verification}
gives an explicit regular representation obtained by identifying this target
column with the target column of the pointed $R_{10}$ representation and
placing the remaining coordinates of the two representations in complementary
subspaces. Denote the resulting regular program by $P_{\mathrm{reg}}$.
This parallel connection is regular: add a parallel copy of the target to
$R_{10}$, take a $2$-sum at the original target with $M(H)$, and retain the
copy as the target. The graphic component accepts precisely when some full
row and some full column are available, hence on the nine double-star
minimal positive sets and their supersets. The shared-target construction
accepts the union of the two components' positive inputs, which is exactly
$f_R^{-1}(1)$.

\begin{proposition}
\label{prop:r10-regular-route-b}
For the regular representation of $P_{\mathrm{reg}}$ determined by weight
$5/4$ on each of the nine $R_{10}$ elements and weight $1$ on each of the
eighteen graphic elements,
\begin{equation}
    C_+=\frac{12}{5},
    \qquad
    C_-=\frac{29}{4}.
\end{equation}
Consequently
$\operatorname{lincost}(P_{\mathrm{reg}})\leq\sqrt{87/5}$.
\end{proposition}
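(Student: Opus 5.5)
The plan is to reduce both worst-case energies to a finite check over the boundary inputs of Lemma~\ref{lem:r10-boundary-inputs}, and to split each input-level energy into an $R_{10}$ contribution and a graphic contribution.

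\emph{Decomposition.} The program matroid of $P_{\mathrm{reg}}$ is the parallel connection at the target. In the representation of Appendix~\ref{app:r10-regular-program-verification}, the non-target vectors of the two pieces lie in complementary subspaces and share only the target column $t$. So every reconstruction splits as $t=\alpha t+(1-\alpha)t$, with one part written in each component. Write $c_+^R(x)$ and $c_+^H(x)$ for the reconstruction energies of the two components on $x$. Minimizing $\alpha^2c_+^R(x)+(1-\alpha)^2c_+^H(x)$ over $\alpha$ gives the parallel-resistance rule
\begin{equation}
    c_+(x)^{-1}=c_+^R(x)^{-1}+c_+^H(x)^{-1},
\end{equation}
with the convention $c^{-1}=0$ when that component rejects $x$. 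A separating functional is normalized only on the shared target and evaluates independently on the two components. Hence $c_-(y)=c_-^R(y)+c_-^H(y)$, the series rule, for $y$ rejected by both components. Per Proposition~\ref{prop:representation-scaling}, giving every $R_{10}$ element weight $5/4$ multiplies the unit-weight $R_{10}$ reconstruction energies by $4/5$ and its separation energies by $5/4$.

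\emph{Reduction to boundary inputs.} By the monotonicity noted after \eqref{eq:two-sum-homogeneity}, enlarging the available set cannot increase reconstruction energy and cannot decrease separation energy. So $C_+$ is attained on the $15$ minimal positive inputs and $C_-$ on the $15$ maximal negative inputs of Lemma~\ref{lem:r10-boundary-inputs}.

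\emph{Evaluating the two components.} On the graphic side, $H$ is series--parallel with unit conductances. If $k_r$ row paths and $k_c$ column paths are fully available, the available resistance is $3/k_r+3/k_c$. For a rejected input, the separation energy is the effective $s$--$t$ conductance after contracting the available edges. A path with $k\geq1$ missing edges has conductance $1/k$; the three row paths combine in parallel, the three column paths combine in parallel, and the two sides combine in series. On the $R_{10}$ side, I would compute $Z_{\mathrm{in}}/Z_{\mathrm{out}}$ of Theorem~\ref{thm:regular-matroid-energy} at unit weights on each boundary minor. Under the automorphisms fixing $\tau$ (Proposition~\ref{prop:r10-symmetry}) only two orbits remain on each side: matchings versus double stars for positives, and the sets $(R_i\cup C_j)\setminus\{e_{ij}\}$ versus the sets $E^\circ\setminus R_i$ or $E^\circ\setminus C_j$ for negatives. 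Each minor is a proper minor of $R_{10}$, hence graphic or cographic by Proposition~\ref{prop:r10-proper-minors}, so the ratio is an effective resistance or conductance. For example, on a perfect matching $H$ rejects and the matching is a five-element target circuit, so it contributes $\frac45\cdot5$, consistent with $C_+=12/5$ being a maximum over such cases.

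The main obstacle is the bookkeeping on the negative side. There, both components contribute simultaneously, and the maximum $29/4$ must come from balancing the $\frac54c_-^R$ term against the graphic conductance. For example, on $(R_i\cup C_j)\setminus\{e_{ij}\}$ the graphic term is two sides of conductance $5/3$ each in series, giving $5/6$, and this is added to the unit-weight $R_{10}$ separation energy scaled by $5/4$. I would compute the two $R_{10}$ separation energies first by contracting to a small cographic minor. If a hand computation stalls, I would fall back on the exact enumeration in the ancillary archive. With $C_+$ and $C_-$ in hand, Definition~\ref{def:representation-cost} and Definition~\ref{def:linear-cost} give $\operatorname{lincost}(P_{\mathrm{reg}})\leq\sqrt{(12/5)(29/4)}=\sqrt{87/5}$.
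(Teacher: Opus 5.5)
Your structural reduction is sound, and it is the route the paper takes. The shared-target representation gives the parallel rule for reconstruction and the series rule for separation. Monotonicity in the available set reduces $C_\pm$ to the fifteen minimal positive and fifteen maximal negative inputs. Weight $p$ on the $R_{10}$ side scales its energies by $1/p$ and $p$. The paper's proof consists of the resulting formulas $C_+(p)=\max\{3/p,\,30/(6p+5)\}$ and $C_-(p)=\max\{5p+1,\,3p+3\}$, evaluated at $p=5/4$.

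The gap is that the proposition \emph{is} this finite evaluation, and you never carry it out. Moreover, both values you do sketch are wrong and would contradict the statement.
A perfect matching has three elements, and its target circuit has four elements, not five; the five-element minimal positives are the double stars $R_i\cup C_j$. So a matching contributes $\frac{4}{5}\cdot 3=\frac{12}{5}$. Your value $\frac{4}{5}\cdot 5=4$ would force $C_+\geq 4>\frac{12}{5}$, so it is not consistent with the claim.
On $(R_i\cup C_j)\setminus\{e_{ij}\}$ you overlooked that every label occurs on one row path and one column path of $H$. Row path $i'\neq i$ carries the available edge $e_{i'j}$ and misses only two edges. So the row side has conductance $1+\frac{1}{2}+\frac{1}{2}=2$, the column side likewise, and the graphic term is $1$ rather than $\frac{5}{6}$. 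With $\frac{5}{6}$ you would get $\frac{85}{12}\neq\frac{29}{4}$.

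The $R_{10}$ side needs no minors or cographic contractions.
On a minimal positive input, the available $R_{10}$ set is $C\setminus\{\tau\}$ for a target circuit $C$, hence independent. The reconstruction is therefore unique, and by total unimodularity it has coefficients $\pm 1$, so its unit-weight energy is $|C|-1$.
On a maximal negative input, maximality makes the available set a flat $F$ with $\operatorname{cl}_{R_{10}}(F\cup\{\tau\})=E(R_{10})$, that is, a hyperplane. The normalized separator is then unique and takes values $\pm 1$ on the complementary target cocircuit $D$, so its energy is $|D|-1$.
This gives unit-weight energies $3,5,5,3$ on, respectively, the matchings, the double stars, the sets $(R_i\cup C_j)\setminus\{e_{ij}\}$, and the sets $E^\circ\setminus R_i$ and $E^\circ\setminus C_j$.

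The correct graphic values are as follows.
\begin{itemize}
\item Matchings: $H$ rejects.
\item Double stars: resistance $3+3=6$.
\item Four-element negatives: conductance $1$.
\item $E^\circ\setminus R_i$: the two other row paths short $s$ to $m$, and each column path misses only $e_{ij}$, giving conductance $3$.
\end{itemize}
Combining these, the four boundary energies at $p=\frac{5}{4}$ are $\frac{3}{p}=\frac{12}{5}$, $\bigl(\frac{p}{5}+\frac{1}{6}\bigr)^{-1}=\frac{12}{5}$, $5p+1=\frac{29}{4}$, and $3p+3=\frac{27}{4}$. This yields $C_+=\frac{12}{5}$ and $C_-=\frac{29}{4}$.
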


\begin{proof}
Give the nine $R_{10}$ elements weight $p>0$ and the eighteen graphic
elements weight $1$. Lemma~\ref{lem:r10-boundary-inputs} gives
\begin{equation}
    C_+(p)=\max\left\{\frac3p,\frac{30}{6p+5}\right\},
    \qquad
    C_-(p)=\max\{5p+1,3p+3\}.
\end{equation}
Their product decreases up to $p=5/4$ and increases afterwards. At
$p=5/4$,
\begin{equation}
    C_+=\frac{12}{5},
    \qquad
    C_-=\frac{29}{4},
    \qquad
    C_+C_-=\frac{87}{5}.
\end{equation}
Appendix~\ref{app:r10-regular-program-verification} records an exact matrix
check of the program and these boundary energies.
\end{proof}

Since $\sqrt{87/5}<5$, the explicit regular program $P_{\mathrm{reg}}$ is
strictly cheaper than the natural pointed-$R_{10}$ program. The source
calculation remains unchanged: the improvement comes entirely from the choice
of program matroid.

\section{Recursive \texorpdfstring{$R_{10}$}{R10} Closure}
\label{sec:r10-recursive}

The pointed closure function of Section~\ref{sec:r10-pointed} can be
composed with itself on disjoint input blocks. This keeps the nine-query
closure rule fixed while the number of queried coordinates grows with the
depth.

\begin{definition}
\label{def:recursive-r10-function}
Let $F_1:=f_R$.  For $h\geq1$, identify an input to $F_{h+1}$ with
$x=(x^{(1)},\ldots,x^{(9)})$, where each $x^{(i)}$ is an input to $F_h$, and
define
\begin{equation}
    F_{h+1}(x)
    :=
    f_R\bigl(F_h(x^{(1)}),\ldots,F_h(x^{(9)})\bigr).
\end{equation}
Write $N_h:=9^h$, and let $\mathcal Q_h$ be the coordinate-query set on
the $N_h$ leaf coordinates, as in Subsection~\ref{subsec:coordinate-query-matroids}.
\end{definition}

\subsection{Recursive closure}

Let $(M_1,\tau_1):=(R_{10},\tau)$.  Given $(M_h,\tau_h)$, take a fresh
pointed copy of $R_{10}$ with non-target elements $e_1,\ldots,e_9$ and nine
disjoint copies $(M_h^{(i)},\tau_h^{(i)})$.  For each $i$, form a $2$-sum by
identifying $e_i$ with $\tau_h^{(i)}$.  Denote the resulting pointed matroid
by $(M_{h+1},\tau_{h+1})$.

\begin{proposition}
\label{prop:recursive-r10-closure}
For every $h\geq1$, $M_h$ is regular, satisfies
$|E(M_h)|=1+9^h$ and $r(M_h)=(1+9^h)/2$, and target closure in $M_h$
computes $F_h$ relative to the query set $\mathcal Q_h$. Moreover, $R_{10}$ is
a minor of $M_h$, so $M_h$ is neither graphic nor cographic.
\end{proposition}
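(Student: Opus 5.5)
We argue by induction on $h$. The base case $h=1$ is $M_1=R_{10}$, which is regular, has $10=1+9$ elements and rank $5=(1+9)/2$, and computes $F_1=f_R$ by the definition of $f_R$.

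For the inductive step, regularity is immediate: each $2$-sum of regular matroids is regular (Section~\ref{subsec:regular-matroid-sums}). Before forming the $2$-sum we must check its hypotheses. In $R_{10}$ the element $e_i$ is neither a loop nor a coloop, since $R_{10}$ is connected. In $M_h$ the target $\tau_h$ is neither a loop nor a coloop, because $F_h$ is nonconstant; this itself follows by induction, since each $F_h$ is balanced between an all-ones and an all-zeros input.

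For size and rank, a $2$-sum satisfies
\[
|E(M\oplus_2N)|=|E(M)|+|E(N)|-2,\qquad r(M\oplus_2N)=r(M)+r(N)-1.
\]
Nine successive $2$-sums onto the fresh $R_{10}$ give
\[
|E(M_{h+1})|=10+9\bigl(1+9^h\bigr)-18=1+9^{h+1},
\]
\[
r(M_{h+1})=5+9\cdot\tfrac{1+9^h}{2}-9=\tfrac{1+9^{h+1}}{2}.
\]

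The main step is computing the function, and for it we need a closure rule for $2$-sums. Let $M=H\oplus_2N$ with common element $e$, and let $A\subseteq E(M)$. For $\tau\in E(H)\setminus\{e\}$, the claim is
\[
\tau\in\cl_M(A)\iff \tau\in\cl_{H}(A_H\cup S),
\]
where $A_H=A\cap E(H)$, and $S=\{e\}$ if $e\in\cl_N(A\cap E(N))$ and $S=\varnothing$ otherwise.

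We will prove this from the circuit description of $2$-sums. The circuits of $H\oplus_2N$ are the circuits of $H$ or of $N$ avoiding $e$, together with the sets $(C_H\setminus e)\cup(C_N\setminus e)$ where $C_H,C_N$ are circuits through $e$. Combine this with Proposition~\ref{prop:closure-circuit}. In the forward direction, a circuit through $\tau$ contained in $A\cup\{\tau\}$ either lies in $H\setminus e$, or splits as above, and then $C_N\setminus e\subseteq A_N$ witnesses $e\in\cl_N(A_N)$. The converse glues the two circuits back together in the same way.

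Applying this rule nine times, at the disjoint attachment elements $e_i$, gives
\[
\tau_{h+1}\in\cl_{M_{h+1}}(A)\iff \tau\in\cl_{R_{10}}\bigl(\{e_i:\tau_h^{(i)}\in\cl_{M_h^{(i)}}(A\cap E_i)\}\bigr).
\]
By induction, the inner condition is $F_h(x^{(i)})=1$, so the whole expression equals $F_{h+1}(x)$. The leaf elements of $M_{h+1}$ correspond bijectively to the $9^{h+1}$ coordinates, with availability given by $x_e=1$. The main obstacle is writing this $2$-sum closure lemma carefully, including the disjointness of the nine attachments so that they can be handled one at a time.

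Finally, $R_{10}$ is a minor of $M_h$. In a $2$-sum $H\oplus_2N$ where $N$ is connected, $H$ is recovered as a minor by contracting $N$ onto a circuit through $e$ minus one element, and deleting the rest of $N$. Here each $M_h^{(i)}$ is connected, by induction, as a $2$-sum of connected matroids. Applying this to the nine attached copies recovers the fresh copy of $R_{10}$ as a minor of $M_{h+1}$.

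Graphic and cographic matroids are minor-closed, and $R_{10}$ is neither graphic nor cographic because it is a leaf type distinct from those classes in Seymour's theorem (Oxley). Hence $M_h$ is neither graphic nor cographic.
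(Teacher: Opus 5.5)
Your proposal is correct and follows essentially the same route as the paper: closure of regular matroids under $2$-sums, the element and rank recurrences, the circuit description of $2$-sums to show that $e_i$ is effectively available exactly when the $i$th child spans its target, and recovery of the root $R_{10}$ by shrinking each child to a parallel pair along a circuit through its summing element. Your explicit $2$-sum closure lemma spells out what the paper states in one line. The only soft spot is your reason why $R_{10}$ is neither graphic nor cographic: it is better justified by $R_{10}/e\cong M^*(K_{3,3})$ being nongraphic and $R_{10}\setminus e\cong M(K_{3,3})$ being noncographic than by its appearance as a separate leaf type in Seymour's theorem.
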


\begin{proof}
Regular matroids are closed under $2$-sums~\cite{Oxley2011}. Connectedness is
also preserved here, so all summing elements are nonloops and noncoloops.
The recurrences
$|E(M_{h+1})|=9|E(M_h)|-8$ and $r(M_{h+1})=9r(M_h)-4$, with
$|E(M_1)|=10$ and $r(M_1)=5$, give the stated formulas.

By the circuit description of a $2$-sum~\cite{Oxley2011}, a circuit through
the target either avoids $e_i$ or joins a circuit through $e_i$ in the top
copy of $R_{10}$ to a circuit through $\tau_h^{(i)}$ in the attached copy of
$M_h$. Hence $e_i$ can be supplied to the top copy exactly when the available
elements in the $i$th copy span $\tau_h^{(i)}$, which by induction is
equivalent to $F_h(x^{(i)})=1$. Thus the target is spanned exactly when
$f_R(F_h(x^{(1)}),\ldots,F_h(x^{(9)}))=1$.

For the minor assertion, in each child choose a circuit $D$ through its
summing element and a non-target leaf $u$. Delete elements outside $D$ and
contract $D\setminus\{\tau_h^{(i)},u\}$. This leaves a two-element circuit;
its $2$-sum with the parent merely relabels the parent attachment by $u$.
Doing this in all nine children recovers the root $R_{10}$. The case $h=1$
is immediate. Graphic and cographic matroids are minor-closed, and $R_{10}$
is neither, proving the last assertion.
\end{proof}

These are properties of the natural program matroids $M_h$. The depth-two
matrix construction is checked independently in
Appendix~\ref{app:r10-recursive-boolean}.

\subsection{Sensitivity and randomized queries}

For $x\in\{0,1\}^{N_h}$ and a coordinate $u$, let $x^{\oplus u}$ denote
$x$ with coordinate $u$ flipped, and set
$\operatorname{Sens}(F_h,x):=\{u:F_h(x)\neq F_h(x^{\oplus u})\}$. For
$b\in\{0,1\}$, set
$s_b(F_h):=\max_{F_h(x)=b}|\operatorname{Sens}(F_h,x)|$. The exact base
enumeration in Appendix~\ref{app:r10-recursive-base} gives
$s_0(f_R)=s_1(f_R)=5$.

\begin{proposition}
\label{prop:recursive-r10-sensitivity}
For every $h\geq1$, $s_0(F_h)=s_1(F_h)=5^h$.
\end{proposition}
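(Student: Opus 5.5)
Argue by induction on $h$, with the base case $s_0(f_R)=s_1(f_R)=5$ supplied by the exact enumeration of Appendix~\ref{app:r10-recursive-base}. The key structural fact is that $f_R$ depends on every coordinate (Lemma~\ref{lem:r10-targeted-query-matroid}) and is \emph{monotone}, since closure is monotone in the available set. Flipping a leaf coordinate $u$ in block $i$ changes $F_{h+1}(x)$ exactly when it changes the child value $z_i:=F_h(x^{(i)})$ and, in addition, $i\in\operatorname{Sens}(f_R,z)$ for $z:=(F_h(x^{(1)}),\ldots,F_h(x^{(9)}))$. Hence
\[
|\operatorname{Sens}(F_{h+1},x)|=\sum_{i\in\operatorname{Sens}(f_R,z)}|\operatorname{Sens}(F_h,x^{(i)})|.
\]

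For the upper bound, take $x$ with $F_{h+1}(x)=b$. Each $i\in\operatorname{Sens}(f_R,z)$ has $x^{(i)}$ of value $z_i$, so its term is at most $s_{z_i}(F_h)=5^h$ by induction. There are at most $s_b(f_R)=5$ such indices, which gives $5^{h+1}$.

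For the lower bound, choose $z$ with $f_R(z)=b$ and $|\operatorname{Sens}(f_R,z)|=5$. For each block $i$, pick $x^{(i)}$ with $F_h(x^{(i)})=z_i$ and, when $i$ is sensitive, $|\operatorname{Sens}(F_h,x^{(i)})|=5^h$. Such inputs exist for either value $z_i\in\{0,1\}$ because both $s_0(F_h)$ and $s_1(F_h)$ equal $5^h$ by induction. This is why the statement tracks $s_0$ and $s_1$ together. For the non-sensitive blocks, any input with the right value will do; one exists because $F_h$ is nonconstant. The displayed identity then gives exactly $5\cdot5^h$.

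The only real work is the displayed identity. It follows from the block decomposition: a flip in block $i$ leaves the other child values unchanged, so $F_{h+1}$ changes iff $z_i$ flips and $f_R$ is sensitive at $z$ in coordinate $i$. Once that is in place, everything reduces to the base-case counts, and the main risk is simply relying on the enumeration for $s_0(f_R)=s_1(f_R)=5$. As a sanity check, $s_1(f_R)\ge5$ is witnessed by a minimal positive set of size five such as $R_i\cup C_j$. By Lemma~\ref{lem:r10-boundary-inputs}, each maximal negative set of the form $(R_i\cup C_j)\setminus\{e_{ij}\}$ or $E^\circ\setminus R_i$ has five missing coordinates whose addition is plausibly positive.
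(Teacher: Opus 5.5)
Your proof is correct and follows the paper's argument: the block decomposition of sensitive coordinates gives the upper bound $5\cdot5^h$, and the lower bound substitutes maximally sensitive child inputs of the required values into a five-sensitive base input (the paper fixes these recursively as $X_h^{(b)}$), with the base case taken from the enumeration. Monotonicity is not actually needed for your displayed identity, and in your closing sanity check $E^\circ\setminus R_i$ has only three missing coordinates, not five; neither point affects the argument.
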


\begin{proof}
Appendix~\ref{app:r10-recursive-base} gives inputs $x^{(0)}$ and $x^{(1)}$
of $f_R$ with outputs $0$ and $1$, respectively, and with five sensitive
coordinates each. Define $X_1^{(b)}:=x^{(b)}$. Recursively, the $i$th input
block of $X_h^{(b)}$ is $X_{h-1}^{(x_i^{(b)})}$. If
$i\in\operatorname{Sens}(f_R,x^{(b)})$, then every coordinate in
$\operatorname{Sens}(F_{h-1},X_{h-1}^{(x_i^{(b)})})$ is sensitive for
$F_h$ at $X_h^{(b)}$. Induction gives $s_b(F_h)\geq5^h$.

Conversely, if a coordinate in the $i$th input block is sensitive for
$F_{h+1}$ at $x$, then $i$ is sensitive for $f_R$ at
$(F_h(x^{(1)}),\ldots,F_h(x^{(9)}))$, and that coordinate is sensitive for
$F_h$ at $x^{(i)}$. Since $s_0(f_R)=s_1(f_R)=5$, induction gives
$s_b(F_h)\leq5^h$.
\end{proof}

For a query set $\mathcal Q$ and $0\leq\epsilon<1/2$, write
$R^\epsilon_{\mathcal Q}(f)$ for the minimum, over randomized decision trees
that compute $f$ with error at most $\epsilon$ on every input, of the maximum
expected number of queries over all inputs. We use the standard worst-case
expected-query convention~\cite{BuhrmanDeWolf2002}. Thus
$R^{1/3}_{\mathcal Q_h}(F_h)$ denotes bounded-error randomized query
complexity relative to the query set $\mathcal Q_h$.

\begin{theorem}
\label{thm:recursive-r10-randomized-lower}
For every $h\geq1$ and $0\leq\epsilon<1/2$,
\begin{equation}
    R^\epsilon_{\mathcal Q_h}(F_h)
    \geq
    (1-2\epsilon)5^h.
\end{equation}
In particular,
\begin{equation}
    R^{1/3}_{\mathcal Q_h}(F_h)
    \geq
    \frac13 5^h
    =
    \frac13 N_h^{\log_9 5}.
\end{equation}
\end{theorem}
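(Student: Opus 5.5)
The plan is to use the standard sensitivity lower bound for randomized decision trees. Proposition~\ref{prop:recursive-r10-sensitivity} supplies inputs $X^{(0)}$ and $X^{(1)}=:z$ of $F_h$ with $5^h$ sensitive coordinates each. Fix such an input $z$, put $S:=\operatorname{Sens}(F_h,z)$, and let $n:=|S|=5^h$. Since $F_h(z^{\oplus u})\neq F_h(z)$ for every $u\in S$, any $\epsilon$-error algorithm must behave differently on $z$ and on each neighbor $z^{\oplus u}$.

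Fix a randomized decision tree, viewed as a distribution $\mu$ over deterministic trees $T$. For $u\in S$, let $p_u$ be the probability, over $T\sim\mu$, that $T$ queries coordinate $u$ when run on $z$. On any tree that does not query $u$ on input $z$, the computation paths on $z$ and on $z^{\oplus u}$ coincide, because the two inputs differ only at $u$. Such a tree therefore gives the same output on both inputs. Correctness with error at most $\epsilon$ on both $z$ and $z^{\oplus u}$ gives
\[
\Pr[T(z)=F_h(z)]-\Pr[T(z^{\oplus u})=F_h(z)]\geq (1-\epsilon)-\epsilon=1-2\epsilon.
\]
The left side is at most the probability that the two runs differ, which is at most $p_u$. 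Hence $p_u\geq 1-2\epsilon$ for every $u\in S$.

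Summing over $u\in S$, linearity of expectation gives
\[
\mathbb E_T[\#\text{queries on }z]\geq\mathbb E_T\bigl|\{u\in S: T\text{ queries }u\text{ on }z\}\bigr|=\sum_{u\in S}p_u\geq(1-2\epsilon)5^h.
\]
The worst-case expected query count is at least the expected count on $z$, which proves the first inequality. Setting $\epsilon=1/3$ gives $\frac13 5^h$. Since $N_h=9^h$, we have $5^h=N_h^{\log_9 5}$, which gives the second form.

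The one point needing care is the claim that a tree not querying $u$ on $z$ follows the identical path on $z^{\oplus u}$. This is an induction along the path: every queried coordinate differs from $u$, so each answer agrees on the two inputs. Randomized trees with internal coin flips are handled by conditioning on the coins, which reduces to the deterministic case. No other obstacle is expected.
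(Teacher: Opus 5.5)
Your proposal is correct and follows the paper's proof: both couple the runs on an input and its $u$-flipped neighbor through shared randomness, deduce that each sensitive coordinate is queried with probability at least $1-2\epsilon$, and sum over the $5^h$ sensitive coordinates supplied by Proposition~\ref{prop:recursive-r10-sensitivity}. Your explicit bound on the difference of success probabilities is a slightly more detailed version of the paper's remark that at least one of the two coupled runs errs when $u$ is never queried.
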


\begin{proof}
Fix an input $x$ and a sensitive coordinate $u$.  Couple the runs on $x$ and
$x^{\oplus u}$ using the same internal randomness.  Until $u$ is queried the
two transcripts are identical.  If $u$ is never queried, the two outputs are
equal although the correct function values differ, so at least one
of the two coupled runs errs.  Hence $\Pr_x[\text{$u$ is queried}]\geq1-2\epsilon$. Summing over the
sensitive coordinates of $x$ gives an expected number of
queries at least $(1-2\epsilon)|\operatorname{Sens}(F_h,x)|$.  Apply
Proposition~\ref{prop:recursive-r10-sensitivity} to either of the inputs
$X_h^{(0)}$ and $X_h^{(1)}$.
\end{proof}

Sensitivity $5^h$ is attained on inputs in both $F_h^{-1}(0)$ and $F_h^{-1}(1)$.
In particular, the recursive family does not have the one-sided sensitive
structure of $\operatorname{OR}_n$.

\subsection{The natural recursive program}

Label the $9^h$ non-target elements of $M_h$ by the positive responses of the
corresponding coordinate queries in $\mathcal Q_h$, and denote the resulting
regular matroid span program by $P_h^{\mathrm{nat}}$.  By
Proposition~\ref{prop:recursive-r10-closure}, $P_h^{\mathrm{nat}}$ computes
$F_h$.

\begin{proposition}
\label{prop:recursive-r10-natural-cost}
For every $h\geq1$, $\operatorname{lincost}(P_h^{\mathrm{nat}})=5^h$.
\end{proposition}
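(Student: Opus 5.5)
The plan is to prove the two inequalities $\operatorname{lincost}(P_h^{\mathrm{nat}})\leq5^h$ and $\geq5^h$ separately, by induction on $h$. The base case $h=1$ is Proposition~\ref{prop:r10-natural-program-cost}. For the upper bound I would use unit weights on all $9^h$ leaves and show $C_+=C_-=5^h$ at these weights. By Theorem~\ref{thm:seymour-two-sum} together with Proposition~\ref{prop:seymour-two-sum-input-minors}, each child copy of $M_h$ in the restriction minor $M_x$ (respectively the contraction minor $M_y$) reduces to its attaching element $e_i$, with the following effective weights. In $M_x$, a positive child gets weight $c_+^{(h)}(x^{(i)})^{-1}$ and a negative child is deleted (weight $0$). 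In $M_y$, a negative child gets weight $c_-^{(h)}(y^{(i)})$ and a positive child is contracted (weight $\infty$). The top $R_{10}$ then sees exactly the availability pattern $(F_h(x^{(i)}))_i$, with element weights satisfying $w_i\geq 5^{-(h-1)}$ in the positive case and $w_i\leq5^{h-1}$ in the negative case.

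By the monotonicity noted after \eqref{eq:two-sum-homogeneity}, the top-level reconstruction energy with available weights at least $5^{-(h-1)}$ is at most the energy at uniform weight $5^{-(h-1)}$. Using \eqref{eq:two-sum-homogeneity}, the latter equals $5^{h-1}$ times the unit-weight $R_{10}$ energy, which is at most $5$. The negative side is symmetric, giving $C_-\leq5^{h-1}\cdot5$. Both bounds are attained by the recursive boundary inputs built as in Proposition~\ref{prop:recursive-r10-sensitivity}, though attainment is not needed for the upper bound. Hence $\mathcal E(\phi_{\mathbf 1})\leq5^h$.

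For the lower bound I would generalize the averaging argument of Proposition~\ref{prop:r10-natural-program-cost}. The idea is to find a probability distribution $\mu_+$ on positive inputs and $\mu_-$ on negative inputs such that, for every weight $w$,
\[
C_+(w)\geq\sum_u\alpha_u w_u^{-1}\quad\text{and}\quad C_-(w)\geq\sum_u\beta_u w_u,
\]
with $\sum_u\sqrt{\alpha_u\beta_u}\geq5^h$. Cauchy--Schwarz then yields $C_+C_-\geq25^h$.

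The concrete choice is as follows. Take $x$ uniform among the recursive compositions of the nine five-element target circuits (the $R_i\cup C_j$ minimal positive sets), each child input recursively chosen the same way for positive children and set to zeros for negative children. For such an $x$, the available set spans the target by a unique circuit of size $5^h$ plus the target, and the reconstruction energy is at least $\sum_{u\in C}w_u^{-1}$. I would justify this by noting that the minimal reconstruction on a circuit-only restriction is forced, and that extra available elements can be avoided by choosing $x$ to be exactly the circuit support. Dually, take $y$ ranging over compositions of the nine five-element target cocircuits; each such $y$ gives $c_-\geq\sum_{u\in D}w_u$ with $|D\setminus\tau|=5^h$.

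Each leaf $u$ then lies in a $(5/9)^h$ fraction of the recursive circuits and of the recursive cocircuits, by the per-element count $5$ of $9$ at every level. This gives $\alpha_u=\beta_u=(5/9)^h$, and $\sum_u\sqrt{\alpha_u\beta_u}=9^h(5/9)^h=5^h$.

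The main obstacle is verifying that, on these composed boundary inputs, the energy is exactly the circuit sum (and dually the cocircuit sum). I would show this inductively: on an input whose available set is exactly a target circuit, $M_x$ is a circuit, so by Theorem~\ref{thm:regular-matroid-energy} $Z_{\mathrm{in}}/Z_{\mathrm{out}}=\sum_{u\in C}w_u^{-1}$. Dually, on an input whose unavailable set is exactly a target cocircuit, $M_y$ is a cocircuit, giving $Z_{\mathrm{out}}/Z_{\mathrm{in}}=\sum_{u\in D}w_u$. It remains to check that compositions of $2$-sum circuits through the target are circuits of $M_{h+1}$, and similarly for cocircuits; this follows from the circuit and cocircuit descriptions of $2$-sums.
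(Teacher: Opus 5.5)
Your proof is correct. The lower bound is the same as the paper's. You recursively substitute the nine five-element target circuits and the nine five-element target cocircuits, and note that each leaf lies in a $(5/9)^h$ fraction of each family. You read off the energies $\sum_u w_u^{-1}$ on the minimal positive input and $\sum_u w_u$ on the complementary hyperplane, then apply Cauchy--Schwarz.

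The upper bound is where your route differs. The paper does not use the weighted $2$-sum calculus there. It uses the $2$-sum substitution rule for circuits, and dually for cocircuits, to show that every target circuit and every target cocircuit of $M_h$ has at most $5^h$ non-target leaves. The unit-weight totally unimodular representation then gives $\pm1$ reconstruction and separation witnesses supported on those sets, so $C_\pm(\phi_{\mathbf 1})\le 5^h$. This is in effect Theorem~\ref{thm:regular-support-bound}.

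You instead run the recursion of Theorem~\ref{thm:recursive-r10-regular-cost}, specialized to the natural program with $p=1$. The effective weights are $c_+^{-1}$ and $c_-$ from Proposition~\ref{prop:seymour-two-sum-input-minors}, with deletion or contraction at the values $0$ and $\infty$. Monotonicity and homogeneity then reduce everything to the depth-one unit-weight values $C_\pm=5$.

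The paper's route is shorter and purely combinatorial. It avoids the degenerate-case interpretation of the $2$-sum factorization and the monotonicity step, and it reuses the circuit and cocircuit substitution already needed for the lower bound. Your route needs only the depth-one energies rather than a size bound on all target circuits and cocircuits. It is also the template that still works when support counting is not tight, as for $P_h^{\mathrm{reg}}$.

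One small indexing slip: if the children are copies of $M_h$, their unit-weight energies are at most $5^h$. The effective weights therefore satisfy $w_i\ge 5^{-h}$ on the positive side and $w_i\le 5^{h}$ on the negative side, which gives $5^{h+1}$ at the top.
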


\begin{proof}
At depth one, pointed $R_{10}$ has target circuits and target cocircuits with
five non-target elements, and no target circuit or cocircuit contains more
than five non-target elements.  Under a $2$-sum, a target circuit is obtained
from a target circuit of the parent by replacing each selected common element
by a target circuit of the corresponding child.  The dual statement holds
for target cocircuits.  Induction shows that every target circuit
and target cocircuit of $M_h$ contains at most $5^h$ non-target leaves, and
that both bounds are attained.  With unit weights,
Theorem~\ref{thm:regular-matroid-energy} consequently gives a representation
of cost at most $5^h$.

For the reverse inequality, retain at depth one the nine target circuits with
five non-target elements and the nine target cocircuits with five non-target
elements.  Recursively substituting only members of these families gives
families $\mathcal C_h$ and $\mathcal D_h$ in which every member contains
exactly $5^h$ leaves.  Each leaf occurs in the same fraction
$5^h/9^h$ of the members of either family.

Give the leaves arbitrary positive weights $w_e$, and let $C_+(w)$ and $C_-(w)$ denote the two fixed-weight maxima in Corollary~\ref{cor:regular-linear-cost}.  On the minimal positive input determined by $C\in\mathcal C_h$, total
unimodularity gives reconstruction energy
$\sum_{e\in C\setminus\{\tau_h\}}w_e^{-1}$. On the hyperplane complementary
to a member of $\mathcal D_h$, the separation energy is the sum of the
weights on its non-target elements. Averaging over the two families gives
\begin{equation}
    C_+(w)\geq\frac{5^h}{9^h}\sum_e w_e^{-1},
    \qquad
    C_-(w)\geq\frac{5^h}{9^h}\sum_e w_e.
\end{equation}
Cauchy--Schwarz yields $C_+(w)C_-(w)\geq25^h$.  Corollary
~\ref{cor:regular-linear-cost} and the unit-weight upper bound give the
claim.
\end{proof}

Thus the natural recursive program has the same exponential base $5$ as the
randomized lower bound above.  The improvement already present at depth one
requires a different program matroid.

\subsection{A second recursive regular program}

Let $P_1^{\mathrm{reg}}:=P_{\mathrm{reg}}$ from
Section~\ref{sec:r10-pointed}.  Each of its nine query labels occurs once among the $R_{10}$ elements and
twice among the graphic elements.  To construct $P_{h+1}^{\mathrm{reg}}$,
attach a distinct copy of $P_h^{\mathrm{reg}}$ by a $2$-sum at each of the
$27$ non-target elements of a fresh copy of $P_{\mathrm{reg}}$.  The three copies attached at elements carrying the same query label receive
identical query labels on corresponding coordinates; distinct query labels of
the fresh copy use disjoint blocks of input coordinates. The resulting
program has $9^{h+1}$ distinct queries even though its program matroid has
$1+27^{h+1}$ elements.

Use the weight assignment of Proposition~\ref{prop:r10-regular-route-b} at
every level.  Thus an outer $R_{10}$ element has weight $p=5/4$ and an outer
graphic element has weight $1$.  When a child program is attached at an element of weight $\omega$, multiply
all non-target weights in that child copy by $\omega$.  Let $\phi_h^{\mathrm{reg}}$ denote the resulting weighted representation, and
set $C_+(h):=\max_{F_h(x)=1}c_+(\phi_h^{\mathrm{reg}},x)$ and
$C_-(h):=\max_{F_h(x)=0}c_-(\phi_h^{\mathrm{reg}},x)$.

\begin{theorem}
\label{thm:recursive-r10-regular-cost}
For every $h\geq1$, $P_h^{\mathrm{reg}}$ is a regular matroid span program
computing $F_h$, and
\begin{equation}
    C_+(h)=\left(\frac{12}{5}\right)^h,
    \qquad
    C_-(h)=\left(\frac{29}{4}\right)^h.
\end{equation}
Consequently,
$\operatorname{lincost}(P_h^{\mathrm{reg}})\leq(\sqrt{87/5})^h$.
\end{theorem}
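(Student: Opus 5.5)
We argue by induction on $h$, with the case $h=1$ given by Proposition~\ref{prop:r10-regular-route-b}. The argument has three parts: regularity and correctness, a scaling law for the energies under a $2$-sum, and a recursion for the worst-case energies.

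\emph{Regularity and correctness.} The program matroid of $P_{h+1}^{\mathrm{reg}}$ is an iterated $2$-sum of regular matroids, so it is regular. We keep the program matroid connected so that every summing element is neither a loop nor a coloop. For correctness, repeat the circuit argument of Proposition~\ref{prop:recursive-r10-closure}. A target circuit of the glued matroid replaces each outer element $e$ it uses by a target circuit of the child attached at $e$. Hence, on input $x$, outer element $e$ is effectively available exactly when its child accepts $x^{(i)}$, where $i$ is the query label of $e$. By induction this happens exactly when $F_h(x^{(i)})=1$. The three elements carrying the same label therefore become available together, which is exactly the availability pattern of $P_{\mathrm{reg}}$ on the input $(F_h(x^{(1)}),\ldots,F_h(x^{(9)}))$. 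Since $P_{\mathrm{reg}}$ computes $f_R$, the glued program computes $F_{h+1}$.

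\emph{Energies via the $2$-sum.} Fix an input $x$. The relevant input minor (restriction for positive $x$, contraction for negative $x$) is again a $2$-sum of the outer minor with the children's minors. We apply Theorem~\ref{thm:seymour-two-sum} together with Proposition~\ref{prop:seymour-two-sum-input-minors}.

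If the child at $e$ accepts, it is replaced by the single element $e$ with effective weight $\omega_e/c_+^{\mathrm{child}}$. Here homogeneity~\eqref{eq:two-sum-homogeneity} accounts for the scaling of that child by $\omega_e$. In the contraction minor, such an $e$ is effectively contracted, as the value $\infty$ indicates.

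If the child rejects, $e$ is effectively deleted in the restriction minor. In the contraction minor, $e$ is replaced by weight $\omega_e c_-^{\mathrm{child}}$.

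\emph{Recursion for $C_\pm$.} For positive $x$, the reconstruction energy equals that of $P_{\mathrm{reg}}$ at the outer input $y=(F_h(x^{(i)}))_i$, with weights $\omega_e/c_{+,e}$ on the available elements. Reconstruction energy is monotone nonincreasing in available weights, and it scales by $1/\lambda$ under a common scaling $\lambda$. Using $c_{+,e}\le C_+(h)$, we get $c_+\le C_+(h)\,c_+^{\mathrm{out}}(y)\le C_+(h)\cdot\tfrac{12}{5}$. Symmetrically, for negative $x$, separation energy is monotone nondecreasing in the weights and scales by $\lambda$, which gives $c_-\le C_-(h)\cdot\tfrac{29}{4}$.

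For equality, choose an outer input $y$ attaining the outer maximum. In each block $i$, choose a child input of value $y_i$ that attains the depth-$h$ maximum of the matching sign. This is possible because all three copies sharing label $i$ read the same block $x^{(i)}$ and are identical, so a single child input attains the maximum in all three copies at once. With these choices every child energy equals $C_\pm(h)$, and the effective weights are exactly the outer weights scaled by $1/C_+(h)$ or by $C_-(h)$. This gives equality in the induction step.

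Finally, Corollary~\ref{cor:regular-linear-cost} bounds $\operatorname{lincost}(P_h^{\mathrm{reg}})$ by $\sqrt{C_+(h)C_-(h)}=(\sqrt{87/5})^h$.

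\emph{Main obstacle.} The delicate step is the bookkeeping of the degenerate $2$-sum cases, where the effective weight is $0$ or $\infty$ and the interpretation is deletion or contraction. We must check that these limits exactly reproduce the outer program evaluated at the Boolean input $y$, that the target never becomes a loop or coloop in a way that spoils the ratio identities, and that several simultaneous $2$-sums can be reduced one at a time by Theorem~\ref{thm:seymour-two-sum}.
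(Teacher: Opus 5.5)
Your proposal is correct and follows essentially the same route as the paper: regularity from closure under $2$-sums, correctness from the $2$-sum circuit description, and then the upper bounds $C_\pm(h+1)\le C_\pm(h)\,C_\pm(1)$ via Proposition~\ref{prop:seymour-two-sum-input-minors}, homogeneity, and monotonicity of the energies in the weights. The one difference is the equality step. The paper evaluates at the specific sensitivity-hard inputs $X_{h+1}^{(1)}$ and $X_{h+1}^{(0)}$, whose outer patterns happen to lie in extremal boundary families of $P_{\mathrm{reg}}$. You instead take arbitrary maximizers at the outer and child levels, which works equally well and avoids having to check which outer patterns are extremal.
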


\begin{proof}
Regularity follows from closure of regular matroids under $2$-sums.  The
three elements carrying one query label are driven by the same child input,
so each is present exactly when that child program spans its target.
The outer availability pattern is thus
$(F_h(x^{(1)}),\ldots,F_h(x^{(9)}))$, and the outer copy of
$P_{\mathrm{reg}}$ computes $f_R$.  Hence $P_{h+1}^{\mathrm{reg}}$ computes
$F_{h+1}$.

Put $A:=12/5$ and $B:=29/4$. Suppose first that $F_{h+1}(x)=1$. If
$F_h(x^{(i)})=1$, then \eqref{eq:two-sum-positive-input-value} gives
$m_i^+=1/c_+(\phi_h^{\mathrm{reg}},x^{(i)})\geq A^{-h}$; if
$F_h(x^{(i)})=0$, it gives $m_i^+=0$. After the prescribed scaling, the three
elements carrying query label $i$ have finite weights $p m_i^+$, $m_i^+$,
and $m_i^+$. Increasing available weights cannot increase reconstruction
energy, so homogeneity and Proposition~\ref{prop:r10-regular-route-b} give
$c_+(\phi_{h+1}^{\mathrm{reg}},x)\leq A^{h+1}$.

If $F_{h+1}(x)=0$, then \eqref{eq:two-sum-negative-input-value} contracts the
$i$th attachment when $F_h(x^{(i)})=1$, while for $F_h(x^{(i)})=0$ it gives
$m_i^-=c_-(\phi_h^{\mathrm{reg}},x^{(i)})\leq B^h$. Separation energy is
nondecreasing in the finite noncontracted weights, so the same homogeneity
argument gives $c_-(\phi_{h+1}^{\mathrm{reg}},x)\leq B^{h+1}$.

The inputs $X_{h+1}^{(1)}$ and $X_{h+1}^{(0)}$ from the proof of
Proposition~\ref{prop:recursive-r10-sensitivity} attain both inequalities.
For $X_{h+1}^{(1)}$, every finite value $m_i^+$ equals $A^{-h}$; for
$X_{h+1}^{(0)}$, every finite value $m_i^-$ equals $B^h$.
Proposition~\ref{prop:r10-regular-route-b} and homogeneity together give
equality. The base case is the same proposition. Hence
$\mathcal E(\phi_h^{\mathrm{reg}})=\sqrt{C_+(h)C_-(h)}=(\sqrt{87/5})^h$,
which is an admissible representation of $P_h^{\mathrm{reg}}$.
\end{proof}

The exact depth-two values $C_+(2)=144/25$ and $C_-(2)=841/16$ are
independently checked in Appendix~\ref{app:r10-recursive-program}. The
standard conversion of represented span programs to bounded-error quantum
query algorithms~\cite{Reichardt2011,Reichardt2014Equivalent} gives
$Q_{\mathcal Q_h}(F_h)=O((\sqrt{87/5})^h)=O(N_h^{0.650017807848\ldots})$.
Together with Theorem~\ref{thm:recursive-r10-randomized-lower}, this is a
polynomial separation between bounded-error quantum and randomized query
complexity for the family $F_h$.

\subsection{Quantum query complexity}

Let $a:=\operatorname{Adv}^{\pm}_{\mathcal Q_R}(f_R)$. Reichardt's exact
composition theorem for the general adversary bound
\cite[Theorem~7.2]{Reichardt2009} applies to the disjoint child blocks in
Definition~\ref{def:recursive-r10-function}; its lower-bound direction is the
composition inequality of H{\o}yer, Lee, and {\v S}palek
\cite{HoyerLeeSpalek2007}. Hence
\begin{equation}
    \operatorname{Adv}^{\pm}_{\mathcal Q_h}(F_h)=a^h.
\end{equation}
Tightness of the general adversary bound
\cite{Reichardt2011,Reichardt2014Equivalent} then gives
$Q_{\mathcal Q_h}(F_h)=\Theta(a^h)$. By
Theorem~\ref{thm:r10-source-interval}, $3.908755342845<a<3.9301$. Thus, for
$\alpha:=\log_9 a$,
$Q_{\mathcal Q_h}(F_h)=\Theta(N_h^\alpha)$ with
$0.620427702631\ldots<\alpha<0.622906226777\ldots$.

\begin{theorem}
\label{thm:recursive-r10-comparison}
For the recursive pointed-$R_{10}$ closure family of
Definition~\ref{def:recursive-r10-function}, with
$a=\operatorname{Adv}^{\pm}_{\mathcal Q_R}(f_R)$,
\begin{equation}
\begin{aligned}
R^{1/3}_{\mathcal Q_h}(F_h)&\geq\tfrac13 5^h,\\
\operatorname{lincost}(P_h^{\mathrm{nat}})&=5^h,\\
\operatorname{lincost}(P_h^{\mathrm{reg}})&\leq(\sqrt{87/5})^h,\\
Q_{\mathcal Q_h}(F_h)&=\Theta(a^h),
\qquad 3.908755342845<a<3.9301.
\end{aligned}
\end{equation}
\end{theorem}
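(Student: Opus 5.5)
This theorem collects four results already established in this section, so the proof is a matter of citing them in order. The randomized line is the $\epsilon=1/3$ case of Theorem~\ref{thm:recursive-r10-randomized-lower}. That theorem rests on the sensitivity count $s_0(F_h)=s_1(F_h)=5^h$ of Proposition~\ref{prop:recursive-r10-sensitivity} and on the coupling argument: on each sensitive coordinate, the same internal randomness forces that coordinate to be queried with probability at least $1-2\epsilon$. The second line is Proposition~\ref{prop:recursive-r10-natural-cost}. There, unit weights together with the bound of $5^h$ on target circuit and cocircuit sizes give the upper bound, and averaging over the substituted families $\mathcal C_h,\mathcal D_h$ followed by Cauchy--Schwarz gives the matching lower bound.

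The third line is obtained from Theorem~\ref{thm:recursive-r10-regular-cost}. The weighted representation $\phi_h^{\mathrm{reg}}$ is an admissible real representation of the regular program $P_h^{\mathrm{reg}}$. Its balanced cost is $\mathcal E=\sqrt{C_+(h)C_-(h)}=(\sqrt{87/5})^h$, and $\operatorname{lincost}$ is the infimum of $\mathcal E$ over representations (Definition~\ref{def:linear-cost}), so the inequality follows directly. The only substantive ingredients are the $2$-sum effective weights of Proposition~\ref{prop:seymour-two-sum-input-minors}, the homogeneity relation \eqref{eq:two-sum-homogeneity}, and the monotonicity remarks; these are all handled in that theorem's induction.

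The fourth line is the step with real content. I would first invoke Reichardt's exact composition theorem for the general adversary bound on disjoint blocks. Applied inductively, with $\mathcal Q_h$ the coordinate queries on the $9^h$ leaves, it gives $\operatorname{Adv}^{\pm}_{\mathcal Q_h}(F_h)=a^h$. Adversary tightness then yields $Q_{\mathcal Q_h}(F_h)=\Theta(a^h)$: the lower bound comes from the comparison inequality after Theorem~\ref{thm:correlation-transport-adversary}, and the upper bound from the span-program-to-algorithm conversion. Because these are standard Boolean coordinate queries, the cited theorems apply verbatim.

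The numerical bounds on $a$ are Theorem~\ref{thm:r10-source-interval}. The main obstacle is therefore already discharged by the exact certificates. The one point to check is that the constants hidden in $\Theta$ are independent of $h$, which holds because the composition theorem is exact and the tightness constants are universal.
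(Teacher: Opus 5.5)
Your proposal is correct and follows the paper's proof: both simply assemble Theorem~\ref{thm:recursive-r10-randomized-lower} at $\epsilon=1/3$, Proposition~\ref{prop:recursive-r10-natural-cost}, Theorem~\ref{thm:recursive-r10-regular-cost}, exact adversary composition together with adversary tightness, and the certified interval of Theorem~\ref{thm:r10-source-interval}. One point is worth adding, and the paper also leaves it implicit: $Q_{\mathcal Q_h}$ is defined in the phase-query model, so the span-program upper bound is carried into that model by Theorem~\ref{thm:standard-quantum-equivalence}.
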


\begin{proof}
Combine Theorem~\ref{thm:recursive-r10-randomized-lower},
Proposition~\ref{prop:recursive-r10-natural-cost},
Theorem~\ref{thm:recursive-r10-regular-cost}, the exact adversary composition
above, and Theorem~\ref{thm:r10-source-interval}.
\end{proof}

The semidefinite optimization used for this family is confined to the fixed
nine-bit problem in Theorem~\ref{thm:r10-source-interval}.  Its exact lower
and upper certificates are recorded in
Appendix~\ref{app:r10-adversary-certificates}; the source-matroid agreement
flats organize the finite adversary upper-certificate calculation. The
dependence on $h$ is obtained from exact adversary composition and the
$2$-sum identities of
Sections~\ref{sec:seymour-calculus}--\ref{sec:r10-summand}.  No semidefinite program whose dimension grows with
$N_h$ is required.

The natural program has optimized cost $5^h$, while the displayed
representation of the alternative program costs $(\sqrt{87/5})^h$.
The finite improvement
$5\mapsto\sqrt{87/5}$ from Section~\ref{sec:r10-pointed} propagates
to a polynomial asymptotic improvement, while the adversary composition
identifies a still smaller quantum-query exponent.

Appendix~\ref{app:r10-classical-stress} checks two natural refinements of the
classical lower bound. Fractional block sensitivity is exactly $5^h$ on both
output fibers, while the certified sabotage-complexity base is
$240947/53451<5$. Thus neither measure improves the sensitivity-based
$5^h$ lower-bound scale for this family; strengthening the randomized lower
bound would require a different technique.

\section{Discussion and Outlook}
\label{sec:discussion}

For a fixed regular program matroid, the remaining continuous freedom is now
comparatively explicit. Section~\ref{sec:regular-matroid-optimization}
reduces real representations to positive element weights, while
Sections~\ref{sec:seymour-calculus} and~\ref{sec:r10-summand} decompose the
resulting weighted calculation into graphic and cographic pieces together
with finite $R_{10}$ data. The recursive example of
Section~\ref{sec:r10-recursive} shows that these local calculations can be
propagated through arbitrarily large programs without repeating the
underlying semidefinite optimization.

Related linear-algebraic models expose some of the same structural choices,
although with different objectives. Secret-sharing theory studies monotone
access structures through polymatroid extensions; linear secret-sharing
schemes and monotone span programs allow repeated participant labels; and
linear realizations of a fixed access structure can be optimized for
information rate~\cite{FarrasMartiFarrePadro2012,CramerEtAl2008,
Beimel2011SecretSharingSurvey,TangGaoZhang2013}. In quantum span-program
algorithm design, optimizing numerical weights inside a chosen construction
is likewise standard~\cite{ReichardtSpalek2008,CornelissenMandePatro2025}.
These comparisons motivate keeping the abstract response-labeled program
separate from the optimization over its real representations. Our objective
is the positive/negative span-program cost.

The larger optimization problem remains discrete: one must still choose the
response-labeled program matroid. A natural next step is therefore a global
optimizer over regular program matroids, combining the graphic, cographic,
and $R_{10}$ calculations with the exact $1$-, $2$-, and $3$-sum formulas.
Such an optimizer would have to search over labeled regular matroids and
their decomposition structure as well as over element weights. Seymour
decomposition makes the evaluation of a proposed regular program
compositional; controlling the search over program structures is the
remaining problem.

A related question is when restriction to regular program matroids loses
nothing. The source-matroid adversary calculation gives a program-independent
lower bound, while any regular construction gives an upper bound through
$\operatorname{lincost}$. Matching the two certifies global optimality, not
merely optimality within the regular class. It would be useful to identify
structural properties of a query problem, its source matroid, or its optimal
adversary factorization that imply the existence of an optimal regular
program matroid. The pointed-$R_{10}$ example also leaves open the converse
possibility: a regular program may substantially improve on the natural
matroidal construction without attaining the unrestricted optimum.

Undirected $s$--$t$ connectivity is a particularly natural test case.
The standard incidence span program is graphic
~\cite{KarchmerWigderson1993,BelovsReichardt2012}, and its reconstruction
and separation energies admit the familiar resistance and capacitance
interpretations
~\cite{JefferyKimmel2017,JarretJefferyKimmelPiedrafita2018}. This makes
$s$--$t$ connectivity attractive from two directions. Structurally, one can
ask whether the incidence program, or another graphic program, is optimal
among regular program matroids and whether source-matroid certificates can
prove that optimality directly. Algorithmically, graph families and promise
versions of connectivity provide a setting in which a global regular
optimizer might select a different graphic, cographic, or decomposed
program and thereby produce a new query algorithm.

These questions separate two possible outcomes for $s$--$t$ connectivity.
If the incidence construction is globally optimal within the relevant
class, the framework should expose a structural reason for that fact. If it
is not, the competing regular program matroid gives a concrete alternative
construction to analyze.

\section*{Acknowledgments}

We used AI tools (OpenAI's ChatGPT) to explore proof strategies, check written proofs, develop and test reproducibility code, survey and compare relevant literature, and assist in revising and auditing the manuscript. All mathematical claims, proofs, citations, and final text were reviewed by the authors.

\phantomsection
\addcontentsline{toc}{section}{References}
\begingroup
\small
\setlength{\parskip}{0pt}
\hbadness=10000
\bibliographystyle{plain}
\bibliography{references}
\endgroup

\part{Appendices}
\appendix

\section{Exact Certificates for the Pointed \texorpdfstring{$R_{10}$}{R10} Adversary Bound}
\label{app:r10-adversary-certificates}

This appendix records the conventions and exact checks supporting
Theorem~\ref{thm:r10-source-interval}. The complete coefficient arrays are
stored in \path{anc/r10_adversary_reproducibility_archive/}, where
they can be replayed without an SDP solver or floating-point acceptance
threshold.

\subsection{Source convention and lower certificate}
\label{app:r10-exact-source-certificate}

Index the nine edges of $K_{3,3}$ in row-major order, using the column-to-edge
correspondence in Section~\ref{sec:r10-pointed}, and encode a subset by
its nine-bit mask. Column $0$ of Figure~\ref{fig:r10-k33} is the target; the
remaining columns are identified with the nine edge queries. The verifier
checks by exact rank on all $512$ masks that this closure predicate agrees
with Lemma~\ref{lem:r10-boundary-inputs} and verifies the $72$ source
symmetries generated by row permutations, column permutations, and
transposition.

The lower certificate is a $G$-invariant rational adversary witness on the
restricted $87\times51$ opposite-output input space. It is specified by one
integer coefficient on each of $71$ pair orbits with common denominator
$10^6$. For a representative query, the verifier forms the integer filtered
matrix $F$ and checks
\begin{equation}
    10^{12}I-F^{\mathsf T}F\succ0
\end{equation}
by an exact rational $LDL^{\mathsf T}$ factorization. Source symmetry gives
the remaining eight query inequalities. Exact integer test vectors then
certify a Rayleigh quotient larger than
\begin{equation}
    \frac{781751068569}{200000000000}
    =3.908755342845.
\end{equation}
The machine-readable orbit table and test vectors are stored in the
\texttt{lower/} subdirectory of
\path{anc/r10_adversary_reproducibility_archive/}.

\subsection{Upper routing certificate}
\label{app:r10-exact-routing-certificate}

The upper certificate starts from a dimension-$32$ dyadic filtered
factorization. For each opposite-output pair, its exact residual is assigned
to a fresh orthogonal coordinate on the least query separating that pair.
This pair-specific repair makes every factorization identity exact while
adding the absolute residual to the squared load of each endpoint.

Exact integer summation gives repaired loads
\begin{equation}
    P=\frac{1106245605369758}{2^{48}},
    \qquad
    N=\frac{1106181546001981}{2^{48}}.
\end{equation}
They satisfy
\begin{equation}
    PN<\left(\frac{39301}{10000}\right)^2,
\end{equation}
which gives the upper endpoint $3.9301$ in
Theorem~\ref{thm:r10-source-interval}.

\subsection{Ancillary verification}
\label{app:r10-ancillary-verification}

The directory
\path{anc/r10_adversary_reproducibility_archive/} contains the exact
source convention, the $71$-orbit lower certificate, the dimension-$32$
dyadic base factorization, and a verifier. From that directory, running
\begin{center}
\texttt{python3 verify\_all.py}
\end{center}
checks all nine lower filtered-norm constraints, the exact lower Rayleigh
inequality, all upper residual repairs, and the strict $3.9301$ load-product
inequality. The archive manifest records byte-level hashes for the replay
data.

\section{Exact Verification of the Explicit Pointed \texorpdfstring{$R_{10}$}{R10} Programs}
\label{app:r10-program-verification}

The analytic program-cost arguments are given in
Section~\ref{subsec:r10-pointed}. The verifier in
\path{anc/r10_program_reproducibility/} independently reconstructs the
displayed $R_{10}$ matrix and the regular parallel program using exact
rational arithmetic.

\subsection{Boundary inputs}
\label{app:r10-program-boundary}

Using the row-major edge order, the verifier evaluates the closure predicate
on all $512$ inputs. It confirms that the minimal positives are the six
perfect matchings and the nine sets $R_i\cup C_j$, and that the maximal negatives are
the nine four-edge sets and six six-edge sets in
Lemma~\ref{lem:r10-boundary-inputs}. These boundary families suffice for the
worst-case reconstruction and separation energies by monotonicity under
availability.

\subsection{Natural and regular programs}
\label{app:r10-natural-program-verification}
\label{app:r10-regular-program-verification}

For the natural pointed-$R_{10}$ program, exact circuit and cocircuit
enumeration gives boundary energies $3,5,5,3$ on the four families
$P_3,P_5,N_4,N_6$, respectively, at unit weights. It also verifies the
incidence counts used in the Cauchy--Schwarz lower bound proving
$\operatorname{lincost}(P_R)=5$.

For $P_{\mathrm{reg}}$, the verifier constructs the exact representation
described in Section~\ref{sec:r10-pointed} by identifying the two target
columns and placing the remaining coordinates in complementary subspaces. It
checks the resulting closure rule on all $512$ inputs. At $p=5/4$, the four boundary
energies are
\begin{equation}
\begin{array}{c|cccc}
\text{family} & P_3 & P_5 & N_4 & N_6\\
\hline
\text{energy} & 12/5 & 12/5 & 29/4 & 27/4.
\end{array}
\end{equation}
Thus $C_+=12/5$, $C_-=29/4$, and $C_+C_-=87/5$.

\subsection{Ancillary verification}
\label{app:r10-program-ancillary}

The directory \path{anc/r10_program_reproducibility/} contains the
standard-library Python replay script and manifest. The script uses exact
fractions throughout and contains no floating-point acceptance threshold.

\section{Finite Data for the \texorpdfstring{$R_{10}$}{R10} Decomposition}
\label{app:r10-atlas}

This appendix records the finite data used to treat an $R_{10}$ summand in a
Seymour decomposition. The structural facts about $R_{10}$ itself are
standard; the orbit enumeration and sparse weighted-basis data below were
obtained by exact computation from the representation in
Figure~\ref{fig:r10-k33}. The computation is independent of the pointed
closure application and of its adversary certificates.

\subsection{Geometry and marked configurations}
\label{app:r10-atlas-geometry}

The displayed representation has rank five and $162$ bases. It has fifteen
four-element circuits, fifteen six-element circuits, no triangles, and no
triads. Its automorphism group has order $720$ and acts transitively on the
ten elements. The stabilizer of a marked element has order $72$. For every
element $e$,
\begin{equation}
    R_{10}\setminus e\cong M(K_{3,3}),
    \qquad
    R_{10}/e\cong M^*(K_{3,3}).
    \label{eq:r10-atlas-single-minors}
\end{equation}
Every minor on at most eight elements is both graphic and cographic.

Since $R_{10}$ has no triangle or triad, an undeformed $R_{10}$ summand has
no native $3$-sum attachment. Its nontrivial attachments in a Seymour
decomposition are ordinary $2$-sums at distinct elements. After
fixing element $0$ as an outgoing element or target, subsets of the other
nine elements fall into the following $26$ orbits under its stabilizer.

\begin{center}
\small
\begin{longtable}{c l r r c}
\caption{Marked $R_{10}$ port configurations with marked element $0$.}\label{tab:r10-marked-configurations}\\
\hline
Type & Incoming elements & Orbit size & Stabilizer & Dual \\
\hline
\endfirsthead
\hline
Type & Incoming elements & Orbit size & Stabilizer & Dual \\
\hline
\endhead
P00 & $\varnothing$ & 1 & 72 & P00 \\
P01 & $\{1\}$ & 9 & 8 & P01 \\
P02 & $\{1,2\}$ & 18 & 4 & P03 \\
P03 & $\{1,3\}$ & 18 & 4 & P02 \\
P04 & $\{1,2,3\}$ & 36 & 2 & P05 \\
P05 & $\{1,2,4\}$ & 36 & 2 & P04 \\
P06 & $\{2,3,5\}$ & 6 & 12 & P07 \\
P07 & $\{1,4,5\}$ & 6 & 12 & P06 \\
P08 & $\{1,2,3,4\}$ & 36 & 2 & P08 \\
P09 & $\{1,2,3,5\}$ & 36 & 2 & P10 \\
P10 & $\{1,2,4,5\}$ & 36 & 2 & P09 \\
P11 & $\{1,3,4,6\}$ & 9 & 8 & P12 \\
P12 & $\{2,4,5,6\}$ & 9 & 8 & P11 \\
P13 & $\{1,2,3,4,5\}$ & 36 & 2 & P13 \\
P14 & $\{1,2,3,4,6\}$ & 36 & 2 & P16 \\
P15 & $\{1,2,4,5,6\}$ & 9 & 8 & P17 \\
P16 & $\{2,3,4,5,6\}$ & 36 & 2 & P14 \\
P17 & $\{2,3,4,5,7\}$ & 9 & 8 & P15 \\
P18 & $\{1,2,3,4,5,6\}$ & 36 & 2 & P19 \\
P19 & $\{1,2,3,4,5,7\}$ & 36 & 2 & P18 \\
P20 & $\{1,3,4,5,6,7\}$ & 6 & 12 & P21 \\
P21 & $\{2,3,4,5,6,8\}$ & 6 & 12 & P20 \\
P22 & $\{1,2,3,4,5,6,7\}$ & 18 & 4 & P23 \\
P23 & $\{1,2,3,4,5,6,8\}$ & 18 & 4 & P22 \\
P24 & $\{1,2,3,4,5,6,7,8\}$ & 9 & 8 & P24 \\
P25 & $\{1,2,3,4,5,6,7,8,9\}$ & 1 & 72 & P25 \\
\hline
\end{longtable}
\end{center}

The listed set contains only incoming elements. For an outgoing
configuration, element $0$ is added as the outgoing $2$-sum element; for a
root configuration, element $0$ is retained as the target.

\subsection{Weighted basis data}
\label{app:r10-atlas-response}

Fix the marked element $0$. Write
\begin{equation}
    P_0(x)
    :=
    \sum_{\substack{B\in\mathcal B(R_{10})\\0\notin B}}x^B,
    \qquad
    P_1(x)
    :=
    \sum_{\substack{B\in\mathcal B(R_{10})\\0\in B}}x^{B\setminus\{0\}}.
    \label{eq:r10-atlas-polynomials}
\end{equation}
Each polynomial has $81$ square-free monomials with coefficient one; their
degrees are five and four. For any marked configuration in
Table~\ref{tab:r10-marked-configurations}, substitute the incoming $2$-sum weights
for the corresponding variables and the ordinary element weights for the
remaining variables. The quantity supplied to a parent is
\begin{equation}
    m=\frac{P_0}{P_1},
    \label{eq:r10-atlas-ratio}
\end{equation}
and the target-basis ratio at a root is its reciprocal.

The sparse $81$-term supports need not be printed in the manuscript. They
are stored in
\path{anc/r10_atlas/r10_response_atlas.json}. An independent
determinant check
is obtained from the matrix $A=[a_0\;a_1\;\cdots\;a_9]$ of
Figure~\ref{fig:r10-k33}. With
\begin{equation}
    L(x):=\sum_{f=1}^9 x_f a_fa_f^{\mathsf T},
\end{equation}
Cauchy--Binet and total unimodularity give $P_0=\det L$ and $P_1$ as the
$(0,0)$ principal cofactor in the coordinates of Figure~\ref{fig:r10-k33}. At unit weights,
$P_0=P_1=81$.

\subsection{Deletion and contraction}
\label{app:r10-atlas-minors}

The ratio~\eqref{eq:r10-atlas-ratio} extends to zero and infinite incoming
weights by ordinary matroid minors. A zero incoming weight deletes the
corresponding element and an infinite incoming weight contracts it, after
cancelling common homogeneous factors. If the marked element becomes a loop,
then $m=\infty$; if it becomes a coloop, then $m=0$. When this value is sent
to a parent through a $2$-sum, $m=0$ deletes the parent's common element and
$m=\infty$ contracts it.

The exact automorphism groups give $2{,}488$ unrooted and $4{,}480$ marked
orbits of the four element states ordinary, port, deleted, and contracted.
These are embedding-orbit counts with the original element roles retained,
rather than counts of abstract unlabelled minor isomorphism classes. The
ancillary verifier recomputes both counts exactly by Burnside's lemma.

\subsection{Ancillary data}
\label{app:r10-atlas-ancillary}

The self-contained directory \path{anc/r10_atlas/} contains the
geometry data, response data, generator, exact verifier, and checksum
manifest. Its principal files are
\begin{center}
\small
\texttt{r10\_port\_geometry.json}\\
\texttt{r10\_response\_atlas.json}\\
\texttt{build\_atlas.py}\\
\texttt{verify\_atlas.py}\\
\texttt{verification\_results.json}\\
\texttt{INTRINSIC\_ATLAS\_FREEZE.json}\\
\texttt{R10\_ATLAS\_MANIFEST.sha256}.
\end{center}
The verifier reconstructs ranks, bases, circuits, cocircuits, automorphisms,
duality, the marked orbit partition, the deletion/contraction orbit counts,
and the sparse basis-polynomial supports from the displayed $R_{10}$ matrix.
The marked table above is also recorded in
\texttt{r10\_port\_geometry.json}.

\section{Exact Verification for Recursive Pointed \texorpdfstring{$R_{10}$}{R10} Closure}
\label{app:r10-recursive-verification}

This appendix records finite checks for the recursive pointed-$R_{10}$
construction. The recursive statements themselves are proved algebraically;
the scripts check the base gate, the first nontrivial $2$-sum level, and the
exact cost recurrence used in the explicit regular program.

\subsection{Base closure data}
\label{app:r10-recursive-base}

Taking element $0$ as the target, exact rank evaluation on all $512$
availability sets gives $256$ positive and $256$ negative inputs. There are
six minimal positive sets of size three and nine of size five, and six
maximal negative sets of size six and nine of size four. The inputs
\begin{equation}
    x^{(0)}=\mathbf 1_{\{4,5,8,9\}},
    \qquad
    x^{(1)}=\mathbf 1_{\{4,5,6,8,9\}}
\end{equation}
have five sensitive coordinates on the negative and positive sides,
respectively. Write $C_b(f)$ for the maximum, over inputs of output $b$,
of the smallest number of fixed coordinates that force that output.
These certificate complexities are distinct from the energy maxima $C_+$
and $C_-$. Direct enumeration also gives
\begin{equation}
    s_0(f_R)=s_1(f_R)=C_0(f_R)=C_1(f_R)=5.
    \label{eq:r10-recursive-base-sensitivity}
\end{equation}

The same exact reconstruction rechecks the four boundary energies of the
natural and alternative regular programs recorded in
Appendix~\ref{app:r10-program-verification}.

\subsection{Additional classical bounds}
\label{app:r10-classical-stress}

The exact checks in \path{anc/audit_classical_bounds/} give fractional block
sensitivity $3$ on $247$ inputs in each output fiber and $5$ on the remaining
nine inputs in each fiber. The package supplies a feasible fractional packing
and an equal-value fractional cover for every input. Consequently
$\operatorname{fbs}_0(f_R)=\operatorname{fbs}_1(f_R)=5$.
Recursive certificates give $C_b(F_h)\leq5^h$, while
Proposition~\ref{prop:recursive-r10-sensitivity} gives $s_b(F_h)=5^h$.
The inequalities $s_b\leq\operatorname{fbs}_b\leq C_b$ therefore imply
$\operatorname{fbs}_b(F_h)=5^h$ directly, without assuming a general
multiplicativity theorem for fractional block sensitivity.

Under the zero-error, worst-case expected-query convention for randomized
sabotage complexity of Ben-David and Kothari~\cite{BenDavidKothari2018Sabotage},
the same package certifies
\[
 \operatorname{RS}(f_R)=\frac{240947}{53451}=4.507810892\ldots.
\]
The upper certificate is a rational randomized decision policy; the lower
certificate is a distribution on single-sabotage inputs evaluated by an
independent exact decision-tree recursion. For total Boolean functions,
restriction to single-sabotage inputs preserves this complexity
~\cite[Thm.~3.6]{BenDavidKothari2018Sabotage}. The composition and comparison
theorems give
$R^\epsilon(F_h)\geq(1-2\epsilon)\operatorname{RS}(f_R)^h$
~\cite[Thms.~3.4 and 4.5]{BenDavidKothari2018Sabotage}, which is weaker than
the sensitivity bound. Thus neither of these two measures improves its base
$5$.

\subsection{Depth-two closure check}
\label{app:r10-recursive-boolean}

The verifier constructs the binary representation obtained by attaching nine
copies of pointed $R_{10}$ to a fresh root copy by literal $2$-sums. The
result has $82$ elements and rank $41$. It checks all $512$ possible outer
child-output patterns against Boolean self-composition of the pointed
closure function and then checks an additional $500$ deterministic
pseudo-random $81$-bit leaf inputs directly by matrix closure. The recursive
hard inputs have sensitivity $25$ at depth two.

\subsection{Recursive regular-program costs}
\label{app:r10-recursive-program}

For the alternative regular program with weight $5/4$ on its $R_{10}$
elements and weight $1$ on its graphic elements,
the verifier independently reconstructs the $R_{10}$ and graphic components
and checks
\begin{equation}
    C_+(1)=\frac{12}{5},
    \qquad
    C_-(1)=\frac{29}{4},
    \qquad
    C_+(1)C_-(1)=\frac{87}{5}.
\end{equation}
At depth two it checks all $512$ outer Boolean patterns in the extremal
$2$-sum envelope and obtains
\begin{equation}
    C_+(2)=\frac{144}{25},
    \qquad
    C_-(2)=\frac{841}{16},
    \label{eq:r10-recursive-depth-two-costs}
\end{equation}
with equality on the recursively defined hard positive and negative inputs.
These checks support the recurrence proof used in the scalable application;
they are not an extrapolation to larger depth.

\subsection{Ancillary verification}
\label{app:r10-recursive-ancillary}

The files for this appendix are under
\path{anc/r10_recursive_verification/}. The directory contains four
standard-library verification scripts, their shared module, and the report
and theorem ledger from the recursive calculation:
\begin{center}
\small
\texttt{verify\_base\_r10.py}\\
\texttt{verify\_recursive\_boolean.py}\\
\texttt{verify\_response\_recurrence.py}\\
\texttt{verify\_all.py}.
\end{center}
Running
\begin{center}
\texttt{python3 verify\_all.py}
\end{center}
replays all checks in this appendix using exact integer or rational
arithmetic. All four computational appendices can be replayed from the top
level of the ancillary directory with \texttt{python3 verify\_all.py}.

\section{Matroidal Form of the Span-Program/Adversary Correspondence}
\label{app:universal-normal-form}
\label{sec:matroid-span-program-optimality}
\label{sec:matroid-span-program-normal-forms}

Canonical span programs and the construction of optimal span programs from
adversary solutions are standard
~\cite{KarchmerWigderson1993,Reichardt2009,Reichardt2014Equivalent}. This
appendix records a different structural organization of that correspondence.
A represented program gives a correlation transport. In the reverse
direction, an optimal Gram factorization is expanded into free response
generators; the positive-input relations are then adjoined and contracted.
The resulting represented program has a response-labeled program matroid.
The equality of optimum values is known in the ordinary Boolean setting; the
additional organization here is the response-labeled quotient construction
and its compatibility with the distinction between source and program
matroids.

Throughout this appendix, $f:D\to\{0,1\}$ has an $f$-separation matroid
$\mathfrak M_f=(M_f,\Sigma_f)$, and $\mathcal Q$ denotes its finite query
set.

\subsection{From a represented program to correlation transport}
\label{subsec:program-to-correlation-transport}

Fix
$P=(M_P,\tau_P,\lambda_P)\in\mathscr R_{\mathbb R}(f,\mathfrak M_f)$
and a real representation $\phi_P:E_P\to V_P$.

For each query $q$, let $E_{P,q}$ be the set of non-target elements whose
response label has first coordinate $q$. The sets $E_{P,q}$ partition
$E_P\setminus\{\tau_P\}$.

For every positive input $x$, choose a reconstruction
$a_x\in\mathcal R_x(\phi_P)$ attaining
$c_+(\phi_P,x)$. For every negative input $y$, choose
$\ell_y\in\mathcal S_y(\phi_P)$ attaining
$c_-(\phi_P,y)$. These choices exist by
Proposition~\ref{prop:representation-energy-attainment}.

For $f(x)=1$ and $q\in\mathcal Q$, define
$r_{xq}\in\mathbb R^{E_{P,q}}$ by
\begin{equation}
    r_{xq}(e)
    :=
    \begin{cases}
        a_x(e),&e\in A_P(x),\\
        0,&e\notin A_P(x).
    \end{cases}
\end{equation}
For $f(y)=0$, define $s_{yq}\in\mathbb R^{E_{P,q}}$ by
$s_{yq}(e):=\ell_y(\phi_P(e))$.

The key pairing identity shows that reconstruction on a positive input and
separation on a negative input interact only through queries that distinguish
the pair.

\begin{lemma}
\label{lem:reconstruction-separation-pairing}
If $f(x)=1$ and $f(y)=0$, then
\begin{equation}
    1
    =
    \sum_{\substack{q\in\mathcal Q\\
                    (x,y)\in\operatorname{Sep}(q)}}
        r_{xq}\mathbin{\cdot}s_{yq},
\end{equation}
where
$r_{xq}\mathbin{\cdot}s_{yq}:=\sum_{e\in E_{P,q}}r_{xq}(e)s_{yq}(e)$.
\end{lemma}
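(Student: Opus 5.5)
The identity follows by applying the separating functional $\ell_y$ to the reconstruction equation for $x$, then sorting terms by query and by whether each element is available on $y$. I work throughout with the fixed choices $a_x\in\mathcal R_x(\phi_P)$ and $\ell_y\in\mathcal S_y(\phi_P)$ from the construction above.

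Since $a_x$ reconstructs the target,
\[
    \phi_P(\tau_P)=\sum_{e\in E_P\setminus\{\tau_P\}}a_x(e)\phi_P(e),
\]
with $a_x$ supported on $A_P(x)$. Applying $\ell_y$ and using $\ell_y(\phi_P(\tau_P))=1$ gives
\[
    1=\sum_{e}a_x(e)\,\ell_y(\phi_P(e)).
\]
Group this sum by the first coordinate of the response label. Because the sets $E_{P,q}$ partition the non-target elements and $a_x$ vanishes off $A_P(x)$, it becomes
\[
    1=\sum_{q\in\mathcal Q}\sum_{e\in E_{P,q}}r_{xq}(e)s_{yq}(e)=\sum_{q}r_{xq}\mathbin{\cdot}s_{yq}.
\]

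It remains to show that the terms with $(x,y)\notin\operatorname{Sep}(q)$ vanish. Fix such a $q$, so that $B_q(x)=B_q(y)$, and take $e\in E_{P,q}$ with $r_{xq}(e)\neq0$. Then $e\in A_P(x)$, and since $e\in E_{P,q}$, its label is $\lambda_P(e)=(q,B_q(x))=(q,B_q(y))$. Hence $e\in A_P(y)$, so $s_{yq}(e)=\ell_y(\phi_P(e))=0$ because $\ell_y$ vanishes on every element available on $y$. Every summand of $r_{xq}\mathbin{\cdot}s_{yq}$ therefore vanishes, which yields the stated restricted sum.

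The one step needing care is this availability bookkeeping. An element's label $(q,B)$ has first coordinate $q$ exactly when $e\in E_{P,q}$, so availability on $y$ is decided by $B_q(y)$ alone; this is what lets the equality $B_q(x)=B_q(y)$ transfer availability from $x$ to $y$. Everything else is linearity.
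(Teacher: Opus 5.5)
Your proposal is correct and follows essentially the same argument as the paper. Both apply $\ell_y$ to the reconstruction identity, group the terms by the sets $E_{P,q}$, and then use $B_q(x)=B_q(y)$ to see that every element of $E_{P,q}$ available on $x$ is also available on $y$, which kills the non-separating terms.
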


\begin{proof}
Since $a_x$ reconstructs the target,
\begin{equation}
    \phi_P(\tau_P)
    =
    \sum_{e\in A_P(x)}
        a_x(e)\phi_P(e).
\end{equation}
Applying $\ell_y$ and using
$\ell_y(\phi_P(\tau_P))=1$ gives
\begin{equation}
    1
    =
    \sum_{e\in A_P(x)}
        a_x(e)\ell_y(\phi_P(e))
    =
    \sum_{q\in\mathcal Q}
        r_{xq}\mathbin{\cdot}s_{yq}.
\end{equation}

Suppose $q$ does not separate $x$ and $y$. Then
$B_q(x)=B_q(y)$. Every element of $E_{P,q}$ which is available on $x$
is also available on $y$. Since $\ell_y$ vanishes on every
available vector at $y$, the corresponding term
$r_{xq}\mathbin{\cdot}s_{yq}$ is zero.

Only queries which separate $x$ and $y$ remain in the sum.
\end{proof}

This identity produces the positive semidefinite kernels required in the
correlation transport problem. The resulting adversary lower comparison is
the standard span-program-witness direction~\cite{Reichardt2009,Reichardt2014Equivalent},
written here in response-label and correlation-transport notation.

\begin{proposition}
\label{prop:represented-program-gives-transport}
For every real representation $\phi_P$,
\begin{equation}
    \operatorname{CT}_{\mathcal Q}(f)
    \leq
    \max
    \{
        C_+(\phi_P),
        C_-(\phi_P)
    \}.
\end{equation}
Consequently,
\begin{equation}
    \operatorname{Adv}^{\pm}_{\mathcal Q}(f)
    \leq
    \mathcal E(\phi_P).
\end{equation}
\end{proposition}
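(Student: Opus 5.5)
We build an explicit correlation transport from the optimal witnesses $a_x$ and $\ell_y$ fixed above, using the pairing identity of Lemma~\ref{lem:reconstruction-separation-pairing}. For each query $q$, concatenate the witness blocks into one vector per input: put $u_q(x):=r_{xq}$ when $f(x)=1$ and $u_q(y):=s_{yq}$ when $f(y)=0$. All of these lie in the common space $\mathbb R^{E_{P,q}}$. The Gram kernel $G_q(z,z'):=u_q(z)\mathbin{\cdot}u_q(z')$ is positive semidefinite by the argument of Proposition~\ref{prop:correlation-positive}. Its off-diagonal entries on opposite-output pairs are $r_{xq}\mathbin{\cdot}s_{yq}$.

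First we check the transport equation. For $(x,y)\in\operatorname{Sep}(f)$ with $f(x)=1$, summing $G_q(x,y)$ over the queries separating the pair gives $1$ by Lemma~\ref{lem:reconstruction-separation-pairing}. Since each $G_q$ is symmetric, the reversed pair $(y,x)$ also gets value $1$. Hence $(G_q)_q$ is a correlation transport.

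Next we compute the loads. At a positive input the load is
\[
    \sum_q\|r_{xq}\|^2=\|a_x\|_2^2=c_+(\phi_P,x),
\]
because $a_x$ is supported on $A_P(x)$ and the sets $E_{P,q}$ partition the non-target elements. At a negative input the load is
\[
    \sum_q\|s_{yq}\|^2=\|\operatorname{ev}(\ell_y)\|_2^2=c_-(\phi_P,y).
\]
So the cost is $\max\{C_+(\phi_P),C_-(\phi_P)\}$, and Definition~\ref{def:correlation-transport-value} gives the first inequality.

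For the balanced bound, apply the first inequality to the rescaled representations $\phi_P^{(\alpha,\beta)}$. Each of these is again a real representation of $M_P$, and its witnesses change as described in Proposition~\ref{prop:representation-scaling}. This gives
\[
    \operatorname{CT}_{\mathcal Q}(f)\le\max\{\lambda^{-2}C_+,\lambda^2C_-\}
\]
for every $\lambda>0$, where $\lambda^2=\beta^2/\alpha^2$. Taking the infimum over $\lambda$ and using Proposition~\ref{prop:representation-balance} yields $\operatorname{CT}_{\mathcal Q}(f)\le\sqrt{C_+C_-}=\mathcal E(\phi_P)$. Theorem~\ref{thm:correlation-transport-adversary} identifies $\operatorname{CT}_{\mathcal Q}(f)$ with $\operatorname{Adv}^{\pm}_{\mathcal Q}(f)$, which gives the second inequality.

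The only delicate points are the degenerate cases. If $C_+$ or $C_-$ is zero, the balance proposition gives the value $0$ only as an infimum, and the argument is unchanged. If an output fiber is empty, $f$ is constant and both sides are zero.
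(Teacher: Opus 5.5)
Your proposal is correct and follows the paper's proof essentially step for step. It uses the same Gram kernels built from the blocks $r_{xq}$ and $s_{yq}$, the pairing lemma for the transport equations, and the same load computation. It then applies the first bound to the rescaled representations, uses the balance proposition, and identifies $\operatorname{CT}_{\mathcal Q}(f)$ with $\operatorname{Adv}^{\pm}_{\mathcal Q}(f)$. One small correction to your constant-$f$ remark: there, only the second inequality has both sides equal to zero, while the first holds trivially because $\operatorname{CT}_{\mathcal Q}(f)=0$.
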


\begin{proof}
For each input $x$ and query $q$, define
$u_{xq}\in\mathbb R^{E_{P,q}}$ by
$u_{xq}:=r_{xq}$ when $f(x)=1$ and
$u_{xq}:=s_{xq}$ when $f(x)=0$. Set
$X_q(x,y):=u_{xq}\mathbin{\cdot}u_{yq}$.

Each $X_q$ is a positive semidefinite kernel. Indeed, for every
$\alpha:D\to\mathbb R$,
\begin{equation}
    \sum_{x,y}
        \alpha(x)\alpha(y)X_q(x,y)
    =
    \left\|
        \sum_x\alpha(x)u_{xq}
    \right\|_2^2
    \geq0.
\end{equation}

For a pair with different target values,
Lemma~\ref{lem:reconstruction-separation-pairing} gives the
correlation-transport equation. Hence $(X_q)_{q\in\mathcal Q}$ is a
correlation transport.

If $f(x)=1$, its load at $x$ is
\begin{equation}
    \sum_qX_q(x,x)
    =
    \sum_q\|r_{xq}\|_2^2
    =
    \|a_x\|_2^2
    =
    c_+(\phi_P,x).
\end{equation}
If $f(x)=0$, its load is similarly $c_-(\phi_P,x)$. Therefore the
transport cost is at most
\begin{equation}
    \max\{C_+(\phi_P),C_-(\phi_P)\}.
\end{equation}

Now rescale the target and non-target vectors as in
Proposition~\ref{prop:representation-scaling}. Every such rescaling
represents the same matroid program, so the preceding transport bound holds
for each relative scale. Proposition~\ref{prop:representation-balance}
then gives
\begin{equation}
    \operatorname{CT}_{\mathcal Q}(f)
    \leq
    \mathcal E(\phi_P).
\end{equation}
Theorem~\ref{thm:correlation-transport-adversary} identifies the left side
with $\operatorname{Adv}^{\pm}_{\mathcal Q}(f)$.
\end{proof}

Taking the two infima from
Definitions~\ref{def:linear-cost} and
\ref{def:matroid-span-program-cost} gives the first comparison.

\begin{theorem}
\label{thm:adversary-lower-bounds-msp}
Let $f:D\to\{0,1\}$, let $\mathcal Q$ be a finite query set with finite
response partitions, and suppose the query problem admits the
$f$-separation matroid $\mathfrak M_f$. Then
\begin{equation}
    \operatorname{Adv}^{\pm}_{\mathcal Q}(f)
    \leq
    \operatorname{MSP}(f,\mathfrak M_f).
\end{equation}
\end{theorem}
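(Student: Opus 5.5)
The plan is to take the two infima in order, using Proposition~\ref{prop:represented-program-gives-transport} as the pointwise bound. If $\mathscr R_{\mathbb R}(f,\mathfrak M_f)$ is empty, then $\operatorname{MSP}(f,\mathfrak M_f)=+\infty$ by Definition~\ref{def:matroid-span-program-cost} and there is nothing to prove. So fix an arbitrary $P\in\mathscr R_{\mathbb R}(f,\mathfrak M_f)$ and an arbitrary $\phi_P\in\operatorname{Rep}_{\mathbb R}(M_P)$.

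First, I will check that the hypotheses of Proposition~\ref{prop:represented-program-gives-transport} hold for this pair. The proof of that proposition picks optimal witnesses $a_x$ and $\ell_y$, and Proposition~\ref{prop:representation-energy-attainment} guarantees they exist. Those attainment statements use only that $P$ computes $f$: $\mathcal R_x(\phi_P)\neq\varnothing$ on positive inputs and $\mathcal S_y(\phi_P)\neq\varnothing$ on negative inputs, both by Proposition~\ref{prop:reconstruction-separation-acceptance} together with Proposition~\ref{prop:matroid-program-to-linear-span-program}.

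The identification $\operatorname{CT}_{\mathcal Q}(f)=\operatorname{Adv}^{\pm}_{\mathcal Q}(f)$ in Theorem~\ref{thm:correlation-transport-adversary} holds with no separation assumption. In any case, the existence of a computing program forces $\mathcal Q$ to separate $f$. Indeed, Lemma~\ref{lem:reconstruction-separation-pairing} expresses $1$ as a sum over the queries that separate $x$ and $y$, so for every opposite-output pair that sum has at least one term. Hence both sides are finite.

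Next, Proposition~\ref{prop:represented-program-gives-transport} gives $\operatorname{Adv}^{\pm}_{\mathcal Q}(f)\le\mathcal E(\phi_P)$. One degenerate case needs a remark. If $f$ is constant, then the empty-fiber convention makes one of $C_\pm$ equal to zero. Then $\mathcal E(\phi_P)=0$, and also $\operatorname{Adv}^{\pm}_{\mathcal Q}(f)=0$, since $\operatorname{Sep}(f)$ is empty and every adversary matrix vanishes. So the inequality still holds.

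Finally, take the infimum over $\phi_P\in\operatorname{Rep}_{\mathbb R}(M_P)$. By Definition~\ref{def:linear-cost} this gives $\operatorname{Adv}^{\pm}_{\mathcal Q}(f)\le\operatorname{lincost}(P)$. Then take the infimum over $P$, which by Definition~\ref{def:matroid-span-program-cost} gives the claimed bound.

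I expect the only real obstacle to be bookkeeping rather than mathematics. One point is that the scaling step inside Proposition~\ref{prop:represented-program-gives-transport} keeps us inside $\operatorname{Rep}_{\mathbb R}(M_P)$: the rescaled maps $\phi_P^{(\alpha,\beta)}$ represent the same matroid and span the same space. The other point is that finiteness of the response partitions is needed only so that $\mathsf R$ and the programs are finite, which the setup already assumes.
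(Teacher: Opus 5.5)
Your proposal is correct and is the paper's argument: apply Proposition~\ref{prop:represented-program-gives-transport} to each representation of each program, then take the infimum over $\operatorname{Rep}_{\mathbb R}(M_P)$ and then over $\mathscr R_{\mathbb R}(f,\mathfrak M_f)$. Your extra checks (the empty program class, a constant $f$, and the remark that a computing program forces $\mathcal Q$ to separate $f$) are valid but not needed, since Theorem~\ref{thm:correlation-transport-adversary} already allows the value $+\infty$. The paper leaves these points implicit.
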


\subsection{Computational factorizations}
\label{subsec:computational-factorizations}

The vector factorization used in this subsection is the standard dual-adversary/filtered-$\gamma_2$ factorization~\cite{Reichardt2009,LeeMittalReichardtSpalekSzegedy2011}. The quotient construction begins after this factorization has been fixed.

The reverse construction starts from a Gram factorization of the positive
kernels in a correlation transport.

\begin{definition}
\label{def:computational-factorization}
A \emph{computational factorization} of $(f,\mathfrak M_f)$ consists, for
each $q\in\mathcal Q$, of a finite-dimensional real coordinate space
$\mathbb R^{d_q}$ and vectors $u_{xq}\in\mathbb R^{d_q}$ for every
$x\in D$, such that
\begin{equation}
    \sum_{\substack{q\in\mathcal Q\\
                    (x,y)\in\Sigma_f(q)}}
        u_{xq}\mathbin{\cdot}u_{yq}
    =
    1
\end{equation}
whenever $f(x)\neq f(y)$. Its load at $x$ is
$\sum_q\|u_{xq}\|_2^2$, and its cost is the maximum load over $x\in D$.
\end{definition}

The dimensions $d_q$ may be zero. Since $D$ is finite, finite-dimensional
coordinate spaces always suffice.

\begin{proposition}
\label{prop:factorization-transport-equivalence}
A computational factorization and a correlation transport are equivalent
for feasibility and cost, up to the choice of Gram coordinates. Consequently,
if $\mathcal Q$ separates $f$, an optimal computational factorization exists
and its cost is $\operatorname{Adv}^{\pm}_{\mathcal Q}(f)$.
\end{proposition}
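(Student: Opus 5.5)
The proof runs in two directions: factorization to transport via Gram matrices, and transport to factorization via Gram factorization. Optimality then follows from Proposition~\ref{prop:correlation-transport-attainment} and Theorem~\ref{thm:correlation-transport-adversary}.

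\emph{Factorization to transport.} Given a computational factorization $(u_{xq})$, set $X_q(x,y):=u_{xq}\mathbin{\cdot}u_{yq}$.
\begin{itemize}
\item Each $X_q$ is a Gram kernel, hence positive semidefinite, by the same sum-of-squares identity used in Proposition~\ref{prop:represented-program-gives-transport}.
\item Since $\Sigma_f(q)=\operatorname{Sep}(q)$, the factorization identity is literally the transport equation of Definition~\ref{def:correlation-transport} for every $(x,y)\in\operatorname{Sep}(f)$.
\item The load is preserved: $\sum_qX_q(x,x)=\sum_q\|u_{xq}\|_2^2$.
\end{itemize}

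\emph{Transport to factorization.} Given a correlation transport $(X_q)$, order the finite set $D$ and regard each $X_q$ as a positive semidefinite matrix. Write $X_q=U_q^{\mathsf T}U_q$ with $U_q$ of size $d_q\times|D|$, where $d_q=\operatorname{rank}X_q$; this comes from the spectral decomposition, and $d_q=0$ is allowed. Let $u_{xq}$ be the $x$th column of $U_q$. Then $u_{xq}\mathbin{\cdot}u_{yq}=X_q(x,y)$, so the transport equations become the factorization identities and the loads agree. Any two Gram factorizations of the same kernel differ by an isometry, which is the sense of ``up to the choice of Gram coordinates.''

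Together these two maps give equality of the feasible cost sets. In particular the infimum of factorization costs equals $\operatorname{CT}_{\mathcal Q}(f)$.

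\emph{Existence and value of an optimum.} Assume $\mathcal Q$ separates $f$.
\begin{itemize}
\item Proposition~\ref{prop:correlation-transport-attainment} supplies a transport attaining $\operatorname{CT}_{\mathcal Q}(f)$.
\item Gram-factorizing that transport gives a computational factorization of the same cost, which is optimal by the cost equality.
\item Theorem~\ref{thm:correlation-transport-adversary} identifies this cost with $\operatorname{Adv}^{\pm}_{\mathcal Q}(f)$.
\end{itemize}

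No step is a real obstacle. The only points needing care are the bookkeeping checks: the transport equation involves only the pairs in $\operatorname{Sep}(f)$ and only the queries separating them, which matches the factorization's sum over $q$ with $(x,y)\in\Sigma_f(q)$; and the diagonal entries $X_q(x,x)$ are exactly the squared norms $\|u_{xq}\|_2^2$.
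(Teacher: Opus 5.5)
Your proposal is correct and follows the paper's proof essentially step for step. Both use Gram kernels $X_q(x,y)=u_{xq}\cdot u_{yq}$ in one direction and a Gram factorization of each positive semidefinite $X_q$ in the other, and both then invoke Proposition~\ref{prop:correlation-transport-attainment} for existence and Theorem~\ref{thm:correlation-transport-adversary} for the value. The paper phrases the non-uniqueness as orthogonal coordinate changes plus added zero coordinates, which is the precise form of your ``differ by an isometry'' between the spans of the factor vectors.
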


\begin{proof}
Given a computational factorization, define
$X_q(x,y):=u_{xq}\mathbin{\cdot}u_{yq}$. Each $X_q$ is a positive semidefinite kernel,
and the factorization equation is the transport equation because
$\Sigma_f(q)=\operatorname{Sep}(q)$. Moreover,
$X_q(x,x)=\|u_{xq}\|_2^2$, so the costs agree.

Conversely, each positive semidefinite kernel $X_q$ on the finite set $D$ is a Gram
kernel of finitely many vectors $u_{xq}$. Choosing such a Gram
factorization for every $q$ gives a computational factorization of the same
cost. This choice is not unique: orthogonal coordinate changes and added zero
coordinates preserve the Gram kernels and all costs. Proposition~\ref{prop:correlation-transport-attainment} gives an
optimal transport when $\mathcal Q$ separates $f$, and
Theorem~\ref{thm:correlation-transport-adversary} identifies its cost with
$\operatorname{Adv}^{\pm}_{\mathcal Q}(f)$.
\end{proof}

\subsection{The free response expansion}
\label{subsec:free-response-expansion}

The binary linear construction below is already implicit in
Reichardt's canonical-program machinery. In the extended version of
\cite{Reichardt2009}, Lemma~6.5 constructs a canonical program from adversary
vectors, Definition~4.2 complements it by adjoining the target coordinate and
free relation vectors, and Proposition~4.10 removes the free vectors. After a
uniform sign convention, these operations give the same free response
generators, positive-input relations, and linear quotient used below. Our
additional organization retains the response labels and the resulting vector
matroid, and treats that abstract program type as a stratum of the realization
optimization in Definition~\ref{def:linear-cost}.

Fix a computational factorization
$u=(u_{xq})$ with dimensions $(d_q)_{q\in\mathcal Q}$.

Let $E_u^\circ:=\{(q,B,k):q\in\mathcal Q,\ B\in\mathcal B_q,\ 1\leq k\leq d_q\}$.
Introduce one additional symbol $\widetilde\tau$, and set
$\widetilde E_u:=E_u^\circ\sqcup\{\widetilde\tau\}$.

Let $F_f(u)$ be the real vector space with basis consisting of a target
generator $\mathbf t$ and generators
$\mathbf e_{q,B,k}$ indexed by $E_u^\circ$.

Each non-target ground-set element $(q,B,k)$ receives the response label
$(q,B)$. Thus the elements available on input $x$ are
\begin{equation}
    \widetilde A_u(x)
    =
    \{
        (q,B_q(x),k):
        q\in\mathcal Q,\
        1\leq k\leq d_q
    \}.
\end{equation}

There is also a set map from the expanded ground set back to the
ground set $\mathcal Q\sqcup\{\tau_f\}$ of the targeted query matroid:
$\widetilde\tau$ maps to $\tau_f$, while $(q,B,k)$ maps to $q$. We call
this the \emph{query projection}. It records the source query of each
expanded element; no matroid-map property is included in this terminology.

The free response expansion contains no imposed linear dependencies.
Its ground-set organization, labels, and input-dependent availability are
determined by the response partitions recovered from $\Sigma_f$.

\subsection{The relation space}
\label{subsec:normal-form-relation-space}

For each positive input $x$, define
\begin{equation}
    r_x
    :=
    \mathbf t
    -
    \sum_{q\in\mathcal Q}
    \sum_{k=1}^{d_q}
        u_{xq}(k)\,
        \mathbf e_{q,B_q(x),k}.
\end{equation}

\begin{definition}
\label{def:normal-form-relation-space}
The \emph{relation space} of the computational factorization $u$ is
\begin{equation}
    L_f(u)
    :=
    \operatorname{span}
    \{
        r_x:
        f(x)=1
    \}
    \subseteq F_f(u).
\end{equation}
\end{definition}

Each generator appearing in $r_x$ carries the response observed on $x$.
Thus the equation $r_x=0$ says that the target can be reconstructed from
the response generators available on that positive input.

\subsection{The quotient program}
\label{subsec:quotient-normal-form}

Define the quotient space
$V_f(u):=F_f(u)/L_f(u)$ and let
$\pi_u:F_f(u)\to V_f(u)$ be the quotient map.

Define
$\widetilde\phi_u:\widetilde E_u\to V_f(u)$ by
$\widetilde\phi_u(\widetilde\tau):=\pi_u(\mathbf t)$ and
\begin{equation}
    \widetilde\phi_u(q,B,k)
    :=
    \pi_u(\mathbf e_{q,B,k}).
\end{equation}
Let $\widetilde M_f(u)$ be the vector matroid of
$\widetilde\phi_u$, and let $\widetilde\lambda_u(q,B,k):=(q,B)$.

\begin{definition}
\label{def:quotient-normal-form}
The \emph{quotient normal form associated with $u$} is the matroid span
program
\begin{equation}
    \widetilde P_f(u)
    :=
    \bigl(
        \widetilde M_f(u),
        \widetilde\tau,
        \widetilde\lambda_u
    \bigr).
\end{equation}
The map $\widetilde\phi_u:\widetilde E_u\to V_f(u)$ defined above is
its \emph{quotient representation}.
\end{definition}

\begin{remark}[Orthogonal gauge covariance]
\label{rem:quotient-gauge-covariance}
Give each free response space the Euclidean inner product for which
$\mathbf t$ and the generators $\mathbf e_{q,B,k}$ form an orthonormal basis,
and give each quotient $V_f(u)$ the induced quotient inner product. For each
query $q$, let $O_q\in O(d_q)$ and replace
$u_{xq}$ by $u'_{xq}:=O_qu_{xq}$. This preserves every factorization
equation and every load. It also induces an isometry of free response
spaces which fixes $\mathbf t$ and applies $O_q$ separately inside each
response block for $q$. Under this isometry,
\begin{equation}
    L_f(u)\longmapsto L_f(u'),
    \qquad
    V_f(u)\cong V_f(u').
\end{equation}
Thus the target and the available response \emph{subspaces} are preserved,
up to the induced quotient isometry.

The individual coordinate generators are not preserved unless the $O_q$
are signed permutation matrices. Consequently the response-subspace program
is invariant under an orthogonal gauge change, while the vector matroid on
the chosen coordinate generators need not be.
\end{remark}

There is also a standard matroid description of the quotient.
Proposition~\ref{prop:represented-minors} identifies contraction of represented
elements with quotienting the ambient vector space by their span. For each
positive input $x$, introduce a relation element $\rho_x$. Let
$\widehat M_f(u)$ be the vector matroid represented in $F_f(u)$ by
$\mathbf t$ for $\widetilde\tau$, by $\mathbf e_{q,B,k}$ for
$(q,B,k)\in E_u^\circ$, and by $r_x$ for $\rho_x$. Write
$R_+(u):=\{\rho_x:f(x)=1\}$.

Before contraction, each positive-input relation element has a fundamental
circuit that records exactly which response generators participate in that
relation.

\begin{proposition}
\label{prop:relation-fundamental-circuits}
The set
\begin{equation}
    \widetilde E_u
    =
    E_u^\circ\sqcup\{\widetilde\tau\}
\end{equation}
is a basis of $\widehat M_f(u)$. For every positive input $x$, the
fundamental circuit of $\rho_x$ with respect to this basis is
\begin{equation}
    \{\rho_x,\widetilde\tau\}
    \cup
    \{
        (q,B_q(x),k):u_{xq}(k)\neq0
    \}.
\end{equation}
\end{proposition}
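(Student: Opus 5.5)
The plan is to work entirely inside the free space $F_f(u)$, where the matroid $\widehat M_f(u)$ is represented by the explicit vectors $\mathbf t$, $\mathbf e_{q,B,k}$, and $r_x$. Both claims are then statements about linear independence of these vectors, so no matroid-theoretic machinery beyond Definition~\ref{def:vector-matroid} and Definition~\ref{def:fundamental-circuit} is needed.

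First, the basis claim. By construction, $\mathbf t$ together with the generators $\mathbf e_{q,B,k}$ for $(q,B,k)\in E_u^\circ$ form a basis of $F_f(u)$. Hence the vectors representing $\widetilde E_u$ are linearly independent. They also span the whole ambient space, which contains every $r_x$. So $\widetilde E_u$ is independent of rank $\dim F_f(u)=r(\widehat M_f(u))$, and it is therefore a basis of $\widehat M_f(u)$.

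Second, the fundamental circuits. Fix a positive input $x$. The definition of $r_x$ gives its unique expansion in the above basis:
\[
    r_x=\mathbf t-\sum_{q}\sum_{k}u_{xq}(k)\,\mathbf e_{q,B_q(x),k}.
\]
Let $S_x:=\{\widetilde\tau\}\cup\{(q,B_q(x),k):u_{xq}(k)\neq0\}$. The displayed identity shows that $\{\rho_x\}\cup S_x$ is dependent. For minimality, suppose some proper subset $C$ is dependent. Since $S_x$ is independent, $C$ must contain $\rho_x$. Then $r_x$ lies in the span of the basis vectors indexed by $C\setminus\{\rho_x\}$. By uniqueness of coordinates, every basis element with a nonzero coefficient in the expansion must lie in $C$. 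These are $\mathbf t$ (coefficient $1$) and exactly the generators with $u_{xq}(k)\neq0$, so $C=\{\rho_x\}\cup S_x$. Hence $\{\rho_x\}\cup S_x$ is the unique circuit contained in $\widetilde E_u\cup\{\rho_x\}$, which is the fundamental circuit by Definition~\ref{def:fundamental-circuit}.

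One bookkeeping point needs care, and it is the only real obstacle. For each $q$, the block $B_q(x)$ is a single response block, so the generators $\mathbf e_{q,B_q(x),k}$ appearing in $r_x$ are pairwise distinct basis vectors with no coefficients combining. Their coefficients are precisely $u_{xq}(k)$, and the coefficient of $\mathbf t$ is $1\neq0$, which guarantees $\widetilde\tau$ belongs to the circuit. This confirms that the support of $r_x$ is exactly the displayed set.
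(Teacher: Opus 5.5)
Your proposal is correct and follows the same route as the paper: the elements of $\widetilde E_u$ are represented by the free basis $\mathbf t,\mathbf e_{q,B,k}$ of $F_f(u)$, and the fundamental circuit of $\rho_x$ is read off from the support of the coordinate expansion of $r_x$. Your explicit minimality argument via uniqueness of coordinates spells out a step that the paper leaves implicit.
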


\begin{proof}
The vectors representing the elements of $\widetilde E_u$ are the
basis vectors $\mathbf t$ and $\mathbf e_{q,B,k}$ of $F_f(u)$, so they form
a basis of the represented matroid. The vector representing $\rho_x$ is
\begin{equation}
    r_x
    =
    \mathbf t-
    \sum_{q,k}u_{xq}(k)\mathbf e_{q,B_q(x),k}.
\end{equation}
Relative to the basis $\widetilde E_u$, its nonzero coordinates are exactly
the displayed elements, together with the target coordinate. The unique
fundamental circuit is thus the support of this dependence together
with $\rho_x$.
\end{proof}

Thus the factorization coefficients refine a purely matroidal statement:
each positive input supplies a target circuit before the relation elements
are contracted. The coefficient values determine how the different relation
elements interact, while the circuit supports record which response
generators participate.

\begin{proposition}
\label{prop:quotient-normal-form-contraction}
The quotient normal-form matroid is the contraction of the relation elements:
\begin{equation}
    \widetilde M_f(u)
    \cong
    \widehat M_f(u)/R_+(u),
\end{equation}
where the isomorphism is the identity on
$\widetilde E_u=E_u^\circ\sqcup\{\widetilde\tau\}$.
\end{proposition}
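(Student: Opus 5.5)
The plan is to apply Proposition~\ref{prop:represented-minors} directly to the represented matroid $\widehat M_f(u)$. That proposition says the contraction by $R_+(u)$ is represented by the images of the remaining vectors in the quotient of the ambient space by $\operatorname{span}\{r_x:f(x)=1\}$.

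First, the ambient space of $\widehat M_f(u)$ is $F_f(u)$, and the vectors representing the relation elements are exactly the $r_x$ with $f(x)=1$. By Definition~\ref{def:normal-form-relation-space}, their span is $L_f(u)$. Proposition~\ref{prop:represented-minors} therefore represents $\widehat M_f(u)/R_+(u)$ on the ground set $\widetilde E_u$ by the family $e\mapsto \pi_u(v_e)$ in $F_f(u)/L_f(u)=V_f(u)$. Here $v_{\widetilde\tau}=\mathbf t$ and $v_{(q,B,k)}=\mathbf e_{q,B,k}$.

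Second, this family coincides elementwise with the quotient representation $\widetilde\phi_u$ of Definition~\ref{def:quotient-normal-form}. Since $\widetilde M_f(u)$ is defined as the vector matroid of $\widetilde\phi_u$, both matroids live on the same ground set $\widetilde E_u$ with identical represented vectors. Their independent sets agree, and the identity on $\widetilde E_u$ is an isomorphism.

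The only point needing care is the rank bookkeeping behind Proposition~\ref{prop:represented-minors}. We need $r_{\widehat M}(X\cup R_+)-r_{\widehat M}(R_+)=\dim\pi_u(\operatorname{span} X)$, and this holds because $\operatorname{span}R_+=L_f(u)$ even though the $r_x$ need not be linearly independent. The proposition already handles this, since it uses only the span $U$. Duplicate or dependent relation elements, for instance equal $r_x$ for distinct inputs, cause no issue. I therefore expect no real obstacle; the proof is an unwinding of definitions.
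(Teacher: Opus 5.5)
Your proposal is correct and follows the same route as the paper. You apply Proposition~\ref{prop:represented-minors} to the relation elements, note that their vectors $r_x$ span $L_f(u)$, and identify the induced images in $V_f(u)=F_f(u)/L_f(u)$ with the quotient representation $\widetilde\phi_u$; your extra remark that dependent or repeated relation vectors cause no trouble, because only their span enters, is correct.
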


\begin{proof}
Apply Proposition~\ref{prop:represented-minors} to the represented set
$R_+(u)$. Its representing vectors are the relation vectors $r_x$,
whose span is $L_f(u)$. Hence contraction is represented in
$F_f(u)/L_f(u)=V_f(u)$ by the remaining maps $\widetilde\phi_u$, whose
vector matroid is $\widetilde M_f(u)$.
\end{proof}

The dependence structure of $\widetilde M_f(u)$ is produced in
two stages. The response expansion specifies the generators and their
labels. The relation elements record the positive-input equations, and
contracting those elements produces the dependence relations of the normal
form.

\subsection{Correctness and cost}
\label{subsec:quotient-normal-form-correctness}

The quotient construction computes $f$ for every computational
factorization, not only an optimal one.

\begin{theorem}
\label{thm:quotient-normal-form-correct}
For every computational factorization $u$,
$\widetilde P_f(u)$ is an $f$-matroid span program over
$\mathfrak M_f$. Equivalently,
\begin{equation}
    f(x)=1
    \quad\Longleftrightarrow\quad
    \widetilde\tau
    \in
    \cl_{\widetilde M_f(u)}
    \bigl(\widetilde A_u(x)\bigr).
\end{equation}
\end{theorem}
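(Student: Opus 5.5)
The plan is to prove both directions of the closure equivalence directly in the quotient representation. Matroid closure in a vector matroid is linear span (Proposition~\ref{prop:span-program-matroid-shadow}), so it suffices to show
\[
    \pi_u(\mathbf t)\in\operatorname{span}\{\pi_u(\mathbf e_{q,B_q(x),k})\}
    \iff f(x)=1 .
\]
This is equivalent to $\mathbf t\in\operatorname{span}\{\mathbf e_{q,B_q(x),k}\}+L_f(u)$ in the free space $F_f(u)$. Label set membership and the availability rule are built into $\widetilde\lambda_u$, so the program shape already matches Definition~\ref{def:f-matroid-span-program}. Only the acceptance rule needs proof.

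\textbf{Positive direction.} If $f(x)=1$, then $r_x\in L_f(u)$, so $\pi_u(r_x)=0$. This gives
\[
    \pi_u(\mathbf t)=\sum_{q,k}u_{xq}(k)\,\pi_u(\mathbf e_{q,B_q(x),k}),
\]
and every generator on the right lies in $\widetilde A_u(x)$. This step is immediate.

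\textbf{Negative direction.} This is the main step. Fix $y$ with $f(y)=0$. I would build a linear functional $\ell_y$ on $F_f(u)$ such that:
\begin{itemize}
    \item $\ell_y(\mathbf t)=1$;
    \item $\ell_y$ vanishes on every available generator $\mathbf e_{q,B_q(y),k}$;
    \item $\ell_y$ vanishes on $L_f(u)$.
\end{itemize}
Such an $\ell_y$ descends to $V_f(u)$ and separates the target from the available span, by the argument of Proposition~\ref{prop:reconstruction-separation-acceptance}.

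The natural choice mirrors Lemma~\ref{lem:reconstruction-separation-pairing} in reverse. Set $\ell_y(\mathbf t):=1$, and define $\ell_y(\mathbf e_{q,B,k}):=u_{yq}(k)$ when $B\neq B_q(y)$ and $0$ when $B=B_q(y)$. Then for any positive $x$,
\[
    \ell_y(r_x)=1-\sum_{q:\,B_q(x)\neq B_q(y)}\sum_k u_{xq}(k)\,u_{yq}(k)
    =1-\sum_{q:\,(x,y)\in\Sigma_f(q)}u_{xq}\cdot u_{yq}.
\]
The computational-factorization equation (Definition~\ref{def:computational-factorization}) makes this zero, since $f(x)\neq f(y)$. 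So $\ell_y$ kills every generator $r_x$, hence all of $L_f(u)$.

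The obstacle is making sure this choice really annihilates $L_f(u)$ and not merely each chosen generator. Since $L_f(u)$ is defined as the span of the $r_x$, linearity handles this, and the check is direct. A smaller point is that $\ell_y$ must be defined on all of $F_f(u)$; that is automatic because $F_f(u)$ is free on the listed generators.

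The two directions give the stated equivalence. They also show the quotient matroid computes $f$ for every factorization $u$, not only an optimal one, which is exactly the claim.
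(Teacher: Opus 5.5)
Your proposal is correct and follows the paper's proof essentially step for step. In the positive direction you use $\pi_u(r_x)=0$ to read off a reconstruction of the target from available generators. In the negative direction you use the same functional $\widehat\ell_y$: value $1$ on $\mathbf t$, value $u_{yq}(k)$ on generators from unobserved response blocks, and value $0$ on the observed ones. The factorization equation forces this functional to vanish on every $r_x$, so it descends to $V_f(u)$.
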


\begin{proof}
Suppose first that $f(x)=1$. Since $r_x\in L_f(u)$, its image in the
quotient is zero. Hence
\begin{equation}
    \widetilde\phi_u(\widetilde\tau)
    =
    \sum_{q,k}
        u_{xq}(k)
        \widetilde\phi_u(q,B_q(x),k).
\end{equation}
Every element on the right belongs to $\widetilde A_u(x)$. The target
hence lies in the span of the available represented elements, which is
equivalent to the stated closure relation.

Now suppose $f(y)=0$. Define a linear functional
$\widehat\ell_y:F_f(u)\to\mathbb R$ by
$\widehat\ell_y(\mathbf t):=1$ and
\begin{equation}
    \widehat\ell_y(\mathbf e_{q,B,k})
    :=
    \begin{cases}
        0,&B=B_q(y),\\
        u_{yq}(k),&B\neq B_q(y).
    \end{cases}
\end{equation}

For a positive input $x$,
\begin{equation}
\begin{aligned}
    \widehat\ell_y(r_x)
    &=
    1-
    \sum_{\substack{q\in\mathcal Q\\
                    B_q(x)\neq B_q(y)}}
        u_{xq}\mathbin{\cdot}u_{yq}\\
    &=
    1-
    \sum_{\substack{q\in\mathcal Q\\
                    (x,y)\in\Sigma_f(q)}}
        u_{xq}\mathbin{\cdot}u_{yq}\\
    &=
    0.
\end{aligned}
\end{equation}
Thus $L_f(u)\subseteq\ker\widehat\ell_y$, so
$\widehat\ell_y$ descends to a linear functional
$\ell_y:V_f(u)\to\mathbb R$.

The descended functional takes value $1$ on
$\widetilde\phi_u(\widetilde\tau)$. For an available generator
$(q,B_q(y),k)\in\widetilde A_u(y)$, we have
\begin{equation}
    \ell_y(\widetilde\phi_u(q,B_q(y),k))=0.
\end{equation}
Hence $\ell_y$ vanishes on the span of every represented element available
on $y$ while remaining nonzero on the target. The target is consequently
outside that span.
\end{proof}

The positive relations and the negative separating functionals also give
complementary rank certificates. Write $m_u:=\dim F_f(u)$. In the free
basis of $F_f(u)$, let $R_u$ be the matrix whose rows are the relation
vectors $r_x$, $x\in D_+$, and let $S_u$ be the matrix whose rows are the
coordinate vectors of the functionals $\widehat\ell_y$, $y\in D_-$, used
in the proof above.

\begin{proposition}
\label{prop:quotient-complementary-rank}
For every computational factorization $u$,
\begin{equation}
    S_uR_u^{\mathsf T}=0.
\end{equation}
Hence
\begin{equation}
    \operatorname{rank}R_u+\operatorname{rank}S_u\leq m_u.
\end{equation}
Moreover,
\begin{equation}
    \operatorname{rank}R_u=\dim L_f(u),
    \qquad
    \dim V_f(u)=m_u-\operatorname{rank}R_u.
\end{equation}
If lower-rank certificates give
\begin{equation}
    \operatorname{rank}R_u\geq r,
    \qquad
    \operatorname{rank}S_u\geq m_u-r,
\end{equation}
then equality holds throughout:
\begin{equation}
    \operatorname{rank}R_u=r,
    \qquad
    \operatorname{rank}S_u=m_u-r,
    \qquad
    \dim V_f(u)=m_u-r.
\end{equation}
In this case the rows of $S_u$ span $L_f(u)^\perp$, identified with
$V_f(u)^*$.
\end{proposition}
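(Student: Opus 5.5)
The first claim, $S_uR_u^{\mathsf T}=0$, is a restatement of the orthogonality computed in the proof of Theorem~\ref{thm:quotient-normal-form-correct}. In the free basis of $F_f(u)$, the $(y,x)$ entry of $S_uR_u^{\mathsf T}$ is the evaluation $\widehat\ell_y(r_x)$, written as a coordinate dot product. That proof showed $\widehat\ell_y(r_x)=0$ for every $x\in D_+$ and $y\in D_-$, using the factorization equation of Definition~\ref{def:computational-factorization} restricted to pairs separated by $q$. So I would only record that the pairing of a coordinate row vector with a coordinate vector is the functional evaluation, and invoke that computation.

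For the rank inequality, $S_uR_u^{\mathsf T}=0$ says the row space of $S_u$ lies in the orthogonal complement of the row space of $R_u$ inside $\mathbb R^{m_u}$. Hence
\[
\operatorname{rank}S_u\leq m_u-\operatorname{rank}R_u .
\]
The identity $\operatorname{rank}R_u=\dim L_f(u)$ holds because the rows of $R_u$ are the coordinate vectors of the generators $r_x$ of $L_f(u)$ (Definition~\ref{def:normal-form-relation-space}). The formula for $\dim V_f(u)$ then follows because $V_f(u)=F_f(u)/L_f(u)$ has dimension $m_u-\dim L_f(u)$.

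For the equality case, I would chain the inequalities:
\[
m_u-r\leq\operatorname{rank}S_u\leq m_u-\operatorname{rank}R_u\leq m_u-r .
\]
This forces both ranks to equal their certified values, and the dimension of $V_f(u)$ follows.

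Finally, the row space of $S_u$ is contained in $L_f(u)^\perp$ (the annihilator, identified with coordinate vectors via the orthonormal free basis of Remark~\ref{rem:quotient-gauge-covariance}). That complement has dimension $m_u-\dim L_f(u)=m_u-r=\operatorname{rank}S_u$, so the containment is an equality. The identification $L_f(u)^\perp\cong V_f(u)^*$ is the standard fact that functionals on a quotient are exactly the functionals vanishing on the subspace; this is how $\widehat\ell_y$ descended to $\ell_y$.

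There is no real obstacle. The only point needing care is keeping the annihilator and the Euclidean orthogonal complement consistent through the chosen free coordinates, so that "rows of $S_u$ span $L_f(u)^\perp$" has an unambiguous meaning.
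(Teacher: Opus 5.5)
Your proposal is correct and follows essentially the same route as the paper: the entrywise identity $(S_uR_u^{\mathsf T})_{y,x}=\widehat\ell_y(r_x)=0$ comes from the factorization equation, the rank bound from the orthogonal-complement containment, and the dimension identities from the quotient. The equality case comes from the same sandwich of inequalities, and the final dimension match shows that the rows of $S_u$ span $L_f(u)^\perp$.
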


\begin{proof}
For $x\in D_+$ and $y\in D_-$,
\begin{equation}
    (S_uR_u^{\mathsf T})_{y,x}
    =
    \widehat\ell_y(r_x)
    =
    1-
    \sum_{\substack{q\in\mathcal Q\\
                    B_q(x)\neq B_q(y)}}
        u_{xq}\mathbin{\cdot}u_{yq}
    =
    0
\end{equation}
by the computational factorization equation. Thus the row space of $S_u$
is contained in the orthogonal complement of the row space of $R_u$, which
gives the rank inequality. The row space of $R_u$ is $L_f(u)$ in the free
basis, so the dimension identities follow from the definition of the
quotient. If the two lower bounds sum to $m_u$, the rank inequality forces
both to be equalities. The last statement follows because then the row space
of $S_u$ has the same dimension as $L_f(u)^\perp$.
\end{proof}

There is a matroidal form of the same correctness statement. Let $N_u:=\widetilde M_f(u)\setminus\widetilde\tau$ and let $\mathcal C_u$ be the modular cut of $N_u$ determined by the target
extension $\widetilde M_f(u)$. Then
\begin{equation}
    \cl_{N_u}(\widetilde A_u(x))\in\mathcal C_u
    \quad\Longleftrightarrow\quad
    f(x)=1.
\end{equation}
Thus the factorization has produced a real-representable target extension
whose modular cut contains all positive availability flats and excludes all
negative ones. Before contraction,
Proposition~\ref{prop:relation-fundamental-circuits} gives the fundamental
circuit of each positive relation element. After contracting all relation
elements, the available support $S_x$ of that relation spans the target, so
$S_x\cup\{\widetilde\tau\}$ contains a circuit through $\widetilde\tau$.
On a negative input, the descended functional in
the proof above has a zero flat containing every available element and
excluding the target. Extending that flat to a hyperplane which still
excludes the target gives a target cocircuit disjoint from the available
set. The linear factorization supplies the coefficients, while the resulting
correctness certificates are ordinary circuits, cocircuits, and a target
modular cut.

\begin{proposition}
\label{prop:quotient-normal-form-cost}
Let $u$ be a computational factorization. If $f(x)=1$, then
\begin{equation}
    c_+(\widetilde\phi_u,x)
    \leq
    \sum_{q\in\mathcal Q}\|u_{xq}\|_2^2.
\end{equation}
If $f(y)=0$, then
\begin{equation}
    c_-(\widetilde\phi_u,y)
    \leq
    \sum_{q\in\mathcal Q}
        \bigl(|\mathcal B_q|-1\bigr)
        \|u_{yq}\|_2^2.
\end{equation}
\end{proposition}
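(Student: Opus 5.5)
For the positive bound we exhibit one explicit reconstruction. For the negative bound we construct a separating functional on the quotient that differs from $\widehat\ell_y$ in the proof of Theorem~\ref{thm:quotient-normal-form-correct}, and bound its evaluation energy. Both energies are coordinate sums over the non-target elements $E_u^\circ$. The minima in Definitions~\ref{def:positive-reconstruction-energy} and~\ref{def:negative-separation-energy} are therefore bounded by any admissible witness.

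\emph{Positive inputs.} Fix $x$ with $f(x)=1$. The image of $r_x$ in $V_f(u)$ is zero, which gives
\[
\widetilde\phi_u(\widetilde\tau)=\sum_{q,k}u_{xq}(k)\,\widetilde\phi_u(q,B_q(x),k).
\]
Define $a(q,B_q(x),k):=u_{xq}(k)$ and set $a=0$ on all other elements. Then $a\in\mathcal R_x(\widetilde\phi_u)$, because every element in its support is available on $x$. Each pair $(q,k)$ contributes exactly one coordinate, so $\|a\|_2^2=\sum_q\|u_{xq}\|_2^2$, which is the claimed bound.

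\emph{Negative inputs.} Fix $y$ with $f(y)=0$. The functional $\widehat\ell_y$ puts $u_{yq}(k)$ on every non-available block $B\neq B_q(y)$, so it already has energy $\sum_q(|\mathcal B_q|-1)\|u_{yq}\|_2^2$. Directly, $\widehat\ell_y$ descends to $\ell_y\in\mathcal S_y(\widetilde\phi_u)$, as shown in the proof of Theorem~\ref{thm:quotient-normal-form-correct}. Its evaluation vector is
\[
\operatorname{ev}(\ell_y)(q,B,k)=\widehat\ell_y(\mathbf e_{q,B,k}),
\]
which is $0$ on available elements and $u_{yq}(k)$ on each of the $|\mathcal B_q|-1$ other blocks. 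Summing the squares gives $\sum_q(|\mathcal B_q|-1)\|u_{yq}\|_2^2$, and this bounds $c_-(\widetilde\phi_u,y)$.

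The substantive steps are already contained in the correctness theorem. The points to verify are that the evaluation vector of the descended functional is computed through $\pi_u$ via $\ell_y\circ\pi_u=\widehat\ell_y$, and that the coefficient vectors are indexed by $E_u^\circ$ with each generator appearing once. The main obstacle is to confirm that no smaller choice is needed for the stated inequality. Here the factor $|\mathcal B_q|-1$ counts the non-available blocks of $q$ (for binary queries it equals $1$), and the chosen witness attains it exactly.
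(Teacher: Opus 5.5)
Your proposal is correct and follows the paper's proof: the positive bound uses the coefficients of $r_x$, and the negative bound uses the descended functional $\ell_y$ from Theorem~\ref{thm:quotient-normal-form-correct}, whose evaluations are $0$ on the observed block and $u_{yq}(k)$ on each of the other $|\mathcal B_q|-1$ blocks. Your opening claim that the negative witness ``differs from $\widehat\ell_y$'' is misleading, since it is just the descent of $\widehat\ell_y$ to the quotient, but the argument you actually carry out is the same.
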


\begin{proof}
For a positive input, use the coefficients appearing in the relation
$r_x$. Their squared coordinate size is
$\sum_q\|u_{xq}\|_2^2$.

For a negative input, use the functional $\ell_y$ constructed in
Theorem~\ref{thm:quotient-normal-form-correct}. Its evaluation on
$\widetilde\phi_u(q,B,k)$ is zero for $B=B_q(y)$ and equals
$u_{yq}(k)$ for every other response block. Its squared evaluation size is
hence
\begin{equation}
    \sum_q
        \bigl(|\mathcal B_q|-1\bigr)
        \|u_{yq}\|_2^2.
\end{equation}
Minimizing over reconstructions and separating functionals gives the stated
inequalities.
\end{proof}

\subsection{An optimal quotient program}
\label{subsec:optimal-matroid-normal-form}

Assume now that $f$ is nonconstant, that $\mathcal Q$ separates $f$, and
that every query has two response blocks. Choose an optimal computational
factorization $u^\star$ and write
$\widetilde P_f:=\widetilde P_f(u^\star)$,
$\widetilde M_f:=\widetilde M_f(u^\star)$, and
$\widetilde\phi_f:=\widetilde\phi_{u^\star}$.

Reichardt established the equality between the optimal standard span-program
quadratic complexity measure and the general adversary bound
~\cite{Reichardt2009,Reichardt2014Equivalent},
and the binary linear quotient underlying this construction is the
canonical/complement/strictification construction described above. The theorem
below records the labeled program produced by that quotient and its
compatibility with the fixed-type optimization of Definition~\ref{def:linear-cost};
it does not assert a new linear adversary-to-span-program conversion.

\begin{theorem}
\label{thm:optimal-matroid-normal-form}
\label{thm:binary-msp-adversary-equality}
Let $f:D\to\{0,1\}$ be nonconstant. Let $\mathcal Q$ be a finite query set
that separates $f$, suppose every query has two response blocks, and suppose
the query problem admits the $f$-separation matroid $\mathfrak M_f$. Let
$u^\star$ be an optimal computational factorization, and let
$\widetilde P_f(u^\star)$ with representation
$\widetilde\phi_{u^\star}$ be the associated quotient program. Then
$\widetilde P_f(u^\star)$ is a real-representable response-labeled program
computing $f$ through target closure, and
\begin{equation}
    \mathcal E(\widetilde\phi_{u^\star})
    =
    \operatorname{lincost}(\widetilde P_f(u^\star))
    =
    \operatorname{MSP}(f,\mathfrak M_f)
    =
    \operatorname{Adv}^{\pm}_{\mathcal Q}(f).
\end{equation}
\end{theorem}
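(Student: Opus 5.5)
The plan is to sandwich everything between the adversary bound and the cost of the quotient representation. By Theorem~\ref{thm:quotient-normal-form-correct}, $\widetilde P_f(u^\star)$ is an $f$-matroid span program over $\mathfrak M_f$. Its program matroid is the vector matroid of the real configuration $\widetilde\phi_{u^\star}$, so it lies in $\mathscr R_{\mathbb R}(f,\mathfrak M_f)$. This settles the first assertion. It also gives
\[
    \operatorname{Adv}^{\pm}_{\mathcal Q}(f)
    \le \operatorname{MSP}(f,\mathfrak M_f)
    \le \operatorname{lincost}(\widetilde P_f(u^\star))
    \le \mathcal E(\widetilde\phi_{u^\star}),
\]
where the first inequality is Theorem~\ref{thm:adversary-lower-bounds-msp} and the others hold by definition of the two infima. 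The remaining task is to close the chain by showing $\mathcal E(\widetilde\phi_{u^\star})\le\operatorname{Adv}^{\pm}_{\mathcal Q}(f)$.

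For that upper bound I would apply Proposition~\ref{prop:quotient-normal-form-cost} with $|\mathcal B_q|=2$ for every query. This gives, at every positive input $x$ and every negative input $y$,
\[
    c_+(\widetilde\phi_{u^\star},x)\le\sum_q\|u^\star_{xq}\|_2^2,
    \qquad
    c_-(\widetilde\phi_{u^\star},y)\le\sum_q\|u^\star_{yq}\|_2^2.
\]
Each right-hand side is a load of $u^\star$. By Proposition~\ref{prop:factorization-transport-equivalence}, every load of the optimal factorization is at most its cost, which equals $\operatorname{Adv}^{\pm}_{\mathcal Q}(f)$. Hence $C_+,C_-\le\operatorname{Adv}^{\pm}_{\mathcal Q}(f)$, and $\mathcal E=\sqrt{C_+C_-}\le\operatorname{Adv}^{\pm}_{\mathcal Q}(f)$, which closes the chain.

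Before using the cost bound I need $\widetilde\phi_{u^\star}$ to be a legitimate representation, and nonconstancy gives this. There is at least one negative input $y$, and the descended functional $\ell_y$ takes value $1$ on the target, so $\pi(\mathbf t)\ne0$. The target is therefore not a loop, and $\mathcal E$ is finite and well defined.

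The main obstacle I expect is bookkeeping rather than substance. The quotient representation lives in $V_f(u^\star)$, and the energies must be computed inside the span of the represented vectors, as $\operatorname{Rep}_{\mathbb R}$ requires. I would observe that $V_f(u^\star)$ is spanned by the images of the generators, so no restriction is needed. I would also check that the coefficients and functional produced in Proposition~\ref{prop:quotient-normal-form-cost} are admissible witnesses supported on available elements, which is exactly what its proof verifies.
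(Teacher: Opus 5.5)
Your proposal is correct and is essentially the paper's argument: correctness from Theorem~\ref{thm:quotient-normal-form-correct}, the binary case of Proposition~\ref{prop:quotient-normal-form-cost} giving $C_\pm(\widetilde\phi_{u^\star})\le\operatorname{Adv}^{\pm}_{\mathcal Q}(f)$, and the program-to-transport lower bound to close the chain. The differences are cosmetic. You use that lower bound in its infimum form (Theorem~\ref{thm:adversary-lower-bounds-msp}) to get a single sandwich, whereas the paper applies Proposition~\ref{prop:represented-program-gives-transport} separately to $\mathcal E$ and to $\operatorname{lincost}$. Your extra checks, that the target is not a loop and that $V_f(u^\star)$ is spanned by the generators, are valid bookkeeping the paper leaves implicit.
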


\begin{proof}
Correctness is Theorem~\ref{thm:quotient-normal-form-correct}. Let
$w$ be the cost of the optimal computational factorization. By
Proposition~\ref{prop:factorization-transport-equivalence},
$w=\operatorname{Adv}^{\pm}_{\mathcal Q}(f)$.

Because every query is binary,
Proposition~\ref{prop:quotient-normal-form-cost} gives
$C_+(\widetilde\phi_f)\leq w$ and
$C_-(\widetilde\phi_f)\leq w$. Hence
$\mathcal E(\widetilde\phi_f)\leq w$.
Proposition~\ref{prop:represented-program-gives-transport} gives the reverse
inequality
$\operatorname{Adv}^{\pm}_{\mathcal Q}(f)
\leq\mathcal E(\widetilde\phi_f)$.
Thus $\mathcal E(\widetilde\phi_f)=w$.

Proposition~\ref{prop:represented-program-gives-transport} applies to every
real representation of the fixed program $\widetilde P_f$, so
$\operatorname{lincost}(\widetilde P_f)\geq w$. The displayed representation
has balanced cost $w$, and thus
$\operatorname{lincost}(\widetilde P_f)=w$. Finally,
\begin{equation}
    w
    \leq
    \operatorname{MSP}(f,\mathfrak M_f)
    \leq
    \operatorname{lincost}(\widetilde P_f)
    =
    w,
\end{equation}
so the global program optimum has the same value.
\end{proof}

The theorem also describes the optimizer structurally. Its response-labeled
generators are free before the positive-input relations are adjoined; those
relation elements are then contracted, leaving the program matroid
$\widetilde M_f$. The source matroid $\mathfrak M_f=(M_f,\Sigma_f)$ and
the resulting program matroid have different roles. The targeted query
matroid $M_f$ records fixed-query closure, while $\Sigma_f$ records the
separation colors used to recover response partitions. The computational
factorization determines the new dependence relations imposed on the response
generators.

For arbitrary finite response partitions, the quotient construction remains
correct but the same cost comparison gives a controlled loss. For
$\ell$-ary coordinate queries, Beigi and Taghavi obtain the same
$\sqrt{\ell-1}$ loss when converting a dual-adversary solution to the
orthogonal-input non-binary span-program form
~\cite{BeigiTaghavi2019}. Theorem~\ref{thm:multiresponse-msp-comparison}
extends that response-cardinality dependence to arbitrary finite
partition-valued queries and the present quotient program-matroid normal
form.

\begin{proof}[Detailed proof of Theorem~\ref{thm:multiresponse-msp-comparison}]
Choose an optimal computational factorization $u$ of cost
$w=\operatorname{Adv}^{\pm}_{\mathcal Q}(f)$. By
Proposition~\ref{prop:quotient-normal-form-cost},
$C_+(\widetilde\phi_u)\leq w$ and
$C_-(\widetilde\phi_u)\leq\kappa(\mathcal Q)w$, so
$\mathcal E(\widetilde\phi_u)\leq\sqrt{\kappa(\mathcal Q)}\,w$. The upper
bound follows from the definition of $\operatorname{MSP}$, and the lower
bound is Theorem~\ref{thm:adversary-lower-bounds-msp}.
\end{proof}

The factor enters only because the negative functional sees the same
$q$-component once for every response block other than the observed one. For
binary queries there is only one such block, and the loss disappears.

This is the Gram form of the general adversary optimization used in
Reichardt's span-program optimality theorem
~\cite{Reichardt2009,Reichardt2014Equivalent}. Its role here is structural:
for binary queries, the quotient construction gives a matroidal realization
of the known optimum. For larger response alphabets,
Theorem~\ref{thm:multiresponse-msp-comparison} generalizes the known
non-binary response-cardinality loss~\cite{BeigiTaghavi2019}. Standard
span-program algorithms convert represented programs into bounded-error
quantum query algorithms with comparable query complexity
~\cite{Reichardt2011,Reichardt2014Equivalent}.

\section{Real Amplitudes and Reversible Recombination}
\label{app:square-root-carrier}

The main query model begins directly with normalized real amplitudes. This
appendix records the elementary reversible-mixing argument that motivates the
quadratic carrier while leaving the query-count definition independent of
that derivation.

\subsection{Reversible classical gates}

For a finite configuration arena $\Omega$, write
\begin{equation}
    \Delta(\Omega):=\{p:\Omega\to[0,1]:\sum_{c\in\Omega}p(c)=1\}.
\end{equation}

\begin{definition}
\label{def:reversible-classical-gate}
A \emph{reversible classical gate} on $\Omega$ is an affine bijection
$M:\Delta(\Omega)\to\Delta(\Omega)$.
\end{definition}

For $c\in\Omega$, let $e_c\in\Delta(\Omega)$ denote the state concentrated
at $c$.

\begin{theorem}
\label{thm:permutation-wall}
Every reversible classical gate permutes the configurations in $\Omega$.
\end{theorem}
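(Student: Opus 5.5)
The plan is to use the fact that an affine bijection of a compact convex set maps extreme points to extreme points, and that the extreme points of $\Delta(\Omega)$ are exactly the point masses $e_c$.

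\emph{Step 1: extreme points of $\Delta(\Omega)$.} Each $e_c$ is extreme. If $e_c=tp+(1-t)p'$ with $0<t<1$ and $p,p'\in\Delta(\Omega)$, then every coordinate $c'\neq c$ vanishes on both $p$ and $p'$ by nonnegativity. Normalization then forces $p=p'=e_c$. Conversely, any $p$ with two positive coordinates is not extreme: shift a small mass $\delta$ between those two coordinates in opposite directions to write $p$ as the midpoint of two distinct distributions.

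\emph{Step 2: affine bijections preserve extreme points.} Let $M$ be a reversible gate. Since $M$ is an affine bijection of $\Delta(\Omega)$, its inverse is also affine. To see this, note that $M$ preserves convex combinations, so for $q=M p$ and $q'=Mp'$ we get $M(tp+(1-t)p')=tq+(1-t)q'$, and applying $M^{-1}$ shows $M^{-1}$ preserves convex combinations too.

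Suppose $Me_c=tq+(1-t)q'$ with $0<t<1$ and $q\neq q'$. Applying $M^{-1}$ writes $e_c$ as a proper convex combination of the distinct points $M^{-1}q$ and $M^{-1}q'$. This contradicts Step 1, so $Me_c$ is extreme, and hence $Me_c=e_{\sigma(c)}$ for some $\sigma(c)\in\Omega$.

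\emph{Step 3: $\sigma$ is a permutation.} Injectivity of $M$ makes $\sigma$ injective. Since $\Omega$ is finite, $\sigma$ is then a bijection. This is exactly the claim that $M$ permutes configurations. Moreover, because $M$ is affine and every $p\in\Delta(\Omega)$ equals $\sum_c p(c)e_c$, we get $Mp=\sum_c p(c)e_{\sigma(c)}$. So $M$ is determined by $\sigma$.

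The only point needing care is Step 2, specifically that the inverse of an affine bijection between convex sets is affine. The convex-combination argument above handles this directly, so I expect no real obstacle.
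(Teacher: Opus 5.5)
Your proposal is correct and follows the same route as the paper: extreme points of $\Delta(\Omega)$ are the point masses $e_c$, an affine bijection preserves extreme points, and affineness then fixes $M$ on all convex combinations. You supply details the paper leaves implicit, namely that the inverse of an affine bijection is affine and that injectivity plus finiteness of $\Omega$ make $\sigma$ a permutation.
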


\begin{proof}
The states $e_c$ are exactly the extreme points of the simplex
$\Delta(\Omega)$. An affine bijection preserves extreme points, so it
permutes these states. Affineness then determines the same permutation on
every convex combination.
\end{proof}

Thus reversible classical mixing can only relabel configurations. To allow
spreading and recombination, consider signed carrier values with linear
evolution and recover probabilities from their magnitudes.

\subsection{The square-root carrier}

For $c\in\Omega$, now let $e_c:\Omega\to\mathbb R$ denote the atomic weight
assignment with value $1$ at $c$ and $0$ elsewhere.

\begin{definition}
\label{def:power-law-carrier}
Fix $r\geq1$. An \emph{$r$-carrier state} on $\Omega$ is a function
$a:\Omega\to\mathbb R$ satisfying
$\sum_c|a(c)|^r=1$. It reports probability $p_a(c):=|a(c)|^r$.
A \emph{reversible $r$-carrier gate} is an invertible linear map on
$\mathbb R^\Omega$ preserving $\sum_c|a(c)|^r$.
\end{definition}

\begin{definition}
\label{def:balanced-recombination}
Let $c,d\in\Omega$ be distinct. A \emph{balanced recombination} of $c$ and
$d$ is a linear involution $R$ which fixes $e_u$ for
$u\notin\{c,d\}$ and, for some nonzero $\alpha\in\mathbb R$, satisfies
\begin{equation}
    Re_c=\alpha(e_c+e_d),
    \qquad
    Re_d=\alpha(e_c-e_d).
\end{equation}
\end{definition}

\begin{theorem}
\label{thm:square-root-forced}
If a balanced recombination is a reversible $r$-carrier gate, then $r=2$.
\end{theorem}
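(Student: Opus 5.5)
The plan is to test the gate on the two atomic states $e_c$ and $e_d$, which together force $r=2$. Both atoms are carrier states, since $\sum_u|e_c(u)|^r=1$. A reversible $r$-carrier gate preserves $\sum_u|a(u)|^r$ for every $a$, so in particular it preserves it on $e_c$. Now $Re_c=\alpha(e_c+e_d)$ has two nonzero coordinates, each of absolute value $|\alpha|$. Preservation therefore gives
\begin{equation}
    2|\alpha|^r=1.
\end{equation}
The same computation for $Re_d=\alpha(e_c-e_d)$ gives the identical condition, so the atoms alone only fix $|\alpha|=2^{-1/r}$. To determine $r$ I need a second, independent constraint.

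That constraint comes from the involution property. Since $R^2=I$,
\begin{equation}
    e_c=R(Re_c)=\alpha(Re_c+Re_d)=\alpha\bigl(\alpha(e_c+e_d)+\alpha(e_c-e_d)\bigr)=2\alpha^2e_c.
\end{equation}
Hence $\alpha^2=1/2$, that is $|\alpha|=2^{-1/2}$. Comparing with $|\alpha|=2^{-1/r}$ gives $2^{-1/r}=2^{-1/2}$. Since $t\mapsto 2^{-1/t}$ is strictly increasing on $[1,\infty)$, this yields $r=2$.

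Both steps are elementary, so I do not expect a real obstacle. The only point needing care is to apply the norm-preservation hypothesis to the specific vector $e_c$ rather than arguing about general states, and to note that the involution condition already pins down $\alpha$. If the intended reading of the definition did not include the involution requirement, I would instead apply preservation to the state $2^{-1/r}(e_c+e_d)$. Its image is $2^{1-1/r}\alpha e_c$, a single nonzero coordinate, so preservation gives $2^{r-1}|\alpha|^r=1$. Combined with $2|\alpha|^r=1$ this gives $2^{r-2}=1$, hence again $r=2$. I would present this second computation as a remark showing the conclusion does not actually depend on the involution.
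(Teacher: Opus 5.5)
Your proposal is correct and follows the paper's proof: the involution gives $R^2e_c=2\alpha^2e_c$, hence $|\alpha|=2^{-1/2}$, and preservation on $e_c$ gives $2|\alpha|^r=1$, which together force $r=2$. Your closing remark, which applies preservation to $e_c+e_d$ to obtain $2^{r-2}=1$ without using the involution, is also correct and shows the involution hypothesis is not needed for this conclusion.
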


\begin{proof}
Applying the recombination twice to $e_c$ gives
$R^2e_c=2\alpha^2e_c$. Since $R$ is an involution,
$|\alpha|=1/\sqrt2$. Probability preservation after one recombination gives
$2|\alpha|^r=1$. Hence $2^{1-r/2}=1$ and $r=2$.
\end{proof}

Within this power-law family, reversible balanced mixing thus selects
the quadratic carrier used in Definition~\ref{def:configuration-amplitude}.
Complex quantum circuits may be realified by adjoining a two-dimensional
real/imaginary register~\cite{BernsteinVazirani1997,Aharonov2003}; Appendix
~\ref{app:quantum-model-equivalence} uses this fact without increasing the
number of queries.

\section{Quantum Query Model Checks}
\label{app:quantum-query-checks}

The main development uses the partition-based quantum query model of Section~\ref{sec:quantum-query-circuits}. We compare it here with the standard response-oracle model of quantum query complexity~\cite{BealsBuhrmanCleveMoscaDeWolf2001}. Because an answer partition alone does not determine a quantum oracle action~\cite{Svozil2026AnswerPartitions}, the comparison is with the explicit one-hot response realization defined below. The equivalence is exact in query count for that realization.

\subsection{Equivalence with the standard quantum query model}
\label{app:quantum-model-equivalence}

A query partition has no preferred numerical labels on its blocks. To compare Definition~\ref{def:quantum-query-circuit} with a standard reversible response oracle, we choose one explicit encoding of the blocks.

Assume $\mathcal Q\ne\varnothing$; the empty family has only zero-query
algorithms and is handled directly. For each $q\in\mathcal Q$, let
$k_q:=|\llbracket q\rrbracket|$, and put
$K:=\max_{q\in\mathcal Q}k_q$. Choose an ordering $B_1^q,\ldots,B_{k_q}^q$ of the blocks of each query partition. Encode the answer $B_j^q$ by the
$j$th unit string $e_j\in\{0,1\}^K$.

The reference response oracle has a query register, a $K$-bit response register, and arbitrary additional work registers. Whenever $x\in B_j^q$, it sends $(q,z,w)$ to $(q,z\oplus e_j,w)$. This is the usual reversible response-oracle pattern, with the answer encoded by the chosen one-hot string~\cite{BealsBuhrmanCleveMoscaDeWolf2001}. The encoding itself is only a convenience for this comparison. As usual in query complexity, input-independent unitary gates and final classical postprocessing are free; only oracle calls are counted~\cite{BealsBuhrmanCleveMoscaDeWolf2001}.

\begin{theorem}
\label{thm:standard-quantum-equivalence}
For every finite domain $D$, finite query gate set $\mathcal Q$, target $f:D\to Y$, and error parameter $0\leq\epsilon<1$, the minimum number of queries in Definition~\ref{def:quantum-query-circuit} equals the minimum number of calls to the one-hot response oracle defined in this subsection required to $\epsilon$-compute $f$.
\end{theorem}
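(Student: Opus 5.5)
We prove the equality by simulating each model in the other with the same number of oracle calls, so that neither minimum can exceed the other. Both simulations rest on one observation. A single phase query in Definition~\ref{def:phase-query} and a single call to the one-hot response oracle carry the same information. In each, the input affects the computation only through which block $B_{q}(x)$ it selects for the addressed query $q$. In both directions, input-independent orthogonal or unitary maps and final readout are free, so only the query count must be matched.

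\emph{Phase model to oracle model.} Given a depth-$T$ phase circuit on $\Omega$ with phase address $(q_c,\varphi_c)_{c\in\Omega}$, we embed $\mathbb R^\Omega$ into a Hilbert space with a configuration register $\Omega$, a query register, and a $K$-bit response register. We implement $O_x$ with one oracle call by the standard phase-kickback construction:
\begin{enumerate}
\item controlled on $c$, write $q_c$ into the query register;
\item prepare the response register in a state on which $z\mapsto z\oplus e_j$ acts as multiplication by $\varphi_c(B_j^{q_c})$.
\end{enumerate}
Step~2 needs care because one-hot XOR is not a single-bit flip. Take the response register in $\bigotimes_j|s_j\rangle$, with $|s_j\rangle=|-\rangle$ if $\varphi_c(B_j^{q_c})=-1$ and $|s_j\rangle=|+\rangle$ otherwise. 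XOR with $e_j$ flips only bit $j$, which multiplies the state by exactly $\varphi_c(B_j^{q_c})$. Because this preparation is controlled by $c$ and is input-independent, it can be done and undone coherently for free. Real amplitudes are kept throughout, and the recombinations $R_t$ are orthogonal, hence unitary. The output distribution is therefore reproduced exactly with $T$ calls.

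\emph{Oracle model to phase model.} This direction is the main obstacle. The oracle acts by XOR on an arbitrary, possibly complex, work state, whereas our model allows only $\pm1$ phases on real amplitudes. We first realify using the fact cited in Appendix~\ref{app:square-root-carrier}: adjoin a real/imaginary qubit. The oracle is a real permutation matrix, so it commutes with realification, and the query count is unchanged. Next we conjugate the oracle by an input-independent orthogonal change of basis on the response register, namely the Hadamard transform $H^{\otimes K}$. In the Fourier basis $|\hat z\rangle$, XOR with $e_j$ becomes multiplication by $(-1)^{z_j}$. So one oracle call becomes diagonal: a configuration $c=(q,\hat z,w,\text{reim})$ acquires the sign $(-1)^{z_j}$ when $x\in B_j^q$.

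We then take $\Omega$ to be the set of these basis configurations. The phase address assigns query $q_c:=q$ and the phase assignment $\varphi_c(B_j^q):=(-1)^{z_j}$. For a configuration whose query-register value is not a valid $q$, we assign any fixed $q_0$ with the constant phase $+1$; this is where $\mathcal Q\neq\varnothing$ is used. The conjugating Hadamards and all other unitaries, once realified, are orthogonal and thus serve as recombinations. The final measurement plus classical postprocessing is absorbed into the readout map $\lambda$.

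Hence $T$ oracle calls yield a depth-$T$ phase circuit with the same output distribution. Combining the two directions gives equality of the minima for every $\epsilon$, including the value $+\infty$ when neither model succeeds.
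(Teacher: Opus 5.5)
Your proposal is correct and follows the paper's proof. In both, the one-hot XOR oracle is diagonal in the Walsh (Hadamard) basis, with sign $(-1)^{s_j}$ on block $B_j^q$; orthogonal maps serve as recombinations; and realification leaves the real permutation oracle unchanged, so no extra calls appear. The differences are cosmetic: you compute and uncompute $(q_c,s_c)$ controlled on a separate configuration register instead of identifying each configuration with a basis label $(q_c,s_c,w_c)$, you realify first rather than last, and you explicitly handle invalid query-register values.
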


\begin{proof}
We first show that one phase query can be implemented using one standard
response-oracle query.

Fix a configuration $c\in\Omega$. Its phase address consists of a query $q_c$ and a phase assignment $\varphi_c:\llbracket q_c\rrbracket\to\{-1,+1\}$. Define a string $s_c\in\{0,1\}^K$ by
\begin{equation}
    (s_c)_j
    :=
    \begin{cases}
        0,&\varphi_c(B_j^{q_c})=1,\\
        1,&\varphi_c(B_j^{q_c})=-1.
    \end{cases}
\end{equation}
Bits beyond $k_{q_c}$ may be set to zero.

We use the standard phase-kickback observation that translation of the response register is diagonal in the Walsh basis. Express the response register in the Walsh basis indexed by $s\in\{0,1\}^K$. Translation by $e_j$ acts on the basis state indexed by
$s$ with sign $(-1)^{s_j}$. Hence, on the sector indexed by $s_c$, the
response oracle contributes
\begin{equation}
    (-1)^{(s_c)_j}
    =
    \varphi_c(B_j^{q_c})
\end{equation}
whenever $x\in B_j^{q_c}$.

Thus the standard response oracle acts on configuration $c$ by exactly the
sign prescribed by the phase query $O_x$.

If several configurations have the same query and phase assignment, assign
them distinct work labels. We may hence identify every configuration
$c\in\Omega$ with a standard basis label $(q_c,s_c,w_c)$ so that one
response-oracle call realizes $O_x$ simultaneously on all configurations.
The change between the ordinary response basis and the Walsh basis is input
independent and can be absorbed into the surrounding free gates.

It remains to compare the input-independent dynamics. By Definition~\ref{def:recombination}, a recombination $R:\mathbb R^\Omega\to\mathbb R^\Omega$ preserves the Euclidean norm. Since $R$ is linear on the finite-dimensional space $\mathbb R^\Omega$, its matrix is orthogonal. The same matrix is unitary on $\mathbb C^\Omega$. Before readout, apply the Walsh basis change to the response register so that
the Walsh basis state indexed by $s$ is carried to the ordinary computational
basis label $s$. This input-independent change can be absorbed into the final
free gate. The output map $\lambda:\Omega\to Y$ is then ordinary
computational-basis measurement followed by classical grouping of outcomes
carrying the same label. Every circuit of
Definition~\ref{def:quantum-query-circuit} is thus a standard quantum
query circuit with the same number of oracle calls and the same output
probabilities.

For the converse, first suppose a standard query algorithm has real amplitudes and real input-independent gates. In the Walsh basis, the response oracle is diagonal. If the response register lies in sector $s$ and the query register addresses $q$, define $\varphi(B_j^q):=(-1)^{s_j}$. This is a phase assignment for $q$, and together with the query label it gives a phase address in the sense of Definition~\ref{def:phase-address}. The remaining basis labels, including the control and work registers, become auxiliary configuration data.

Every real input-independent unitary gate is an orthogonal transformation and
hence a recombination. Before the original computational-basis readout,
apply the inverse Walsh basis change; because it is input independent, it can
be absorbed into the final recombination. The standard algorithm consequently has the form of
Definition~\ref{def:quantum-query-circuit}.

Finally, use the standard realification of complex quantum computation~\cite{BernsteinVazirani1997,Aharonov2003}. Write a complex amplitude assignment as $u+iv$, and write a unitary gate as $U=A+iB$. Replace it by the real transformation
\begin{equation}
    (u,v)
    \longmapsto
    (Au-Bv,\;Bu+Av).
\end{equation}
This transformation preserves
\begin{equation}
    \sum_c\bigl(u(c)^2+v(c)^2\bigr).
\end{equation}
The one-hot response oracle defined at the start of this subsection is a real permutation matrix. Its realification acts as the same oracle on the real and imaginary copies, so no additional oracle call is introduced.

At readout, assign the two real copies of each original configuration the
same output label. Their squared contributions sum to
\begin{equation}
    u(c)^2+v(c)^2
    =
    |u(c)+iv(c)|^2.
\end{equation}
Thus every complex standard quantum query algorithm has an equivalent real
phase-query circuit with the same number of oracle calls and the same output
distribution.

Thus each model simulates the other without changing the number of oracle calls or the output distribution. This proves Theorem~\ref{thm:standard-quantum-equivalence}.
\end{proof}

\subsection{Adversary lower bound}
\label{app:adversary-lower-bound}

The following is the Boolean-output general adversary lower bound of H{\o}yer, Lee, and {\v S}palek~\cite[Theorem~2]{HoyerLeeSpalek2007}, written for the phase-query model of Section~\ref{sec:quantum-query-circuits}. We include the direct proof to verify the normalization for the separation filters $\Delta_q$.

\begin{theorem}
\label{thm:adversary-query-lower-bound}
For every Boolean function $f:D\to\{0,1\}$ and every
$0\leq\epsilon<1/2$,
\begin{equation}
    Q_{\mathcal Q}^{\epsilon}(f)
    \geq
    \frac{1-2\sqrt{\epsilon(1-\epsilon)}}{2}
    \operatorname{Adv}^{\pm}_{\mathcal Q}(f).
\end{equation}
\end{theorem}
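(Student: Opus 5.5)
We follow the standard Høyer--Lee--Špalek progress-function argument, using the correlation kernels of Section~\ref{sec:correlation-bounds} together with an optimal adversary matrix. Since $\mathcal Q$ separates $f$ (otherwise Proposition~\ref{prop:quantum-separation-cover} makes $Q^\epsilon_{\mathcal Q}(f)=+\infty$), Proposition~\ref{prop:adversary-attainment} lets us fix an optimal symmetric $\Gamma$. By Lemma~\ref{lem:bipartite-spectrum}, $\Gamma$ has a nonnegative unit eigenvector $\delta$ for the eigenvalue $\|\Gamma\|$; we may replace it by its entrywise absolute value, which can only increase the Rayleigh quotient, so the Rayleigh quotient at $\delta$ still equals $\|\Gamma\|$.

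Fix a depth-$T$ circuit that $\epsilon$-computes $f$, and define the progress function
\[
  W_t:=\sum_{x,y}\Gamma(x,y)\,\delta(x)\delta(y)\,C_t(x,y).
\]
Three estimates are needed. First, at $t=0$ every $C_0(x,y)=1$ because the initial state is input independent, so $W_0=\|\Gamma\|$. Second, at the end, correctness forces $|C_T(x,y)|\le 2\sqrt{\epsilon(1-\epsilon)}$ whenever $f(x)\ne f(y)$, by the standard fidelity bound applied to the readout distributions. Since $\Gamma$ is supported on $\operatorname{Sep}(f)$, this gives $|W_T|\le 2\sqrt{\epsilon(1-\epsilon)}\,\|\Gamma\|$. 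Third, and this is the per-query step, we need $|W_{t+1}-W_t|\le 2$.

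For the per-query step, Proposition~\ref{prop:recombination-preserves-correlation} shows that recombinations do not change $W$, so only $O_x$ matters. Split $\Omega=\bigsqcup_q\Omega_q$ according to the addressed query. By Proposition~\ref{prop:query-correlation-support}, the change in the contribution of $c\in\Omega_q$ to $C(x,y)$ is $(\varphi_c(B_q(x))\varphi_c(B_q(y))-1)\,a_x(c)a_y(c)$, and this is nonzero only when $(x,y)\in\operatorname{Sep}(q)$. Hence
\[
  W_{t+1}-W_t=\sum_q\sum_{c\in\Omega_q}\sum_{x,y}(\Delta_q\Gamma)(x,y)\,\bigl(s_c(x)s_c(y)-1\bigr)\,\delta(x)a_x(c)\,\delta(y)a_y(c),
\]
where $s_c(x)=\varphi_c(B_q(x))$. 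Expand this into two terms. In the first, the signs $s_c$ can be absorbed into the vectors $\beta_c(x)=s_c(x)\delta(x)a_x(c)$. In the second, the vectors are $\gamma_c(x)=\delta(x)a_x(c)$. Each term is bounded by $\|\Delta_q\Gamma\|\sum_{c\in\Omega_q}\|\gamma_c\|^2\le\sum_{c\in\Omega_q}\sum_x\delta(x)^2a_x(c)^2$. Summing over $q$ and using normalization of each $a_x$ together with $\sum_x\delta(x)^2=1$ gives a total of at most $2$.

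The main obstacle is making this per-query bound exactly $2$ rather than losing a factor from the case split over queries. The resolution is that the block structure $\Omega=\bigsqcup_q\Omega_q$ lets each filtered norm act only on its own configurations, so the bounds add rather than multiply. Combining the three estimates, $\|\Gamma\|(1-2\sqrt{\epsilon(1-\epsilon)})\le W_0-W_T\le 2T$, which is the stated bound.
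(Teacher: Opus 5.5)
Your first and third estimates are sound. The per-query bound is the paper's identity $\Gamma-S_c\Gamma S_c=\Delta_{q_c}\Gamma-S_c(\Delta_{q_c}\Gamma)S_c$ applied configuration by configuration. The second estimate, however, has a genuine gap. The pointwise fidelity bound $|C_T(x,y)|\le2\sqrt{\epsilon(1-\epsilon)}$ for $f(x)\neq f(y)$ is true, but combining it with the entries of $\Gamma$ only gives
\[
|W_T|\le 2\sqrt{\epsilon(1-\epsilon)}\sum_{x,y}|\Gamma(x,y)|\,|\delta(x)|\,|\delta(y)|.
\]
The last sum equals $\|\Gamma\|$ only when $\Gamma$ is entrywise nonnegative and $\delta$ is a Perron eigenvector. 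That is the positive-weight adversary argument. The matrices in $\operatorname{Adv}^{\pm}$ may have negative entries, and then the sum can strictly exceed $\|\Gamma\|$ even when $\delta\geq0$. For example, take the block $G=\begin{pmatrix}2&1\\1&-1\end{pmatrix}$. Its top singular value $\approx2.303$ is simple, with a positive unit singular vector $u$, yet $u^{\mathsf T}|G|\,u\approx2.471$, where $|G|$ is the entrywise absolute value. So your argument does not yield the stated constant for $\epsilon>0$.

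For the same reason, your claim that replacing $\delta$ by $|\delta|$ cannot lower the Rayleigh quotient is false for signed $\Gamma$. Fortunately that step is unnecessary. Any real unit eigenvector for $\|\Gamma\|$ gives $W_0=\|\Gamma\|$, and your per-query bound uses only $\sum_x\delta(x)^2=1$.

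The missing idea is to bound $W_T$ spectrally rather than entrywise, as the paper does. Write $\Gamma=\begin{pmatrix}0&G\\G^{\mathsf T}&0\end{pmatrix}$ and $\eta_b:=\sum_{x\in D_b}\delta(x)\,e_x\otimes a_x^T$. Then $W_T=2\langle\eta_0,(G\otimes I)\eta_1\rangle$. Because $G\otimes I$ intertwines the output projections $I\otimes P_0$ and $I\otimes P_1$, the inner product splits into the two output sectors. Using $\|G\otimes I\|=\|G\|$ then gives
\[
|W_T|\le\|\Gamma\|\Bigl(\sqrt{(1-\epsilon_0)\epsilon_1}+\sqrt{\epsilon_0(1-\epsilon_1)}\Bigr),
\qquad
\epsilon_0:=2\|(I\otimes P_1)\eta_0\|^2,\quad \epsilon_1:=2\|(I\otimes P_0)\eta_1\|^2.
\]
An eigenvector for $L=\|\Gamma\|>0$ automatically satisfies $\sum_{x\in D_0}\delta(x)^2=\sum_{x\in D_1}\delta(x)^2=\tfrac12$. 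This follows from $Gv_1=Lv_0$ and $G^{\mathsf T}v_0=Lv_1$. Hence each $\epsilon_b$ is a convex average of per-input error probabilities, so $\epsilon_b\le\epsilon$. Writing $\epsilon_b=\sin^2\theta_b$ with $\theta_b\le\arcsin\sqrt\epsilon<\pi/4$, the bracket is $\sin(\theta_0+\theta_1)\le2\sqrt{\epsilon(1-\epsilon)}$. This replaces your second estimate, and the rest of your telescoping argument then goes through unchanged.
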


\begin{proof}
If $f$ is constant, then $\operatorname{Adv}^{\pm}_{\mathcal Q}(f)=0$ and
the claim is immediate. If $\mathcal Q$ does not separate $f$, then
Proposition~\ref{prop:quantum-separation-cover} rules out any finite-depth
bounded-error circuit for $\epsilon<1/2$, while the adversary value is
$+\infty$. Assume henceforth that $f$ is nonconstant and that $\mathcal Q$
separates $f$.

Let $\Gamma$ be an optimal real symmetric adversary matrix and put
\begin{equation}
    L:=\|\Gamma\|=\operatorname{Adv}^{\pm}_{\mathcal Q}(f).
\end{equation}
With $D_b=f^{-1}(b)$, write
\begin{equation}
    \Gamma=
    \begin{pmatrix}
        0&G\\
        G^{\mathsf T}&0
    \end{pmatrix}.
\end{equation}
Choose unit left and right singular vectors $u_0,u_1$ of $G$ for the singular
value $L$, with $Gu_1=Lu_0$ and $G^{\mathsf T}u_0=Lu_1$, and set
$v=(u_0,u_1)/\sqrt2$. Then $\Gamma v=Lv$, $\|v\|=1$, and
\begin{equation}
    \sum_{x\in D_0}v(x)^2
    =
    \sum_{x\in D_1}v(x)^2
    =\frac12.
\end{equation}

Fix a depth-$T$ circuit which computes $f$ with error at most $\epsilon$. We use the standard adversary progress function~\cite{BarnumSaksSzegedy2003,HoyerLeeSpalek2007}. Using the correlation kernels $C_t$, define
\begin{equation}
    W_t
    :=
    \sum_{x,y\in D}
        \Gamma(x,y)v(x)v(y)C_t(x,y).
\end{equation}
The initial state is independent of the input, so $C_0(x,y)=1$ and hence
$W_0=v^{\mathsf T}\Gamma v=L$. Input-independent recombinations preserve all
correlations, hence preserve $W_t$.

We first bound the change caused by one phase query. For a configuration
$c\in\Omega$, define $h_c\in\mathbb R^D$ by
\begin{equation}
    h_c(x):=v(x)a_x^t(c),
\end{equation}
and let $S_c$ be the diagonal sign matrix with
\begin{equation}
    S_c(x,x):=\varphi_c(B_{q_c}(x)).
\end{equation}
The contribution of the phase query to the change in progress is
\begin{equation}
    \sum_{c\in\Omega}
        h_c^{\mathsf T}
        (\Gamma-S_c\Gamma S_c)
        h_c.
\end{equation}
The sign $S_c(x,x)$ depends only on the response of query $q_c$. Therefore
entries on pairs not separated by $q_c$ cancel, and
\begin{equation}
    \Gamma-S_c\Gamma S_c
    =
    \Delta_{q_c}\Gamma
    -S_c(\Delta_{q_c}\Gamma)S_c.
\end{equation}
Since $S_c$ is orthogonal and
$\|\Delta_{q_c}\Gamma\|\leq1$, the operator norm of the last difference is
at most $2$. Consequently
\begin{equation}
\begin{aligned}
    |W_t-W_{t+1}|
    &\leq
    2\sum_{c\in\Omega}\|h_c\|^2\\
    &=
    2\sum_{x\in D}v(x)^2
        \sum_{c\in\Omega}a_x^t(c)^2
    =2.
\end{aligned}
\end{equation}
Here the subsequent recombination does not affect the progress quantity.

It remains to bound the final progress. Let $P_0,P_1$ be the coordinate
projections in $\mathbb R^\Omega$ onto configurations labeled by outputs $0$
and $1$. Define
\begin{equation}
    \eta_b
    :=
    \sum_{x\in D_b}v(x)e_x\otimes a_x^T
    \qquad (b\in\{0,1\}).
\end{equation}
Then $\|\eta_0\|^2=\|\eta_1\|^2=1/2$. Put
\begin{equation}
    \delta_0:=2\|(I\otimes P_1)\eta_0\|^2,
    \qquad
    \delta_1:=2\|(I\otimes P_0)\eta_1\|^2.
\end{equation}
Because every input is computed with error at most $\epsilon$,
$0\leq\delta_0,\delta_1\leq\epsilon$.

The block form of $\Gamma$ gives
\begin{equation}
    W_T
    =
    2\langle\eta_0,(G\otimes I)\eta_1\rangle.
\end{equation}
Splitting both vectors into their two output sectors and using $\|G\|=L$
yields
\begin{equation}
    |W_T|
    \leq
    L\left(
        \sqrt{(1-\delta_0)\delta_1}
        +
        \sqrt{\delta_0(1-\delta_1)}
    \right).
\end{equation}
Write $\delta_b=\sin^2\theta_b$ with
$0\leq\theta_b\leq\arcsin\sqrt\epsilon<\pi/4$. The expression in
parentheses is $\sin(\theta_0+\theta_1)$ and is thus at most
\begin{equation}
    2\sqrt{\epsilon(1-\epsilon)}.
\end{equation}
Hence
\begin{equation}
    |W_T|
    \leq
    2\sqrt{\epsilon(1-\epsilon)}\,L.
\end{equation}

Telescoping the per-query changes now gives
\begin{equation}
\begin{aligned}
    2T
    &\geq |W_0-W_T|\\
    &\geq
    \bigl(1-2\sqrt{\epsilon(1-\epsilon)}\bigr)L.
\end{aligned}
\end{equation}
Taking the minimum over all depth-$T$ circuits that $\epsilon$-compute $f$ gives the stated H{\o}yer--Lee--{\v S}palek lower bound in the present phase-query model.
\end{proof}

\end{document}